\newtheorem{theorem}{Theorem}[section]
\newtheorem{definition}[theorem]{Definition}
\newtheorem{algorithm}[theorem]{Algorithm}
\newtheorem{claim}[theorem]{Claim}
\newtheorem{lemma}[theorem]{Lemma}
\newtheorem{corollary}[theorem]{Corollary}
\newtheorem{fact}[theorem]{Fact}
\newcommand{\qedsymb}{\hfill{\rule{2mm}{2mm}}}
\newenvironment{proof}[1][]{\begin{trivlist}
\item[\hspace{\labelsep}{\bf\noindent Proof#1:\/}] }{\qedsymb\end{trivlist}}
\def\bfly{\triangleright\triangleleft}
\newcommand{\be}{\begin{eqnarray}}
\newcommand{\ee}{\end{eqnarray}}
\newcommand\abs[1]{{\left| {#1} \right|}}
\newcommand\ket[1]{{ |{#1} \rangle }}
\newcommand\bra[1]{{ \langle {#1} | }}
\newcommand\ketbra[1]{{\ket{#1}\bra{#1}}}
\newcommand{\ignore}[1]{}
\renewcommand{\epsilon}{\varepsilon}
\begin{document}

\title{\bf  On the complexity of Commuting Local Hamiltonians, 
and tight conditions for Topological Order in such systems}

\author{Dorit Aharonov\thanks{School of Computer Science and
Engineering, The Hebrew University,
Jerusalem, Israel}
 \and Lior Eldar\thanks{School of Computer Science and
Engineering, The Hebrew University,
Jerusalem, Israel.}}

\date{\today}

\maketitle

\begin{abstract}
The local Hamiltonian problem plays the equivalent role of SAT in quantum 
complexity theory. Understanding the complexity of the 
intermediate case in which the constraints are quantum 
but all local terms in the Hamiltonian commute, is of importance 
for conceptual, physical and computational complexity reasons.
Bravyi and Vyalyi showed in 2003 \cite{BV}, 
using clever applications of the representation theory of C*-algebras,  
that if the terms in the Hamiltonian 
are all two-local, the problem is in NP, and the entanglement in the 
ground states is local. The 
general case remained open since then. 
In this paper we extend the results of Bravyi and Vyalyi
beyond the two-local case, to the case of three-qubit interactions.
We then extend our results even further, and show that NP verification 
is possible for three-wise interaction between qutrits as well, as long as 
the interaction graph is embedded on a planar lattice, or more generally, 
"Nearly Euclidean" (NE). 
The proofs imply that in all such systems, the entanglement in the 
ground states is local. 
These extensions imply an intriguing sharp transition 
phenomenon in commuting Hamiltonian systems: 3-local NE 
systems based on qubits and qutrits cannot be used to construct
Topological order, as their entanglement is local, 
whereas for higher dimensional qudits, or for interactions of at least $4$ qudits,
Topological Order is already possible, via Kitaev's Toric Code construction.  
We thus conclude that Kitaev's Toric Code construction 
is optimal for deriving topological order 
based on commuting Hamiltonians.
\end{abstract}

\section{Introduction}\label{sec:intro}
The problem of approximating the ground energy of a 
local Hamiltonian describing a physical system
is one of the major problems in condensed matter
physics; in the area of quantum computation this problem is 
called the local Hamiltonian problem \cite{Kit1}.  
Formally, in the $k$-local Hamiltonian problem, we are given
a Hamiltonian $H$ which is a sum of positive semidefinite terms,
each acting on a set of at most $k$ out of $n$ qubits, 
where $k$ is of order $1$, and each term is of bounded norm.
Moreover, we are given two numbers, $b>a$ such that $b-a\ge \frac{1}{poly(n)}$.
We are asked whether $H$ has an eigenvalue below $a$ 
or all its eigenvalues are above $b$, and we are promised that 
the instance belongs to one of the two cases.

It turns out that the problem of understanding ground states and 
ground values of local Hamiltonians, central to condensed matter physics, 
is the quantum generalization of 
one of the most important problems in classical 
computational complexity, namely, SAT. 
Indeed, in a seminal work, Kitaev has shown that in parallel to 
the important of the SAT problem in NP theory, the local Hamiltonian 
problem is complete for
the quantum analogue of NP (denoted QMA) in which both witness and verifier
are quantum rather than classical.  
The analogy between the quantum and the classical 
problems is derived by viewing the terms of the Hamiltonians as 
generalizing the notion of classical constraints; 
energies are viewed as a penalty for a constraint violation.  
For example, to view the local constraints for the classical SAT 
as a special instance of local Hamiltonians, 
we assign for each clause a projection on the assignment 
forbidden by this clause. The projections we derive are of course 
all projections in the computational basis; 
in the general quantum case, the terms need not be 
diagonal in any particular basis, and the ground state can   
be highly entangled. 
This connection linking the physics and the 
computational complexity  problems has drawn much attention 
over the past few years, and has led to many exciting results and insights 
(eg., \cite{Kit1, KKR,Aha,Oli,BSAT,BV}). 

The computational view of the local Hamiltonian problem and 
its connection to classical NP problems, has led Bravyi and Vyalyi 
in \cite{BV} to the following very natural question: what would 
happen if we only
generalize from classical to quantum ``half way'':
we allow the terms
in the Hamiltonian to be projections in any basis, 
but we restrict them in that all the projections pairwise 
commute. 
We are asked to decide whether the ground energy is $0$ 
(namely, there exists a state which is in the ground space of all 
projections) or it is larger than $0$ 
(for pairwise-commuting projections, the overall 
energy, namely eigenvalue, of such a state must be at 
least $1$). 
This problem is the commuting local Hamiltonian problem\footnote{
We note that this problem is equivalent to the more general case when the terms
can be taken as positive-semidefinite operators, since for such an input
one can replace each local term with a projection on the 
non-zero eigenspaces of that term}.  

The interest in the commuting Hamiltonian problem is related to several 
important issues in quantum computational complexity. 
The first is conceptual:
a common intuition is that the counter intuitive phenomena
in quantum mechanics stems from the fact that non-commuting operators are
involved (cf the Heisenberg's uncertainty principle). 
One might conjecture, using this intuition, that the
commuting local Hamiltonian problem is far weaker than the general local 
Hamiltonian problem,
and might be of the same complexity as SAT, namely, lie in NP. 
However, a counter intuition exists: The intriguing strictly 
quantum phenomenon of 
topological order, which is exhibited for example in 
Toric codes \cite{Kit2}, can be achieved by ground 
states of commuting Hamiltonians. 
It is thus natural to ask where does the computational
complexity of the commuting Hamiltonian problem lie: is it in NP, 
is it perhaps
quantum-NP complete (where here the relevant quantum analogue of NP
is in fact, $QMA_1$, where there is only one sided error) 
or maybe the commuting local Hamiltonian problem
defines an intermediate computational
class of its own? 

The study of this problem can also be viewed as tightly related to 
an exciting major open problem in quantum 
Hamiltonian complexity: the question of whether a PCP-like  
theorem holds in the quantum setting or not \cite{Aha2}. 
Embarrassingly, this problem is still open 
even for the seemingly much easier case of commuting local Hamiltonians. 
Clearly, a PCP-type theorem would follow trivially if the commuting local 
Hamiltonian problem were in NP,  
but even if this were not true, one might still hope to prove a 
PCP-type theorem for the restricted problem before proceeding to the 
more general case. 
We recall that several results in quantum Hamiltonian complexity, 
such as the area law in 1Dim \cite{Has}
the decay of correlations in gapped 
Hamiltonians \cite{Has2}, and quantum gap amplification 
\cite{Aha2}
were all proven exactly in this way, by starting from the easier 
commuting case, and generalizing from there; it seems reasonable 
to hope that better understanding of the commuting case would 
help clarify the quantum PCP conjecture in general.   
More generally,  
understanding the complexity of the commuting 
local Hamiltonian problem might 
not only shed light on the role of commutativity in  
quantum Hamiltonian complexity, and possibly lead to progress on 
open problems in quantum complexity theory, but also, it seems 
that an answer to this question would necessarily require 
new insights regarding the nature of multi-particle entanglement. 
 
In \cite{BV} an important step was made towards resolving the computational
complexity of the commuting local Hamiltonian problem.
Bravyi and Vyalyi showed that for $k=2$, 
namely for two-body interactions, regardless
of the dimensionality $d$ of the particles involved, the problem
lies in NP. 
The method they use is interesting by itself; 
They cleverly apply the theory of 
representations of $C^*$-algebras to the problem.  
However, their methods break down for three-wise interactions.
The general problem was thus left open by \cite{BV}, and no progress 
was noted on this problem
since its inception in 2003.   

\subsection{Results: The Complexity of Commuting Hamiltonians} 
In this paper we extend the results of \cite{BV}
to three-local interactions with the following two results. 
We prove: 
\begin{theorem}\label{MainClaim}
The problem of $3$-local commuting Hamiltonian on qubits is in NP.
\end{theorem}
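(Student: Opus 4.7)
The plan is to extend the $C^*$-algebraic approach of Bravyi--Vyalyi \cite{BV} from $2$-local to $3$-local interactions, exploiting the extreme rigidity of dimension-$2$ systems. I would start from the BV structure theorem: if two Hermitian operators $A$ and $B$ commute on the full Hilbert space and both act on a particle $q$, then $\mathcal{H}_q$ admits a direct-sum decomposition $\mathcal{H}_q = \bigoplus_\alpha \mathcal{H}_{q,\alpha}^{A}\otimes \mathcal{H}_{q,\alpha}^{B}$ in which, within each sector $\alpha$, $A$ acts only on the left factor and $B$ only on the right. Since $\dim \mathcal{H}_q = 2$, the possible sector decompositions are severely limited: either $q$ splits into two $1\times 1$ sectors (both $A$ and $B$ act diagonally on $q$ in some common basis), or there is a single sector of shape $1\times 2$ or $2\times 1$ (in which case one of the terms acts trivially on $q$).

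My first step would be to apply this pairwise decomposition systematically to every qubit $q$ shared between two or more Hamiltonian terms, iterating over pairs of incident terms to obtain a joint refinement. The output is, for each such $q$, a polynomial-size classical label specifying its sector together with a unitary change of basis, after which every term acting on $q$ either acts trivially or acts diagonally (in that basis). Shared qubits thereby become ``classical interfaces'' that decouple the terms meeting at them.

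The NP witness will then consist of: (i) the sector labels for every qubit, with respect to each pair of incident terms; and (ii) enough additional classical data to identify a zero-energy eigenstate within the chosen sector. Given this witness, the verifier will conjugate each qubit by the prescribed unitary, project each local term into the indicated sector, and check that the resulting Hamiltonian admits a zero-energy state. The key point is that after this procedure the interactions largely detach at the shared qubits, so the remaining problem decomposes into independent subproblems that are either solvable by direct inspection or fall under the $2$-local case already handled by \cite{BV}.

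The main obstacle is the case of three $3$-local terms whose pairwise intersections form a non-trivial cycle---e.g., three terms $T_1, T_2, T_3$ with $T_i \cap T_j$ a distinct single qubit for each pair, and possibly a fourth qubit shared by all three. In such configurations, the sector decompositions on different qubits must be globally consistent with the three-way commutation constraints, and a naive per-qubit reduction may produce incompatible or combinatorially many global sectors. The heart of the proof will be a careful case analysis of these cyclic overlap patterns, crucially using the fact that the only unital $C^*$-subalgebras of $M_2(\mathbb{C})$ are $\mathbb{C}$, $\mathbb{C}\oplus\mathbb{C}$, and $M_2(\mathbb{C})$ itself. This algebraic rigidity is precisely what fails for higher-dimensional qudits, which is consistent with the sharp transition at qubit dimension highlighted in the abstract.
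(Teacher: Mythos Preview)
Your opening move---remove every qubit $q$ for which all incident terms agree on a single two-block decomposition---is correct and is exactly what the paper calls removing \emph{separable} qubits. The gap is in what comes next. You claim that after iterating pairwise BV decompositions ``shared qubits thereby become classical interfaces that decouple the terms meeting at them,'' and that what survives is a small collection of cyclic overlap patterns to be handled by case analysis. This is not so: the paper's tetrahedron example (four $3$-local terms on four qubits, each pair sharing two qubits) shows that there are configurations in which \emph{no} qubit is separable, and such configurations are not a priori of bounded size. After you exhaust all separable qubits, the residual instance can be an arbitrarily large connected graph of non-separable qubits, and a finite case enumeration of local cycles will not cover it.

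Note also that your invocation of the BV structure theorem is slightly loose: it yields a decomposition of $\mathcal{H}_q$ only when the two terms intersect \emph{exactly} on $q$ (the ``butterfly'' relation $\bfly$ in the paper). Two $3$-local terms can share two qubits, and then commutation imposes no such decomposition on either qubit individually. This is precisely why the paper's transitivity argument (Claim~\ref{cl:same}, Theorem~\ref{bflychainClaim}) is phrased in terms of $\bfly$-paths rather than arbitrary pairs of incident terms.

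What the paper actually does after removing separable qubits is a \emph{global geometric} argument, not a local case analysis. It proves that on a non-separable qubit $q$, any two incident terms are connected by an operator path on $q$ of length at most $3$ (Claim~\ref{cl:nosep2}), and that an ``operator crown'' of three consecutive terms on $q$ traps every other term on $q$ (Claim~\ref{cl:nosep1}). From these local rigidity facts it extracts a maximal operator path (the \emph{backbone}) in each connected component of the residual graph, and shows that every term in the component acts on at least two backbone qubits which are a bounded distance apart (Lemma~\ref{lem:entrap}, Claims~\ref{cl:bkbn1}--\ref{cl:bkbn2}). This forces an almost-one-dimensional structure: coarse-graining the backbone into blocks of $\sim 20$ qubits makes every term $2$-local in the blocks, and a second application of the BV decomposition (Lemma~\ref{SliceLemma}) finishes the reduction to $CLH(2,d)$. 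Your proposal does not contain any analogue of this backbone construction, and without it the residual non-separable instance is not under control.
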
 
\begin{theorem}\label{MainClaim2} (Roughly stated)
The problem of 
$3$-local commuting Hamiltonian on qutrits is in NP, 
as long as the interaction graph is planar, and moreover, 
{\emph nearly Euclidean}. 
\end{theorem}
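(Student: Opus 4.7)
\begin{profsketch}
My plan is to extend the $C^*$-algebraic machinery of Bravyi--Vyalyi \cite{BV} from pairwise to three-body interactions, exploit the fact that $3$ is prime to sharply restrict the local decompositions that can arise, and then invoke the planar / nearly Euclidean (NE) structure to rule out the residual configurations that could a priori encode topological order.

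First, for each qutrit $q$ shared by several commuting $3$-local projectors, I would form the local $C^*$-algebra induced on $q$ by the projectors (and their commutants restricted to $q$) and apply the structure theorem to obtain $\calH_q = \bigoplus_\alpha \calH_\alpha^L \otimes \calH_\alpha^R$ with $\sum_\alpha \dim \calH_\alpha^L \cdot \dim \calH_\alpha^R = 3$. Because $3$ is prime, the admissible multisets of dimensions form a very short list: a single trivial sector, three one-dimensional sectors, or a one-dimensional sector coexisting with a virtual-qubit sector of type $(1,2)$. In the first two families $q$ carries only a classical label on each side of the decomposition, which is exactly the clean situation directly handled by the BV toolbox. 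The case in which two $3$-local projectors share two qutrits is treated by the same structure theorem applied to the four-dimensional shared space, with analogous tight restrictions on the list of possibilities.

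Second, I would use these classical sector labels as the skeleton of the NP witness, in the same spirit as the proof of Theorem \ref{MainClaim} for qubits. The witness consists of a local change of basis implementing the decomposition on each qutrit, together with a choice of sector label $\alpha(q)$. Within the joint sector selected by the witness, every $3$-local projector restricts to an operator living on a genuine tensor product of smaller factors, and the commuting Hamiltonian instance reduces to a classical satisfiability problem the verifier can check directly. The qubit case (Theorem \ref{MainClaim}) plays the role of a warm-up: dimension $2$ leaves essentially no room for the problematic sectors described next.

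The crux, and where the NE hypothesis genuinely enters, is the residual family of decompositions containing a $(1,2)$ factor. Inside such a sector a qutrit carries a \emph{virtual qubit} shared with its neighbors, and in principle the global web of these virtual qubits across overlapping $3$-local terms could encode topological entanglement of the Kitaev toric-code flavor. The hard part of the argument is to show that in any planar, nearly Euclidean interaction graph no such encoding is possible. I would do this through a local surgery procedure that absorbs each problematic qutrit into one of its neighboring projectors, using the NE structure to certify that the surgery is well-defined on a local neighborhood and terminates without cyclic obstructions. The residual instance is then directly amenable to the basic BV-style witness, and the recorded decompositions, sector labels, and surgeries together form the full NP certificate. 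Verifying that the NE condition is precisely what defeats the toric-code obstruction --- and that it is tight, by matching Kitaev's construction for $4$-local qubits or for $3$-local qudits of dimension $\ge 4$ --- is the step I expect to be the most delicate.
\end{profsketch}
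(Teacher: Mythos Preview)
Your sketch has the right opening move --- remove separable qudits via BV-style local decompositions --- but it diverges sharply from the paper at the crux, and the divergence is a genuine gap rather than an alternative route.

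The observation that $3$ is prime, while correct, does not do the work you want it to. Your classification of sectors applies to a \emph{single} induced algebra on $q$, but the obstruction is not that some sector hides a virtual qubit; it is that a non-separable qutrit admits \emph{no} nontrivial decomposition preserved by all operators simultaneously. Two operators sharing two qutrits need not induce commuting algebras on either one, and the paper's tetrahedron-type examples show that one cannot expect any consistent local sector structure to exist at all on the residual instance. So the ``skeleton of the NP witness'' you describe --- sector labels $\alpha(q)$ --- simply does not exist for the hard part of the problem, and the ``local surgery'' you invoke to absorb the $(1,2)$ cases is left entirely unspecified precisely where the difficulty lies.

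The paper's actual mechanism is quite different and essentially geometric. It introduces \emph{critical subspaces} (rank-$1$ projections in the center of the induced algebra on $q$) and shows that butterfly relations force critical subspaces of neighboring operators to be equal or orthogonal. From this it proves (Lemma~\ref{lemma:cq2}) that an open operator path of length $\ge 5$ on any qudit forces separability, hence every vertex in the residual graph has degree at most $5$. This is the step that replaces your surgery. Then a purely combinatorial argument using the Euler characteristic (Claim~\ref{cl:outerbound}) shows that a planar triangulation with all degrees $\le 5$ must contain ``noop'' faces at constant density --- regularly spaced holes where no Hamiltonian term lives. Finally, the NE hypothesis is used only at the very end (Claim~\ref{cl:cutting}): one lays a brick-wall grid on the plane, perturbs its junctions into nearby noop zones, and coarse-grains each brick into a constant-dimensional particle, reducing to $CLH(2,d)$ and hence to BV. None of the three ingredients --- the degree bound from critical subspaces, the Euler-characteristic hole argument, or the brick-wall coarse-graining --- appears in your sketch, and I do not see how your surgery idea could substitute for them without essentially rediscovering the degree bound.
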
 

The notion of Nearly Euclidean will be defined later (see 
Definition \ref{def:NEdef}); roughly, it formalizes the 
requirement that the embedding makes sense physically: 
no area on the plane can have a particularly high 
density of particles, and only close-by particles can interact. 
This of course includes also the interesting 
special case of periodic lattices, or small 
perturbations of those.

Unlike what might be expected, the extension does not seem to 
follow easily from the result of \cite{BV}.
When attempting to apply the proof of $\cite{BV}$ to the case in which 
three local interactions are involved, (even when the particles are assumed 
to be qubits) severe problems occur. 
We will provide the overview of the proof of Bravyi and Vyalyi, 
why it breaks down for three-wise interactions and how we overcome this, 
later in the introduction. 
Let us first describe some 
implications of our results to the seemingly unrelated topic of 
conditions for topological order. 

\subsection{Results: Implications for tight conditions on Topological Order} 
Topological order is a purely quantum phenomenon; 
Roughly, a state exhibits a topological order if 
there exists a state orthogonal to it, and the two cannot be 
distinguished or connected by a local operator.
Such characteristics are extremely valuable in the context of 
fault tolerance, and topological order has attracted much attention 
for theoretical and implementation purposes for that reason. 
A celebrated example of a system exhibiting 
Topological Order in the context of quantum computation 
is the Toric Code, due to Kitaev \cite{Kit2}; 
it can be defined as the ground space of a set of $4$-local commuting 
operators on qubits arranged on the two dimensional grid, 
or alternatively, by $3$-local commuting 
interactions between $4$-dimensional particles. 
Topological order defined via 
commuting local Hamiltonians is particularly interesting: 
for example, recently
it has been shown (\cite{BH}) that such systems are also 
resilient to local perturbations.
It is therefore natural to ask whether it is possible to 
achieve topological order 
in ground states of local commuting Hamiltonians, 
with smaller dimensionality or with less particles interacting.   
Using the above results, we can resolve this problem to the negative.
We show that Kitaev's construction is optimal in a well 
defined sense.

\begin{theorem} {\bf Tight conditions for Topological order} \label{thm:TO}  
(Roughly) 
Consider a system of particles with 
commuting interactions which are either $2$-local, or
they are $3$-local and the dimensionality of the particles is at most $3$.
Moreover, assume the interaction graph is Nearly Euclidean planar.   
Then this system cannot exhibit Topological Order, 
and moreover, in  a well defined sense, the entanglement in the 
ground space is local. 
On the other hand, there exist nearly Euclidean
planar systems of $3$-local interactions with
particles of dimensionality $4$ that exhibit Topological order, 
hence we derive a tight boundary between local entanglement and Topological order. 
\end{theorem}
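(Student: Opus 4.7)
The plan is to derive the negative part of the theorem directly from the structural information underlying the NP containment results invoked in each case---namely Bravyi--Vyalyi \cite{BV} for 2-local interactions, Theorem \ref{MainClaim} for 3-local interactions on qubits, and Theorem \ref{MainClaim2} for 3-local interactions on qutrits on Nearly Euclidean planar graphs---and to obtain the positive part by exhibiting a concrete 3-local commuting Hamiltonian on dimension-4 particles on a planar lattice whose ground space realises Kitaev's Toric Code. The first step is to fix a precise notion of ``local entanglement'' for a ground space $\calS$ that is incompatible with topological order. The working definition I would adopt is that $\calS$ is locally entangled if every $\ket{\psi}\in\calS$ is of the form $U\ket{\varphi}$, where $\ket{\varphi}$ is a product state on some coarse-graining of the qudits into $O(1)$-size blocks and $U$ is a tensor product of unitaries, each acting non-trivially on an $O(1)$-size region of the interaction graph. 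Under such a description any two orthogonal ground states can be mapped to one another by a constant-weight local unitary, and are in particular distinguishable by $O(1)$-local observables, which is precisely what topological order forbids.

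Next, I would re-read the NP verification proofs to extract such a structural statement about the full ground space, rather than merely about the existence of one witness state. The Bravyi--Vyalyi-style arguments on which all three theorems rest produce a local change of basis that simultaneously block-diagonalises all the commuting projections via a repeated application of the Artin--Wedderburn structure theorem for finite-dimensional $C^*$-algebras; the common zero-eigenspaces of these blocks then factorise as tensor products over small regions, and the NP witness is essentially a labelling of these blocks. What must be verified is that every element of $\calS$ admits the \emph{same} factorisation structure, not only the particular one the verifier guesses, from which the local entanglement of $\calS$ in the sense above follows, and hence the failure of topological order.

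For the positive direction I would exhibit the Toric Code in its 3-local, dimension-4 incarnation. Starting from the standard Toric Code, with a qubit on each edge of a square lattice and its 4-local vertex and plaquette stabilisers $A_v=\prod_{e\ni v}X_e$ and $B_p=\prod_{e\in p}Z_e$, I would pair adjacent qubits---for instance by merging the two edges meeting at each vertex of a brick-wall arrangement---into 4-dimensional qudits in such a way that each original 4-body check touches only three of the new qudits. The resulting Hamiltonian is 3-local on qudits of dimension 4, embeds into a planar (hence trivially Nearly Euclidean) graph, consists of pairwise commuting projections, and inherits the topological ground-space structure of the Toric Code; in particular it admits orthogonal ground states that are locally indistinguishable, which is exactly topological order.

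The main obstacle I anticipate lies in the second paragraph: promoting the structural by-product of the NP proofs from a statement about a single witness state to one about every element of $\calS$, and bounding the size of the local blocks purely in terms of the qudit dimension and locality, independently of system size. This is where the algebraic input really matters---one must show that the simultaneous decomposition induced by the local commuting projections is fine enough to decouple distant regions into genuine tensor factors across the full ground space. Once this is settled, the positive construction is a routine re-encoding of Kitaev's code, and the two halves together yield the sharp threshold claimed in Theorem \ref{thm:TO}.
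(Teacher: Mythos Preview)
Your positive direction (pairing Toric Code qubits to obtain a $3$-local commuting Hamiltonian on $4$-dimensional particles) matches the paper's construction exactly.

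For the negative direction, the paper takes a different and more black-box route than you propose. It does \emph{not} argue directly that orthogonal ground states are locally distinguishable. Instead it extracts from the proofs of Theorems~\ref{MainClaim} and~\ref{MainClaim2} (and from \cite{BV} in the $2$-local case) an explicit constant-depth nearest-neighbour \emph{diagonalizing} circuit for $H$, and then invokes the result of Bravyi--Hastings--Verstraete (Theorem~\ref{thm:BHV}) that any state exhibiting topological order on an $n\times n$ lattice requires a circuit of depth $\Omega(\sqrt{n})$. Since the paper defines ``the system exhibits TO'' to mean \emph{every} ground state does, exhibiting a single ground state of the form $U\ket{x}$ with $U$ of constant depth suffices for a contradiction.

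Your direct approach can be made to work and is in a sense more elementary---it replaces the Lieb--Robinson machinery behind \cite{BHV} by the purely combinatorial light-cone of a finite-depth circuit---but as written it contains two inaccuracies you should fix. First, it is \emph{not} true that every $\ket{\psi}\in\calS$ is of the form $U\ket{\varphi}$ with $\ket{\varphi}$ a product state: the diagonalizing circuit only gives you a product-state \emph{basis} of $\calS$; arbitrary ground states are superpositions and need not factorise. Fortunately one basis element $U\ket{x}$ is all you need, since refuting ``all ground states have TO'' requires only one counterexample. Second, the claim that two orthogonal ground states ``can be mapped to one another by a constant-weight local unitary'' is false in general (the product states may differ on many blocks); the correct and sufficient statement is that for $\psi_1=U\ket{x}$ and \emph{any} orthogonal $\psi_2=U\ket{\beta}$, the observable $U\bigl(\ketbra{x_B}\otimes I\bigr)U^\dagger$ on the light-cone of a single block $B$ violates condition~(2) of the TO definition. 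Finally, the ``main obstacle'' you flag---lifting a single-witness statement to the full ground space---is a non-issue: the diagonalizing-circuit corollaries already concern the entire eigenbasis. The actual work is the light-cone argument just sketched, which you pass over; the paper simply outsources it to \cite{BHV}.
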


To explain how our results are related to conditions on topological order, 
observe the following. 
A key property of topological order states is that their entanglement 
is non-local. 
In particular, Bravyi, Hastings and Verstraete showed in \cite{BHV} (see 
Theorem (\ref{thm:BHV}), that if a nearest 
neighbor quantum circuit generates a state with Topological 
Order on the $n\times n$ grid, the circuit 
has to be of depth $\Omega(\sqrt{n})$. 
The methods we use, as well as those of \cite{BV}, however, 
imply that the ground space has a basis of states with localized
entanglement. 
More precisely, the proofs of 
Theorems \ref{MainClaim} and \ref{MainClaim2} also imply that 
any nearly Euclidean planar commuting Hamiltonian 
system, that is either $2$-local, or $3$-local with particle dimensionality 
$d\le 3$, has a basis of 
eigenstates all of which can be generated by a  
constant depth  
quantum circuit whose gates act on nearest neighbor particles on the plane 
(where by nearest neighbor, we mean neighbors in the interaction graph of the 
Hamiltonian). This means that such systems cannot exhibit 
topological order in all the states in their groundspace.  

The Toric Codes, however, 
can be easily seen as a the ground space of an
instance of the commuting $3$-local 
Hamiltonian problem on qudits of dimension $4$ (by gluing pairs of qubits 
together - see Section \ref{sec:TO} and in particular 
Figure (\ref{fig:TC})). 
Thus, this is an instance of $CLH(3,4)$ in which any 
state in the groundspace exhibits topological order. 

Our results thus imply that in the context of 
 "physical" planar systems, there exists a tight 
boundary between Topological Order systems and constant-depth systems:
For $k>3$ and all $d\ge 2$ or $k=3$ and $d\ge 4$ there exist 
nearly Euclidean planar
systems which exhibit Topological order, whereas
for $k=3$ and $d<4$, or for $k=2$ and any $d$, 
all nearly Euclidean planar systems have a constant-depth diagonalizing circuit 
and cannot exhibit Topological order.  
We deduce that Kitaev's construction cannot be simplified either in terms 
of particle dimensionality or number of particles in each interaction, 
and so it is optimal for commuting Hamiltonians. 

\subsection{Overview of the proofs of Theorems \ref{MainClaim} and 
\ref{MainClaim2}} 
\subsubsection{Bird's eye view on the proof of Bravyi and Vyalyi for two-local case}
Let us embark on trying to explain the proof, by first explaining the main 
idea in the proof of Bravyi and Vyalyi of the two-local case, for any 
particle dimensionality. To do that, we consider the hypergraph 
describing the interactions in the Hamiltonian. 
We observe that in the two-local case, 
every particle is the center of a ``star'' of interactions - 
the interactions acting on $q$ intersect only on $q$. 
This is not true when interactions are three-local, as one can see 
in Figure (\ref{fig:hyper}).  

\begin{figure}[ht]
\center{
 \epsfxsize=3in
 \epsfbox{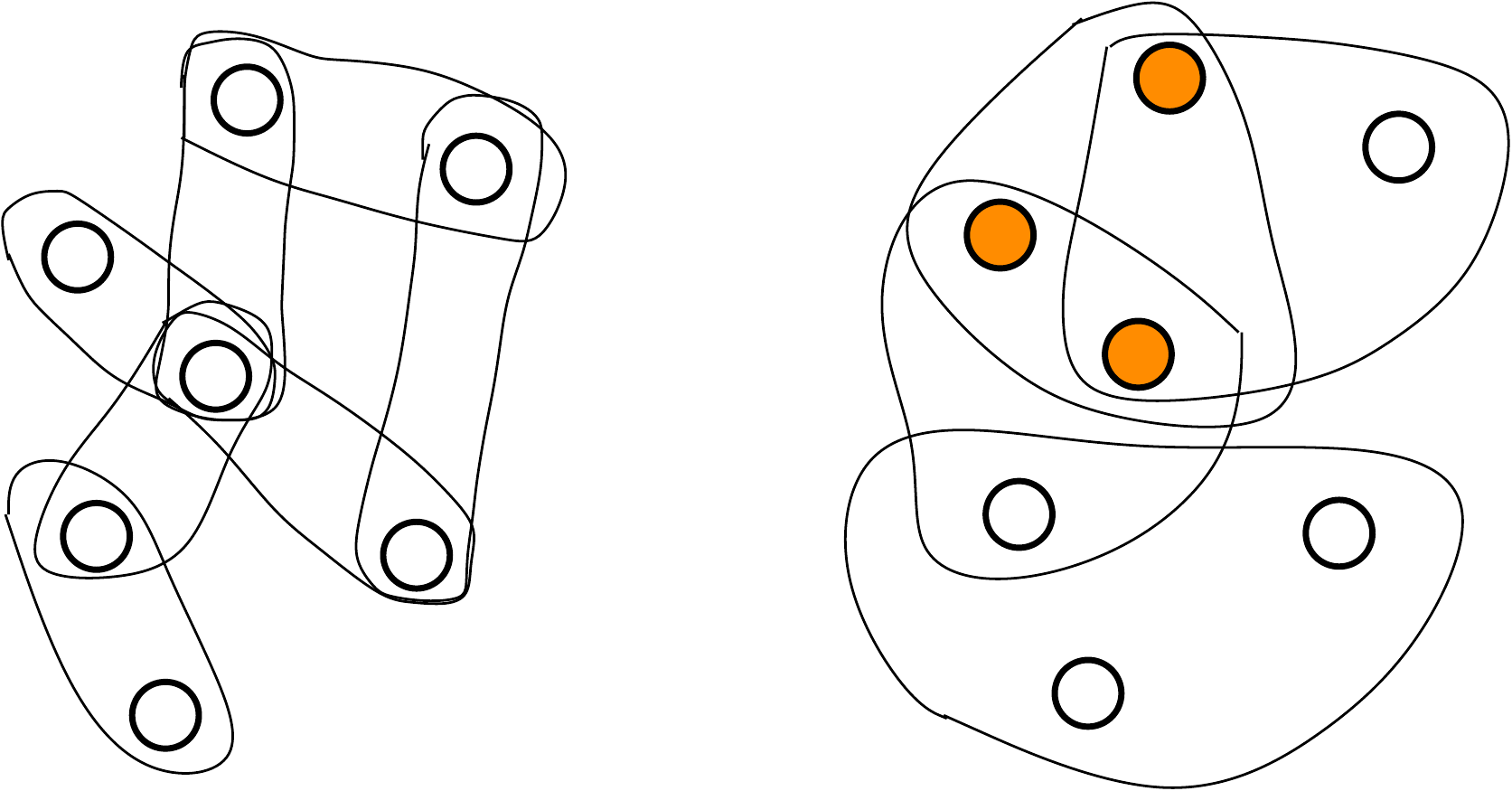}} 
 \caption{\label{fig:hyper} 
 An example of a $2$-local interaction hypergraph (left) and $3$-local interaction hypergraph (right).
 The particles which are star centers are unfilled, and those that are not, are filled in.}
\end{figure}

Bravyi and Vyalyi prove a lemma (restated here as  
Lemma \ref{lem:BV}), which shows that 
particles which are centers of ``stars'', are what we call ``separable''. 
This means that if $q$ is such a center of a star, 
its Hilbert space ${\cal H}_q$ can be 
{\it decomposed} to a direct sum of subspaces, which are all preserved 
by all interactions involving $q$:  
$${\cal H}_q = \bigoplus_{\alpha}{\cal H}_{\alpha}^q.$$
Moreover, each subspace ${\cal H}_{\alpha}^q$ can be written as a 
tensor product of sub-particles,  
$${\cal H}_{\alpha}^q = \otimes \left(\bigotimes_{k}{\cal H}_{\alpha}^{q.k}\right)
$$ where $k$ runs over all particles that interact with $q$, 
and the interaction between $q$ and $k$ is non-trivial only on 
the relevant sub-particle, ${\cal H}_{\alpha}^{q.k}$.  
This way, the restriction to one of the subspaces 
implies a decoupling of the interactions involving $q$ to 
interactions that act on separate sub-particles! 
When all particles are center of stars as is the case for the two-local, 
after each particle is restricted to one of its subspaces the restricted Hamiltonian is a
set of disjoint edges.

From this \cite{BV} derive a proof that the two-local 
problem lies in NP - essentially, the witness is the specification 
of the choice $\alpha$ of the correct subspace of each particle, 
in which the groundstate lies. 

The above proof also implies that in the two-local case, there is an 
eigenbasis of the Hamiltonian
in which any eigenstate (and in particular any ground state) has
a very limited and local structure of entanglement - 
the state can be generated by a depth-two
quantum circuit which uses only two-local gates. 
Of course, a natural question is whether these techniques can be
applied for the more general case, namely, for higher values of $k$. 

\subsubsection{What fails when trying to apply \cite{BV} to three-wise interactions}

Trivially, when generalizing from $2$-local interactions to $3$-local interactions we immediately loose the star topology
which was a crucial component in (\cite{BV}). See, for example, Figure 
\ref{fig:ex1}. 

\begin{figure}[ht]
\center{
 \epsfxsize=2in
 \epsfbox{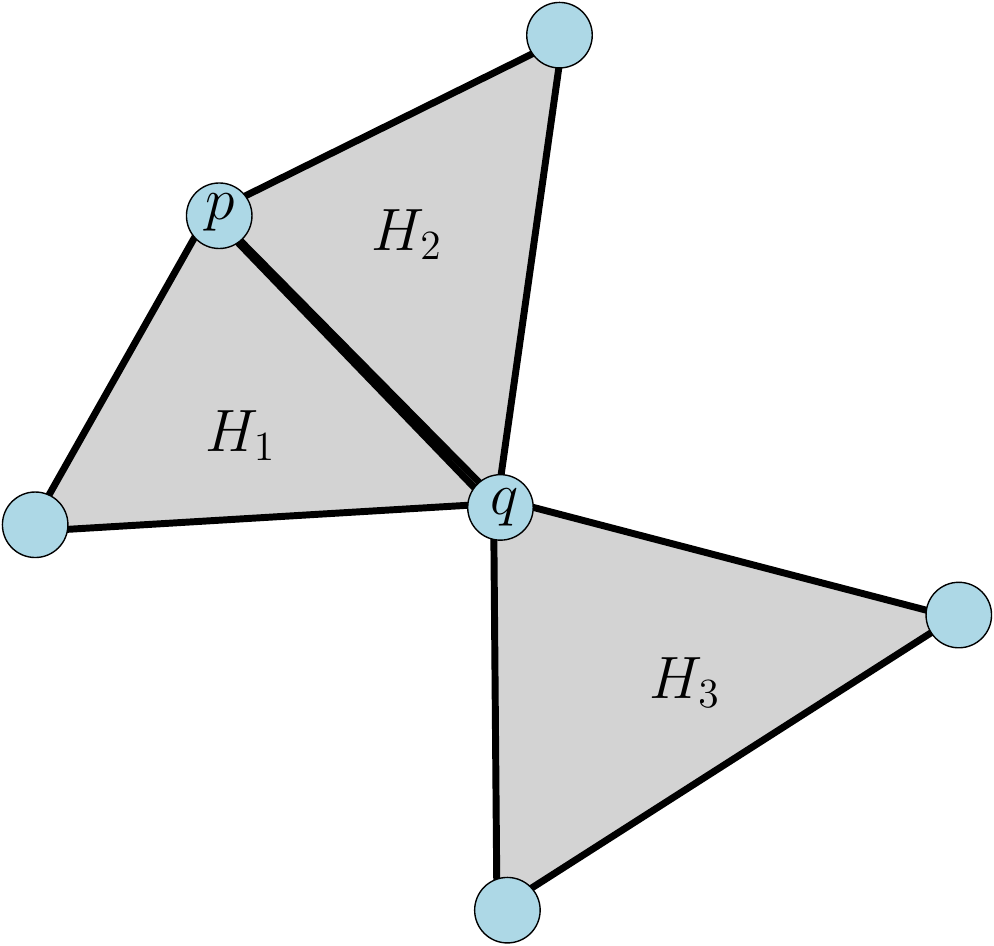}}
 \caption{\label{fig:ex1} In the example both $H_1$ and $H_2$ share a single 
qubit $q$ with $H_3$.
By Lemma (\ref{lem:BV}) $H_1$, and $H_3$ 
agree on some decomposition of $q$, and so do $H_2$ and $H_3$.
Yet, because $H_1$ and $H_2$ share two qubits $p$ and $q$, 
they do not agree necessarily on the same decomposition of $q$.} 
\end{figure}

However, this example is not truly a problem. 
Because we restrict our attention to qubits, 
the low dimensionality implies that one cannot "block-diagonalize" an 
operator on a qubit $q$ in more than one way. 
Thus it turns out that in the example above, there is indeed a "consensus" 
decomposition of $q$ preserved by all $3$ operators on $q$.
However, consider the example of $4$ operators on 
$4$ qubits in Figure \ref{fig:ex2}.

\begin{figure}[ht]
\center{
 \epsfxsize=3in
 \epsfbox{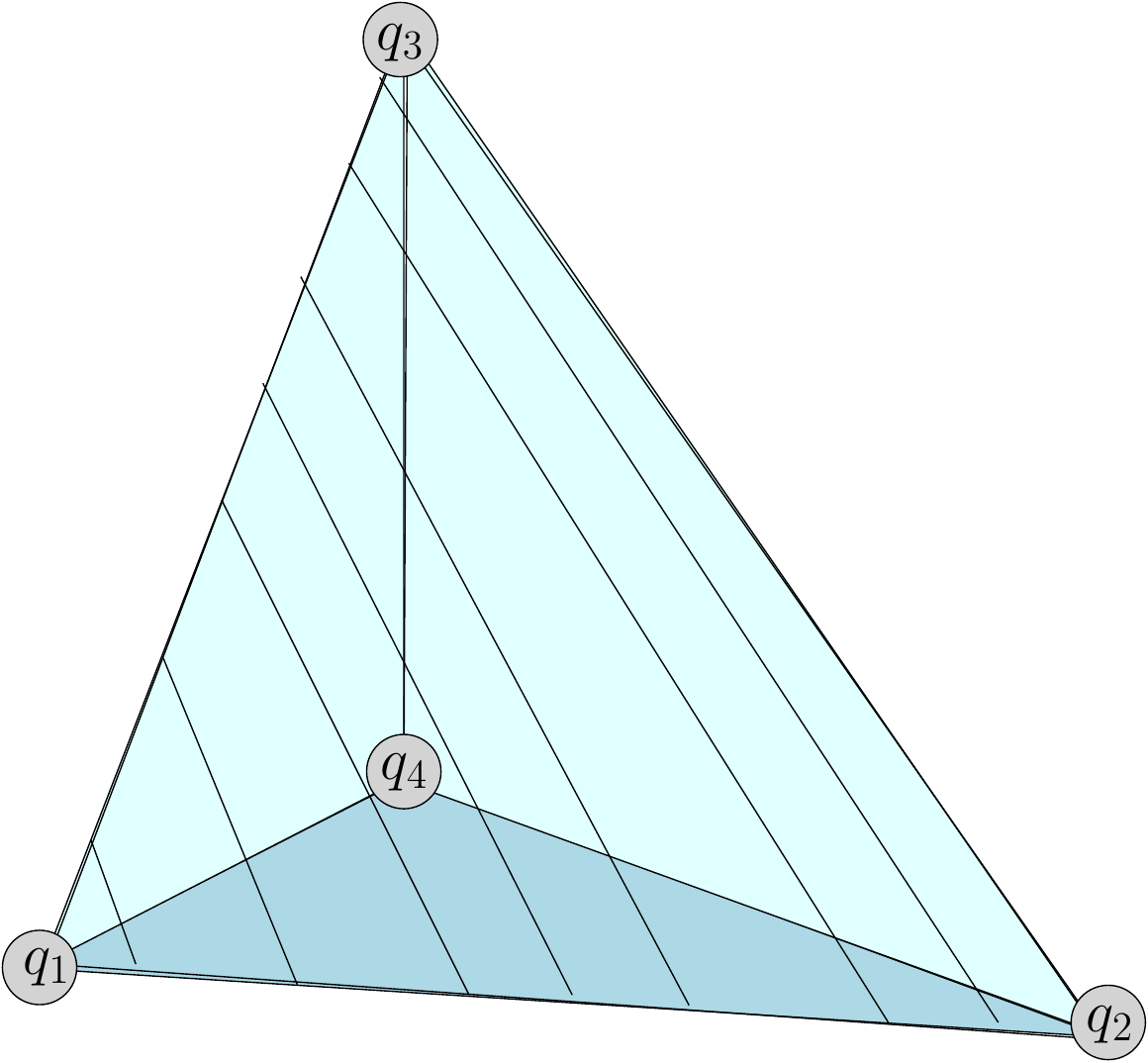}} 
 \caption{\label{fig:ex2} An example of a topology of interactions which can 
be defined in such a way that, say, for $q_1$, 
no decomposition exists, 
which is preserved by all operators acting on it.}
\end{figure}

Since any pair of operators share $2$ qubits, it may be that for none 
of the qubits does there exist a direct-sum
decomposition which is preserved by all operators on that qubit.
This in fact emanates from the nature of the commutativity 
relation for $3$-local terms: it can be generated using 
not just one particle as in $\cite{BV}$ but 
may involve more complex relations involving two particles.

\subsubsection{General idea of the proof}
The way we overcome this obstacle is by showing that truly complex 
structures, such as the example of Figure \ref{fig:ex2}, 
can only be of local nature, after we remove all separable qubits from the 
system. In other words, any attempt to expand such examples  
by adding more interactions with additional qubits 
inevitably makes are least one qubit separable. 

The proofs of the two theorems turn out to require quite different 
tools to achieve this goal; 
Below we provide overviews of those proofs. 

\subsubsection{Proof sketch for the case of three-wise interactions of qubits}
Let us start with explaining the proof of Theorem \ref{MainClaim}, 
namely the case of qubits. 
We build upon the analysis of
\cite{BV}, identifying qubits which behave "classically".
Those are qubits for which there exists an orthogonal basis, such 
that all operators are block-diagonal w.r.t. this basis.
We call those qubits ``separable''. 
If the original system has a zero eigenstate, 
so does a restricted version of the system when 
each separable qubit is restricted to one of those subspaces. 
This restriction of each of the qubits can be provided by 
the prover, which implies
that those qubits can be removed from the problem altogether. 
Unlike in the case of two-body interactions, however, not all qubits can 
be removed this way. Most of the proof of 
Theorem \ref{MainClaim} is focused
on handling the residual problem. 

The main point is that in the residual problem, 
configurations such as that in Figure \ref{fig:ex2} may exist, 
however, when the qubits are non-separable, 
they can only grow to some bounded constant size. 
More precisely, we prove that the fact that the remaining qubits are 
not separable, implies severe restrictions on the geometry of the problem.  
We show that any connected component of 
the residual problem after removing the separable qubits 
contains a certain one dimensional structure in the interaction graph,  
which we call a \emph{backbone}.  
The backbone is a long sequence of local Hamiltonian 
terms, as in Figure (\ref{fig:bkbn}). 

\begin{figure}[ht]
\center{
 \epsfxsize=3in
 \epsfbox{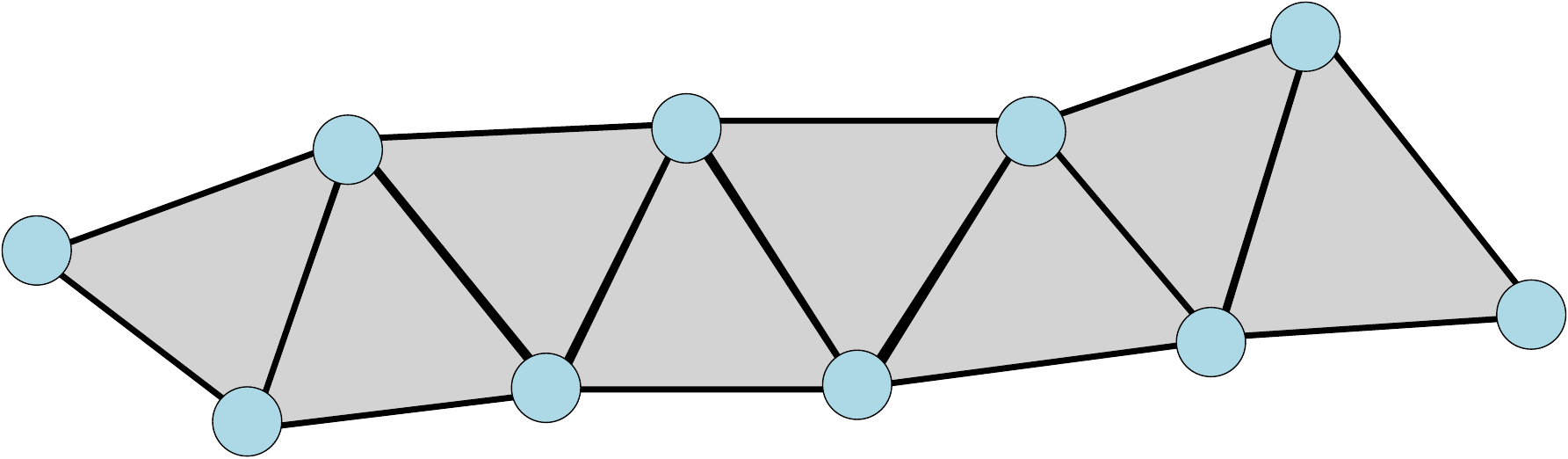}}
 \caption{\label{fig:bkbn} After removing all classically-behaving qubits, the residual Hamiltonian contains a certain $1$-dimensional structure, denoted as the "backbone".  We prove that any term in the Hamiltonian
must act on at least two "close" backbone qubits.}
\end{figure}

The key property is that all Hamiltonian terms in one connected component
in the residual problem  
must act on at least \emph{two} qubits in the backbone of that component. 
Moreover, the backbone qubits that a term acts on, must be 
within a constant distance, in terms of edges of the backbone. 
Thus, Hamiltonian terms cannot connect distant particles in 
the backbone;    
Moreover, there cannot be other ``shortcuts''
connecting distant parts of the backbone 
through interactions with qubits outside of the backbone. 
We call this an \emph{almost one dimensional structure}. 
We use this structure to prove that we can combine 
sets of constantly many nearest neighbor particles in the backbone into
large particles, which are still of constant dimension, 
but in which interactions only act on nearest neighbors 
large particles. The methods of \cite{BV} can be used here again to show that 
the large particles can then be sliced, in such a way that all 
interactions become two-local. 

\subsubsection{Proof sketch for the planar three-wise interactions of qutrits}

The proof of Theorem \ref{MainClaim2} which 
proves a similar result with qutrits on the plane is quite different. 
The first step is similar:  
we identify qutrits  that behave "classically" (i.e., separable), 
which enables us to remove them, or at least reduce their dimensionality to $2$; 
We then proceed to prove strong geometric restrictions on 
the residual problem, in which no particle can be further reduced. 

To deduce the geometrical constraints, we would like to make use of
the fact that
the remaining particles are not separable. However, the phenomenon of
separability of a qutrit is somewhat more involved than that of a
qubit. We say a qutrit is separable if there exists a non-trivial
decomposition of its space to orthogonal  subspaces, and each of these
subspaces is preserved by all operators acting on the qutrit. But
whereas for qubits, due to the low dimensionality, such a
decomposition, if exists, is unique, for qutrits this is no longer the
case, as there is freedom in the choice of basis of a two dimensional
space. This prevents the proof for the case of qubits from going
through.

We can overcome this difficulty in the case of 
nearly Euclidean planar interactions.
For such instances, we show that the fact that all qudits are inseparable,
implies that each qudit can be acted upon by at most a constant number
of operators; More precisely, the degree of each vertex in the
interaction graph is at most $5$. This is a completely geometrical
constraint on the interaction graph of the residual problem, and it is
in fact a very strong constraint which suffices for our purposes.
We show that if one attempts to cover large regions of the plane with 
faces with only three edges (corresponding to the three-local interactions) 
and such that no vertex has degree $6$ or more, then 
``holes'' must be formed, i.e., faces 
which are not distorted-triangles but have more edges, 
and thus do not correspond to any 
term in the Hamiltonian. 
The point is that those holes have a constant density; 
in other words, each vertex in the graph has a 
hole within constant distance (measured in number of faces separating 
the vertex from the hole).  
We use those holes in order to cut 
the interaction graph to smaller (constant size) pieces
which when combined to particles, induce two-local interactions. 
The problem thus lies in NP by \cite{BV}. 

We provide more detailed overviews of the two proofs in Subsections (\ref{sec:3lcl2})
and (\ref{sec:3lcl3}), this will be easier 
after we provide some of the necessary definitions and 
notations in Section \ref{sec:notation}.  

\subsection{Conclusions and Further work} 
The results in this paper focus on two aspects of commuting 
Hamiltonians: the first is extending the containment in NP 
also for three body interactions, where
 a fundamental barrier is encountered exactly when topological order can 
be present in the ground space.  Three body interactions seemed before 
as the barrier standing between \cite{BV} and the extension towards 
a proof of containment in NP of the general case; here we show that 
the barrier is far more intriguing, and has to do with the 
appearance of topological order. 

The second aspect is the proof that 
Kitaev's celebrated construction of topological order using Toric codes 
is optimal, a statement which is of interest in various contexts, 
such as physical implementations of topological order states, 
the understanding of topological quantum codes,
and our general understanding of quantum multiparticle 
entanglement and quantum states. 

The barrier exposed in this paper by no means implies that 
we should neglect the hope to prove that 
the general commuting Hamiltonian problem is in NP. 
In fact, we hope that the barrier encountered here would clarify as to 
how we need to proceed in order to prove (or possibly disprove) 
the conjecture that the general problem lies in NP. 

One possible direction to explore in order to attack the 
general commuting case is the following. As is well known, 
topological order states such as Toric codes do have 
short classical descriptions, which are in fact classical descriptions 
of small depth quantum circuits, except those circuits are
{\it non-local} (i.e., not nearest neighbor on the grid). These are called  
MERA \cite{Vid,AV}. Those descriptions  
allow computing local observables efficiently using a classical computer.
From the point of view of $NP$ verification, this is clearly sufficient.
If one can show that such MERA descriptions exist for any ground state 
of commuting Hamiltonian, this would imply that the problem lies in NP. 
It is possible that the algebraic methods of \cite{BV} can be used in an
innovative way (perhaps by recursion or by other means) 
to imply that there exist such MERA-type poly-size classical
descriptions of eigenstates for any $k$-local commuting Hamiltonian on a grid.

Our proofs are quite involved.
An indication to this complexity is that though our results imply 
that the ground states in the systems we study can be generated by 
constant depth quantum circuits, and thus can only create local entanglement, 
the locality we derived is quite large - 
the scale of entanglement involves a
number of qubits or qutrits which is of the order of a few tens.  
An intriguing open question is whether the 
complexity of our results is essential to the problem, 
or it can be removed. 
If not, the proofs indicate that 
in the three-local systems we study, though the 
entanglement structure is restricted to being local, 
still quite complicated (though local) structures 
of entanglement can emerge, which span large though 
constant sets of particles, with the constant much larger than 
the natural scales in the system (say, $2$ or $3$).  
It would be interesting to understand this aspect further. 

Finally, a technicality in the proof of the qutrit case is that the graph 
is required to be NE, rather than just planar. 
We speculate that in fact the NE restriction is not necessary; 
it is used only in the proof of 
Claim \ref{cl:cutting} 
and we believe the claim holds for the general planar case. 
Indeed, this restriction does not have strong implications for the 
main message of the paper (namely, the tight boundary between TO and 
local entanglement), since TO is in any case studied only in such NE 
systems; still it would be nice to close
that corner and it would make our statements somewhat cleaner.   

{~}

\noindent{\bf Organization of Paper:} 
The structure of this paper is as follows: 
In Section (\ref{sec:notation}) we lay 
out some notations and definitions, and in section (\ref{sec:2lcl}) we restate 
an important lemma in the representation theory of $C^*$-algebra that was at the center of \cite{BV}, and use it to reprove their result 
that the $2$-local Hamiltonian problem is in NP 
(perhaps providing a slightly simpler representation of the proof).  
Then, in section (\ref{sec:3lcl2}) we prove that $CLH$ for $3$-local operators on qubits has a classical verification protocol.
In section (\ref{sec:3lcl3}) we extend this result for qutrits, in cases where the interaction graph is a nearly Euclidean planar graph.
We then use these two proofs, and the result of $\cite{BHV}$ to derive 
tight conditions for Topological Order.

\section{Background, Notations and Definitions}\label{sec:notation}

\subsection{Hamiltonians and Hilbert Spaces} 
We use the following standard notation: 
\begin{itemize}

\item We denote Hilbert spaces by graphical symbols: 
${\cal H}, {\cal H}_i$, etc. 
 The set of linear operators over the complex numbers, 
acting on a given Hilbert space 
${\cal H}$, is denoted by $L({\cal H})$. 

\item 
Unless otherwise noted, we denote by ${\cal H}$ the Hilbert space
 of $n$ qudits:
$$ {\cal H} = {\cal H}_1 \otimes \hdots \otimes {\cal H}_n. $$
The dimensionality of the qudits is denoted by $d$. 

\item 
A $k$-local operator $h$ is an operator which acts 
on a subset of size $k$ of the $n$ 
qudits $S \subseteq \left\{1,\hdots,n\right\}$,
hence $|S| = k$, and we have 
$H\in \mathbf{L}
(\otimes_{j\in S}{\cal H}_j) \otimes \left(\otimes_{j\notin S} I_j \right)$.

\end{itemize}

\noindent Less standard notation includes: 
\begin{itemize} 

\item 
To specify that an operator
$H$ acts non-trivially on some specific qubit $q$, we write $H(q)$.
We say an operator acts non-trivially on a particle $q$ if the 
operator cannot be written as a scalar on $q$ tensor some operator on 
the remaining particles.  

\item
The set of qudits examined non-trivially by an operator $H_i$ is denoted 
by $A_i$; The set of particles examined non-trivially 
by a set of operators $B$, is denoted by $A_B$. 

\end{itemize}

\subsection{The Local Hamiltonian Problem and its interaction graph} 
\begin{definition}

\textbf{The $(k,d)$ local Hamiltonian problem for commuting operators, 
$CLH(k,d)$}

\noindent 
The $(k,d)$ local commuting Hamiltonian problem 
on $n$ qudits of dimension $d$, denoted by $CLH(k,d)$, is defined 
as follows. We are given a set $S$ of $poly(n)$ $k$-local projections, $H_i$, 
acting on $n$ particles each of dimension 
$d$, such that all terms in $S$ 
pairwise commute. We are asked whether there exists an 
eigenstate of $H=\sum_{i\in S} H_i$ with eigenvalue $0$.
\end{definition}

\begin{definition}

\textbf{ $G_S$: Interaction graph of a $CLH$ instance}

\noindent 
The interaction graph of an instance $S$ of $CLH(k,d)$ 
is the graph $G_S = (V,E)$, where $V$ is a set of $n$ nodes, each 
corresponding to a qudit, and an edge connecting nodes $i$ and $j$ is in $E$ 
(namely, $(i,j)\in E$) if there exists some $H_m\in S$
such that both $i$ and $j$ belong to $A_m$. 
\end{definition}

\begin{definition}

\textbf{Neighborhood of a qudit in an instance $S$ of $CLH(k,d)$}

\noindent
We denote by ${\cal N}_S(q)$ the neighboring set of a 
qudit $q$ w.r.t. $S$, namely,  
the set of all qudits $p$ adjacent to $q$ in $G_S$.
\end{definition}

\subsection{Operators Preserving Subspaces} 
An operator $A$ is said to preserve a subspace $S$ 
if $A(S)\subseteq S$. 
The following facts are trivial to prove: 

\begin{fact}\label{fact:trivial1}
If $A$ is Hermitian, if it preserves a subspace $S$ it also 
preserves the orthogonal complement of $S$. 
\end{fact}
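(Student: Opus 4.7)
The plan is to argue directly from the definition of the orthogonal complement, using Hermiticity to move $A$ across the inner product. Let $v$ be an arbitrary vector in $S^\perp$. To show $Av \in S^\perp$, it suffices to verify that $\langle s, Av\rangle = 0$ for every $s \in S$.

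The main step is to use $A = A^\dagger$ to rewrite
\[
\langle s, Av\rangle = \langle A s, v\rangle.
\]
By hypothesis $A$ preserves $S$, so $As \in S$; combined with $v \in S^\perp$ this gives $\langle As, v\rangle = 0$, which is exactly what is needed. Since $v \in S^\perp$ was arbitrary, $A(S^\perp) \subseteq S^\perp$, establishing the fact.

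There is no real obstacle here: the only ingredient beyond the definitions is the ability to move $A$ from one side of the inner product to the other, which is precisely the content of Hermiticity. The reason the statement requires $A$ to be Hermitian (rather than merely linear) is that otherwise one would only conclude that $A^\dagger$ preserves $S^\perp$, which need not coincide with $A$ preserving it.
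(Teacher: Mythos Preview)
Your proof is correct and is exactly the standard argument one would expect here. The paper itself does not spell out a proof of this fact, simply flagging it as ``trivial to prove''; your write-up supplies the natural details using Hermiticity to move $A$ across the inner product.
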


\begin{fact}\label{fact:trivial2}
If a linear 
operator $A$ commutes with a projection on a subspace $S$, then 
$A$ preserves $S$. 
\end{fact}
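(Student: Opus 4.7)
The plan is to unpack what it means for $A$ to commute with the projection $P$ onto $S$, and then directly verify the defining property $A(S)\subseteq S$ on an arbitrary vector of $S$. Since $P$ is the orthogonal projection onto $S$, membership $v\in S$ is equivalent to $Pv=v$, and membership in $S$ of an arbitrary vector $w$ is equivalent to $Pw=w$ (i.e., $w$ lies in the range of $P$). So the whole statement reduces to showing that if $v=Pv$ then $Av=PAv$.

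First I would fix an arbitrary $v\in S$ and write $Pv=v$. Applying $A$ gives $Av=APv$, and invoking the hypothesis $AP=PA$ rewrites this as $Av=PAv$. This exhibits $Av$ as a fixed point of $P$, hence as a vector in the range of $P$, which is exactly $S$. Therefore $Av\in S$, and since $v$ was arbitrary, $A$ preserves $S$.

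There is really no obstacle here: the entire content of the fact is the one-line identity $Av=APv=PAv$. The only thing to be careful about is that we are using the \emph{orthogonal} projection $P$ onto $S$ (which is implicit in the phrase ``projection on a subspace''), so that the range of $P$ is literally $S$ rather than some isomorphic copy of it; otherwise the conclusion $Av\in S$ would need an extra identification step. Given that, the argument needs no further ingredients and can be written in two lines of display math.
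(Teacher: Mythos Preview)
Your proof is correct and is exactly the standard one-line argument one would expect here. The paper itself does not supply a proof of this fact; it simply declares Facts~\ref{fact:trivial1} and~\ref{fact:trivial2} to be ``trivial to prove'' and moves on, so there is nothing to compare against beyond noting that your computation $Av = APv = PAv \in \operatorname{ran}(P) = S$ is precisely the intended content.
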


\subsection{Algebras}
In this paper we consider finite dimensional $c*$-algebras. 
For the purposes of this paper, these are 
complex algebras (denoted ${\cal A}$) 
of linear operators (described by matrices) 
with the additional restriction that 
${\cal A}$ is closed under the operation of taking adjoints of operators (i.e,
the {\emph dagger, $\dagger$} operation). 

We often refer to the algebra {\emph generated by a given set of linear 
operators}, referred to as generators. The generators are 
always a set of matrices of the same dimensionality; 
and the algebra generated by them is defined
 either as the minimal algebra that contains the linear subspaces
spanned by the generators, or equivalently, 
the algebra generated by the set of generators union with the identity 
matrix. 
 
\subsection{Algebras induced by operators}

\begin{definition}\label{def:induced} 
\textbf{Algebra induced by an operator}

\noindent
Let $H=H(q)$ be an operator on $q$, and let us write 
\begin{equation}\label{eq:deq}
H = \sum_{\alpha} A_{\alpha} \otimes B_{\alpha}
\end{equation}
such that
$A_{\alpha}$ acts on $q$, and $B_{\alpha}$ acts on the 
rest of the environment, and the set $\left\{B_{\alpha}\right\}$ is
linearly independent.
Then the {\it algebra induced by $H$ on $q$} is 
the algebra inside ${\cal L}({\cal H}_q)$ generated 
by $\left\{A_{\alpha}\right\}_{\alpha} \cup \left\{I\right\}$.
\end{definition}

\begin{fact}\label{fact:independent}
Given an operator $H(q)$, the induced algebra on $q$, 
${\cal A}_q^H$ is independent of our choice 
of how to write $H$ as a sum as in Equation \ref{eq:deq},
so long as the $B_{\alpha}$ operators are linearly independent.
\end{fact}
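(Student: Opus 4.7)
The plan is to show that the linear span $\mathrm{span}\{A_\alpha\} \subseteq L(\mathcal{H}_q)$ is itself determined by $H$ alone (not by the particular decomposition), and then observe that the induced algebra, being generated by this span together with $I$, depends only on $H$.

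First, I would fix a reference basis $\{E_i\}$ of the operator space $L(\mathcal{H}_{\mathrm{env}})$ on the complement of $q$ (for example, an orthonormal basis with respect to the Hilbert--Schmidt inner product). Then $H$ has a unique expansion $H = \sum_i C_i \otimes E_i$ with $C_i \in L(\mathcal{H}_q)$, and the subspace $V_q(H) := \mathrm{span}\{C_i\}$ is manifestly an invariant of $H$.

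Next I would compare $V_q(H)$ with $\mathrm{span}\{A_\alpha\}$ for an arbitrary decomposition $H = \sum_\alpha A_\alpha \otimes B_\alpha$ in which the $B_\alpha$ are linearly independent. Writing $B_\alpha = \sum_i M_{\alpha i} E_i$ and substituting yields
\begin{equation*}
H = \sum_i \left( \sum_\alpha M_{\alpha i} A_\alpha \right) \otimes E_i,
\end{equation*}
so by uniqueness $C_i = \sum_\alpha M_{\alpha i} A_\alpha$, giving $V_q(H) \subseteq \mathrm{span}\{A_\alpha\}$. For the reverse inclusion, linear independence of $\{B_\alpha\}$ is equivalent to the matrix $M$ having linearly independent rows, hence possessing a right inverse $N$ with $\sum_i M_{\alpha' i} N_{i\alpha} = \delta_{\alpha \alpha'}$. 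Then $\sum_i N_{i\alpha} C_i = A_\alpha$, placing each $A_\alpha$ inside $V_q(H)$. We conclude $\mathrm{span}\{A_\alpha\} = V_q(H)$.

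Finally, I would invoke the elementary fact that the $*$-algebra generated by a set of operators depends only on the linear subspace they span (together with $I$), since algebras are closed under linear combinations: any two spanning sets of the same subspace produce the same closure under sums and products. Applying this to two decompositions of $H$, both with linearly independent environment parts, shows that the induced algebras coincide. There is no real obstacle here; the argument is pure linear algebra, the only subtlety being the observation about algebras being determined by their generating subspace rather than by any particular spanning set.
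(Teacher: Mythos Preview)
Your proof is correct and follows essentially the same linear-algebra route as the paper: both arguments show that the linear span $\mathrm{span}\{A_\alpha\}$ is an invariant of $H$ by writing one family of environment operators as linear combinations of another and reading off the corresponding relations among the $A$'s. The only difference is organizational---you fix a reference basis $\{E_i\}$ and compare every decomposition to the canonical span $V_q(H)$, whereas the paper compares two arbitrary decompositions directly (completing $\{B_\alpha\}$ to a basis, expanding the $\hat B_\beta$ in it, and invoking symmetry for the reverse inclusion)---but the underlying mechanism is identical.
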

\begin{proof} 
Let us decompose $H(q)$ in two different ways:
$$ H = \sum_{\alpha} A_{\alpha} \otimes B_{\alpha} = 
\sum_{\beta} \hat{A}_{\beta} \otimes \hat{B}_{\beta} $$
where the sets $\left\{B_{\alpha}\right\}_{\alpha}$ and $\left\{\hat{B}_{\beta}\right\}_{\beta}$ are each linearly independent.
Let ${\cal A},{\cal \hat{A}}$ denote the $C^*$-algebra of $H$ on $q$ 
induced by the first and second decompositions, respectively. 
We show that ${\cal A}\subseteq {\cal \hat{A}}$, by symmetry 
this suffices to show that the two algebras are equal. 
Let us complete the set $B$ into a basis of the second subsystem, 
by the matrices $\{B_{\alpha'}\}_{\alpha'}$.  
We can thus write the $\hat{B}$ matrices in terms of this basis:  
$$ \hat{B}_{\beta} = \sum_{\alpha} \gamma_{\beta}^{\alpha} B_{\alpha}+ 
\sum_{\alpha'} \gamma_{\beta}^{\alpha'} B_{\alpha'} $$
with $\gamma_{\beta}^{\alpha}$, $\gamma_{\beta}^{\alpha'}$ complex 
numbers.  
Setting the equation above, in the second decomposition, we get:
$$ H = \sum_{\beta}  \hat{A}_{\beta} \otimes \left( 
\sum_{\alpha} \gamma_{\beta}^{\alpha} B_{\alpha} + \sum_{\alpha'} \gamma_{\beta}^{\alpha'} B_{\alpha'} \right) $$
which means that
$$ H = \sum_{\alpha} \left( \sum_{\beta} \gamma_{\beta}^{\alpha} 
\hat{A}_{\beta} \right) \otimes B_{\alpha}+ 
\sum_{\alpha'} \left( \sum_{\beta} \gamma_{\beta}^{\alpha'} \hat{A}_{\beta} \right) \otimes B_{\alpha'}. $$ 
We now recall that
the decomposition in terms of a basis of linearly independent 
matrices is unique. Comparing the last equation with  
 $$ H = \sum_{\alpha} A_{\alpha} \otimes B_{\alpha}+ \sum_{\alpha'} A_{\alpha'} 
\otimes B_{\alpha'}, $$ the matrix in front of each  
$B_{\alpha}$ must be the same. Hence, $A_\alpha$ is contained in the algebra 
${\cal A}$ spanned by $\hat{A}_{\beta}$. 
\end{proof} 

\begin{fact}\label{fact:closed} 
Given a Hermitian operator $H(q)$, the induced algebra on $q$
is closed under the adjoint operator. 
\end{fact}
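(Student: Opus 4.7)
The plan is to exploit the Hermiticity of $H$ together with the uniqueness property already established in Fact \ref{fact:independent}. First I would fix a decomposition $H = \sum_\alpha A_\alpha \otimes B_\alpha$ in which $\{B_\alpha\}$ is linearly independent, so that by definition $\mathcal{A}_q^H$ is the algebra generated by $\{A_\alpha\}_\alpha \cup \{I\}$. The goal is then reduced to showing that $A_\alpha^\dagger \in \mathcal{A}_q^H$ for each $\alpha$, since once the generators of the algebra have their adjoints inside the algebra, closure under the $\dagger$ operation follows from the algebraic structure.

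The key observation is that applying $\dagger$ to the identity $H = H^\dagger$ yields a second decomposition
$$H \;=\; \sum_\alpha A_\alpha^\dagger \otimes B_\alpha^\dagger,$$
and the new ``environment'' matrices $\{B_\alpha^\dagger\}_\alpha$ are still linearly independent, because the adjoint is an anti-linear bijection on the space of matrices and therefore sends linearly independent sets to linearly independent sets. I would then invoke Fact \ref{fact:independent} to conclude that the algebra induced by this second decomposition equals $\mathcal{A}_q^H$. In particular, each $A_\alpha^\dagger$ lies in $\mathcal{A}_q^H$.

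To finish, I would verify the purely algebraic claim that a (unital) algebra whose generators all have their adjoints inside it is itself closed under $\dagger$. Every element of $\mathcal{A}_q^H$ is a finite complex linear combination of finite products of the $A_\alpha$'s (and $I$). Taking $\dagger$ of such an element produces, using $(XY)^\dagger = Y^\dagger X^\dagger$ and anti-linearity, a linear combination (with conjugated coefficients) of products of $A_\alpha^\dagger$'s in reversed order. Since each $A_\alpha^\dagger$ has just been shown to lie in $\mathcal{A}_q^H$, and $\mathcal{A}_q^H$ is closed under products and linear combinations, the adjoint also lies in $\mathcal{A}_q^H$.

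The only step that demands a moment of care is the appeal to Fact \ref{fact:independent} for the $\dagger$'ed decomposition, specifically the preservation of linear independence of the $B_\alpha$'s under the adjoint operation; everything else is routine bookkeeping. Hermiticity of $H$ is used exactly once, to identify the two decompositions of the \emph{same} operator $H$.
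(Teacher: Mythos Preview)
Your proof is correct and follows essentially the same approach as the paper: use Hermiticity to obtain the second decomposition $H=\sum_\alpha A_\alpha^\dagger\otimes B_\alpha^\dagger$, invoke Fact~\ref{fact:independent} to conclude the $A_\alpha^\dagger$ lie in the induced algebra, and then deduce closure under $\dagger$. You are simply more explicit than the paper about why $\{B_\alpha^\dagger\}$ remains linearly independent and about the routine final step passing from adjoints of generators to adjoints of arbitrary elements.
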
 

\begin{proof} 
We write 
$H= \sum_{\alpha} A_{\alpha} \otimes B_{\alpha}$ with $B_\alpha$ linearly independent; 
then the induced algebra is the one generated by 
$\left\{A_{\alpha}\right\}_{\alpha} \cup \left\{I\right\}$.
But since $H$ is Hermitian,  
$H=\sum_{\alpha} A^\dagger_{\alpha} \otimes B^\dagger_{\alpha}$ and
so the induced algebra is also the algebra generated by 
$\left\{A_{\alpha}^\dagger\right\}_{\alpha} \cup \left\{I\right\}$ by 
Fact \ref{fact:independent}. 
This means that the induced algebra also contains the adjoint of the 
generators, and hence is closed under the adjoint. 
\end{proof}  

A simple but crucial fact to this paper is that if two 
terms that intersect only on a qubit 
commute, then the two algebras they induce on $q$ commute: 

\begin{fact}\label{fact:algcomm}
Consider two Hamiltonian terms $H_{j,k}$ intersecting only on the qudit 
$j$. Then the algebras ${\cal A}_{j.k}$ induced by these operators on 
$j$ commute with each other. 
\end{fact}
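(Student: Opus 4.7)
The plan is to reduce the commutation of the two induced algebras to the commutation of a generating set, and to extract the latter from the commutation $[H_1,H_2]=0$ by appealing to the linear independence condition already used in Fact \ref{fact:independent}.

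First I would pick convenient decompositions of the two terms. Writing the two operators as $H_1=\sum_\alpha A_\alpha\otimes B_\alpha$ and $H_2=\sum_\beta C_\beta\otimes D_\beta$, where $A_\alpha,C_\beta$ act on $j$ and $\{B_\alpha\},\{D_\beta\}$ are each linearly independent families of operators on the respective environments. By Fact \ref{fact:independent}, the algebras $\mathcal{A}_{j.1}$ and $\mathcal{A}_{j.2}$ are generated (together with $I$) by $\{A_\alpha\}_\alpha$ and $\{C_\beta\}_\beta$ respectively. The crucial geometric input is that the two terms intersect only on $j$: hence $B_\alpha$ is supported on a set $S_1$ of qudits disjoint from $j$, and $D_\beta$ is supported on a set $S_2$ disjoint from both $j$ and $S_1$. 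In particular $B_\alpha$ and $D_\beta$ act on disjoint tensor factors and therefore commute, and the products $\{B_\alpha\otimes D_\beta\}_{\alpha,\beta}$ form a linearly independent family of operators on $S_1\cup S_2$ (tensor products of linearly independent families on disjoint systems are linearly independent).

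Next I would expand the commutator. Using disjointness of $S_1$ and $S_2$,
\[
H_1 H_2 \;=\; \sum_{\alpha,\beta} A_\alpha C_\beta \otimes B_\alpha\otimes D_\beta, \qquad H_2 H_1 \;=\; \sum_{\alpha,\beta} C_\beta A_\alpha \otimes B_\alpha\otimes D_\beta,
\]
so the assumption $[H_1,H_2]=0$ yields
\[
\sum_{\alpha,\beta}\bigl[A_\alpha,C_\beta\bigr]\otimes B_\alpha\otimes D_\beta \;=\; 0.
\]
By the linear independence of $\{B_\alpha\otimes D_\beta\}$, each coefficient $[A_\alpha,C_\beta]$ must vanish, i.e.\ every generator of $\mathcal{A}_{j.1}$ commutes with every generator of $\mathcal{A}_{j.2}$ (and both algebras contain $I$, which is central). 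Since commutation is preserved under linear combinations and products, the two algebras commute elementwise, which is exactly the claim.

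There is essentially no obstacle here beyond careful bookkeeping; the only place where one must be attentive is the linear independence of the tensor products $B_\alpha\otimes D_\beta$, which hinges on the disjointness of the supports $S_1$ and $S_2$ and thus on the hypothesis that the two terms meet only on $j$.
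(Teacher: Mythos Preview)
Your proof is correct and follows essentially the same approach as the paper: decompose each term with linearly independent environment operators, expand the commutator, and use linear independence of the tensor products $B_\alpha\otimes D_\beta$ to force each $[A_\alpha,C_\beta]=0$. If anything, you are slightly more explicit than the paper in justifying why $\{B_\alpha\otimes D_\beta\}$ is linearly independent and why commutation of generators propagates to the full algebras.
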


\begin{proof}
Let $H_{j,1}$ and $H_{j,2}$ be two commuting operators that share only qudit $j$, and let their
decomposition be as follows:
$$ H_{j,1} = \sum_{\alpha} A_{\alpha}^1 \otimes B_{\alpha}^1, H_{j,2} = \sum_{\alpha} A_{\alpha}^2 \otimes B_{\alpha}^2 $$
where $A_{\alpha}^1$ and $A_{\alpha}^2$ are operators on qudit $j$,
and the sets $\left\{B_{\alpha}^1\right\}$ and $\left\{B_{\alpha}^2\right\}$ are each linearly independent, and act on different qudits.
We get:
$$ [H_{j,1},H_{j,2}] = \sum_{\alpha,\beta} [A_{\alpha}^1,A_{\beta}^2]\otimes B_{\alpha}^1 \otimes B_{\beta}^2 = 0. $$
By the linear independence of $\left\{B_{\alpha}^1\right\}$ and $\left\{B_{\alpha}^2\right\}$ we have that the set
$ \left\{B_{\alpha}^1 \otimes B_{\beta}^2\right\}_{\alpha,\beta} $
is also linearly independent, thus
$ [A_{\alpha}^1,A_{\beta}^2] = 0 $
for all $\alpha,\beta$, meaning that the algebras ${\cal A}_{j.1}$ and ${\cal A}_{j.2}$ commute.
\end{proof} 

\subsection{Representation theory of algebras} 
The proof of \cite{BV}, as well as our proofs,
rely on fundamental
facts from the representation theory of $C^*$ algebra.  
For the purposes of this paper, we restrict attention to 
algebras ${\cal A}\subseteq L({\cal H})$, such 
that ${\cal A}$ is closed under the $^\dagger$ operation.  
All the algebras we will consider are of this form.

\begin{definition} 

\textbf{Center of a $C^*$-algebra, $Z({\cal A})$.}

\noindent
The center of a sub-algebra ${\cal A}$ is defined to be 
the set of all operators in ${\cal A}$ which commute with all the elements in 
${\cal A}$. It is denoted by $Z({\cal A})$. 
\end{definition} 

\begin{definition}

\textbf{A reducible / irreducible $C^*$-algebra}

\noindent
An algebra ${\cal A}$ is said to be irreducible if its center is trivial, i.e. $\mathbf{Z}\left({\cal A}\right)= c\cdot I$, and otherwise it is reducible.
\end{definition}

\begin{fact}\label{fact1}
Let ${\cal A}$ be an irreducible subalgebra of $L({\cal H})$, i.e. $\mathbf{Z}\left({\cal A}\right) = c\cdot I$. 
Then ${\cal H}$ can be written as a tensor product of two subsystems 
${\cal H}_1\otimes {\cal H}_2$ such that 
$$ {\cal A} \approx L({\cal H}_1)\otimes I({\cal H}_2). $$
\end{fact}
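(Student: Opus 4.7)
The plan is to invoke the structure theorem for finite-dimensional $\dagger$-closed algebras (Artin--Wedderburn) and then read off the implied module structure on ${\cal H}$. First I would observe that ${\cal A}$ is a finite-dimensional $\dagger$-closed subalgebra of $L({\cal H})$ containing the identity, hence it is semisimple. The structure theorem then produces a $*$-isomorphism
$$ {\cal A} \;\cong\; \bigoplus_{i=1}^{k} M_{d_i}(\mathbb{C}) $$
for some integers $k, d_1, \ldots, d_k$. Under this isomorphism, $Z({\cal A})$ corresponds exactly to the direct sum of the scalar multiples of the identity in each block, so $\dim Z({\cal A}) = k$. The hypothesis $Z({\cal A}) = \mathbb{C}\cdot I$ forces $k=1$, and therefore ${\cal A} \cong M_d(\mathbb{C})$ for a single integer $d$.

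Next I would exploit the representation theory of $M_d(\mathbb{C})$: up to unitary equivalence it has a unique irreducible module, namely $\mathbb{C}^d$ with the defining action, and since $M_d(\mathbb{C})$ is semisimple every module splits as a direct sum of copies of this irreducible. Apply this to the tautological ${\cal A}$-module ${\cal H}$. Because ${\cal A}$ is $\dagger$-closed, the orthogonal complement of any ${\cal A}$-invariant subspace is itself ${\cal A}$-invariant (if $W$ is invariant, $A\in{\cal A}$, $v\in W^\perp$ and $w\in W$, then $\langle Av,w\rangle = \langle v, A^\dagger w\rangle = 0$ since $A^\dagger\in {\cal A}$ preserves $W$). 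Iterating yields an orthogonal decomposition
$$ {\cal H} \;=\; \bigoplus_{j=1}^{m} W_j, $$
where each $W_j$ is an irreducible ${\cal A}$-submodule isomorphic to $\mathbb{C}^d$ as an $M_d(\mathbb{C})$-module. Fixing a common unitary identification $W_j \cong {\cal H}_1 := \mathbb{C}^d$ and setting ${\cal H}_2 := \mathbb{C}^m$ produces a unitary ${\cal H} \cong {\cal H}_1 \otimes {\cal H}_2$ under which the action of ${\cal A}$ becomes the standard one of $M_d(\mathbb{C})$ on the first factor, i.e.\ ${\cal A} \approx L({\cal H}_1) \otimes I({\cal H}_2)$, as claimed.

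The main obstacle, if one insists on a self-contained argument rather than quoting Artin--Wedderburn, is step one. There one would have to (i) use $\dagger$-closure to orthogonally split ${\cal H}$ into minimal ${\cal A}$-invariant subspaces; (ii) apply Schur's lemma in the $\dagger$-closed setting to conclude that the image of ${\cal A}$ acting on each such subspace has trivial commutant there; (iii) invoke the double-commutant (Burnside) theorem to identify that image with the full matrix algebra on the block; and then (iv) use the trivial-center hypothesis to force all blocks to be ${\cal A}$-isomorphic, so that the isotypic decomposition becomes a genuine tensor product. Apart from these representation-theoretic ingredients, the remainder is bookkeeping about the unitary that implements the tensor factorization.
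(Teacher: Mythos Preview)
The paper does not actually prove this statement; it is quoted as a standard fact from the representation theory of $C^*$-algebras and used as a black box throughout. Your argument via Artin--Wedderburn (trivial center forces a single matrix block ${\cal A}\cong M_d(\mathbb{C})$, then the unique-irreducible-module property of $M_d(\mathbb{C})$ turns the isotypic decomposition of ${\cal H}$ into a tensor product) is correct and is precisely the standard route one would take to justify such a fact, so there is nothing to compare against and nothing to fix.
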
 

\noindent
A generalization of the previous fact is a well-known decomposition theorem from the representation theory
of $C^*$-algebras:
\begin{fact}\label{fact2}
Let ${\cal A}$ be a $C^*$-algebra on some Hilbert space ${\cal H}$.
Then, there exists a decomposition of ${\cal H}$ into a direct sum 
of orthogonal subspaces ${\cal H}_{\alpha}$, 
where each ${\cal H}_{\alpha}$ is a tensor product of two Hilbert spaces,  
${\cal H}_{\alpha}={\cal H}_{\alpha}^1\otimes {\cal H}_{\alpha}^2$ 
such that  
$$ 
{\cal A} \approx \bigoplus_{\alpha} \mathbf{L}\left({\cal H}_{\alpha}^1\right)
\otimes 
\mathbf{I}\left({\cal H}_{\alpha}^2 \right). 
$$

\noindent
The projections on the subspaces ${\cal H}_{\alpha}$
generate $\mathbf{Z}\left({\cal A}\right)$, 
and for each subspace ${\cal H}_\alpha$ the algebra ${\cal A}_{\alpha}$ 
(which is defined to be the algebra ${\cal A}$ restricted to ${\cal H}_\alpha$), 
is irreducible.
\end{fact}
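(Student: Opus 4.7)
The plan is to reduce to the irreducible case, Fact \ref{fact1}, by slicing $\mathcal{H}$ along the spectrum of the center $Z(\mathcal{A})$. First I would note that since $\mathcal{A}$ is closed under $\dagger$, so is $Z(\mathcal{A})$, and hence $Z(\mathcal{A})$ is a commutative $*$-subalgebra of $L(\mathcal{H})$. Such a subalgebra is simultaneously diagonalizable, so it admits a unique finest family of mutually orthogonal projections $\{P_\alpha\}_\alpha$ summing to $I$ and generating $Z(\mathcal{A})$ as an algebra. Setting $\mathcal{H}_\alpha := P_\alpha \mathcal{H}$ gives the desired orthogonal direct sum $\mathcal{H} = \bigoplus_\alpha \mathcal{H}_\alpha$, and because each $P_\alpha$ lies in $Z(\mathcal{A})$, Fact \ref{fact:trivial2} implies every element of $\mathcal{A}$ preserves each $\mathcal{H}_\alpha$. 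Hence $\mathcal{A} \approx \bigoplus_\alpha \mathcal{A}_\alpha$, where $\mathcal{A}_\alpha$ is the restriction of $\mathcal{A}$ to $\mathcal{H}_\alpha$, which is itself a $*$-algebra inside $L(\mathcal{H}_\alpha)$.

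Next I would show that each $\mathcal{A}_\alpha$ is irreducible, i.e.\ $Z(\mathcal{A}_\alpha) = c \cdot I_{\mathcal{H}_\alpha}$. Suppose toward contradiction that some $X_\alpha \in Z(\mathcal{A}_\alpha)$ is not a scalar multiple of $I_{\mathcal{H}_\alpha}$. Extend $X_\alpha$ to an operator $\widetilde{X}$ on all of $\mathcal{H}$ by declaring it zero on every $\mathcal{H}_\beta$ with $\beta\neq \alpha$. Then $\widetilde{X}$ commutes with every element of $\mathcal{A}$ (block by block), so $\widetilde{X} \in Z(\mathcal{A})$. But $\widetilde{X}$ is supported only on $\mathcal{H}_\alpha$ and is not a scalar there, so together with the $\{P_\beta\}$ it generates a strictly larger commutative subalgebra than $\{P_\alpha\}$ alone, contradicting that the $\{P_\alpha\}$ were the finest spectral decomposition of $Z(\mathcal{A})$. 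Therefore each $\mathcal{A}_\alpha$ is irreducible.

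Finally, I would apply Fact \ref{fact1} to each $\mathcal{A}_\alpha \subseteq L(\mathcal{H}_\alpha)$, obtaining a factorization $\mathcal{H}_\alpha = \mathcal{H}_\alpha^1 \otimes \mathcal{H}_\alpha^2$ with $\mathcal{A}_\alpha \approx L(\mathcal{H}_\alpha^1) \otimes I(\mathcal{H}_\alpha^2)$. Summing over $\alpha$ gives the desired global decomposition, and the projections onto the $\mathcal{H}_\alpha$ are precisely the minimal central projections, which generate $Z(\mathcal{A})$ by construction. The step I expect to be the most delicate is the irreducibility argument: one must justify carefully that the extension-by-zero $\widetilde{X}$ really lies in $\mathcal{A}$ itself (not just in its commutant), which uses that $P_\alpha \in \mathcal{A}$ so that $\widetilde{X} = P_\alpha X_\alpha P_\alpha$ can be realized through elements already in $\mathcal{A}_\alpha$ combined with the central projections, and to set up the minimality of $\{P_\alpha\}$ precisely enough to yield the contradiction. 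Everything else is essentially bookkeeping on top of Fact \ref{fact1}.
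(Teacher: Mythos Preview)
The paper does not actually prove this statement: it is quoted as ``a well-known decomposition theorem from the representation theory of $C^*$-algebras'' and used as a black box throughout, so there is no paper proof to compare against. Your sketch is a correct outline of the standard argument, and the delicate point you flag is exactly the right one. Two small remarks: (i) to get the spectral projections of your $\widetilde X$ back into $Z(\mathcal A)$ you should first pass to a self-adjoint part (using that $Z(\mathcal A)$ is $*$-closed) and then observe that in finite dimensions spectral projections are polynomials in the operator, hence stay inside $Z(\mathcal A)$; (ii) the verification that $\widetilde X\in \mathcal A$ is immediate once you write $\widetilde X = P_\alpha a P_\alpha$ for some $a\in\mathcal A$, since $P_\alpha\in Z(\mathcal A)\subseteq\mathcal A$, and centrality of $P_\alpha$ then gives $[\widetilde X,b]=P_\alpha[a,b]P_\alpha=0$ from the assumption that the restriction commutes with $\mathcal A_\alpha$.
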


\begin{fact}\label{fact:presdec}
Let ${\cal A}_1$ and ${\cal A}_2$ be two commuting algebras on a Hilbert space ${\cal H}$, and let
${\cal A}_1$ be decomposed by its center as in fact (\ref{fact2}) - i.e.
a decomposition ${\cal H}=\bigoplus_{\alpha} {\cal H}_{\alpha}$ such that
$ 
{\cal A}_1 \approx \bigoplus_{\alpha} \mathbf{L}\left({\cal H}_{\alpha}^1\right)
\otimes 
\mathbf{I}\left({\cal H}_{\alpha}^2 \right). 
$
Then ${\cal A}_2$ preserves each subspace ${\cal H}_{\alpha}$ in the decomposition above.
\end{fact}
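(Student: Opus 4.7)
The plan is to observe that this fact is essentially an immediate corollary of Fact \ref{fact2} combined with Fact \ref{fact:trivial2}. The main point is to identify the projections onto the subspaces ${\cal H}_\alpha$ as elements of ${\cal A}_1$ itself (not merely operators commuting with ${\cal A}_1$ from outside), and then invoke commutativity of ${\cal A}_2$ with ${\cal A}_1$.

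More concretely, let $P_\alpha$ denote the orthogonal projection of ${\cal H}$ onto the subspace ${\cal H}_\alpha$. By the second part of Fact \ref{fact2}, these projections generate the center $Z({\cal A}_1)$; in particular each $P_\alpha$ lies in $Z({\cal A}_1) \subseteq {\cal A}_1$. Now fix an arbitrary element $a_2 \in {\cal A}_2$. By hypothesis ${\cal A}_1$ and ${\cal A}_2$ commute, meaning every element of ${\cal A}_2$ commutes with every element of ${\cal A}_1$; in particular $[a_2, P_\alpha] = 0$ for each $\alpha$. Applying Fact \ref{fact:trivial2} with $A = a_2$ and $S = {\cal H}_\alpha$ immediately gives $a_2({\cal H}_\alpha) \subseteq {\cal H}_\alpha$. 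Since this holds for every $a_2 \in {\cal A}_2$ and every $\alpha$, the algebra ${\cal A}_2$ preserves each subspace ${\cal H}_\alpha$ in the decomposition, as claimed.

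There is no real obstacle here. The only point worth double-checking is the first step, namely that the $P_\alpha$ are honest members of ${\cal A}_1$; this is guaranteed by Fact \ref{fact2}, which asserts that they generate the center of ${\cal A}_1$, and the center is by definition contained in the algebra. Once this is noted, the rest is a one-line application of the commutativity hypothesis and of the elementary Fact \ref{fact:trivial2}.
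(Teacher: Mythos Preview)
Your proof is correct and follows essentially the same approach as the paper: identify the projections $P_\alpha$ as elements of $Z({\cal A}_1)\subseteq {\cal A}_1$ via Fact~\ref{fact2}, use the commutativity of ${\cal A}_1$ and ${\cal A}_2$ to conclude each $P_\alpha$ commutes with every element of ${\cal A}_2$, and then apply Fact~\ref{fact:trivial2}. The paper's proof is just a terser version of what you wrote.
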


\begin{proof}
For each subspace ${\cal H}_{\alpha}$ there exists a projection $\Pi_{\alpha}\in \mathbf{Z}({\cal A}_1)$ whose image is ${\cal H}_{\alpha}$.
Since ${\cal A}_1$ and ${\cal A}_2$ commute, then each projection 
$\Pi_{\alpha}$ commutes with all operators in ${\cal A}_2$. 
Thus, by Fact \ref{fact:trivial2}
 ${\cal A}_2$ preserves ${\cal H}_{\alpha}$ for all $\alpha$.
\end{proof}

\section{2-local CLH is in NP (Revised from \cite{BV})}\label{sec:2lcl}

The basic facts from the theory of representations of algebras 
presented in Section \ref{sec:notation} can be used to prove 
an important lemma, which is the basis of the 
proof of \cite{BV}. We start with the following definition and a preliminary claim: 

\begin{definition}\label{def:algdec}
Let $\{{\cal A}_j\}_{j=1}^k$ be $k$ mutually commuting algebras 
on some Hilbert space ${\cal H}$.
A separating decomposition is a direct-sum decomposition of ${\cal H}$:
$$ {\cal H} = \bigoplus_{\alpha} {\cal H}_{\alpha}$$
that is preserved by all algebras, such that 
inside each subspace ${\cal H}_{\alpha}$ 
there appears a tensor product structure
$${\cal H}_{\alpha} = 
{\cal H}_\alpha^0 \otimes
{\cal H}_{\alpha}^1 \otimes 
{\cal H}_{\alpha}^2\otimes \cdots \otimes {\cal H}_\alpha^k $$
such that
$$ {\cal A}_j|_{{\cal H}_{\alpha}} \approx 
I_{{\cal H}_{\alpha}^0} \otimes 
I_{{\cal H}_{\alpha}^1} \cdots \otimes I_{{\cal H}_{\alpha}^{j-1}} \otimes
\mathbf{L}({\cal H}_{\alpha}^j) \otimes I_{{\cal H}_{\alpha}^{j+1}}\cdots \otimes I_{{\cal H}_{\alpha}^{k}}. $$
\end{definition}

\begin{claim}\label{cl:algdec}
Let $\{{\cal A}_j\}_{j=1}^k$ be $k$ mutually commuting algebras 
on some Hilbert space ${\cal H}$.
There exists a separating decomposition of ${\cal H}$.
\end{claim}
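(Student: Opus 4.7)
The plan is to prove the claim by induction on the number $k$ of commuting algebras, using Fact \ref{fact2} to peel off one algebra at a time and Fact \ref{fact:presdec} to make sure that the remaining algebras respect the decomposition that results.

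For the base case $k=1$, Fact \ref{fact2} directly produces a decomposition ${\cal H}=\bigoplus_\alpha {\cal H}_\alpha^1\otimes {\cal H}_\alpha^0$ with ${\cal A}_1|_{{\cal H}_\alpha}\approx L({\cal H}_\alpha^1)\otimes I_{{\cal H}_\alpha^0}$, which after a permutation of tensor factors is exactly the form required by Definition \ref{def:algdec}. For the inductive step, suppose the result holds for any family of $k-1$ mutually commuting algebras on any Hilbert space. Apply Fact \ref{fact2} to ${\cal A}_k$ to get
$${\cal H}=\bigoplus_\alpha {\cal H}_\alpha,\qquad {\cal H}_\alpha = {\cal H}_\alpha^k \otimes {\cal K}_\alpha,\qquad {\cal A}_k|_{{\cal H}_\alpha}\approx L({\cal H}_\alpha^k)\otimes I_{{\cal K}_\alpha}.$$
By Fact \ref{fact:presdec}, every ${\cal A}_j$ with $j<k$ preserves each block ${\cal H}_\alpha$, so it makes sense to set ${\cal A}_j^\alpha := {\cal A}_j|_{{\cal H}_\alpha}\subseteq L({\cal H}_\alpha)$.

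The next key step is the commutant computation: inside $L({\cal H}_\alpha^k\otimes {\cal K}_\alpha)$, the commutant of $L({\cal H}_\alpha^k)\otimes I$ is exactly $I\otimes L({\cal K}_\alpha)$. Since every ${\cal A}_j^\alpha$ (with $j<k$) commutes with ${\cal A}_k|_{{\cal H}_\alpha}$, this identity forces ${\cal A}_j^\alpha = I_{{\cal H}_\alpha^k}\otimes {\cal B}_j^\alpha$ for some subalgebra ${\cal B}_j^\alpha\subseteq L({\cal K}_\alpha)$. The $k-1$ subalgebras $\{{\cal B}_j^\alpha\}_{j=1}^{k-1}$ are pairwise commuting on ${\cal K}_\alpha$, so the inductive hypothesis yields a separating decomposition
$${\cal K}_\alpha = \bigoplus_\beta {\cal K}_{\alpha,\beta}^0\otimes {\cal K}_{\alpha,\beta}^1\otimes\cdots\otimes {\cal K}_{\alpha,\beta}^{k-1}$$
in which ${\cal B}_j^\alpha$ appears as the full operator algebra on the $j$-th tensor factor and as identity on every other factor.

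Finally, I would assemble the pieces by labeling blocks with pairs $\gamma=(\alpha,\beta)$, placing the ${\cal H}_\alpha^k$ factor at position $k$, relabeling ${\cal K}_{\alpha,\beta}^j$ as ${\cal H}_\gamma^j$ for $j\in\{0,1,\ldots,k-1\}$, and reordering the tensor factors so that position $j$ carries ${\cal H}_\gamma^j$ for every $j\in\{0,1,\ldots,k\}$. This produces a direct-sum decomposition ${\cal H}=\bigoplus_\gamma {\cal H}_\gamma^0\otimes{\cal H}_\gamma^1\otimes\cdots\otimes{\cal H}_\gamma^k$ on which each ${\cal A}_j$ acts as $L({\cal H}_\gamma^j)$ on the $j$-th factor and as identity on all the others, which completes the induction. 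The main point that requires care is the commutant step that forces each ${\cal A}_j^\alpha$ into the form $I\otimes {\cal B}_j^\alpha$; the remainder of the argument is essentially bookkeeping of tensor factors together with invocations of Facts \ref{fact2} and \ref{fact:presdec}.
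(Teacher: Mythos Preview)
Your proof is correct, but it follows a different route from the paper's argument. You induct on $k$, peeling off ${\cal A}_k$ via Fact~\ref{fact2}, then use the standard commutant identity $(L(V)\otimes I_W)' = I_V\otimes L(W)$ to push the remaining algebras into the second tensor factor before invoking the inductive hypothesis. The paper instead treats all $k$ algebras at once: it forms the single algebra ${\cal A}$ generated by everything that commutes with every ${\cal A}_j$, decomposes ${\cal H}$ according to the center of ${\cal A}$, and then observes that $\mathbf{Z}({\cal A}_j)\subseteq \mathbf{Z}({\cal A})$ for each $j$, so that in every block all the ${\cal A}_j$ become irreducible simultaneously, at which point Fact~\ref{fact1} and mutual commutativity force them onto disjoint tensor factors. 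Your approach is more modular and relies only on the elementary commutant fact plus bookkeeping; the paper's approach is symmetric in the algebras and produces the whole block decomposition in a single step via the nested-centers observation, at the cost of introducing the auxiliary algebra ${\cal A}$.
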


\begin{proof}
Suppose that the algebra ${\cal A}_j$ are all
 irreducible algebras - i.e. have trivial centers.
Using fact (\ref{fact1}) we have that each ${\cal A}_j$ is isomorphic 
to the full set of linear operators on some 
subsystem of ${\cal H}$. In other words, ${\cal H}= {\cal H}^j 
\otimes {\cal H}^{rest}$ and 
$$ {\cal A}_j \approx \bigoplus_{\alpha} \mathbf{L}\left({\cal H}_{\alpha}^j\right)
\otimes 
\mathbf{I}\left({\cal H}_{\alpha}^{rest} \right). 
$$
Consider first ${\cal A}_1$ and ${\cal H}_\alpha^1$.  
Since the algebras commute, each ${\cal A}_j$ for $j>1$ must act 
as the identity on the ${\cal H}_{\alpha}^1$, 
and hence acts not trivially only on ${\cal H}_{\alpha}^{rest}$; 
We proceed by induction, to derive that each ${\cal A}_j$ is isomorphic 
to the full set of linear operators on a 
separate sub-particle and thus the lemma follows in this case. 
  
Now we generalize to the case where at least one algebra is reducible.
Let us examine the algebra ${\cal A}$ generated by the set of all 
operators on the particle ${\cal H}$, that commute with any $A\in {\cal A}_j$ 
for all $j$. 
It is easy to check
that since the ${\cal A}_j$ are closed under adjoint (by Fact 
\ref{fact:closed}) then so is ${\cal A}$. 
By fact (\ref{fact2}) algebra ${\cal A}$ admits a decomposition, 
such that inside 
each subspace, it is isomorphic to the full set of 
operators on some subsystem, tensor with identity.
By fact (\ref{fact:presdec}) this decomposition is preserved by 
all ${\cal A}_j$ since they commute with ${\cal A}$.

We consider the algebras  ${\cal A}_j$ 
restricted to these subspaces. We want to show that these restricted 
algebras are all irreducible.    
This follows since it turns out that the center of the algebra ${\cal A}$ 
in fact contains the centers of the algebras ${\cal A}_j$.
To show this, first notice that $\mathbf{Z}({\cal A}_j) \subseteq {\cal A}$ 
since any element of $\mathbf{Z}({\cal A}_j)$ commutes with any element of
 ${\cal A}_j$ by definition of a center of an algebra, 
and also commutes with any 
element of all the other algebras ${\cal A}_{j'}$ for $j\neq j'$ since 
it consists of elements from ${\cal A}_j$ and 
the algebras ${\cal A}_j$ and ${\cal A}_{j'}$ commute.  
In fact, since $\mathbf{Z}({\cal A}_j)$ commutes with all the generators of 
${\cal A}$, it is also contained in the center of ${\cal A}$. 
So $\mathbf{Z}({\cal A}_j) \subseteq \mathbf{Z}({\cal A})$.

Therefore, since ${\cal A}$ is irreducible inside each 
of the subspaces, so are ${\cal A}_j$.
So the decomposition of the algebra ${\cal A}$ 
decomposes each algebra ${\cal A}_j$,
into irreducible commuting algebras, which by the first paragraph 
must act on separate subsystems inside each subspace.
\end{proof}

We are now ready to prove the following crucial fact, which is the basis for 
the result of Bravyi and Vyalyi \cite{BV} as well as the current paper:  

\begin{lemma}\label{lem:BV} 

{\bf Decomposition of the center of a star (adapted from \cite{BV})}

\noindent
Let $S$ be an instance of $CLH(2,d)$ 
whose interaction graph is a star: 
this means that there is a particle $j$, and each $2$-local $H_{j,k}$ 
examines $j$ and another particle $k$ (where different terms act on 
different $k$'s). 
Then there exists a direct sum decomposition
\begin{equation}\label{eq:dec1}
{\cal H}^j = \bigoplus_{\alpha}{\cal H}_{\alpha}^j
\end{equation}
such that inside each subspace ${\cal H}_{\alpha}^j$ 
there appears a tensor product structure
\begin{equation}\label{eq:dec2}
{\cal H}_{\alpha}^j = {\cal H}_{\alpha}^{j.j} \otimes \left(\bigotimes_{(j,k)\in E}{\cal H}_{\alpha}^{j.k}\right)
\end{equation}
where $k$ runs over all other particles, 
such that 
all operators $H_{j,k}$ preserve the subspaces ${\cal H}_{\alpha}^j$, and moreover, 

\begin{equation}\label{eq:dec4}
H_{j,k}|_{{\cal H}_{\alpha}^j}
\in 
\bigotimes_{l\neq k} I_{{\cal H}_{\alpha}^{j.l}}
\otimes 
\mathbf{L}\left({\cal H}_{\alpha}^{j.k}\otimes {\cal H}^k \right)
\end{equation}
 
\end{lemma}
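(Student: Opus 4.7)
The plan is to reduce the lemma, via the induced-algebras machinery of Section \ref{sec:notation}, to a direct invocation of Claim \ref{cl:algdec}.

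First, for each edge $(j,k) \in E$ of the star, let ${\cal A}_{j.k} \subseteq \mathbf{L}({\cal H}^j)$ denote the algebra induced by $H_{j,k}$ on particle $j$, as in Definition \ref{def:induced}. Because the interaction graph is a star, any two operators $H_{j,k}$ and $H_{j,k'}$ with $k \neq k'$ share only the particle $j$, and by the hypothesis that $S$ is an instance of commuting $CLH(2,d)$ they commute. Fact \ref{fact:algcomm} then gives that $\{{\cal A}_{j.k}\}_{(j,k) \in E}$ is a family of pairwise commuting subalgebras of $\mathbf{L}({\cal H}^j)$; by Fact \ref{fact:closed} each is also closed under the adjoint, so each is a $C^*$-algebra on ${\cal H}^j$.

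Next, I would apply Claim \ref{cl:algdec} with underlying space ${\cal H}^j$ and the mutually commuting family $\{{\cal A}_{j.k}\}_{(j,k) \in E}$. This yields a decomposition ${\cal H}^j = \bigoplus_\alpha {\cal H}_\alpha^j$, and inside each subspace a tensor factorization ${\cal H}_\alpha^j = {\cal H}_\alpha^{j.j} \otimes \bigotimes_{(j,k) \in E} {\cal H}_\alpha^{j.k}$, such that ${\cal A}_{j.k}|_{{\cal H}_\alpha^j}$ acts as the full operator algebra on ${\cal H}_\alpha^{j.k}$ and trivially on every other factor. Here the slot ${\cal H}_\alpha^{j.j}$ plays the role of the extra ${\cal H}_\alpha^0$ factor from Definition \ref{def:algdec} on which no induced algebra acts nontrivially. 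This establishes equations (\ref{eq:dec1}) and (\ref{eq:dec2}).

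To get the structural conclusion (\ref{eq:dec4}) about $H_{j,k}$ itself (not just its induced algebra), I would write $H_{j,k} = \sum_\gamma A_\gamma \otimes B_\gamma$ with $B_\gamma$ acting on ${\cal H}^k$ and the $B_\gamma$'s linearly independent. By Definition \ref{def:induced} each $A_\gamma$ lies in ${\cal A}_{j.k}$, and by the separating decomposition each such $A_\gamma$ preserves ${\cal H}_\alpha^j$ and, restricted to it, has the form $I \otimes \cdots \otimes \tilde{A}_\gamma \otimes \cdots \otimes I$ with $\tilde{A}_\gamma \in \mathbf{L}({\cal H}_\alpha^{j.k})$. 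Summing against the $B_\gamma$'s yields (\ref{eq:dec4}) and simultaneously the preservation claim for the full operator $H_{j,k}$ on ${\cal H}_\alpha^j \otimes {\cal H}^k$.

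The substance of the proof is entirely inside Claim \ref{cl:algdec}; the lemma itself is a matter of recognizing that the star topology plus pairwise commutation is exactly the hypothesis of mutually commuting induced algebras on ${\cal H}^j$. I therefore expect no new obstacle, only the bookkeeping of tensor factors and the passage from statements about elements of ${\cal A}_{j.k}$ to statements about $H_{j,k}$, which is clean thanks to the linear independence of the $B_\gamma$'s and Fact \ref{fact:independent}. The only place where the star hypothesis is used essentially is Fact \ref{fact:algcomm}: if two terms were allowed to share two particles, commutativity of the operators would no longer force commutativity of the induced algebras on $j$, and the reduction to Claim \ref{cl:algdec} would break down — this is precisely the phenomenon that forces the new ideas needed in Sections \ref{sec:3lcl2} and \ref{sec:3lcl3}.
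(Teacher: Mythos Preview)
Your proposal is correct and follows essentially the same route as the paper: define the induced algebras ${\cal A}_{j.k}$ on ${\cal H}^j$, invoke Fact \ref{fact:algcomm} (using the star topology) to get that they pairwise commute, and then apply Claim \ref{cl:algdec}. The paper's proof is terser and leaves implicit the final bookkeeping step you spell out, namely passing from the structure of ${\cal A}_{j.k}|_{{\cal H}_\alpha^j}$ back to that of $H_{j,k}|_{{\cal H}_\alpha^j}$ via the expansion $H_{j,k}=\sum_\gamma A_\gamma\otimes B_\gamma$; your inclusion of this step is a welcome clarification rather than a departure.
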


\begin{proof}
We write each Hamiltonian as a sum of tensor product terms.
$$ H_{j,k}=\sum_{\alpha} A_{\alpha}^k\otimes B_{\alpha}^k,$$
where $A^k_{\alpha}$ acts on ${\cal H}_j$ and $B^k_{\alpha}$ acts 
on ${\cal H}_k$, and
the operators $\left\{B^k_{\alpha}\right\}_\alpha$ 
are linearly independent.  
We consider the $C^*$-algebra generated by $\{A^k_{\alpha}\}_{\alpha} \cup \left\{I\right\}$, 
and denote it ${\cal A}_{j.k}$. 
The key point is that any pair of ${\cal A}_{j.k}$ algebras commute, 
due to Fact (\ref{fact:algcomm}). We can therefore apply claim 
(\ref{cl:algdec}) 
and this implies the result. 
\end{proof} 

The above lemma implies the following. 
When the interaction graph of the commuting Hamiltonian 
is a star (which is indeed the case for every qudit in the two-local 
case), the Hilbert space of the center particle ${\cal H}_j$ 
can be decomposed into a direct sum of spaces ${\cal H}_j^\alpha$, such 
that each of the terms in the Hamiltonian preserves those subspaces 
${\cal H}_j^\alpha$.   
This means that the original system has a zero eigenstate,
if and only if for every particle 
there exists some subspace index $\alpha_0$ such that 
the restriction of the terms in the Hamiltonians to
the subspaces ${\cal H}_j^{\alpha_0}$ is 
an instance which contains a zero eigenstate. 
Moreover, under this restriction, the terms in the Hamiltonian 
become {\emph disjoint}, namely, they can be described as working on 
separate particles, described by the subspaces ${\cal H}_{j,k}^\alpha$.
Thus, the interaction graph of the restricted Hamiltonian is simply a 
set of disjoint edges.  

\subsection{A proof of the two local case}
We can now prove the result of \cite{BV} stating 
that $CLH(2,d)$ is in $NP$ for any $d$. 
We present a slightly modified proof here, since a similar approach 
will be useful when we generalize the result to $CLH(3,2)$ and 
$CLH(3,3)$.

The main point is that in the two local case, for any qudit, the 
interactions involving that qudit form a star.
Lemma \ref{lem:BV} can therefore be applied. 
Merlin helps Arthur find the ground state by providing him with the 
correct index $\alpha$ in the decomposition of each particle. 

To present the proof in more detail, we use an interactive picture: a
communication protocol between Merlin and Arthur. 
In our protocol, only Merlin sends messages to Arthur, so in fact 
he can send all messages at once and this concatenation of 
messages can be viewed as the witness; but the protocol point of view 
is more convenient for our purposes. 

\begin{algorithm}
\item Input: $S$, an instance of $CLH(2,d)$.
\item Repeat until there is no node in the interaction graph 
whose degree is $2$ or more. 
\begin{enumerate}
\item 
Merlin picks a vertex $q$, and sends Arthur the description of one
subspace ${\cal H}_q^{\alpha}$ from the direct sum decomposition 
of  ${\cal H}_q$ given by Lemma \ref{lem:BV}. 
Both Merlin and Arthur can generate the decomposition so Merlin 
only needs to send Arthur the index $\alpha$.
\item 
Both Arthur and Merlin slice the qudit $q$ after restricting it
to ${\cal H}_q^{\alpha}$, according to Lemma \ref{lem:BV}. 
Accordingly, the node $q$ is replaced by at most $N_S(q)+1$ new nodes, 
each with a degree $1$ in the new interaction graph. 
\end{enumerate}

After all particles with degree more than $1$ have been removed, 
Arthur verifies that the Hamiltonian, which is now a set of non-intersecting 
terms (each term corresponds to a disjoint edge)
has a nonzero kernel. 
\end{algorithm}

\section{The three-local case for qubits}\label{sec:3lcl2}
In this section we prove Theorem \ref{MainClaim}. 
We start with a more detailed overview of the proof. 

\subsection{Proof overview} 
As mentioned before, the first step in the proof is to use 
the tools of \cite{BV} to identify and remove qubits 
that are ``separable'', namely, for which there is a 
decomposition to a direct sum of subspaces, such that 
all operators acting on the qubit preserve those subspaces. 
In the case of qubits, when a non-trivial decomposition exists, it must 
be into two subspaces of dimension one each; when restricting to one 
such subspace, the state of the qubit becomes 
some tensor product state with the rest of the system. 
This means that those qubits can in fact be removed from the system since 
Merlin can provide their state separately. 
We have thus reduced the problem to a problem in which all qubits 
are non-separable; 
This is done in Subsection (\ref{subsec:removingsep}). 

We now embark on the most important component in the proof, 
which is the characterization of the geometric properties of 
the interaction graph, after
the removal of separable qubits. We treat each connected component 
separately, so we may assume the graph is connected. 
We ask, how can we constrain the interactions of a qubit $q$, for which we know that no
single decomposition exists which all operators on $q$ agree on.

We define a $\bfly$ (butterfly) relation between two operators acting on the same qubit $q$, 
$H_1(q)$, $H_2(q)$ if $A_1\cap A_2 = \left\{q\right\}$.
We denote this by $H_1 \bfly H_2$. 
Each $\bfly$ relations yields a direct-sum decomposition 
by $(\ref{lem:BV})$ which is preserved by both operators.
As mentioned in the introduction, 
we notice as a first step, that if there are two butterflies 
with respect to $q$, 
$H_1(q)\bfly H_2(q)$ and $H_1(q)\bfly H_3(q)$, then due to the 
low dimensionality, the decompositions induced by 
both $\bfly$ relations are the same (see Claim \ref{cl:same}).  
 
This yields an important transitivity conclusion: i.e., if $H_1(q)$ and $H_2(q)$ agree on some decomposition of $q$, and $H_1(q)$ and $H_3(q)$ agree on some decomposition of $q$, then $H_2(q)$, and $H_3(q)$ agree 
on the same decomposition. We can now talk about 
two operators on $q$ which are connected by a path of such 
butterflies: two operators 
are said to be $\bfly$ connected (read this ``butterfly-connected'') 
if there is a sequence of $\bfly$ relations that connects 
them, i.e. $H_1 \bfly H_{i,1} \hdots \bfly H_{i,m} \bfly H_2$.
A basic tool in this paper 
is theorem (\ref{bflychainClaim}) proved in Section (\ref{subsec:sep}):  
\begin{theorem}\label{bflychainClaim0}
If any two operators $H_1(q),H_2(q)$ acting on a qubit $q$ are butterfly 
connected, then $q$ is separable.
\end{theorem}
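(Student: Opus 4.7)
The plan is to show that under the butterfly-connectedness hypothesis, every operator acting on $q$ preserves one and the same non-trivial direct-sum decomposition of ${\cal H}_q$, which by the definition of separability makes $q$ separable. The central tool is the transitivity statement already invoked as Claim \ref{cl:same}: whenever two butterflies $H\bfly H_1$ and $H\bfly H_2$ share a common operator $H$, the two decompositions of ${\cal H}_q$ they produce via Lemma \ref{lem:BV} must coincide. I would use this fact repeatedly as a ``gluing'' step along any butterfly chain.

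First I would pick an arbitrary butterfly $H_a \bfly H_b$ among the operators acting non-trivially on $q$, and apply Lemma \ref{lem:BV} to the two-term star $\{H_a,H_b\}$ centered at $q$. This yields a decomposition $D:\ {\cal H}_q=\bigoplus_\alpha {\cal H}_q^\alpha$ that both operators preserve. To see that $D$ is genuinely non-trivial, I would observe that since $H_a$ and $H_b$ both act non-trivially on $q$, each induces a non-scalar $\dagger$-closed subalgebra of $L({\cal H}_q)\cong M_2(\mathbb{C})$, and by Fact \ref{fact:algcomm} these two subalgebras commute. The classification of $\dagger$-closed unital subalgebras of $M_2(\mathbb{C})$ allows only the scalars, the diagonal algebras in some orthonormal basis, and $M_2(\mathbb{C})$ itself; hence two commuting non-scalar subalgebras must both be diagonal in a common basis $\{|v\rangle,|v^\perp\rangle\}$. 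The center of the joint algebra they generate is then this whole diagonal algebra, so $D$ is the genuine rank-one splitting ${\cal H}_q=\mathbb{C}|v\rangle\oplus\mathbb{C}|v^\perp\rangle$.

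Next I would extend $D$ to every other operator acting on $q$. Given any third operator $H(q)$, the hypothesis supplies a butterfly chain $H_a\bfly H^{(1)}\bfly \cdots \bfly H^{(m)}\bfly H$. I would induct on the position $\ell$ along the chain, maintaining the invariant that the Lemma \ref{lem:BV} decomposition attached to the $\ell$-th link equals $D$. The base case $\ell=1$ is the definition of $D$. For the inductive step, the consecutive links $H^{(\ell-1)}\bfly H^{(\ell)}$ and $H^{(\ell)}\bfly H^{(\ell+1)}$ share the central operator $H^{(\ell)}$, so Claim \ref{cl:same} forces their two associated decompositions to coincide. Iterating to the final link $H^{(m)}\bfly H$ shows that it too induces $D$, and hence $H$ preserves $D$. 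Since $H$ was arbitrary, every operator acting on $q$ preserves the same non-trivial decomposition $D$, which is exactly the definition of $q$ being separable.

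The main obstacle I anticipate is the non-triviality step in the second paragraph. Lemma \ref{lem:BV} produces merely \emph{some} direct-sum decomposition, and a priori it could collapse to a single two-dimensional summand; ruling this out crucially uses the explicit classification of $\dagger$-closed subalgebras of $M_2(\mathbb{C})$ and thus the low dimension $d=2$. This is precisely the dimensional obstruction that fails for qutrits, where two commuting non-scalar $\dagger$-subalgebras of $M_3(\mathbb{C})$ need not admit a common block-diagonal structure, and it is the reason Theorem \ref{MainClaim2} requires a genuinely different, geometric strategy.
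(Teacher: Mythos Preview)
Your proposal is correct and follows essentially the same route as the paper's proof: fix a starting butterfly, obtain its induced decomposition, and then propagate that decomposition along any butterfly chain via repeated application of Claim~\ref{cl:same}. The only difference is cosmetic: where you argue non-triviality of the decomposition by invoking the classification of $\dagger$-closed unital subalgebras of $M_2(\mathbb{C})$, the paper isolates this as Claim~\ref{cl:bfly} with the more direct observation that a trivial decomposition would force one of the two operators to act as a scalar on $q$.
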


This implies that an interaction graph made 
of non-separable qubits is severely limited, since its 
operators cannot be all connected by butterfly paths. 
The operators on any qubit thus cannot "fan-out" too much, 
as this would induce pairwise $\bfly$ paths and would make this 
qubit separable. 

In Sections (\ref{subsec:crowns}) 
and (\ref{subsec:limit}) 
we make this intuition more tangible, and show two important 
conclusions from Theorem (\ref{bflychainClaim}).
First, we define an ``operator crown'' on $q$, which is a set of 
three operators acting on $q$ organized as in Figure \ref{fig:crown}. 

\begin{figure}[ht]
\center{
 \epsfxsize=3in
 \epsfbox{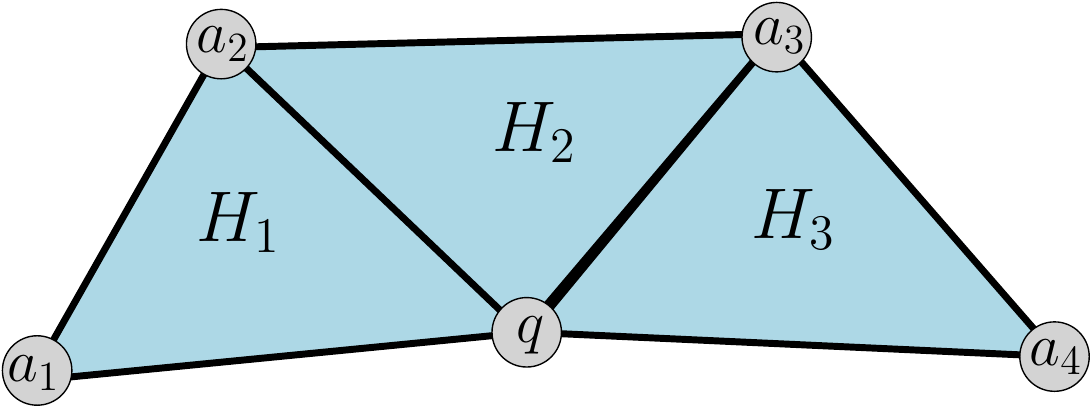}}
 \caption{An operator crown on qubit $q$.}
 \label{fig:crown}
\end{figure}

We show in Claim (\ref{cl:nosep1}) that operator crowns, 
act as "qubit traps"; This means that if $q$ has such an operator 
crown, then any operator on $q$ must also act on 
at least one other qubit of the crown. 
Second, we show in Claim (\ref{cl:nosep2}) that any 
two operators on $q$ must either intersect on one other qubit than 
$q$,  or they are connected through another operator $H_x$, 
which intersects each of them with $q$ and another qubit. 
Another claim which we call the 
Bridge claim (Claim \ref{cl:bridge}) is a slightly strengthened version of 
Claim (\ref{cl:nosep2}). 
These latter properties impose severe restrictions on the 
geometry of the interaction graph.  

Having characterized the local geometric behavior of each individual 
nonseparable qubit in the residual graph, we are ready to take one step 
further, and 
make some claims w.r.t. the global structure of the residual graph.
To this end, we define the "backbone" of the graph: 
this is 
the longest ``path of operators'' in the residual graph. 
 
\begin{figure}[ht]
\center{
 \epsfxsize=3in
 \epsfbox{backbone.pdf}}
\end{figure}

Intuitively, the backbone constitutes the longest possible stretch 
of "operator crowns" that are attached back to back, without 
revisiting qubits that have already been visited. 
Recall that by Claim \ref{cl:nosep1} 
an operator crown on qubit $q$ essentially "traps" at 
least one other qubit of any operator acting on $q$.
This means that essentially any operator that acts on a backbone qubit, 
must act on at least one more backbone qubit which is not 
very "far" in terms of backbone edges; this is captured by Lemma 
\ref{lem:entrap} below. 

This "qubit" entrapment property alone is not sufficient for our purposes, 
as it does not handle operators that do not act on backbone qubits.
We want to show that \emph{all} operators (if we started with an instance 
whose interaction graph is connected) must examine at least 
two backbone qubits, and therefore these qubits must be close by the above 
Lemma \ref{lem:entrap}. 
 Moreover, we want to show that there are no 
``shortcuts'' between far away qubits in the backbone, 
through interactions with qubits outside the backbone: 
consider any qubit outside the backbone which interacts with 
two backbone qubits, through two different terms in the Hamiltonian. 
We want to show that even these two qubits cannot be 
too distant in terms of number of backbone edges. 
Those two properties are proved in claims (\ref{cl:bkbn1}) 
and (\ref{cl:bkbn2}) using the geometric claims above. 

The result of all this is the following. 
Consider a coarse-graining of the backbone,
in which say consecutive sets of $20$ qubits are aggregated together 
and are considered as one particle of constant dimension; 
denote those by $Q_i$.   
By the above arguments, all interactions inside the 
backbone are two-local, namely, interact only $Q_i$ and $Q_{i+1}$; 
and moreover, any qubit outside the backbone may interact only with
a specific pair of consecutive large particles $Q_i, Q_{i+1}.$

\begin{figure}[ht]
\center{
 \epsfxsize=4in
 \epsfbox{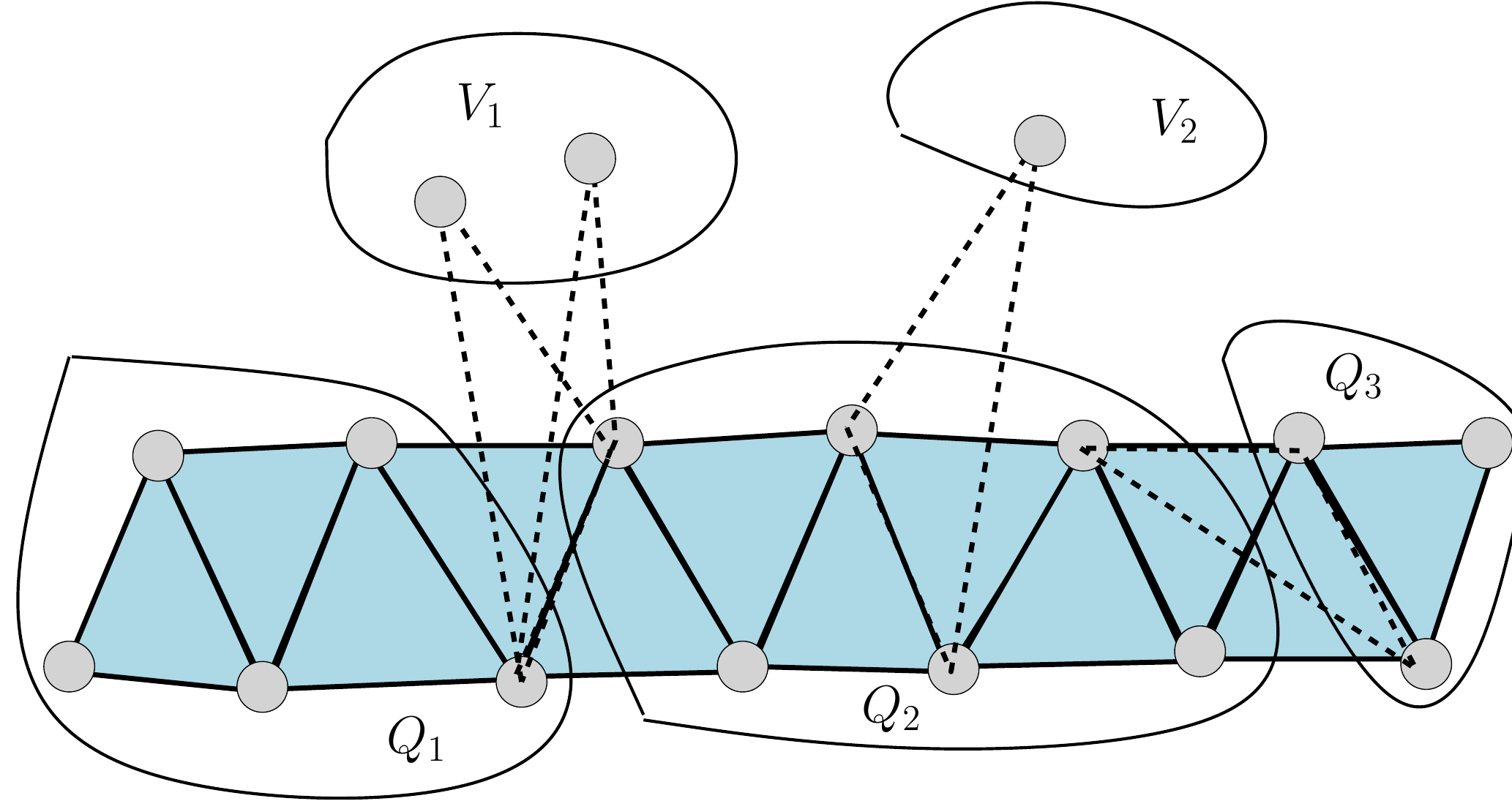}}
 \caption{\label{fig:step2}The interactions with the backbone:
a backbone of sets $Q_i$ of constantly many qubits, 
such that each operator 
acts on $V_i$ and its associated pair of qudits $Q_i,Q_{i+1}$.  
We note that while the size of $Q_i$ is constant, the size of $V_i$ can 
be a function of $n$.}
 
\end{figure}

We examine this structured problem more closely. Consider the operators
 interacting $Q_i$ with the qudit to its left, $Q_{i-1}$. Consider 
also the operators that interact $Q_i$ with the qudit to its right, 
$Q_{i+1}$. We have a $\bfly$ relation between any operator 
acting on $Q_i$ from the left and any operator acting on $Q_i$ from the right.
We can then show, very similarly to Lemma (\ref{lem:BV}), 
that there exists a decomposition of the Hilbert space of $Q_i$, such that when we restrict all operators on $Q_i$ 
to a specific subspace in this decomposition, then $Q_i$ can be written as a 
tensor product of two subparticles, the left subparticle $Q_{i,left}$ and the
right subparticle $Q_{i,right}$, and the operators acting on $Q_i$ from the left (right) 
interact only with the left (right) subparticle of $Q_i$. 
This paves the way for achieving two-locality:   
After partitioning each $Q_i$ into those two separate subparticles 
$Q_{i,left}$ and $Q_{i,right}$,  we can 
fuse the right side of one particle with the left side of the next:
 $Q_{i,right}$ with $Q_{i+1,left}$. 
 
The resulting problem is two-local:
all interactions are of the form in which one fused particle and one
particle out of the backbone interact, or they are $1$-local; hence,
we get that each fused particle is a center of a star, and the stars
are non-intersecting.
This is already a problem in NP by Lemma \ref{lem:BV}, namely, by
the methods of Bravyi and Vyalyi \cite{BV}.

We now provide the details. 
\subsection{Removing Separable Qubits}\label{subsec:removingsep}
We start by defining separable qubits, and explaining why they can 
be removed from the graph, using classical witness. 
By the end of this section, we will have removed all those qubits 
and remain with the residual problem in which all qubits are non-separable. 
We start by defining:  
\begin{definition}

\textbf{Separable qubit}

\noindent
A qubit $q$ is said to be separable if there exists a
 direct-sum decomposition of its Hilbert space 
to two one dimensional spaces, 
$$ {\cal H}_q = \bigoplus_{\alpha\in \{0,1\}}{\cal H}_q^{\alpha} $$ 
such that any operator $H(q)$  which acts 
on $q$ 
preserves this decomposition:
$$ H(q) = \bigoplus_{\alpha} H(q)|_{{\cal H}_q^{\alpha}}, $$
where $H(q)|_{{\cal H}_q^{\alpha}}$ is the restricted projector. Observe that 
the restricted projection in this case    
is also a projection. 
\end{definition}

This definition is aimed to capture a similar situation to what 
happens in the case of $CLH(2,d)$ when a particle is handled, according to lemma \ref{lem:BV}.  
Note that the fact that the above direct sum decomposition is into 
subspaces of dimension $1$, meaning that knowing the index $\alpha$, 
a separable qubit has a well defined quantum state; 
Since the verifier will receive the index $\alpha$ from the prover, 
this qubit can simply be removed from the graph. 

So, the first step in the protocol between Merlin and Arthur is as follows: 
\begin{algorithm}\label{alg:restrict}
{\bf Restrict Graph ($S$)}
\item Input: $S$, an instance of $CLH(3,2)$.
\item Iteratively change $G_S$ until there is no separable qubit in $G_S$:
\begin{enumerate}
\item 
Merlin picks a separable qubit $q$ in $S$ and
 sends Arthur an index $\alpha$ whose corresponding subspace 
contains a common zero eigenspace of $S$. 

\item Arthur restricts all terms in the Hamiltonian 
to this subspace, and removes the vertex $q$ and all 
its incident edges in $G_S$.
If after this restriction, some term acts 
trivially on one or more of its particles,
Both Arthur and Merlin replace this term (or those terms if there 
are more than one) by 
the appropriate $1$-qubit terms in the Hamiltonian.
\end{enumerate}
\end{algorithm}

The resulting instance is one without separable qubits, 
and where all terms act non-trivially on all their particles. 

\begin{claim}\label{Step1}
Given a $CLH(3,2)$ instance $S$, algorithm (\ref{alg:restrict})
generates a $CLH(3,2)$ instance $S_{nosep}$ with no separable qubits; 
If $S$ has a non-trivial common groundspace then Merlin can choose his restrictions so that $S_{nosep}$ has 
such a subspace. If $S$ has no zero eigenspace, for any choice of subspace restriction 
 the ground energy of $S_{nosep}$ is at least $1$.  
\end{claim}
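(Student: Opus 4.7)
The plan is to verify three assertions---termination with a separable-free output, completeness (preservation of a zero eigenstate via a good choice of restrictions), and soundness (ground energy at least $1$ is preserved regardless of restrictions)---by iterating the analysis of a single reduction step. For termination and the no-separable-qubits property, each loop iteration removes one qubit, so the procedure halts after at most $n$ iterations, and the loop condition guarantees that the output has no separable qubits. One needs to check that each step leaves us with a valid $CLH(3,2)$ instance: the restriction of a projection to an invariant subspace is again a projection (as the separability definition makes explicit), pairwise commutativity descends to the restricted operators, and terms that become trivial on some of their particles are simply rewritten as lower-locality projections.

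For a single reduction step, let $q$ be the separable qubit with decomposition ${\cal H}_q = {\cal H}_q^0 \oplus {\cal H}_q^1$, and let $\Pi_\alpha$ denote the projection onto ${\cal H}_q^\alpha$ extended by identity on the remaining qudits. By the definition of separability, every term $H_i$ acting on $q$ satisfies $[H_i,\Pi_\alpha]=0$, while terms not touching $q$ commute with $\Pi_\alpha$ trivially. Hence $H = \sum_i H_i$ is block-diagonal with respect to ${\cal H} = \Pi_0{\cal H} \oplus \Pi_1{\cal H}$. This has two consequences. Any zero eigenstate $|\psi\rangle$ of $H$ splits as $\Pi_0|\psi\rangle + \Pi_1|\psi\rangle$, and at least one summand is a non-zero zero eigenstate lying entirely in $\Pi_\alpha{\cal H}$; since ${\cal H}_q^\alpha$ is one-dimensional, restriction to $\Pi_\alpha{\cal H}$ is exactly the operation of removing $q$ and fixing it to the unit vector spanning ${\cal H}_q^\alpha$, which matches the output of one step of the algorithm with Merlin's choice equal to this $\alpha$. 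Conversely, any zero eigenstate of the restricted instance lifts uniquely---by re-attaching the fixed qubit state---to a zero eigenstate of $H$ inside $\Pi_\alpha{\cal H}$.

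These two facts give both completeness and soundness by induction on the iteration count: if $S$ has a zero eigenstate, at each step Merlin chooses $\alpha$ preserving one, and so $S_{nosep}$ retains a zero eigenstate; contrapositively, since every zero eigenstate of the restricted instance lifts back, absence of a zero eigenstate in $S$ forces the same in $S_{nosep}$ for any restriction path. Finally, to upgrade ``no zero eigenstate'' to ``ground energy at least $1$'', we invoke that the terms of $S_{nosep}$ are commuting projections, so they are simultaneously diagonalizable and the Hamiltonian has non-negative integer eigenvalues; absence of a zero eigenvalue therefore forces the minimum to be at least $1$. The only subtlety, rather than a serious obstacle, is the bookkeeping verifying that commutativity and the projection structure truly persist under restriction (including handling terms that become at most $2$-local); once this is confirmed, the rest of the argument is a clean block-diagonal decomposition applied inductively.
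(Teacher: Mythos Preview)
Your proposal is correct and follows essentially the same approach as the paper's proof: both argue that separability makes the Hamiltonian block-diagonal on $q$, so a zero eigenstate exists iff one block contains one, and that commutativity of the restricted terms is inherited from block-diagonality. Your version is more thorough---you spell out termination, the projection-preservation check, the explicit splitting/lifting of eigenstates, and the integer-spectrum argument for the ``at least $1$'' bound---whereas the paper compresses all of this into a few lines.
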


\begin{proof}
Let $q$ be a separable qubit which Merlin removes. 
There exists a basis of ${\cal H}_q$, 
such that any Hamiltonian term $H_i$ is block diagonal in this basis.
$S$ has a ground energy zero if and only if one of those subspaces contains 
such a state. 
It remains to show that the new instance is still a legal $CLH(3,2)$ instance.
Indeed, if two operators
 $H_1 \in \mathbf{L}({\cal H})$ and $H_2 \in \mathbf{L}({\cal H})$ commute 
and are block diagonal in some basis of $q$, 
then the restricted operators to 
any $1$-dimensional subspace spanned by this basis also commute.
This logic is preserved, at every iteration, leading to finer and finer 
slicing of the original Hilbert space, until we exhaust all separable qubits.
\end{proof}

We note that during the process of ``slicing'' separable qubits, 
qubits which were previously nonseparable may become separable, 
yet a qubit cannot be sliced twice. 

\subsection{Geometric Constraints on the Residual Graph}\label{subsec:geom}
Here we define Butterflies, operator crowns, operator paths, 
and provide all sorts of geometric restrictions on the interactions 
between separable qubits using those notions. 
\subsubsection{Butterflies and Separability}\label{subsec:sep}
\begin{definition}

\textbf{Butterfly}

\noindent
Consider two operators $H_1 = H_1(A,B)$ and $H_2 = H_2(B,C)$, where 
$B$ is a qubit, 
and $A$ and $C$ are sets of qubits, such that $A$ does not intersect $C$. 
We say that $H_1$ and $H_2$ constitute a butterfly and denote $H_1 \bfly H_2$.
A butterfly is always with respect to the particle in the intersection, 
here $B$; most of the time the identity of the qubit $B$ 
will be clear from the context and we will 
omit specifying it. 
\end{definition}

\begin{claim}\label{cl:bfly}  
For any pair of operators acting non trivially 
on $q$, $H_1$ and $H_2$, 
with $H_1\bfly H_2$ with respect to $q$, there exists a non-trivial 
decomposition of ${\cal H}_q$ into a sum of two 
one dimensional subspaces, each of dimension one, 
which are preserved by both operators. 
\end{claim}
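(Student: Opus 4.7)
The plan is to apply Lemma \ref{lem:BV} directly to the star formed at $q$ by the two butterfly operators, and then use a dimension-counting argument specific to qubits to force the decomposition to be non-trivial.

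More concretely, since $H_1 = H_1(A,B)$ and $H_2 = H_2(B,C)$ with $B = \{q\}$ and $A \cap C = \emptyset$, the two operators together form a (two-edge) star centered at $q$: the only qubit they share is $q$ itself. This is precisely the hypothesis of Lemma \ref{lem:BV} (specialised to a star with two edges), so we obtain a direct-sum decomposition $\mathcal{H}_q = \bigoplus_\alpha \mathcal{H}_\alpha^q$ preserved by both $H_1$ and $H_2$, and inside each $\mathcal{H}_\alpha^q$ a tensor-product structure $\mathcal{H}_\alpha^q = \mathcal{H}_\alpha^{q.q} \otimes \mathcal{H}_\alpha^{q.A} \otimes \mathcal{H}_\alpha^{q.C}$, with $H_1|_{\mathcal{H}_\alpha^q}$ acting trivially on $\mathcal{H}_\alpha^{q.q} \otimes \mathcal{H}_\alpha^{q.C}$ and $H_2|_{\mathcal{H}_\alpha^q}$ acting trivially on $\mathcal{H}_\alpha^{q.q} \otimes \mathcal{H}_\alpha^{q.A}$.

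The core step is to rule out the trivial case where the direct sum consists of a single summand of dimension $2$ (equivalently, the center of the algebra jointly generated by $H_1$ and $H_2$ on $q$ is trivial). Suppose for contradiction that there is only one $\alpha$ and $\mathcal{H}_\alpha^q = \mathcal{H}_q$. Then $2 = \dim \mathcal{H}_q = \dim \mathcal{H}_\alpha^{q.q} \cdot \dim \mathcal{H}_\alpha^{q.A} \cdot \dim \mathcal{H}_\alpha^{q.C}$, so exactly one factor has dimension $2$ and the other two have dimension $1$. If $\mathcal{H}_\alpha^{q.q}$ is two-dimensional, both $H_1$ and $H_2$ act as the identity on $\mathcal{H}_q$, contradicting that both act non-trivially on $q$. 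If $\mathcal{H}_\alpha^{q.A}$ is two-dimensional, then the factors on which $H_2$ acts (namely $\mathcal{H}_\alpha^{q.q} \otimes \mathcal{H}_\alpha^{q.C}$) are one-dimensional, so $H_2$ acts trivially on $q$, again a contradiction. The case $\dim \mathcal{H}_\alpha^{q.C} = 2$ is symmetric. Hence the decomposition must have at least two summands.

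Finally, since $\dim \mathcal{H}_q = 2$ and the $\mathcal{H}_\alpha^q$ are mutually orthogonal and non-zero, the decomposition must consist of exactly two one-dimensional subspaces, each preserved by both $H_1$ and $H_2$ by Lemma \ref{lem:BV}. I do not expect serious obstacles here: the only subtlety is making the dimension-counting case analysis airtight, and that follows cleanly from the requirement that both operators act non-trivially on $q$.
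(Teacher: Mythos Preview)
Your proposal is correct and follows essentially the same route as the paper: apply Lemma~\ref{lem:BV} to the two-edge star at $q$ (after grouping the remaining qubits of each operator into a single qudit), and then argue that the resulting decomposition cannot be trivial because $\dim\mathcal{H}_q=2$ forces at least one of $H_1,H_2$ to act trivially on $q$. The paper states the non-triviality step in one line, whereas you spell out the case analysis on the tensor factors explicitly; the underlying reasoning is identical.
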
 
\begin{proof}  
Denote by  $A$ the set of qubits which $H_1$ acts upon, excluding 
$q$. Likewise, denote by $B$ the set of qubits which $H_2$ acts on, 
excluding $q$. By the definition of the $\bfly$ relation, 
we have $A\cap B = \Phi$. 
We can consider all operators in $A$ as one  qudit. 
Similarly, we can consider all qubits in $B$ as another qudit. 
We can then apply lemma
\ref{lem:BV} and conclude that there exists a
direct-sum decomposition of $q$ that is 
preserved by both operators. 
The reason the decomposition of lemma (\ref{lem:BV}) must be 
non-trivial is that otherwise (namely, a decomposition 
to a sum of zero and two-dimensional spaces), since $dim(q)=2$, it means 
that one of the operators acts trivially on $q$, contradicting to our assumption. 
\end{proof} 

\begin{definition} 
{\bf Decomposition induced by the butterfly} 
By Claim \ref{cl:bfly}, a butterfly induces a well defined decomposition 
on its center qubit, which is called the decomposition induced by 
the butterfly
\end{definition} 

Now, what if there are several butterflies in which one qubit participates? 
Could it be that the decompositions induced on $q$ by different 
butterflies are different? The following simple claim says that 
the answer is negative. This follows from the limited dimensionality 
of the qubit. 
This clearly leads to strong transitivity relations, as 
we will soon see, but let us first state and prove the uniqueness 
of induced decomposition:

\begin{claim}\label{cl:same} 
{\bf Unique butterfly induced decomposition of $q$}
Consider two butterflies $H_1\bfly H_2$, $H_1 \bfly H_3$, both 
with respect to $q$, where all three operators act non-trivially on $q$. 
Then the decompositions induced on $q$ from both butterflies, using 
Claim \ref{cl:bfly}, must be the same.  
\end{claim}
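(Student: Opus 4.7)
The plan is to exploit the very low dimensionality of a qubit: on ${\cal H}_q$ of dimension $2$, any non-scalar operator has a unique pair of one-dimensional eigenspaces, so there is essentially only one way to split ${\cal H}_q$ into $V_1\oplus V_2$ of dimensions $1$ and $1$ in a way that is preserved by a given non-scalar operator.

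First I would establish that the induced algebra ${\cal A}^{H_1}_q$ contains a non-scalar element. Writing $H_1=\sum_\alpha A_\alpha\otimes B_\alpha$ with $\{B_\alpha\}$ linearly independent, if every $A_\alpha$ were a scalar multiple of the identity, then $H_1$ would collapse to $I_q\otimes (\sum_\alpha c_\alpha B_\alpha)$, contradicting the assumption that $H_1$ acts non-trivially on $q$. Thus ${\cal A}^{H_1}_q$ contains some $A$ that is not a scalar multiple of $I$. By Claim \ref{cl:bfly} the butterfly $H_1\bfly H_2$ yields a decomposition ${\cal H}_q=V_1\oplus V_2$ (with $\dim V_1=\dim V_2=1$) that is preserved by $H_1$, and hence by every element of its induced algebra, in particular by $A$. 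The butterfly $H_1\bfly H_3$ similarly yields a decomposition ${\cal H}_q=V_1'\oplus V_2'$ preserved by the same $A$.

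The main step is then an essentially one-line linear-algebraic argument: a one-dimensional $A$-invariant subspace of ${\cal H}_q$ must be an eigenspace of $A$, so $V_1,V_2$ are eigenspaces with two eigenvalues $\lambda_1,\lambda_2$, and $V_1',V_2'$ are eigenspaces with eigenvalues $\mu_1,\mu_2$. If $\lambda_1=\lambda_2$ then $A$ would be a scalar on ${\cal H}_q$, contradicting non-scalarity; so the two eigenvalues are distinct, and by the same token $\mu_1\neq \mu_2$. A $2\times 2$ matrix with two distinct eigenvalues has its pair of one-dimensional eigenspaces uniquely determined, so the unordered pairs $\{V_1,V_2\}$ and $\{V_1',V_2'\}$ must coincide, which is exactly the claimed equality of the two decompositions.

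The only place one needs to be slightly careful is to ensure that $A$ is actually diagonalizable, so that its two one-dimensional invariant subspaces exhaust ${\cal H}_q$; this is immediate because the induced algebra is closed under the adjoint by Fact \ref{fact:closed}, so together with $A$ it contains $A^\dagger$, and hence $A+A^\dagger$ and $i(A-A^\dagger)$, at least one of which is Hermitian and non-scalar and can be used in place of $A$ in the argument above. I do not expect any deeper obstacle: the whole content of the claim is that dimension $2$ leaves no room for a second non-trivial block decomposition once the first one is pinned down by a non-scalar operator.
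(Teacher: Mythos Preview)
Your proof is correct, and the route is a bit different from the paper's. The paper argues by explicit computation: it writes $H_1=\Pi_q^0\otimes\Pi_A^0+\Pi_q^1\otimes\Pi_A^1$ from the first butterfly, then assumes a second block form $H_1=\tilde\Pi_q^0\otimes\tilde\Pi_A^0+\tilde\Pi_q^1\otimes\tilde\Pi_A^1$ in a different qubit basis, expands the $\tilde\Pi_q^i$ in the first basis, and reads off from the vanishing of the off-diagonal blocks that $\tilde\Pi_A^0=\tilde\Pi_A^1$, forcing $H_1$ to act trivially on $q$. You instead work abstractly with the induced algebra ${\cal A}^{H_1}_q$: it contains a non-scalar element, both decompositions are preserved by that element, and a non-scalar operator on a two-dimensional space has a unique pair of one-dimensional invariant subspaces.

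One small point worth making explicit is the step ``preserved by $H_1$, and hence by every element of its induced algebra.'' This is true but not tautological: writing $H_1=\sum_\alpha A_\alpha\otimes B_\alpha$ with the $B_\alpha$ linearly independent, commutation of $H_1$ with $\Pi_{V_i}\otimes I$ gives $\sum_\alpha [A_\alpha,\Pi_{V_i}]\otimes B_\alpha=0$, whence each $A_\alpha$ preserves $V_i$. (Alternatively, this is already built into how Lemma~\ref{lem:BV} produces the decomposition.) Your closing remark about diagonalizability is harmless but unnecessary: once you know $V_1,V_2$ are both one-dimensional $A$-invariant subspaces spanning ${\cal H}_q$, $A$ is diagonal in that basis automatically. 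The payoff of your approach is conceptual clarity---it makes transparent that the claim is exactly the statement that a non-scalar $2\times2$ operator has a unique eigendecomposition---while the paper's explicit computation has the virtue of being self-contained without invoking the induced-algebra machinery.
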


\begin{proof}  
As in Claim (\ref{cl:bfly}), $q$ can be decomposed into a direct sum 
of two one dimensional subspaces, based on the first butterfly $H_1 \bfly H_2$.
Let 
$\Pi_q^0, \Pi_q^1$ be the projections on those subspaces of ${\cal H}_q$, so 
$\Pi_q^0 + \Pi_q^1 = I$.  
We can write 
\begin{equation}\label{eq:1}
H_1 = \Pi_q^0 \otimes \Pi_{A}^0 +  \Pi_q^1 \otimes \Pi_{A}^1 
\end{equation} 
where $\Pi_{A}^i$ are some projections on $A$.  
The operators on $A$ are projections by lemma (\ref{lem:BV}). 

\noindent
We claim 
that if another $\bfly$ relation with $H_1$ 
results in a different decomposition of $H_1$
\begin{equation}\label{eq2}
H_1 = \tilde{\Pi}_q^0 \otimes \tilde{\Pi}_{A}^0 +  \tilde{\Pi}_q^1 \otimes \tilde{\Pi}_{A}^1 
\end{equation} 
then this would imply a contradiction. 
To see this, suppose WLOG that  $\Pi_q^0$ ($\Pi_q^1$)
 projects on the state $\ket{0}$ ($\ket{1}$).   
And suppose $\tilde{\Pi}_q^0$ projects on the state
$\alpha \ket{0} + \beta \ket{1}$ with neither $\alpha$ not $\beta$ equal 
to $0$. 
Since $q$ is a qubit, we can write WLOG that $\tilde{\Pi}_q^1$ 
 projects on the state
$\beta^* \ket{0} - \alpha^* \ket{1}$. 

\noindent
Then we write:
$$ \tilde{\Pi}_q^0 = \abs{\alpha}^2 \ketbra{0} + \abs{\beta}^2 \ketbra{1}
 + \alpha \beta^* \ket{0}\bra{1} + \alpha^* \beta \ket{1}\bra{0} $$
 and 
$$ \tilde{\Pi}_q^1 = \abs{\beta}^2 \ketbra{0} + \abs{\alpha}^2 \ketbra{1}
 - \alpha \beta^* \ket{0}\bra{1} - \alpha^* \beta \ket{1}\bra{0}. $$
Plugging these terms in Equation (\ref{eq2}) for $H_1$, and
using the fact that $H_1$ is block diagonal in the computational basis,
by Equation (\ref{eq:1}), we set
to $0$ the terms in tensor with $\ket{0}\bra{1}$ and $\ket{1}\bra{0}$,
and get
$$ \tilde{\Pi}_A^0 = \tilde{\Pi}_A^1 $$
which by Equation (\ref{eq2}) means that $H_1$ can be written as 
$H_1 = I_q \otimes \tilde{\Pi}_A$.
Thus, $H_1$ acts trivially on $q$, contrary to our assumption. 
\end{proof} 

We would like now to deduce various properties from transitivity. 
We first define: 

\begin{definition}

\textbf{Butterfly path, Butterfly-connectedness}

\noindent
Consider two operators both acting on 
a qubit $q$, denoted $H_a$ and $H_b$. 
We say there is a butterfly path between $H_a$ and $H_b$ if there is a 
sequence of operators $H_a,H_1,H_2,...,H_m,H_b$ such that 
$H_a\bfly H_1, H_1\bfly H_2, H_2\bfly H_3,......,H_m\bfly H_b$. 
(Where all butterflies are with respect to $q$). 
We say that these two operators are butterfly-connected. 
\end{definition}

\begin{theorem}\label{bflychainClaim}{\bf Butterfly connectedness of 
operators implies separability}
If all pairs of operators $H_1(q),H_2(q)$ acting on a qubit $q$, 
are connected by a butterfly path on $q$, then 
$q$ is separable.
\end{theorem}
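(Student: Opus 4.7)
The plan is to fix a single direct-sum decomposition of ${\cal H}_q$ induced by one butterfly containing a chosen reference operator, and then propagate it along the butterfly paths so that every operator acting on $q$ must preserve it. The key leverage is Claim \ref{cl:same}: any two butterflies w.r.t.\ $q$ that share a common operator acting non-trivially on $q$ induce the \emph{same} decomposition of ${\cal H}_q$.

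First I would dispose of the degenerate case in which at most one operator acts non-trivially on $q$ (no butterflies exist and the assertion is either vacuous or reduces to picking any eigenbasis of the single Hermitian operator), and from here on assume at least two operators act on $q$. Pick any such operator $H_0$ and any butterfly partner $H_1$ of it (which exists by the butterfly-connectedness hypothesis). By Claim \ref{cl:bfly} the butterfly $H_0 \bfly H_1$ induces a non-trivial direct-sum decomposition
\[ {\cal H}_q = {\cal H}_q^0 \oplus {\cal H}_q^1 \]
into two one-dimensional subspaces that is preserved by both $H_0$ and $H_1$. I would fix this decomposition as the target.

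Given any other operator $H$ acting non-trivially on $q$, by hypothesis there is a butterfly path $H_0 \bfly G_1 \bfly G_2 \bfly \cdots \bfly G_k = H$ (all w.r.t.\ $q$). I would argue by induction on $i$ that the butterfly $G_{i-1}\bfly G_i$ (with $G_0 := H_0$) induces exactly the same fixed decomposition. The base case $H_0 \bfly G_1$ follows by applying Claim \ref{cl:same} to the two butterflies $H_0 \bfly H_1$ and $H_0 \bfly G_1$, which share the common operator $H_0$. For the inductive step I would apply Claim \ref{cl:same} to the two butterflies $G_i \bfly G_{i-1}$ and $G_i \bfly G_{i+1}$, which share the common operator $G_i$, concluding that $G_i \bfly G_{i+1}$ induces the same decomposition as $G_{i-1}\bfly G_i$. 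Propagating the induction up to $G_k = H$ forces $H$ to preserve the fixed decomposition.

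Since every operator acting on $q$ thereby preserves one and the same direct-sum decomposition of ${\cal H}_q$ into two one-dimensional subspaces, $q$ is separable by definition. The main obstacle is essentially a bookkeeping one: Claim \ref{cl:same} requires all three of its operators to act non-trivially on $q$, and this must be ensured at every step of the inductive propagation. Fortunately this is automatic, since every operator participating in a butterfly w.r.t.\ $q$ acts non-trivially on $q$ by definition of the $\bfly$ relation, so the induction runs through cleanly.
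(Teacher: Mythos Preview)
Your proof is correct and follows essentially the same approach as the paper: fix a reference operator, and use Claim~\ref{cl:same} transitively along each butterfly path to propagate a single decomposition to every operator on $q$. You have simply filled in the inductive bookkeeping that the paper's one-sentence proof leaves implicit, including the degenerate case and the non-triviality check, so there is nothing substantively different to compare.
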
 

\begin{proof}
Pick one operator acting on $q$, and now use claim (\ref{cl:same})
along the path connecting it to any other operator on $q$, to show that 
by transitivity all butterflies along the path induce the 
same decomposition on $q$, 
and thus by transitivity all operators on $q$ preserve this decomposition, 
hence by definition $q$ is separable.  
\end{proof}

\begin{corollary}\label{PartitionClaim} {\bf ``left-right'' 
Partition implies Separability:} 
If there is a partition of the operator set $S$ 
into two disjoint non-empty sets $S_{q,left}$
and $S_{q,right}$ such that for each $H_i\in S_{q,left}$ and
 $H_j\in S_{q,right}$ we have
$A_i \cap A_j \subseteq \left\{q\right\}$ 
(we call this a ``left-right'' partition), then $q$ is separable. 
\end{corollary}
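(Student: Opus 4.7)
The plan is to reduce the claim to Theorem \ref{bflychainClaim}: once we show that any two operators acting on $q$ are butterfly-connected on $q$, separability of $q$ follows immediately. So the whole task is to produce butterfly paths on $q$ between arbitrary pairs of operators in $S$, using only the left--right partition hypothesis.

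First I would observe that the hypothesis directly yields a single butterfly edge across the partition: for any $H_i \in S_{q,left}$ and $H_j \in S_{q,right}$, the condition $A_i \cap A_j \subseteq \{q\}$ is precisely the definition of $H_i \bfly H_j$ with respect to $q$ (both operators act non-trivially on $q$ since they belong to $S$, and they share no other qubit). So across-the-cut pairs are butterfly-connected in one step.

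Next I would handle pairs on the same side by routing through the other side. Since both $S_{q,left}$ and $S_{q,right}$ are non-empty, fix any anchor $H^{*} \in S_{q,right}$. Then for any two operators $H_1, H_2 \in S_{q,left}$, the path $H_1 \bfly H^{*} \bfly H_2$ is a length-two butterfly path on $q$, because each of $H_1, H_2$ intersects $H^{*}$ only in $q$ by the partition hypothesis. The symmetric argument, with an anchor $H^{**}\in S_{q,left}$, handles pairs inside $S_{q,right}$. Combining the three cases, every two operators acting on $q$ are butterfly-connected, so Theorem \ref{bflychainClaim} gives that $q$ is separable.

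There is no serious obstacle: the only thing to be careful about is the non-emptiness of both $S_{q,left}$ and $S_{q,right}$, which is precisely what guarantees the existence of the anchors needed for the length-two bridging paths. This hypothesis is given explicitly in the statement, so the proof amounts essentially to unpacking the definitions of butterfly, butterfly-connectedness, and invoking the previously established Theorem \ref{bflychainClaim}.
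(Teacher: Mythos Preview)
Your proof is correct and follows essentially the same approach as the paper: both arguments construct butterfly paths between arbitrary pairs of operators on $q$ by bouncing back and forth between $S_{q,left}$ and $S_{q,right}$ (using non-emptiness of both sides to furnish intermediate ``anchors''), and then invoke Theorem~\ref{bflychainClaim}. Your write-up is slightly more explicit in separating the across-the-cut case from the same-side case, but the content is identical.
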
 

\begin{proof}
For any pair of operators $H_1$ and $H_2$ one can construct a chain of $\bfly$ relations
$H_1 \bfly H_{j_1} \bfly \hdots\bfly H_{j_m} \bfly H_2$ that goes back and forth between
$S_{q,left}$ and $S_{q,right}$ as both sets are nonempty.
Then by theorem (\ref{bflychainClaim}) we get that $q$ is indeed separable.
\end{proof}

What structures then can be present in a graph which contains only 
non-separable qubits? For that, we need to define one more 
notion of connectivity, namely operator-path connectivity. 

\subsubsection{Operator Paths}\label{subsec:limit}
\begin{definition}

\textbf{Open and Cyclic Operator Paths}

\noindent
An open operator path is an ordered set of $L$ 
distinct operators $H_1,\hdots,H_L$ such that for any pair of indices $(i,k)$
where $i\in [L],k\in [L]$ we have:
\begin{enumerate}
\item
$|A_i\cap A_{k}|=2$ for $|i-k|=1$
\item
$|A_i\cap A_{k}|=1$ for $|i-k|=2$.
\item
$|A_i\cap A_{k}|=0$ for $|i-k|>2$
\end{enumerate}
Similarly, a closed operator path, is an ordered set 
of operators that are distinct, except $H_1 = H_L$,
and the index additions above are taken modulo $L$.
The length of an operator path is defined as the number of its distinct 
operators.
\end{definition}

\begin{claim}\label{connected}
{\bf Graph Connectivity implies operator path connectivity} 

\noindent
If $S$ is a set of operators such that no qubit in $A_S$ is separable, 
and $G_S$ is connected, then $A_S$ is also operator-path-connected, i.e., 
any pair of qubits $q,v \in A_S$, are connected by an operator path 
which starts with an operator which acts on $q$ and ends with an operator 
which acts on $v$. 
\end{claim}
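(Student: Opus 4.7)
The plan is to lift a walk in $G_S$ between $q$ and $v$ to a genuine operator path, using the non-separability of every intermediate qubit to repair the two types of defects that can appear (consecutive pairs sharing only one qubit, and long-range overlaps).

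First, since $G_S$ is connected, I fix a walk $q=q_0,q_1,\ldots,q_m=v$ in $G_S$ and, for each edge $(q_{i-1},q_i)$, a witnessing operator $H^{(i)}\in S$ acting on both endpoints. Any two consecutive $H^{(i)},H^{(i+1)}$ automatically share at least $q_i$. Whenever they share only $q_i$, I invoke Corollary~\ref{PartitionClaim}: since $q_i$ is non-separable, the operators acting on $q_i$ admit no left--right partition, so the graph on those operators with edges ``share a second qubit besides $q_i$'' is connected, and I can interpolate a chain of operators on $q_i$ bridging $H^{(i)}$ and $H^{(i+1)}$ with every consecutive pair sharing two qubits (including $q_i$). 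Performing these insertions everywhere yields a sequence $K_1,\ldots,K_L$ with $K_1$ acting on $q$, $K_L$ acting on $v$, and $|A_{K_s}\cap A_{K_{s+1}}|\ge 2$ for every $s$.

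Second, I take such a sequence of minimum length and show the three operator-path conditions are forced by minimality. Condition~1 ($|A_{K_s}\cap A_{K_{s+1}}|=2$) holds because an overlap of $3$ would give $A_{K_s}=A_{K_{s+1}}$, so deleting $K_{s+1}$ preserves all required overlaps. Condition~2 ($|A_{K_s}\cap A_{K_{s+2}}|=1$) holds because the two size-$2$ overlaps with $A_{K_{s+1}}$ both live inside that size-$3$ set (forcing $\ge 1$), while an overlap of $\ge 2$ would permit skipping $K_{s+1}$. Condition~3 ($|A_{K_i}\cap A_{K_j}|=0$ for $j\ge i+3$) starts similarly: an overlap of $\ge 2$ directly shortcuts $K_i$ to $K_j$; for an overlap equal to $1$ on a qubit $p$, I again invoke non-separability of $p$ to produce a chain of operators on $p$ joining $K_i$ to $K_j$ with consecutive two-qubit overlaps, and substitute this chain for $K_i,K_{i+1},\ldots,K_j$ in search of a strictly shorter legal sequence.

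The main obstacle is the ``overlap exactly $1$'' subcase of Condition~3: one must argue the substituted chain through $p$ is actually shorter than the original subsequence $K_i,\ldots,K_j$. I would attack this by induction on $j-i$, combined with the fact that every intermediate member of the chain on $p$ is itself governed by non-separability at its other qubits, so iterated local surgery must terminate at a genuine operator path. This delicate length bookkeeping, rather than the structural ideas, is where I expect the real work to lie.
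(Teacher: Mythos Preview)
Your first phase (producing a sequence with consecutive two-qubit overlaps) is correct but more laborious than the paper's. The paper avoids lifting an explicit walk: it defines $S_q$ as the closure, under ``add any operator sharing two qubits with a current member,'' of the set of operators acting on $q$, and argues that if $v\notin A_{S_q}$ then the last qubit $w\in A_{S_q}$ along a $G_S$-path from $q$ to $v$ admits a left--right partition of its operators (those in $S_q$ versus those not), so $w$ is separable by Corollary~\ref{PartitionClaim} --- contradiction. Either route yields a sequence $K_1,\ldots,K_L$ with $K_1$ on $q$, $K_L$ on $v$, and $|A_{K_s}\cap A_{K_{s+1}}|\ge 2$ throughout; the paper then simply writes ``clearly we can construct an operator path as desired,'' so at this point you are actually being more careful than the paper.

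Your minimality arguments for Conditions~1 and~2 are correct. The gap you flag in Condition~3 is real, but the fix you propose --- substitute a chain of operators on $p$ and control its length by induction on $j-i$ --- heads in the wrong direction: there is no a~priori reason that chain is shorter than $j-i+1$, and no amount of ``iterated local surgery'' bookkeeping will force it. The clean fix avoids further surgery and invokes Theorem~\ref{bflychainClaim} directly. In a \emph{minimal} two-overlap chain, suppose $A_{K_i}\cap A_{K_j}=\{p\}$ with $j-i\ge 3$. Any operator $H$ on $p$ that is \emph{not} a butterfly (w.r.t.\ $p$) with $K_i$ and \emph{not} a butterfly with $K_j$ must share at least two qubits with each of them; then $K_i,H,K_j$ is a two-overlap chain of length $3<j-i+1$, and splicing it in for $K_i,\ldots,K_j$ strictly shortens the full chain, contradicting minimality. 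Hence every operator on $p$ satisfies $H\bfly K_i$ or $H\bfly K_j$; since also $K_i\bfly K_j$, any two operators on $p$ are butterfly-connected through $\{K_i,K_j\}$, so $p$ is separable by Theorem~\ref{bflychainClaim}. This contradiction establishes Condition~3 for the minimal chain with no length accounting at all.
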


\begin{proof}
Let $q\in A_S$. 
Consider the set $S_q$ built as follows. 
Start with all operators acting on $q$. Add to $S_q$ any operator which 
intersects an operator in $S_q$ with an intersection of size $2$. 
Continue until it is impossible to add operators this way. $S_q$ is the final 
set of operators we get. Now, if 
$A_{S_q}$ contains  $v$, we are done (since clearly we can construct 
an operator path as desired). If not, consider a path from $q$ to $v$ 
is $G_S$. Let $w$ be the last qubit that belongs to $A_{S_q}$ 
along that path. 
We claim that $w$ is separable by corollary (\ref{PartitionClaim}) as follows:
Let us partition the operators acting on $w$ to two non-empty 
sets: The set $A$ 
which contains all operators in $S_q$, and $B$ of all other operators. 
We know that any operator in $B$ cannot intersect any operator in $A$ 
by $2$ particles, since otherwise it would have been in $S_q$; 
Also, we know that $B$ is non-empty since it contains 
the operator inducing the edge from $w$ to the next qubit on the path 
to $v$ in $G_S$, and $A$ is also not empty, since $w$ belongs to $A_{S_q}$.   
Hence, the conditions of corollary (\ref{PartitionClaim}) apply. 
\end{proof}

\subsubsection{Operator Crowns and 
Geometrically Constrained Connectivity}\label{subsec:crowns}

We are now ready to deduce various restrictions on the 
connectivity of the operators acting on a nonseparable qubit $q$.
We will often restrict attention to operator paths all of whose operators 
act on one particular qubit: 
\begin{definition}

\textbf{Operator Path on a qubit}

\noindent
An operator path on $q$ (where $q$ is a qubit) is an operator path all of whose operators act on $q$.
\end{definition}

A certain structure which will appear useful is the 
length-$3$ operator path on a qubit $q$:  

\begin{definition}

\textbf{Operator Crown}

\noindent
For a qubit $q$, an operator Crown on $q$, is a set of $3$ 
operators that act on $q$, and $4$ other distinct
qubits, which we call the crown qubits $a_1,a_2,a_3,a_4$, as follows:
$H_1(a_1,a_2,q)$,$H_2(a_2,q,a_3)$,$H_3(a_3,q,a_4)$. 
\end{definition}

See Figure \ref{fig:crown} for an illustration. 
This structure acts as an ``operator-trap'': any operator
that acts on $q$ must act also on some qubit which participates in the crown:   

\begin{claim}\label{cl:nosep1}{\bf Operator Crown as an Operator Trap:}
Let $q$ be a nonseparable qubit, and let $C$ be some operator crown on $q$. 
Then any operator on $q$ acts on some crown qubit of $C$.
\end{claim}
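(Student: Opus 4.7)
The plan is to argue by contradiction, leveraging Theorem \ref{bflychainClaim}: if I can show that under the assumption of an ``escape'' operator $H$ on $q$ that avoids every crown qubit $\{a_1,a_2,a_3,a_4\}$, then all operators acting on $q$ become butterfly-connected (with respect to $q$), which would force $q$ to be separable, contradicting the hypothesis. So I suppose such an $H$ exists and seek to produce enough butterflies.

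The starting butterflies come for free: since $H$ shares only $q$ with each of $H_1, H_2, H_3$, I immediately have $H \bfly H_1$, $H \bfly H_2$, $H \bfly H_3$, and also $H_1 \bfly H_3$ (they share only $q$ because $A_{H_1} = \{a_1,a_2,q\}$ and $A_{H_3} = \{a_3,q,a_4\}$ are disjoint outside $q$). So $\{H, H_1, H_2, H_3\}$ is already butterfly-connected. The remaining work is to connect any other operator $H'$ on $q$ into this cluster. Since $H'$ is at most $3$-local, I can write $A_{H'} \subseteq \{q,x,y\}$ and split into three cases according to how many of $x,y$ are crown qubits.

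In the case where neither of $x,y$ lies in $\{a_1,\dots,a_4\}$, any $H_i$ satisfies $A_{H'} \cap A_{H_i} = \{q\}$, so $H' \bfly H_1$. In the case where exactly one of $x,y$ is a crown qubit $a_i$, I use that each $a_i$ is contained in at most two of $H_1,H_2,H_3$ (check: $a_1,a_4$ in one each, $a_2,a_3$ in two each), so there is a crown operator $H_j$ avoiding $a_i$; since the remaining particle of $H'$ is not a crown qubit and $A_{H_j} \setminus \{q\}$ consists only of crown qubits, I get $A_{H'} \cap A_{H_j} = \{q\}$ and hence $H' \bfly H_j$. Finally, in the case where both $x,y$ are crown qubits, a direct enumeration shows that some pairs (e.g. $\{a_1,a_3\}, \{a_2,a_3\}, \{a_2,a_4\}$) fail to butterfly with any $H_i$, so the crown operators are not enough --- but here the escape operator $H$ saves me, because $A_H$ contains no crown qubit, and therefore $A_{H'} \cap A_H = \{q\}$, giving $H' \bfly H$ directly.

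The main obstacle, as the case analysis shows, is precisely the ``diagonal'' double-crown configurations like $\{a_1,a_3\}$ where no crown operator yields a butterfly with $H'$; the whole structure of the argument hinges on having the hypothetical escape operator $H$ available as a universal butterfly partner for these configurations, which is exactly why the contrapositive formulation (assume an escape operator exists, derive separability) is the natural framing. Once all three cases are handled, every operator on $q$ is butterfly-connected to $H$, Theorem \ref{bflychainClaim} forces $q$ to be separable, and the assumption that an escape operator exists is refuted.
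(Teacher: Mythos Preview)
Your proof is correct and follows essentially the same approach as the paper: assume an escape operator $H$ exists, show that every operator on $q$ is then butterfly-connected to the cluster $\{H,H_1,H_2,H_3\}$, and invoke Theorem~\ref{bflychainClaim} to contradict nonseparability. The paper's case analysis is organized slightly more economically---it observes that $H_1,H_3,H$ are pairwise disjoint outside $q$, so any $3$-local $H'(q)$ can overlap at most two of them outside $q$ and hence butterflies with the third---but your crown-qubit enumeration reaches the same conclusion.
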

\begin{proof}
Suppose on the negative, that there exists some crown $C = (H_1,H_2,H_3)$ and an operator $H$, such that $C$ and $H$ intersect only on $q$.
Then there exists a subgraph on $q$ made of $3$ operators, namely $H_1,H_3,H$ such that each pair intersects only on $q$.
Thus $q$ is separable by the following argument: we prove $(*)$, 
that any operator on $q$ has a $\bfly$ path to $H_1$.
This implies that any two operators are $\bfly$-path connected, 
so by theorem (\ref{bflychainClaim}) $q$ is separable.
We now show that indeed $(*)$ is true: 
the operators $H$, and $H_3$ are $\bfly$-path connected to $H_1$ as $H\bfly H_1, H_3\bfly H_1$.
Regarding $H_2$, we have $H_2\bfly H \bfly H_1$.
Any other operator on $q$ can share two qubits with at most $2$ of the operators: $H_1, H, H_3$, so it has a $\bfly$ path to at least one of them, which implies it is also $\bfly$ connected to $H_1$. 
\end{proof}

\begin{figure}[ht]
\center{
 \epsfxsize=3in
 \epsfbox{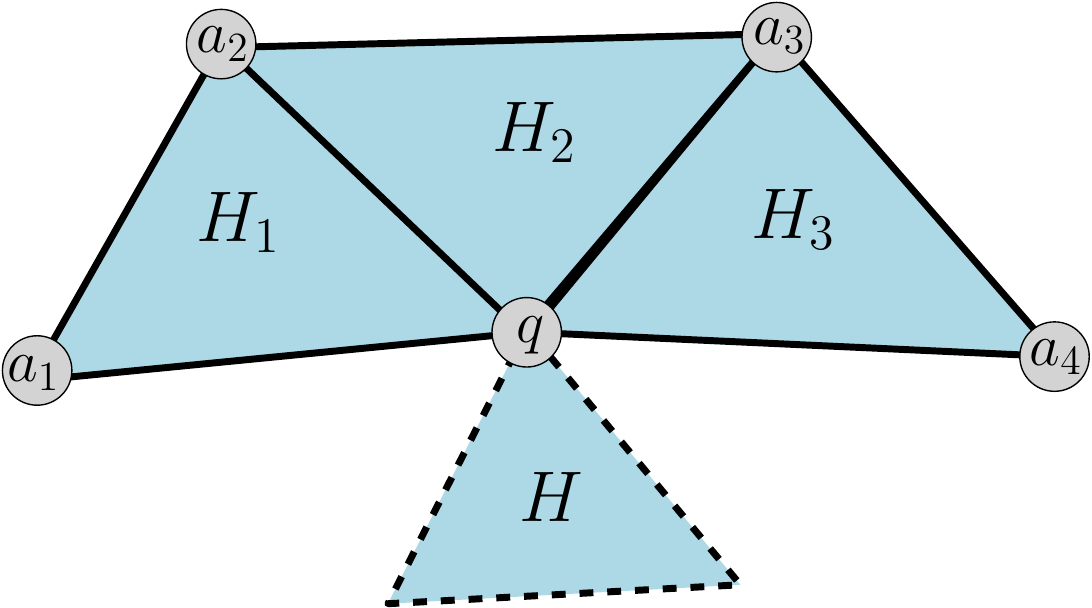}}
 \caption{A sketch of the proof of Claim (\ref{cl:nosep1})}
 \label{fig:crown2}
\end{figure}

We now use Theorem (\ref{bflychainClaim}) and Claim (\ref{connected}) 
to prove that the operators acting on one non-separable qubit $q$ 
must be close in terms of the shortest operator path connecting them: 

\begin{claim}\label{cl:nosep2}{\bf Operators on $q$ are connected by 
length-three operator paths:}
Let $q$ be some nonseparable qubit. 
Then any $2$ operators on $q$ are operator-path connected by an operator path 
on $q$ of length at most $3$.
\end{claim}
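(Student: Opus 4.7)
\begin{profsketch}
The plan is to split on $|A_a\cap A_b|\in\{1,2,3\}$. If $|A_a\cap A_b|=2$, then the pair $(H_a,H_b)$ is itself a length-$2$ operator path on $q$ and nothing more is needed. The degenerate case $|A_a\cap A_b|=3$ (coincident support) can be absorbed by applying the argument to a nondegenerate neighbor of $H_a$, so the substantive case is $|A_a\cap A_b|=1$, i.e.\ $H_a\bfly H_b$ with respect to $q$. Here I would argue by contradiction: assume no operator $H_x$ acting on $q$ satisfies $|A_x\cap A_a|=|A_x\cap A_b|=2$ simultaneously, so that no length-$3$ operator path $(H_a,H_x,H_b)$ on $q$ exists, and derive that $q$ must then be separable.

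Under this hypothesis I would classify every operator $H_c$ acting on $q$ by its intersections with $A_a$ and $A_b$, and in each case show $H_c$ is butterfly-path connected on $q$ to $H_a$ via at most two $\bfly$ steps:
\begin{itemize}
\item If $A_c\cap A_a=\{q\}$, then $H_c\bfly H_a$ directly.
\item If $|A_c\cap A_a|\geq 2$ but $A_c\cap A_b=\{q\}$, then $H_c\bfly H_b\bfly H_a$.
\item The remaining case $|A_c\cap A_a|\geq 2$ and $|A_c\cap A_b|\geq 2$ I rule out entirely: if $|A_c\cap A_a|=3$ then $A_c=A_a$ forces $|A_c\cap A_b|=|A_a\cap A_b|=1$, contradicting $|A_c\cap A_b|\geq 2$; by symmetry $|A_c\cap A_b|\neq 3$. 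So $|A_c\cap A_a|=|A_c\cap A_b|=2$, but then $(H_a,H_c,H_b)$ is a valid length-$3$ operator path on $q$, contradicting the contradiction hypothesis.
\end{itemize}

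Hence every operator on $q$ is butterfly-path connected to $H_a$, so every pair of operators acting on $q$ is butterfly-connected. Theorem \ref{bflychainClaim} then forces $q$ to be separable, contradicting the nonseparability hypothesis; this gives the required length-$3$ path and completes the proof.

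The main obstacle is really the third subcase of the classification: the key observation is that the would-be offending $H_c$ there is \emph{exactly} the $H_x$ whose existence our contradiction hypothesis rules out. Once this is spotted, the remaining work is elementary bookkeeping on intersection sizes (exploiting that $3$-local operators meet in at most $3$ qubits) together with a single appeal to Theorem \ref{bflychainClaim}.
\end{profsketch}
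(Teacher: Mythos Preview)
Your argument for the main case $|A_a\cap A_b|=1$ is correct and in fact cleaner than the paper's. The paper assumes the shortest operator path on $q$ between the two operators has length $\ge 4$, invokes Claim~\ref{cl:nosep1} (the operator-crown trap) to force the length to be exactly $4$, then writes down an explicit length-$4$ path $H_1,\dots,H_4$ and argues that every other operator on $q$ is $\bfly$-connected to $H_1$ via one of $H_1,H_3,H_4$. You bypass all of this: you never build a length-$4$ path and never invoke Claim~\ref{cl:nosep1}; your case split on $|A_c\cap A_a|$ and $|A_c\cap A_b|$ shows directly that the nonexistence of a bridging $H_x$ already makes every operator $\bfly$-connected to $H_a$ in at most two steps, via either $H_a$ or $H_b$. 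The payoff is that your proof is self-contained from Theorem~\ref{bflychainClaim} alone, whereas the paper's route packages the combinatorics into the crown lemma first. Both approaches ultimately reduce to Theorem~\ref{bflychainClaim}.

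One cautionary note: your treatment of the degenerate case $|A_a\cap A_b|=3$ (coincident supports) is not quite right as stated. If $A_a=A_b$, then \emph{no} operator path of any length can connect $H_a$ to $H_b$, since the definition forces $|A_1\cap A_L|\le 1$ for $L\ge 3$ and $|A_1\cap A_2|=2$ for $L=2$. So ``applying the argument to a nondegenerate neighbor'' does not produce a path between $H_a$ and $H_b$ themselves. The paper glosses over this too (it tacitly assumes some connecting path exists), and in practice one absorbs operators with identical support into a single term; but you should be aware that this edge case is a gap in the statement itself, not just in your sketch.
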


\begin{proof}
First, consider the connected component in $G_S$ which contains $q$.
By Claim (\ref{connected}) it is also operator path-connected. 
Assume on the negative, that there exist two operators  $H_1(q),H_4(q)$ such that any operator path on $q$ that connects them is of length at least $4$.
Since $H_1\neq H_4$ this must be an open operator path.
We know that any open operator path on $q$ must be of length exactly $4$, 
since if the shortest open path is of length at least $5$,
we get a structure as in Figure \ref{fig:crown2} and we contradict 
Claim (\ref{cl:nosep1}).
Let us choose such an operator path on $q$ and denote its operators as follows: 
$H_1(q,a_1,a_2)$,
$H_2(q,a_2,a_3)$,
$H_3(q,a_3,a_4)$,
$H_4(q,a_4,a_5)$.
Then any operator $H$ that examines $q$ and some qubit
 of the set $\left\{a_1,a_2\right\}$ cannot examine any qubit in the set $\left\{a_4,a_5\right\}$
as this would shorten the path between $H_1$ and $H_4$ to $(H_1,H,H_4)$.
So let $H(q)$ be some operator on $q$.
We claim that it has a $\bfly$ path to $H_1$.
Indeed, if $H$ does not examine any qubit in the set $\left\{a_1,a_2\right\}$, then it has a $\bfly$ relation with $H_1$.
Otherwise, it does not examine any qubit in the set $\left\{a_4,a_5\right\}$, so it has a $\bfly$ relation with $H_4$, yet $H_4\bfly H_1$, so $H \bfly H_1$.
Any of the operators $H_2,H_3,H_4$ is also $\bfly$ connected to $H_1$, thus by 
Theorem (\ref{bflychainClaim}) $q$ is separable.
\end{proof}

\noindent
We also give a slightly strengthened version of the above.
We examine a case where there are two operators $H_1(q),H_2(q)$ on a nonseparable qubit $q$ connected by
an operator path $P$ of length $4$ on $q$.  
By Claim \ref{cl:nosep2} this 
is not the minimal length operator path connecting these two operators.
We show that in that case, not only is there a shorter path of length $3$ 
between $H_1(q)$ and $H_2(q)$, but there exists a path of length $3$ that shortcuts $P$ itself as follows:

\begin{claim}\label{cl:bridge}{\bf (Bridge Claim)}
Let 
$H_1(q,a_1,a_2)$,
$H_2(q,a_2,a_3)$,
$H_3(q,a_3,a_4)$,
$H_4(q,a_4,a_5)$,
be a length $4$ open operator path on a nonseparable qubit $q$.
Then there exists an operator $H(q,a_2,a_4)$.
\end{claim}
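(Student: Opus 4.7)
The proof strategy is to invoke the contrapositive of Theorem \ref{bflychainClaim} and then pin down the bridging operator by a short combinatorial case analysis on which qubits it touches. The key observation is that the bridging operator should arise not as the Claim~\ref{cl:nosep2}-shortcut for $H_1,H_4$, but rather as the \emph{witness} of nonseparability of $q$, which turns out to be forced into exactly the required form.

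First, I would check that $H_1,H_2,H_3,H_4$ all lie in a single butterfly-connected component on $q$. By the operator-path conditions any two operators at distance at least two in the given path share only $q$, hence $H_1 \bfly H_3$, $H_1 \bfly H_4$, and $H_2 \bfly H_4$. The three adjacent pairs $H_1,H_2$, $H_2,H_3$, $H_3,H_4$ are not direct butterflies, but each is connected by a two-step butterfly path (for example $H_2 \bfly H_4 \bfly H_1$ places $H_1$ and $H_2$ in the same component, and $H_3 \bfly H_1 \bfly H_4$ handles $H_3,H_4$).

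Next, since $q$ is nonseparable, Theorem \ref{bflychainClaim} (contrapositive) guarantees that some pair of operators on $q$ is not butterfly-connected; hence there is an operator $H$ on $q$ that fails to be butterfly-connected to $H_1$. Because $H_2,H_3,H_4$ all sit in the butterfly component of $H_1$, this $H$ cannot stand in a direct butterfly relation with any of $H_1,H_2,H_3,H_4$ (otherwise it would be dragged into $H_1$'s component). Writing $H = H(q,u,v)$ and noting that failure of $H \bfly H_j$ means $\{u,v\}$ intersects $A_j \setminus \{q\}$, we obtain the four non-emptiness conditions
\[
\{u,v\} \cap \{a_1,a_2\} \neq \emptyset, \quad \{u,v\} \cap \{a_2,a_3\} \neq \emptyset, \quad \{u,v\} \cap \{a_3,a_4\} \neq \emptyset, \quad \{u,v\} \cap \{a_4,a_5\} \neq \emptyset.
\]

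The final step is a short case analysis. The intersection of the four sets above is empty, so a $2$-local $H$ (only one non-$q$ qubit) is ruled out. For a $3$-local $H$ with two distinct non-$q$ qubits, the first and last conditions together force, WLOG, $u \in \{a_1,a_2\}$ and $v \in \{a_4,a_5\}$. The second condition combined with $v \notin \{a_2,a_3\}$ then forces $u = a_2$, and symmetrically the third condition forces $v = a_4$. Hence $H = H(q,a_2,a_4)$, which is the operator the claim asserts. The main conceptual step, rather than any technical obstacle, is step two: seeing that the witness of nonseparability must lie outside the butterfly-component of the whole path, so that it simultaneously blocks every $H_j$; the combinatorics of four overlapping two-element sets then pins the qubits down uniquely.
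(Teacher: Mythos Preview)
Your proof is correct and takes essentially the same approach as the paper: both arguments hinge on Theorem~\ref{bflychainClaim} together with a case analysis on which of $a_1,\dots,a_5$ an operator on $q$ can touch, and both isolate $(a_2,a_4)$ as the unique pair that avoids a direct $\bfly$ relation with each of $H_1,\dots,H_4$. The only difference is framing: the paper assumes no $H(q,a_2,a_4)$ exists and shows every operator on $q$ is $\bfly$-connected to $H_1$ (hence $q$ separable), whereas you run the contrapositive forward to produce a witness $H$ outside the $\bfly$-component of $H_1$ and pin it down via the four intersection constraints --- a slightly more streamlined packaging of the same combinatorics.
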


\begin{proof}
We assume on the negative that no such operator exists, and show that $q$ is separable.
All operators of the path are $\bfly$ connected to $H_1$.
Any operator that shares only $q$ with $H_1,\hdots H_4$ is $\bfly$ connected to $H_1$.
Any operator that shares $q$ and just one other qubit out of $a_1,...,a_4$
is also $\bfly$ connected to $H_1$.
Let $H$ be an operator that acts on $q$ and two other qubits out of $a_1,...,a_4$.
These qubits cannot be adjacent so the possible pairs (excluding ($a_2,a_4$)) are 
$(a_1,a_3),(a_1,a_4),(a_1,a_5),(a_2,a_5)(a_3,a_5)$.
Each such pair of qubits examined by $H$, 
corresponds to a $\bfly$ relation of $H$ with the operator
 $H_4,H_2,H_3,H_3,H_1$, respectively, so $H$ is $\bfly$-connected to $H_1$.
Thus $q$ is separable by Theorem (\ref{bflychainClaim}). 
\end{proof}

\subsection{The Backbone}\label{subsec:bkbn}

Having arrived from the initial input $S$ to one with no separable qubits, using input from Merlin, we now show that with the help of Merlin, 
the new instance can be viewed as 
a $2$-local problem.
This is formalized by the following theorem:
\begin{theorem}\label{thm:2lclreduce}
For any instance $S$ of 
$CLH(3,2)$ with no separable qubits 
there exists a partition of the set of vertices of $G_S$ into disjoint sets:
$$ V_G = \bigsqcup_i Q_i $$
where each $Q_i$ is of constant size, such that any 
operator $H\in S$, acts on qudits from at most $2$ such sets $Q_i$.
\end{theorem}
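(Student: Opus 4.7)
The plan is to handle one connected component of $G_S$ at a time, distinguish a one-dimensional ``backbone'' in each component, and then coarse-grain. Concretely, I would select a maximal (in length) open operator path $P = H^{(1)}, H^{(2)}, \dots, H^{(L)}$ inside the component (in the sense of Section~4.2.2), let $B$ be the set of qubits appearing in the $A_{H^{(i)}}$ --- the \emph{backbone} --- and call the remaining qubits of the component \emph{ribs}. I would then order the qubits of $B$ along $P$ and chop them into consecutive chunks of some constant size $c$, where $c$ will be chosen large enough to absorb all the locality constants coming out of Claims~\ref{cl:nosep1}--\ref{cl:bridge}. Each rib qubit $r$ is then attached to the chunk lying leftmost among those hit by operators through $r$. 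The final partition $V_G = \bigsqcup_i Q_i$ takes $Q_i$ to be a chunk of $B$ together with its attached ribs; since $c$ is a constant and (as the argument below will show) each rib is attached to a bounded number of chunks, each $Q_i$ has constant size.

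The key step is to prove that after coarse-graining every term $H \in S$ is supported on at most two consecutive blocks $Q_i \cup Q_{i+1}$. For a term $H$ acting on some backbone qubit $q \in B$, observe that three consecutive operators of $P$ around $q$ form an \emph{operator crown} on $q$ (except possibly when $q$ sits near an endpoint of $P$, which is handled by enlarging the endpoint chunks by an additive constant). Claim~\ref{cl:nosep1} then forces $H$ to share another qubit with that crown, hence to touch at least one further backbone qubit. Claim~\ref{cl:nosep2}, sharpened by the Bridge Claim~\ref{cl:bridge}, forces those two backbone qubits to be within a bounded number of steps along $P$: if they were farther apart, we would produce a length-$\ge 4$ operator path on $q$ with no shortcut, which together with Theorem~\ref{bflychainClaim} would contradict the non-separability of $q$. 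This bound is exactly what determines the chunk size $c$.

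For rib qubits, I would argue similarly but now invoking maximality of $P$. Claim~\ref{connected} guarantees that every rib $r$ is operator-path connected to $B$, and by maximality of $P$ any operator acting on $r$ must in fact also touch $B$ (otherwise $P$ could be extended). Next, applying Claim~\ref{cl:nosep2} at $r$ itself bounds the diameter, along $P$, of the set of backbone qubits reachable through $r$-incident operators to some constant: a rib $r$ that touched two backbone qubits far apart on $P$ would seed a $\bfly$-chain among the operators on $r$ (using the backbone operators of $P$ between those two qubits as intermediate witnesses of $\bfly$ relations, possibly together with the Bridge Claim), which would make $r$ separable and contradict our standing assumption. This localisation lets me assign $r$ unambiguously to a single block and guarantees that every operator through $r$ is supported on a single pair of adjacent blocks.

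The main obstacle I anticipate is precisely this rib step: ruling out ``shortcut'' interactions through ribs between distant parts of the backbone. The backbone-only version of the obstruction is already packaged as the Bridge Claim, but at a rib $r$ one must reconstruct enough $\bfly$ relations among the operators on $r$ from scratch, using backbone operators as intermediaries, to force separability of $r$ through Theorem~\ref{bflychainClaim}. The rest is bookkeeping: one has to pick a single constant $c$ which simultaneously dominates the backbone locality window from Claim~\ref{cl:nosep2}, the bridge window from Claim~\ref{cl:bridge}, and the rib-diameter window obtained above, and then verify that the resulting $\bigsqcup_i Q_i$ is indeed a partition and that every Hamiltonian term touches at most two consecutive $Q_i$'s.
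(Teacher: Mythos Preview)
Your overall strategy matches the paper's: take the backbone to be a maximal operator path, chunk it into constant-size pieces, use the crown/bridge Claims~\ref{cl:nosep1}--\ref{cl:bridge} to bound how far along the backbone any term can reach, and use Claim~\ref{cl:nosep2} at a rib to rule out long shortcuts. This is exactly the content of the paper's Lemma~\ref{lem:entrap} and Claims~\ref{cl:bkbn1},~\ref{cl:bkbn2}. There is, however, a real gap in your argument. From ``each rib is attached to a bounded number of chunks'' you conclude ``each $Q_i$ has constant size''. These are not the same statement: the first bounds how many chunks a single rib sees, while the second would require that each chunk have only $O(1)$ ribs attached. Nothing in Claims~\ref{cl:nosep1}--\ref{cl:bridge} or Theorem~\ref{bflychainClaim} bounds the number of distinct operators (hence distinct rib qubits) incident to a fixed backbone qubit, and in fact the paper explicitly allows the rib sets to have size growing with $n$ (see the caption to Figure~\ref{fig:step2}). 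So your blocks $Q_i$, with ribs included, need not be of constant size.

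The paper accordingly does \emph{not} fold ribs into the $Q_i$: its $Q_i$ consist of backbone qubits only, in groups of about $20$ (Definition~\ref{def:Qi}), while the ribs form separate, possibly large, sets $V_i$ (Definition~\ref{def:Vi}). What is established directly is only the ``almost one-dimensional'' structure (Claim~\ref{cl:structure}): every term is supported in some $V_i\cup Q_i\cup Q_{i+1}$. Getting to genuine two-locality then needs the additional algebraic step of Lemma~\ref{SliceLemma}: using a subspace index $\alpha_i$ supplied by Merlin, each $Q_i$ is split into left/right subparticles, and after fusing $Q_i^{right}$ with $Q_{i+1}^{left}$ every term acts on one fused particle and at most one rib qubit. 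A smaller point: your line ``by maximality of $P$ any operator acting on $r$ must in fact also touch $B$ (otherwise $P$ could be extended)'' is too fast, since an operator on $r$ disjoint from $B$ does not by itself extend the path; the paper's Claim~\ref{cl:bkbn1} argues instead via operator-path connectivity (Claim~\ref{connected}) together with the entrapment Lemma~\ref{lem:entrap} and a backtracking step.
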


As previously discussed, the way this theorem is proved, is by identifying 
a special
$1$-D structure called the backbone whose properties allow to coarse-grain 
the set of qubits
into constant-dimension sets, such that each operator is $2$-local w.r.t. 
these sets. 

\subsubsection{Properties of the backbone}
Here we define and identify the backbone, and prove 
its desired properties.  

\begin{definition}\label{def:bkbn}

\textbf{Backbone}

\noindent
For an instance $S$ of $CLH(3,2)$ we define the backbone to be 
a maximal length operator path $B$ in the 
connectivity graph $G_S$. If there are several such maximal length 
paths, we take one of them arbitrarily. 
\end{definition}

We start by proving a lemma which states that for any backbone qubit, 
any operator that acts on that qubit must also act on some 
``nearby'' backbone qubit. 

\begin{lemma}\label{lem:entrap}{\bf Short range connectivity in the backbone}
Let $B=\{H_1,H_2,...,H_L\}$ be a backbone of a connected-graph 
instance $S$ with no separable qubits, such that $L>100$. 
Let $q\in A_B$ and let $H_i(q)$ be some backbone operator.
Then for any $H=H(q)\in S$ that acts on $q$, 
there exists another qubit $p\in A_B$
examined by $H$, and a backbone operator $H_j(p)\in B$ such that $|i-j|\leq 4$
where addition is modulo $L$ for a closed operator path. 
\end{lemma}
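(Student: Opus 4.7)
The plan is to exhibit, for each backbone qubit $q$, an operator crown on $q$ consisting of three consecutive backbone operators, and then to invoke Claim \ref{cl:nosep1} to force every operator on $q$ to act on one of that crown's qubits.

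First I would parametrize the backbone. Since $B=\{H_1,\ldots,H_L\}$ is an operator path in the sense of the definition, consecutive operators share two qubits, operators at distance two share exactly one qubit, and operators at distance greater than two share none. A short induction then writes $H_m = (x_m,x_{m+1},x_{m+2})$ for a sequence of backbone qubits $x_1,\ldots,x_{L+2}$ (indices modulo $L$ in the cyclic case). In particular each backbone qubit $x_k$ belongs to exactly the backbone operators $H_{k-2},H_{k-1},H_k$ whose indices lie in $\{1,\ldots,L\}$, which is all three of them precisely when $3\le k\le L$ (and always in the cyclic case).

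Main argument. Suppose $q=x_k$ with all of $H_{k-2},H_{k-1},H_k$ containing $q$. Setting $a_1=x_{k-2},\ a_2=x_{k-1},\ a_3=x_{k+1},\ a_4=x_{k+2}$, these four qubits are pairwise distinct, and $(H_{k-2},H_{k-1},H_k)$ exactly matches the crown pattern $H_1(a_1,a_2,q),H_2(a_2,q,a_3),H_3(q,a_3,a_4)$. Hence this triple is an operator crown on the nonseparable qubit $q$. By Claim \ref{cl:nosep1}, every operator $H=H(q)\in S$ must act on at least one of the crown qubits $p\in\{x_{k-2},x_{k-1},x_{k+1},x_{k+2}\}$. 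Each such $p=x_{k'}$ has $|k-k'|\in\{1,2\}$ and is itself contained in backbone operators $H_j$ with $j\in\{k'-2,k'-1,k'\}\subseteq\{k-4,\ldots,k+2\}$. Combined with $i\in\{k-2,k-1,k\}$ this yields $|i-j|\le 4$, as required.

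Endpoint case. In the open-backbone case there remain the four boundary values $k\in\{1,2,L+1,L+2\}$, where the backbone alone does not provide a crown on $q$. For these I would argue by contradiction: assume $H(q)\in S$ intersects no $H_j$ with $|i-j|\le 4$ outside $\{q\}$. Since $H$ and $H_i$ then share only $q$, Claim \ref{cl:nosep2} applied to the nonseparable $q$ forces a length-three operator path $H,X,H_i$ on $q$, where $X=(q,\alpha,\beta)$ with $\beta$ a backbone neighbor of $q$ inside $H_i$ and $\alpha\in A_H$. Under the assumption, $\alpha$ is either non-backbone or a backbone qubit far from $H_i$. In the non-backbone case, checking the pairwise intersections shows that $X$ can be prepended (respectively appended) to $B$ to produce a valid operator path of length $L+1$, contradicting maximality of the backbone. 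In the far-backbone case, $X$ together with consecutive backbone operators around $x_{k'}$ gives a length-four operator path on the interior backbone qubit it reaches, and the Bridge Claim (Claim \ref{cl:bridge}) produces a bridging operator whose existence is incompatible with the backbone crown around that interior qubit (already controlled by the main argument). The hypothesis $L>100$ guarantees there is enough room for these extensions without wrap-around collisions.

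The interior case is an immediate application of Claim \ref{cl:nosep1}. The main obstacle is the endpoint case: one must combine maximality of the backbone, nonseparability of $q$, and the bridging technology of Claim \ref{cl:bridge} to rule out any operator that ``escapes'' to a far-away region of the backbone without first touching a nearby backbone qubit. The bookkeeping of which qubits can or cannot be shared by the candidate extensions is the most intricate aspect of the argument.
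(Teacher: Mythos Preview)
Your interior (``middle qubit'') case is correct and matches the paper exactly: the three consecutive backbone operators form a crown on $q$, and Claim~\ref{cl:nosep1} forces any $H(q)$ to hit a nearby backbone qubit. Your parametrization $H_m=(x_m,x_{m+1},x_{m+2})$ is also correct and can indeed be justified by a short induction using the operator-path intersection conditions.

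The endpoint case, however, has genuine gaps. Your plan is to apply Claim~\ref{cl:nosep2} to produce an intermediate $X=(q,\alpha,\beta)$ and then case-split on~$\alpha$. Both branches break down:
\begin{itemize}
\item \emph{Non-backbone $\alpha$.} You assert that $X$ can be prepended to $B$ to give a longer operator path. But take $q=x_1$, $H_i=H_1$, and $\beta=x_3$ (which Claim~\ref{cl:nosep2} does not exclude). Then $X=(x_1,\alpha,x_3)$ shares the qubit $x_3$ with $H_3$, so $|A_X\cap A_{H_3}|=1$, violating the operator-path requirement that operators at index distance~$3$ be disjoint. Simply prepending $X$ does not work; one needs additional surgery (this is exactly where the paper invokes the Bridge Claim, producing an auxiliary operator $H_y$ and then replacing $H_1,H_2$ by $H,X,H_y$).
\item \emph{Far-backbone $\alpha$.} You propose to build a length-$4$ operator path on $\alpha$ from $X$ together with backbone operators near $\alpha$, and then apply Claim~\ref{cl:bridge}. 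But $X$ shares only the single qubit $\alpha$ with each of $H_{k'-2},H_{k'-1},H_{k'}$ (since $q$ and $\beta$ are near the endpoint, hence far from $\alpha$). No two of them share two qubits with $X$, so the length-$4$ path on $\alpha$ required by the Bridge Claim does not exist.
\end{itemize}

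The paper avoids these problems by case-splitting differently: not on the auxiliary $\alpha\in A_X$, but on the other two qubits $a,b$ of $H$ itself. If one of $a,b$ is a middle backbone qubit, the crown \emph{on that qubit} (not on $q$) traps $H$ and then traps the intermediate $H_x$ from Claim~\ref{cl:nosep2}, yielding a contradiction. If one of $a,b$ is an edge qubit, the bound is immediate. Only when both $a,b$ are non-backbone does one argue via maximality of $B$, and there a further subcase analysis (on which backbone neighbor $H_x$ hits) is needed, with the Bridge Claim used to reroute the backbone in the bad subcase. Your outline has the right ingredients but the wrong decomposition; reorganizing the endpoint argument around $a,b$ rather than around $\alpha$ is what makes the details close.
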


\begin{proof} 
We say $q\in A_B$ is a "middle" qubit if it is
acted upon by $3$ consecutive backbone operators $H_{\ell-1},H_\ell,H_{\ell+1}$; 
If $B$ is a closed path, then all qubits in $A_B$ are middle qubits. 
If $B$ is an open path, all qubits are middle qubits except for the two qubits
at the beginning and the two qubits and the end of the path; 
we call those edge qubits. 

The claim easily follows for a middle qubit $q$: 
Let $q$ be such a qubit,
 acted upon by $3$ backbone operators $H_{\ell-1},H_\ell,H_{\ell+1}$.
These constitute an operator crown on $q$, so any $H(q)$ acts 
on some other qubit $p$
 in $A_{\ell-1} \cup A_\ell \cup A_{\ell+1}$; 
$p$ and $q$ are thus acted upon by $H_i(q)\in B,H_j(p)\in B$ where
$|i-j|\leq 2$.

This proves the claim for the case in which the backbone is a closed path. 
For the case of an open path, we only need to prove
the claim for its edge qubits. 
Let us consider an edge qubit which belongs to the 
``left'' edge, namely, the qubit belongs to $H_1$; the proof for the 
other side is essentially identical. 
Let $H = H(q,a,b)$ be some operator in $S$.
We discuss separately $3$ cases: at least one of $a,b$ is a middle qubit, 
at least one of $a,b$ is an edge qubit, or both of them are outside of $A_B$. 

\begin{enumerate}
\item
\textbf{Either $a$ or $b$ is a middle qubit}
Assume WLOG that $a$ is a middle qubit. 
Suppose on the negative, that the minimal value $m$ for which there exists 
some $H_m\in B$ such that $a\in A_m$ or $b\in A_m$ is $6$.
Let $C(a)\subseteq A_B$ be the set of the crown qubits of $a$.
Since $m\geq 6$ then $q\notin C(a)$, 
so by Claim (\ref{cl:nosep1}) it must be that $b\in C(a)$.
In addition, since $m\geq 6$ we know that $b$ is a middle qubit, 
which is adjacent to $a$. 
$b$ cannot be in $A_m$ for $m\le 6$ by our assumption; 
Hence we know that both $a$ and $b$ are
adjacent "middle qubits" that are distant from $q$.
Consider then the terms $H$ and $H_1$. Since both $a$ and $b$ are 
distant from $q$, we have that $H \bfly H_1$ w.r.t. qubit $q$.
By claim (\ref{cl:nosep2}) there exists an operator path of 
length exactly $3$ between $H$ and $H_1$, which 
implies the existence of an operator $H_x(q)$ that shares two qubits
with $H_1$, and hence acts on at most 
one particle from $C(a)\cup C(b)$.
However, since $H_x$ shares two particles with $H$, it must act on 
either $a$ or $b$; this contradicts
claim $(\ref{cl:nosep1})$ w.r.t either qubit $a$ or qubit $b$.

\item
\textbf{$a$ or $b$ is an edge qubit}
Suppose at least one of $a$ or $b$ are edge qubits, say qubit $a$.
Hence, either $a\in H_1$ or $a\in H_L$, in both cases we are done 
as $q\in H_1$. 

\item
\textbf{$a$ and $b$ are not in the backbone}
In this last case we assume that neither $a$ nor $b$ are in $A_B$.
We show that this case is impossible, since it implies that 
the backbone can be extended, contradictory to 
the definition of the backbone (Definition \ref{def:bkbn}). 

Denote by $H_1(a_1,b_1,a_2)$,$H_2(b_1,a_2,b_2)$, and $H_3(a_2,b_2,a_3)$ as the first $3$ operators.
So we assume that either $q=a_1$ or $q=b_1$.

If $q=a_1$, 
there exists by claim (\ref{cl:nosep2}) 
an operator path of length at most $3$ between $H$ and $H_1$. It must be of length exactly $3$ by the 
assumption that $a,b$ are not in the backbone, and so $H$ cannot intersect 
$H_1$ with two qubits to make a path of length $2$.  
 This implies the existence of $H_x$ that acts on $a_1$, and shares two qubits with $H_1$.
So if $H_x$ acts on $b_1$ (as in Figure \ref{fig:ext1}) 
then $H$ and $H_x$ can be appended to $B$ to form a $L+2$-length operator path.
\begin{figure}[ht]
\center{
 \epsfxsize=4in
 \epsfbox{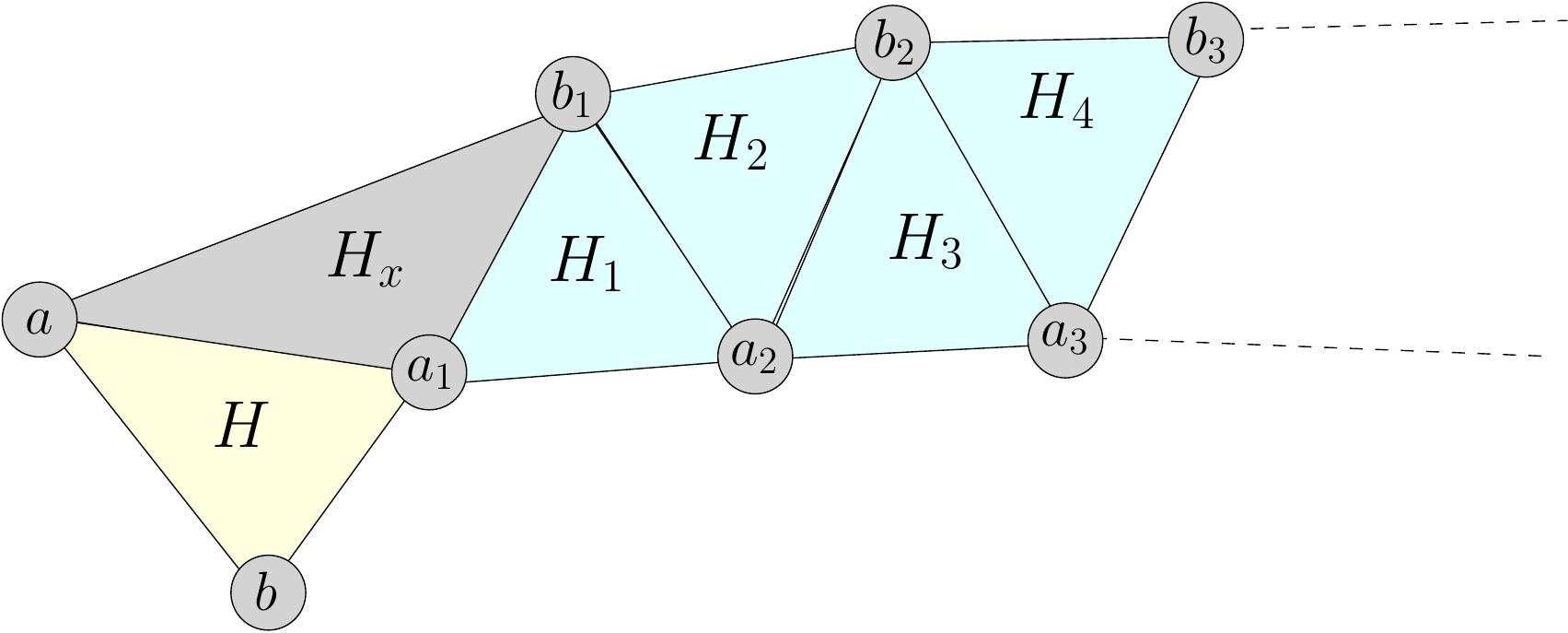}}
 \caption{$q=a_1$ and $H_x$ acts on $b_1$}
 \label{fig:ext1}
\end{figure}

Otherwise, $H_x$ acts on $a_2$ (as in Figure \ref{fig:ext2}). 
Let us examine now $H_x$ and $H_3$ which share qubit $a_2$.
There exists an open operator path on qubit $a_2$ comprised of the following $4$ operators: $H_x,H_1,H_2,H_3$.
Thus by claim (\ref{cl:bridge}) there exists an operator $H_y(a_1,a_2,b_2)$
By removing $H_1,H_2$ from $B$, and appending $H,H_x,H_y$ to $B$, we end up with a backbone of size $L+1$.
\begin{figure}[ht]
\center{
 \epsfxsize=4in
 \epsfbox{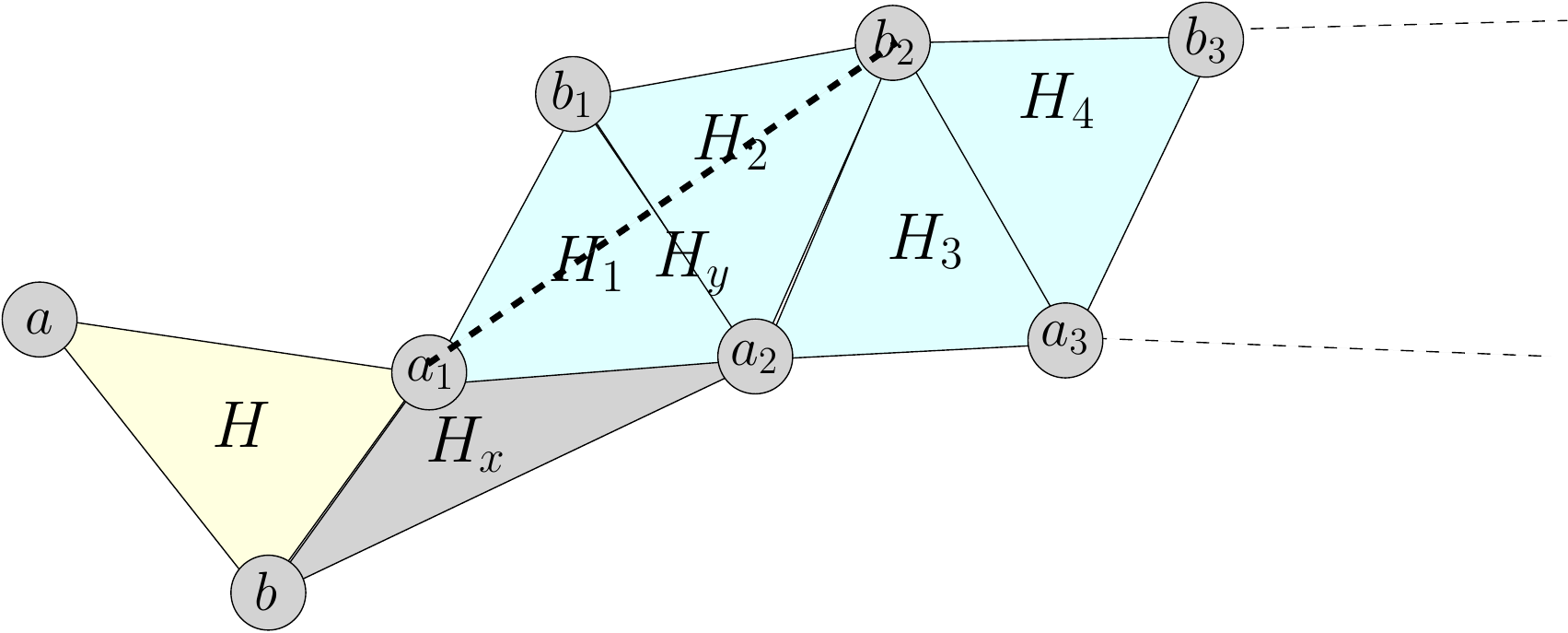}}
 \caption{$q=a_1$ and $H_x$ acts on $a_2$.}
 \label{fig:ext2}
\end{figure}

If $q=b_1$ then $H$ is connected to $H_1$ by some operator $H_x(b_1)$, 
which shares two qubits with $H_1$.
If $H_x$ acts on $a_1$ we can append $H_x$ to $B$, and increase 
its length to $L+1$ 
(Notice that we cannot append $H$ as well, since $H,H_2$ share qubit $b_1$ despite their index difference being greater than $2$ in the new path, and so this 
is not a legal operator path). 
Otherwise, $H_x$ acts on $a_2$, as in Figure \ref{fig:ext3}. 
In this case we can add $H$,$H_x$ to $B$, and remove $H_1$, thereby increasing the length of $B$ yet again by $1$.

\begin{figure}[ht]
\center{
 \epsfxsize=4in
 \epsfbox{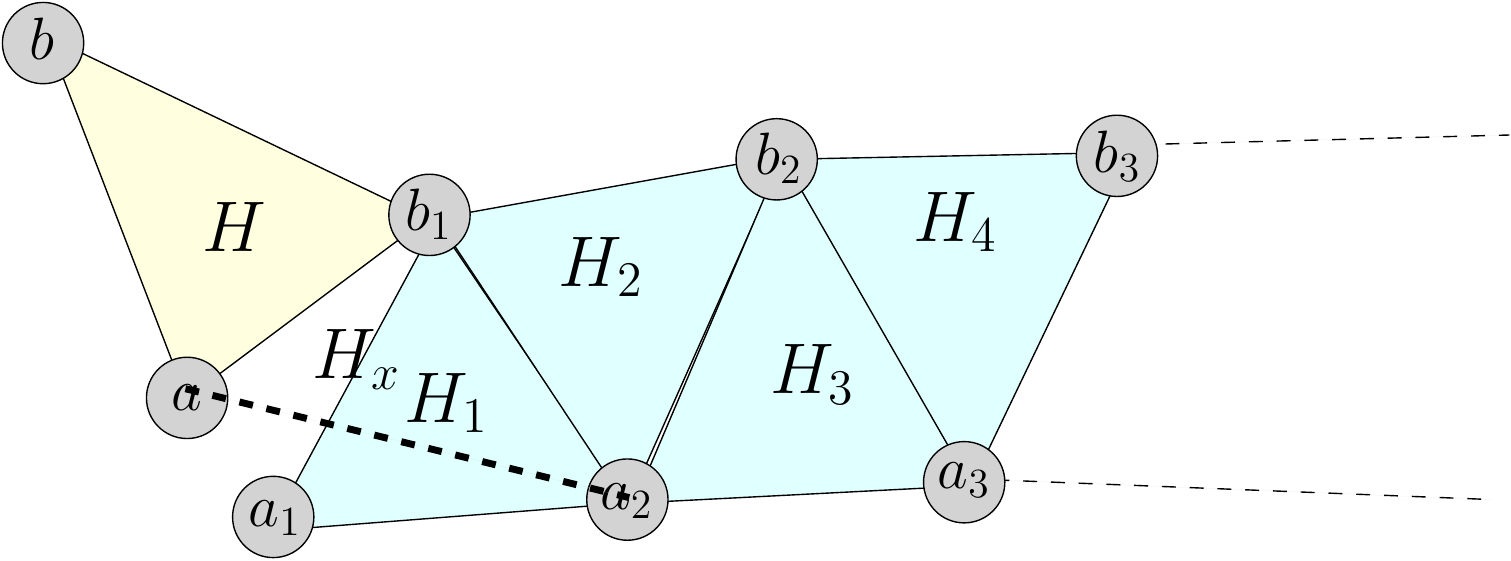}}
 \caption{$q=b_1$ and $H_x$ acts on $a_2$.}
 \label{fig:ext3}
\end{figure}

\end{enumerate}

\end{proof} 

Given a backbone $B \subseteq G_S$, we now define the combined particles.

\begin{definition} {\bf The combined particles $Q_i$}\label{def:Qi} 
Let $Q_1,\hdots, Q_M$ be a partition of the qubits in $A_B$, 
$$A_{B} = \bigsqcup_{i=1}^M Q_i$$ 
generated by grouping together contiguous sets of $20$ 
 qubits in the most natural way, from left to right along the backbone. 
If the number of qubits in $A_B$ does not divide by $20$ (but is larger than $20$), we simply 
make the last $Q_M$ larger; in any case each $Q_i$
contains at most $39$ qubits.  
If $A_B$ contains less than $60$ particles, than there is just one
particles, $Q_1$ containing all of them.  
\end{definition}

\begin{claim}\label{cl:bkbn1}

\textbf{Any operator touches the backbone in at least two close qubits}

\noindent
Assume that $G_S$ is connected. Then, 
for any operator $H(a,b,c)\in S$,
at least two of the three qubits $a,b,c$, 
must be contained in $A_{B}$, and moreover, $\{a,b,c\}\cap A_{B}$ 
is contained in two adjacent combined particles $Q_i,Q_{i+1}$ with addition 
modulo $M$ (or in one combined particle). 
\end{claim}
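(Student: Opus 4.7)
The plan is to split the claim into two assertions: (i) every operator $H\in S$ acts on at least two backbone qubits, and (ii) the qubits of $A_H\cap A_B$ lie within a short window along the backbone, hence in at most two adjacent combined particles $Q_i, Q_{i+1}$.

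For (i), observe that Lemma \ref{lem:entrap} already forbids $|A_H\cap A_B|=1$: any $H$ meeting $A_B$ in a single qubit $q$ is forced to meet $A_B$ in at least one more qubit close to $q$ along the backbone. So it remains to rule out $|A_H\cap A_B|=0$. Assuming some such $H$ exists, Claim \ref{connected} (applicable because the instance has no separable qubits and $G_S$ is connected) furnishes an operator path from $H$ to any backbone operator. Let $H^{(j)}$ be the first operator in this path that intersects $A_B$; then $H^{(j-1)}$ is disjoint from $A_B$ and shares two qubits with $H^{(j)}$, so those two shared qubits lie outside $A_B$, leaving $|A_{H^{(j)}}\cap A_B|\le 1$. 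Combined with $|A_{H^{(j)}}\cap A_B|\ge 1$ this gives equality, once again contradicting Lemma \ref{lem:entrap}.

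For (ii), I would label the backbone qubits linearly as $u_1,u_2,\ldots$ in the order dictated by $B$, so that each backbone operator $H_\ell$ acts on three consecutive qubits $u_\ell,u_{\ell+1},u_{\ell+2}$ (with wrap-around in the closed case). In particular, if $u_m=q\in A_B$, then the backbone operators that act on $q$ have indices in $\{m-2,m-1,m\}$. Pick any operator $H\in S$ and any backbone qubit $a=u_m\in A_H\cap A_B$; Lemma \ref{lem:entrap}, applied to $H$ at $a$, delivers a second backbone qubit $p=u_{m'}\in A_H$ together with backbone operators $H_i(a),H_j(p)\in B$ satisfying $|i-j|\le 4$. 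Combining with $|m-i|\le 2$ and $|m'-j|\le 2$ yields $|m-m'|\le 8$. If $|A_H\cap A_B|=3$, applying the same argument from each of the three backbone qubits in turn confines all three indices to a window of width at most $16$. Since the combined particles $Q_i$ are contiguous groups of $20$ consecutive backbone qubits, any such window fits into at most two adjacent $Q_i,Q_{i+1}$ (wrapping around modulo $M$ in the cyclic case), which is exactly the conclusion claimed.

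The main subtlety is ensuring that Lemma \ref{lem:entrap} simultaneously controls the positions of all backbone qubits of $H$, not merely one partner for a single anchor; this is handled above by invoking the lemma from each of the (at most three) backbone qubits of $H$ and composing the bounds, and the constant $20$ in the definition of $Q_i$ comfortably dominates the resulting window width. A minor boundary issue is that Lemma \ref{lem:entrap} is stated for $L>100$, while the definition of $Q_i$ places all of $A_B$ into a single $Q_1$ when $|A_B|<60$; in that short-backbone regime the adjacency statement is vacuous and only (i) needs to be verified, which goes through verbatim using Claim \ref{connected} and the crown-based obstructions from Section \ref{subsec:crowns}.
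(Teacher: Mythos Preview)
Your proposal is correct and follows essentially the same route as the paper's proof: both rule out $|A_H\cap A_B|\le 1$ via Lemma~\ref{lem:entrap} and Claim~\ref{connected} (your contrapositive phrasing of the ``first operator hitting $A_B$'' argument is logically identical to the paper's), and both use Lemma~\ref{lem:entrap} repeatedly to bound how far apart the backbone qubits of $H$ can sit. The only cosmetic difference is that you track qubit positions $u_m$ explicitly and derive a width-$16$ window, whereas the paper tracks backbone-operator indices and derives an index gap of at most~$8$; either bound comfortably fits inside two adjacent $Q_i$'s of size~$20$.
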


\begin{proof} 

We first show that at least two qubits of $\{a,b,c\}$ are in $A_{B}$.
It cannot be that exactly one of its particles is in $A_{B}$ 
since this is in contradiction to lemma (\ref{lem:entrap}). 
So it is left to rule out the case
 in which none of its particles are in $A_{B}$. 
We will show that this case is impossible. 
Since $G_S$ is connected, Claim (\ref{connected}) applies. 
Hence, for any qubit of $\{a,b,c\}$ there is an operator path starting from 
an operator acting on that qubit, to some operator $H_m\in B$. 
Let $H_l$ denote the first operator on the path for which $A_l$ intersects
 $A_B$.
By Lemma (\ref{lem:entrap}) $H_l$ examines at least 
two qubits in $A_B$.
So let us now consider $H_{l-1}$.
Since $H_{l-1}$ intersects $H_l$ 
by two qubits, at least one of the intersection qubits must be in $A_{B}$.
So $H_{l-1}$ examines at least one qubit in $A_{B}$.
This is a contradiction to the fact that $H_l$ is the first operator on the path that intersects $A_{B}$.

Now we show that $\{a,b,c\} \cap A_{B}$ is distributed among two adjacent 
$Q_i$'s. We divide to two cases: 
$\{a,b,c\} \subseteq A_{B}$ (note that this doesn't mean that 
$H_k$ is in the backbone) or only two qubits are in $A_B$. 

We start with the first case. Let $q\in \{a,b,c\}$. 
Let $H_i(q)\in B$. Then there exists some $p\in \{a,b,c\}$, 
and some $H_j(p)\in B$ s.t. $\|i-j\|\le 4$ by Lemma \ref{lem:entrap}. 
By applying this argument either once or twice we get that 
for any two qubits in $\{a,b,c\}$,
there are two backbone operators acting on them 
with indices which are at most $8$ apart. This means 
that these qubits must be contained 
in two adjacent $Q_i$'s (or just in one).  

Otherwise, two of the qubits $\{a,b,c\}$ 
are in $A_B$. 
In this case, consider the first qubit of these $2$, which belongs to some 
operator in the backbone. It follows from lemma (\ref{lem:entrap}) that the 
second qubit belongs to an operator which is at most $4$ operators away; 
Thus, these two qubits belong to two adjacent $Q_i$'s (or just in one).  
\end{proof}

\begin{claim}\label{cl:bkbn2}

\textbf{No shortcuts outside of backbone}

\noindent
Let $q\notin A_B$ be some qubit outside the backbone.
Then there exists $m$ such that for all $H_i(q)\in S$,
 $A_i\cap A_{B} \subseteq \left\{Q_m,Q_{m+1}\right\}$, with index addition 
modulo $M$.
\end{claim}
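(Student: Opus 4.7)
The plan is to show that the backbone qubits touched by any operator acting on a fixed off-backbone qubit $q$ all lie in a constant-length interval of the backbone, small enough to fit in two adjacent combined particles $Q_m, Q_{m+1}$. I would start with a \emph{per-operator} bound: for each $H(q,a,b) \in S$, since $q \notin A_B$, Claim \ref{cl:bkbn1} forces $a, b \in A_B$. Applying Lemma \ref{lem:entrap} to $H$ at the backbone qubit $a$ and varying the backbone-operator index $i$ on $a$, one obtains a backbone qubit $p \in A_H$ with a backbone operator $H_j(p)$ satisfying $|i-j| \le 4$. Since $A_H \cap A_B = \{a, b\}$ and $p \ne a$, necessarily $p = b$, so $a$ and $b$ lie within a constant distance (at most $4$) along the backbone path.

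Next I would derive a \emph{pairwise} bound between two operators $H_1(q,a,b)$ and $H_2(q,c,d)$ on $q$. By Claim \ref{cl:nosep2}, they are connected by an operator path on $q$ of length at most $3$. In the length-$2$ case they share a second qubit beyond $q$, which must lie in $A_B$, and two applications of the per-operator bound confine $\{a,b,c,d\}$ to an interval of width at most $8$. In the length-$3$ case, there is an intermediate $H_x(q,y,z) \in S$ with $y \in \{a,b\}$ and $z \in \{c,d\}$; Claim \ref{cl:bkbn1} forces $y, z \in A_B$, and the per-operator bound applied to $H_x$ gives $|y-z| \le 4$. Chaining the per-operator bounds for $H_1$, $H_x$, and $H_2$ places $\{a,b,c,d\}$ in an interval of width at most $4+4+4=12$ along the backbone.

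Finally I would fix a reference operator $H_0(q,a_0,b_0) \in S$ on $q$ and apply the pairwise bound to every other operator $H_k \in S$ acting on $q$. This forces each $\{a_k, b_k\}$ to lie in an interval of width at most $12$ around $\{a_0, b_0\}$, and taking the union over all $k$ shows that every backbone qubit touched by any operator on $q$ lies in a common interval of at most $21$ consecutive backbone qubits (the worst case being that different operators stretch the interval in opposite directions by up to $8$ positions from $\{a_0, b_0\}$). Since each $Q_i$ is a contiguous block of $20$ backbone qubits (Definition \ref{def:Qi}), any such interval of at most $21$ consecutive positions can be covered by $Q_m \cup Q_{m+1}$ for an appropriate choice of $m$, with index addition taken modulo $M$ in the closed-backbone case. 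The main obstacle is the careful bookkeeping of constants through the per-operator and pairwise chains to ensure the global interval width matches the block size in Definition \ref{def:Qi}; once this arithmetic is verified (and special cases such as very short backbones and the two edge regions of an open backbone are handled by inspection), the conclusion of the claim follows immediately.
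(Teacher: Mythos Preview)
Your approach is essentially the same as the paper's: you use Claim~\ref{cl:bkbn1} to force both non-$q$ qubits of each operator into $A_B$, invoke Lemma~\ref{lem:entrap} for the per-operator bound, apply Claim~\ref{cl:nosep2} to get the length-$\le 3$ operator path and chain per-operator bounds through the intermediate $H_x$ for the pairwise bound of $12$, and then globalize. The paper's proof is nearly word-for-word the same argument (it treats only the length-$3$ case explicitly, leaving the length-$2$ case implicit), and in fact your final step---fixing a reference operator $H_0$ and bounding all others relative to it---is slightly more careful than the paper's bare assertion that the pairwise bound implies the global one.
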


\begin{proof} 
Given a qubit $q$, let $H_k(q), H_l(q)$ be two operators on $q$. 
We first show that $(A_k\cup A_l)\cap A_B$ 
are grouped into at most two adjacent combined particles $Q_m,Q_{m+1}$ for
 some $m$. 

Since $q$ is not separable, there exists by claim (\ref{cl:nosep2}) 
an operator path on $q$ of length at most $3$ between $H_k$ and $H_l$.
So there are two cases: either $|A_k\cup A_l|=2$ or 
there exists an operator $H_x(q)$
that shares two qubits with both $H_k$ and $H_l$.

We will make the claim first for the second case. 
Since $q\notin A_B$ the two other qubits 
examined by $H_x$ 
are in $A_B$ by Claim (\ref{cl:bkbn1}).
Let us consider one of the two qubits in $A_x\cap A_B$, say $p_1$, 
and let $H_i\in B$ act on $p_1$.  
By lemma (\ref{lem:entrap}) there exists another qubit in $A_x\cap A_B$, 
say $p_2$,  
and an operator $H_j \in B$ acting on $p_2$, s.t., $|i-j|\le 4$. 
Thus the two qubits other than $q$ which $H_x$ acts on have backbone 
operators acting on them which are of distance $4$ apart. 
The same property holds for the qubits of $A_k\cap A_B$, and $A_l\cap A_B$.
Thus any pair of qubits in $(A_k\cup A_l) \cap A_B$ belong 
to operators of index difference at most $12$.
Thus, $(A_k\cup A_l)\cap A_B$ are grouped into at most 
two adjacent combined particles $Q_m,Q_{m+1}$ for some $m$.

Since the above property holds for all pairs of operators on $q$, we have that
$\bigcup_{H_i(q)} A_i \cap A_B \subseteq Q_{m}\cup Q_{m+1}$ for some $m$.
\end{proof}

\subsubsection{The Structure of
interactions with the Backbone}\label{sec:2local}

By Claims (\ref{cl:bkbn1}),(\ref{cl:bkbn2}) we realize by now that 
the input with no separable qubits $S$ and a connected interaction 
graph $G_S$ has, in fact, a very constrained 
structure. 
To describe this structure more precisely, we now define a partition 
of all qubits outside of the backbone into sets denoted by $V_i$ as follows. 

\begin{definition}{\bf The partition of the outer qubits}\label{def:Vi} 
We define the set $V =\bigsqcup_{i=1}^{M-1} V_i$ which is comprised of 
$M$ or $M-1$ (depending on whether the backbone is closed or open) 
disjoint sets of qubits $V_1,V_2,\hdots$, as follows:
$V$ is the set of all qubits not in $A_B$.
The sets $V_i$ are defined iteratively:
$V_1$ is defined as the set of all qubits $q\in V$ satisfying 
that: for any operator $H_k(q)$ $A_k\cap A_B \subseteq Q_1\cup Q_2$.  
$V_2$ is defined as the set of all qubits 
$q\in V \backslash V_1$, 
satisfying that: for any operator 
$H_k(q)$ $A_k\cap A_B \subseteq Q_2\cup Q_3$. 
And so on. 
\end{definition} 

\noindent
Note that the above definition indeed defines a partition of all qubits 
outside the backbone. 
We thus deduce that the picture of interactions is severely restricted: 

\begin{definition}{\bf Almost one dimensional structure:}\label{def:1ds}
We consider a tuple $(Q,V,S)$ consisting of 
a set of distinct qudits $Q = \left\{Q_i\right\}$, a set of non-intersecting 
qubit sets $V = \left\{V_i\right\}$ (non intersecting with the $Q_i$s as well) 
and a set of 
operators $S$ on those particles. 
We say that a tuple $(Q,V,S)$ has an ``almost one dim structure'' if 
each $H_k\in S$ has $A_k\subseteq \{Q_i\cap Q_{i+1}\cap V_i\}$ for some $i$.
\end{definition} 

\noindent
See Figure \ref{fig:step2} for a schematic example of such a one dimensional 
structure. 
By Definition (\ref{def:Vi})
and by claims \ref{cl:bkbn1} and \ref{cl:bkbn2}, 
we have that the resulting interaction after the partition into $Q_i$ 
is indeed almost one dimensional: 

\begin{claim}\label{cl:structure}
We are given $S$ with no separable qubits s.t. $G_S$ is connected. 
Consider the backbone $B$ on $S$, with $Q_i$ as defined in Definition
 \ref{def:Qi}, and $V,V_i$ as defined in Definition \ref{def:Vi}.
Then the tuple 
${Q,V,S}$ is an almost one dimensional structure, namely, 
for any interaction $H_k$ in the instance $S$, 
$A_k$ contained in $V_i\cup Q_i \cup Q_{i+1}$ for some particular $i$. 
\end{claim}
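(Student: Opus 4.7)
The plan is to assemble Claim \ref{cl:structure} directly from the three ingredients already in hand: Claim \ref{cl:bkbn1} (every operator touches at least two backbone qubits, all confined to a pair of adjacent combined particles), Claim \ref{cl:bkbn2} (for any off-backbone qubit $q$, all backbone interactions involving $q$ are confined to a single pair $Q_m \cup Q_{m+1}$), and the partition of $V$ into the sets $V_m$ from Definition \ref{def:Vi}. There is no new structural work needed; the content of the claim is essentially that the indices produced by these two sources coincide, and Definition \ref{def:Vi} is tailored so that this happens automatically.

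Concretely, I would fix an arbitrary $H_k \in S$ with $A_k = \{a,b,c\}$, and split into two cases depending on whether $A_k \subseteq A_B$ or not. In the first case, Claim \ref{cl:bkbn1} alone delivers some index $i$ with $A_k \subseteq Q_i \cup Q_{i+1}$, and since $V_i \cup Q_i \cup Q_{i+1}$ only enlarges this set, we are done. In the second case, Claim \ref{cl:bkbn1} ensures that exactly one qubit, say $q$, lies outside $A_B$, while the remaining two qubits lie in $A_B$. By Claim \ref{cl:bkbn2} and the construction of the $V_m$'s, there is a unique $m$ such that $q \in V_m$, and by the defining property of $V_m$, every operator acting on $q$, including $H_k$, satisfies $A_k \cap A_B \subseteq Q_m \cup Q_{m+1}$. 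Taking $i := m$ then yields $A_k = (A_k \cap A_B) \cup \{q\} \subseteq (Q_m \cup Q_{m+1}) \cup V_m$, which is exactly the almost-one-dimensional condition.

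The only mildly delicate point, and the thing I would handle carefully rather than sweep under the rug, is consistency of the index: Claim \ref{cl:bkbn1} a priori only guarantees \emph{some} pair $(Q_{i_0}, Q_{i_0+1})$ containing $A_k \cap A_B$, while Definition \ref{def:Vi} picks the index $m$ attached to $q$. I would emphasize that we are free to choose $i := m$ in the second case because the definition of $V_m$ is exactly the statement that \emph{every} operator $H$ acting on $q$ — in particular our $H_k$ — has $A_H \cap A_B \subseteq Q_m \cup Q_{m+1}$; this supersedes any other legitimate choice of $i_0$ provided by Claim \ref{cl:bkbn1}, so there is no conflict. In the all-backbone case I would note that any valid $i_0$ from Claim \ref{cl:bkbn1} works, and we just pick one.

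I do not expect any substantive obstacle: the hard technical content (no-shortcut geometry, operator crowns trapping operators, bridge-type shortening) has already been paid for in Claims \ref{cl:bkbn1} and \ref{cl:bkbn2}. The proof of Claim \ref{cl:structure} should amount to a short case analysis of a few lines, and the main thing to be careful about is bookkeeping of the index $i$ across the two cases, together with an appeal to the fact that $A_k \setminus A_B$ contains at most one qubit (immediate from Claim \ref{cl:bkbn1}), so the construction is unambiguous.
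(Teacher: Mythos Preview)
Your proposal is correct and matches the paper's approach exactly: the paper states that Claim \ref{cl:structure} follows directly from Definition \ref{def:Vi} together with Claims \ref{cl:bkbn1} and \ref{cl:bkbn2}, without spelling out the case analysis. Your write-up simply makes explicit the two cases (all qubits in $A_B$ versus exactly one outside) and the index bookkeeping that the paper leaves implicit.
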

We can now use this to derive two-locality. 

\subsection{Decomposing the Backbone to get 2-locality}
We would now like to discover the two locality 
by making use of the structure of interactions with the 
backbone, which was revealed in the previous section. 
For this, we prove a lemma which is very similar to 
lemma (\ref{lem:BV}), regarding the existence of a separating decomposition 
as define in Definition (\ref{def:algdec}).

\begin{lemma}\label{SliceLemma}
Let $(Q,V,S)$ be a tuple which is an almost one dimensional structure, 
as in Definition \ref{def:1ds}. 
For any $Q_i\in Q$, there exists a direct-sum decomposition of ${\cal H}_{Q_i}$ preserved by all operators in $S$ which is a separating decomposition w.r.t. any pair of operators $H_l,H_r$ where
$A_l \subseteq (Q_{i-1},Q_i,V_{i-1})$ and $A_r\subseteq (Q_i,Q_{i+1},V_i)$.
\end{lemma}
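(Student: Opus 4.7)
The plan is to reduce the lemma to an application of Claim \ref{cl:algdec} on two commuting $C^*$-algebras, namely the algebras induced on $\mathcal{H}_{Q_i}$ by the ``left'' operators and the ``right'' operators respectively. First, I would partition the operators of $S$ that act non-trivially on $Q_i$ into two groups: call $H$ a \emph{left} operator if $A_H \subseteq Q_{i-1}\cup Q_i\cup V_{i-1}$, and a \emph{right} operator if $A_H\subseteq Q_i\cup Q_{i+1}\cup V_i$. By the almost one dimensional structure (Definition \ref{def:1ds}), every operator in $S$ that acts non-trivially on $Q_i$ is either left or right (an operator wholly inside $Q_i$ can be placed in either class).

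For each left operator $H_l$, let $\mathcal{A}^{H_l}_{Q_i}$ denote the algebra on $Q_i$ induced by $H_l$ (Definition \ref{def:induced}), which is well-defined by Fact \ref{fact:independent} and closed under adjoint by Fact \ref{fact:closed}. Define $\mathcal{A}_L\subseteq L(\mathcal{H}_{Q_i})$ as the $C^*$-algebra generated by the union $\bigcup_{H_l}\mathcal{A}^{H_l}_{Q_i}\cup\{I\}$, and analogously define $\mathcal{A}_R$. The key step is to show that $\mathcal{A}_L$ and $\mathcal{A}_R$ commute. Indeed, the disjointness of $V_{i-1}$ from $V_i$ and of $Q_{i-1}$ from $Q_{i+1}$ implies that any left operator $H_l$ and any right operator $H_r$ intersect only on $Q_i$; since they commute (being elements of the commuting set $S$), Fact \ref{fact:algcomm} yields that $\mathcal{A}^{H_l}_{Q_i}$ and $\mathcal{A}^{H_r}_{Q_i}$ commute elementwise. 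Since the centralizer of a fixed operator is itself an algebra, commutativity lifts from generators to the generated $C^*$-algebras, so $[\mathcal{A}_L,\mathcal{A}_R]=0$.

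I would then apply Claim \ref{cl:algdec} to the two commuting algebras $\mathcal{A}_L,\mathcal{A}_R$ on $\mathcal{H}_{Q_i}$ to obtain a direct-sum decomposition $\mathcal{H}_{Q_i}=\bigoplus_\alpha \mathcal{H}_\alpha$ with $\mathcal{H}_\alpha=\mathcal{H}_\alpha^0\otimes\mathcal{H}_\alpha^L\otimes\mathcal{H}_\alpha^R$, such that $\mathcal{A}_L|_{\mathcal{H}_\alpha}\cong I_{\mathcal{H}_\alpha^0}\otimes L(\mathcal{H}_\alpha^L)\otimes I_{\mathcal{H}_\alpha^R}$ and symmetrically for $\mathcal{A}_R$. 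To conclude the lemma, I need to verify two things. First, that every operator of $S$ preserves this decomposition: operators whose support avoids $Q_i$ do so trivially, while any operator $H$ acting non-trivially on $Q_i$ is left or right, hence its induced algebra is contained in $\mathcal{A}_L$ or $\mathcal{A}_R$; the projections $\Pi_\alpha$ on $\mathcal{H}_\alpha$ lie in the center of the algebra generated by $\mathcal{A}_L\cup\mathcal{A}_R$ and therefore commute with every such induced $\mathcal{A}$ element, so by Fact \ref{fact:trivial2} applied tensor-wise, $H$ preserves $\mathcal{H}_\alpha\otimes \mathcal{H}_{\text{rest}}$. Second, that for any specific pair $H_l,H_r$ with the stated support constraints, the restrictions $H_l|_{\mathcal{H}_\alpha},H_r|_{\mathcal{H}_\alpha}$ act as identity on $\mathcal{H}_\alpha^0$, with $H_l$ acting only on $\mathcal{H}_\alpha^L\otimes\mathcal{H}_{\text{left environment}}$ and $H_r$ acting only on $\mathcal{H}_\alpha^R\otimes\mathcal{H}_{\text{right environment}}$, matching Definition \ref{def:algdec}.

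The step I expect to be the most delicate is the second verification, specifically the passage from ``the induced algebra of $H_l$ on $Q_i$ sits inside $\mathcal{A}_L$'' to ``$H_l$ itself, as an operator on the larger Hilbert space, restricts on each $\mathcal{H}_\alpha$ to an operator of the claimed tensor form.'' The cleanest way I see is to write $H_l=\sum_\gamma A_\gamma\otimes B_\gamma$ with linearly independent $B_\gamma$ acting on $Q_{i-1}\cup V_{i-1}$, note $A_\gamma\in\mathcal{A}_L$, and then use the isomorphism $\mathcal{A}_L|_{\mathcal{H}_\alpha}\cong I_{\mathcal{H}_\alpha^0}\otimes L(\mathcal{H}_\alpha^L)\otimes I_{\mathcal{H}_\alpha^R}$ to rewrite each $(A_\gamma)|_{\mathcal{H}_\alpha}$ in the desired tensor form. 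Everything else essentially reduces to carefully book-keeping the generalization of Lemma \ref{lem:BV} from a single star-center particle to a composite particle $Q_i$ whose ``left neighbors'' and ``right neighbors'' play the roles of the two sides of a star.
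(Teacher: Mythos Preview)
Your proposal is correct and follows essentially the same route as the paper's proof: partition the operators on $Q_i$ into left and right, form the two $C^*$-algebras $\mathcal{A}_L,\mathcal{A}_R$ generated by the induced algebras, use Fact~\ref{fact:algcomm} (via the $\bfly$ relation on $Q_i$) to show they commute, apply Claim~\ref{cl:algdec}, and then read off the separating decomposition for each individual pair $H_l,H_r$ from the containment of their induced algebras in $\mathcal{A}_L,\mathcal{A}_R$. Your extra care in spelling out why each $H_l$ itself (not just its induced algebra) restricts to the claimed tensor form is a detail the paper handles in one sentence, but your argument via $H_l=\sum_\gamma A_\gamma\otimes B_\gamma$ with $A_\gamma\in\mathcal{A}_L$ is exactly the right way to unpack it.
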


\begin{proof}

Fix a backbone qudit $Q_i$.
Denote by $H_{l,j}\in S$ the $j$-th operator such that $A_{l,j}\subseteq (Q_{i-1},Q_i,V_{i-1})$ and
$H_{r,k}\in S$ the $k$-th operator such that $A_{r,k}\subseteq (Q_i,Q_{i+1},V_i)$.
Denote by ${\cal A}_{Q_i}^{l,j}$ the algebra of $H_{l,j}$ on $Q_i$ and by 
${\cal A}_{Q_i}^{r,k}$ the algebra of $H_{r,k}$ on $Q_i$.
Finally, denote by ${\cal A}_{Q_i}^l$ the algebra spanned by $\bigcup_j {\cal A}_{Q_i}^{l,j}$ 
and similarly, by ${\cal A}_{Q_i}^r$ the algebra spanned by $\bigcup_k {\cal A}_{Q_i}^{r,k}$.

If there is just one qudit $Q_i$ the lemma follows trivially from lemma (\ref{lem:BV}).
Otherwise, by definition (\ref{def:Qi}) we have $Q_{i-1}\neq Q_{i+1}$. 
Then $H_{l,j}\bfly H_{r,k}$ w.r.t. $Q_i$ so by fact (\ref{fact:algcomm}) the algebras 
${\cal A}_{Q_i}^{l,j}$ and ${\cal A}_{Q_i}^{r,k}$ commute for all $j,k$.
As a result ${\cal A}_{Q_i}^l$ and ${\cal A}_{Q_i}^r$ commute, because their generating sets commute.
Let us apply fact (\ref{cl:algdec}): there exists a separating decomposition of ${\cal H}_{Q_i}$ w.r.t 
${\cal A}_{Q_i}^l$ and ${\cal A}_{Q_i}^r$.
In each subspace of this decomposition, the restricted algebras act on separate subsystems.
Since ${\cal A}_{Q_i}^{l,j}\subseteq {\cal A}_{Q_i}^l$ for all $j$, then 
${\cal A}_{Q_i}^{l,j}$ preserves this decomposition, and acts only on the left subsystem 
inside each subspace.
The same holds true for ${\cal A}_{Q_i}^{r,k}$ for all $k$ w.r.t. the right subsystem.
Hence this decomposition is separating w.r.t. any pair of local algebras 
${\cal A}_{Q_i}^{l,j}$ and ${\cal A}_{Q_i}^{r,k}$.
Since $H_{l,j}$ and $H_{r,k}$ share only $Q_i$ to begin with, this is a separating decomposition
w.r.t. these operators as well.

\end{proof}

\paragraph{Reducing to a $2$-local problem}\label{red2lcl}
Merlin sends Arthur for each qudit $Q_i$ a subspace index $\alpha_i$ which by lemma (\ref{SliceLemma}) partitions all the operators on $Q_i$ into two disjoint left and right subsystems in the $\alpha_i$ subspace of $Q_i$.
Since by lemma (\ref{SliceLemma}) all operators preserve this tensor product subspace, then the tuple $(Q,V,S)$ is satisfiable if and only if there exists such a tensor product subspace in which the restricted instance is satisfiable. 

Let $\Pi_{\alpha_i}$ denote the projection on the $\alpha_i$ subspace of backbone qudit $Q_i$.
After restricting all operators $H_k\in S$ to the tensor product subspace 
$\bigotimes_i \Pi_{\alpha_i}$,
 each restricted operator $H_k|_{\bigotimes_i \Pi_{\alpha_i}}$ acts on 
${\cal H}_{Q_i}^{\alpha_i,right} \otimes {\cal H}_{Q_{i+1}}^{\alpha_{i+1},left}$
for some $i$, and possibly some qubit $q\in V_i$.
So let us now fuse each pair 
${\cal H}_{Q_i}^{\alpha_i,right},{\cal H}_{Q_{i+1}}^{\alpha_{i+1},left}$ 
into a single qudit.
We now regard the tuple $(Q,V,S)$ as the modified instance where the qudits $Q_i$ are the fused qudits, and $S$ is the restricted operator set.
It can be easily checked that all $H_k\in S$ are now at most $2$-local, acting on a single fused qudit and possibly some $q\in V_i$.
This problem can then be verified using lemma (\ref{lem:BV}).
Let $\Pi_{\beta_i}$ denote the projection on the $\beta_i$ subspace of the fused qudit 
${\cal H}_{Q_i}^{\alpha_i,right} \otimes {\cal H}_{Q_{i+1}}^{\alpha_{i+1},left}$.
Then Merlin sends Arthur a tensor-product subspace $\bigotimes_i \Pi_{\beta_i}$ - i.e. a subspace 
index for each fused qudit. 

\subsection{Putting together the proof of Theorem \ref{thm:2lclreduce}:Containment in NP}
To summarize our proof of theorem \ref{thm:2lclreduce}, 
we describe the protocol of the verifier, and recall why it is complete 
and sound. 
The witness that Merlin sends Arthur is: 
\begin{itemize}
\item
An index $\alpha$ of a subspace for each separable qubit.
\item
A description of constant-size sets 
$Q_i$ and the (possibly non-constant size) sets $V_i$, for all $i$.
\item 
An index $\alpha_i$ for each $Q_i$, 
as well as the description of the actual subspace ${\cal H}_{\alpha_i}$, 
and its tensor product structure,
${\cal H}_{Q_i}^{\alpha_i,right} \otimes {\cal H}_{Q_{i+1}}^{\alpha_{i+1},left}$. 
The subspaces are provided by providing the basis vectors.
Note that since the dimension of each $Q_i$ 
is constant, this description is efficient as long as the 
accuracy required is at most to within inverse 
exponential in the number of qubits
\footnote{Throughout 
the paper we ignore the issue of accuracy and assume that 
the subspaces are provided with infinite accuracy. The reason is that  
this issue is not a problem in the context of the $CLH$ problem, since  
we are only trying to separate a zero eigenvalue from $1$ or more.  
In fact, even inverse polynomial accuracy in each such component will suffice, 
since there are only polynomially many terms contributing to the error.}  
\item
A witness following lemma (\ref{lem:BV}) for the two-local problem following 
the merging of on the constant size sets. 
\end{itemize}

The verification procedure is performed as follows: 
First, Arthur restricts all the operators in $S$ to the subspaces 
of the separable qubits, 
provided by Merlin, and derives an instance $S_{nosep}$ with 
supposedly no separable qubits. 
Arthur then verifies that after unifying some qubits of the system into 
constant size sets $Q_i$ following a recipe by Merlin, the instance $S_{nosep}$ 
has an almost $1$-dim. structure $(Q,V,S_{nosep})$.
Then, Arthur restricts all operators on ${\cal H}_{Q_i}$ to the subspace
${\cal H}_{Q_i}^{\alpha_i}$ as provided by Merlin for each $i$.
This is a tensor product of subspaces, one for each $i$.
He then fuses $right$ and $left$ remnants of the qudits 
(i.e. ${\cal H}_{Q_i}^{\alpha_i,right}$ with ${\cal H}_{Q_{i+1}}^{\alpha_{i+1},left}$ for each $i$) and achieves a $2$-local problem. 
Finally, Arthur uses a witness provided by Merlin for 
the $2$-local problem as in Lemma (\ref{lem:BV}). 

We now argue for completeness and soundness: 
By Claim (\ref{Step1}) the first step
results in an equivalent instance $S_{nosep}$, which is a positive instance if and only if $S$ is a positive instance.
The completeness and soundness of the third step
follows exactly the proof of lemma (\ref{SliceLemma}): i.e. Merlin can find
a subspace index $\alpha$ for each $Q_i$, that separates the interaction on $Q_i$ into left-right components, such that the restricted Hamiltonian 
is a positive instance, if and only if the original Hamiltonian is a positive instance.
The correctness of the last step of the protocol, namely, for the two local residual problems, follows from the correctness of \cite{BV}.

We end with the following corollary:
\begin{corollary} {\bf Constant Depth Diagonalizing Circuit}
For any $CLH(3,2)$ Hamiltonian $H$ there exists a quantum circuit of depth
$3$, where each gate acts on at most $100$ qubits that diagonalizes $H$.
\end{corollary}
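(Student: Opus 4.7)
The plan is to extract a diagonalizing circuit directly from the stage-by-stage reduction carried out in the proof of Theorem \ref{thm:2lclreduce}. At each stage of that proof a direct-sum decomposition of a constant-size cluster of qubits is uncovered which is preserved by every term of $H$; realizing the associated isometry as a unitary on the cluster (padding each subspace to a common shape so the direct sum becomes a tensor product of a label sub-register with data sub-registers) produces a bounded-arity gate. Stacking these gates over the three stages of the reduction is expected to give a depth-$3$ circuit whose gates each act on at most a constant number (bounded by $100$) of qubits.

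Concretely, I would build three layers. Layer~1 acts in parallel on disjoint supports: on each separable qubit, apply the single-qubit rotation sending its common diagonalizing basis to the computational basis; on each backbone particle $Q_i$ (of size $\le 39$ by Definition \ref{def:Qi}), apply the unitary promised by Lemma \ref{SliceLemma}, mapping ${\cal H}_{Q_i}$ to $\bigoplus_{\alpha} \ket{\alpha}_{Q_i}\otimes {\cal H}_{Q_i}^{\alpha,left}\otimes {\cal H}_{Q_i}^{\alpha,right}$. Layer~2 applies, again in parallel, the BV unitary of Lemma \ref{lem:BV} on each fused qudit $F_i = Q_i^{right}\cup Q_{i+1}^{left}$ (of size $\le 78$), exposing the decomposition of ${\cal H}_{F_i}$ into subspaces that factor as tensor products over the leaves $q\in V_i$. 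Consecutive $F_i$'s live on disjoint halves of the same $Q_j$, so a single depth layer suffices. Layer~3 applies the final joint diagonalization on each (subparticle of $F_i$, leaf $q\in V_i$) pair appearing in the output of Lemma \ref{lem:BV}; by Claim \ref{cl:structure} these pairs have disjoint supports, so again a single depth layer suffices. After all three layers every projection $H_i$ has been conjugated into a projection diagonal in the computational basis, hence $H$ is diagonalized.

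The main obstacle I expect is verifying that the Layer~1 and Layer~2 unitaries are genuinely supported on the stated qubit clusters. The decomposition of Lemma \ref{SliceLemma} attaches tensor factors ${\cal H}_{Q_i}^{\alpha,left}\otimes {\cal H}_{Q_i}^{\alpha,right}$ of varying dimensions to different $\alpha$, which is not a priori the same as a unitary on a fixed qubit register; however, since each $Q_i$ has constant ambient dimension, one can pad the subspaces ${\cal H}_{Q_i}^{\alpha}$ to a uniform shape and factor out an $\alpha$-label sub-register, turning the direct sum into a proper tensor product on the qubits of $Q_i$. The same padding trick applies to the Lemma \ref{lem:BV} unitary on $F_i$. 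With that verified, the gate supports in the three layers are bounded by $39$, $78$ and roughly $80$ qubits respectively — all under $100$ — and the overall depth is $3$, establishing the corollary.
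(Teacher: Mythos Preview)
Your proposal is correct and follows essentially the same three-layer construction as the paper's proof, the only cosmetic difference being that you describe the layers in the forward direction of the reduction (split $Q_i$ via $\alpha$, then split $F_i$ via $\beta$, then diagonalize the disjoint pairs) while the paper lists them in the reverse, state-preparation direction. Your explicit treatment of the padding needed to turn the $\alpha$- and $\beta$-dependent tensor factorizations into honest fixed-arity unitaries is in fact more careful than the paper's, which handles the same issue by allowing ancillas in its Layers~2 and~3.
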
 

\begin{proof}
A depth $3$ circuit is induced in a straightforward manner from the protocol above.
All separable qubits can be diagonalized separately from all other particles using $1$-local qubit gates.
Let us now consider just the nonseparable qubits.
We work our way from the end of the protocol in (\ref{red2lcl}) backwards: the first two layers handle the $2$-local star-topology instance, and the third layer embeds this instance into the backbone.
We recall again the following notation:  the backbone qudits are denoted by ${\cal H}_{Q_i}$, and restricted to some subspace $\alpha$ we have ${\cal H}_{Q_i}^{\alpha,left}$ and ${\cal H}_{Q_i}^{\alpha,right}$ for each $i$.
Then when Merlin proves to Arthur the satisfiability of the star-topology instance, he provides a subspace 
$\Pi_{\beta_i}$ 
 for each fused qudit ${\cal H}_{Q_i}^{\alpha,right} \otimes {\cal H}_{Q_{i+1}}^{\alpha,left}$.
To write the circuit, we actually go in reverse order 
as follows:
\begin{enumerate}
\item 
\textbf{Layer 1: Disjoint Hamiltonian Diagonalization} 

A diagonalizing gate for each disjoint Hamiltonian interaction $(q,p)$ where $q\in V_i$ and $p$ is some fused qudit $\left({\cal H}_{Q_{i+1}}^{\alpha,right} \otimes {\cal H}_{Q_i}^{\alpha,left}\right)|_{{\beta}_i}$ 
restricted to some subspace ${\beta}_i$.
Each such gate is a $2$-local unitary acting on the qubit $q$ and some qudit 
of dimensionality less than
 the product of dimensions of two $Q_i$'s, hence at most a constant.
\item
\textbf{Layer 2: Embedding of disjoint Hamiltonians into Star Topology}
An isometric embedding $V$ of the individual Hilbert spaces $p$ from the item above (which may require adding some ancilla qubits), into a fused qudit that is now still restricted to some subspace $\alpha$.
We have
$$ V_2: 
\left({\cal H}_{Q_i}^{\alpha,right} \otimes {\cal H}_{Q_{i+1}}^{\alpha,left}\right)|_{{\beta}_i}
\mapsto 
{\cal H}_{Q_i}^{\alpha,right} \otimes {\cal H}_{Q_{i+1}}^{\alpha,left}.$$
The isometry $V_2$ acts on possibly numerous individual particles such that the product of their sizes is at most $100$.
\item
\textbf{Layer 3: Embedding of Star Topology instance into backbone}
An isometric embedding of each pair of adjacent fused subparticles:
$$ V_3:
\left({\cal H}_{Q_{i}}^{\alpha,left} \otimes {\cal H}_{Q_i}^{\alpha,right}\right)
\mapsto 
{\cal H}_{Q_i}.$$
This may also require adding some ancillary qubits.
\end{enumerate}

\end{proof}

{~}

\noindent{\bf Remark}: 
We remark that in fact, a depth $2$ quantum circuit suffices. 
We will defer the proof to a later version of the paper. 

\section{The 3-local case for qutrits}\label{sec:3lcl3}

In this section we extend the result of the previous section for qutrits, presuming the interaction graph
can be embedded on the plane in a special way.

\subsection{Formal Definitions}\label{sec:3lcldef}
For a planar connected interaction graph $G_S$ of a $CLH(3,3)$ instance $S$, embedded in $R^2$ we define the following:

\begin{definition}\label{def:op}

\textbf{Op/Noop Faces}

\noindent
Let $f$ be some face of $G$.
$f$ is said to be an "op" face if it has $3$ vertices,
 and there exists an operator $H\in S$ that acts on its $3$ vertices 
(qudits). If there is no operator that acts on its vertices it 
is called a "noop" face.
In this respect, the face extending from the boundary of $G$ to 
infinity is a "noop" face.
\end{definition}

\noindent
We now define a notion of planar $CLH$: 

\begin{definition}

\textbf{Planar $CLH$}

\noindent
An instance $S$ of  $CLH(k,d)$  
is said to be ``planar  $CLH$''
if its interaction graph is planar, connected, and moreover,  
can be associated with an embedding in the plane, such that 
every operator in $S$ is associated 
with an ``op'' face. In other words, the three vertices corresponding to the 
particles on which the operator acts, are connected by edges in the graph, 
such that the area surrounded by those edges does not contain any other 
vertex.  
\end{definition}

In Figure \ref{fig:nonplanar2} we provide examples which explain the difference between the mere requirement that the interaction graph is planar, and our definition of planarity. 
\begin{figure}[ht]
\center{
 \epsfxsize=4in
 \epsfbox{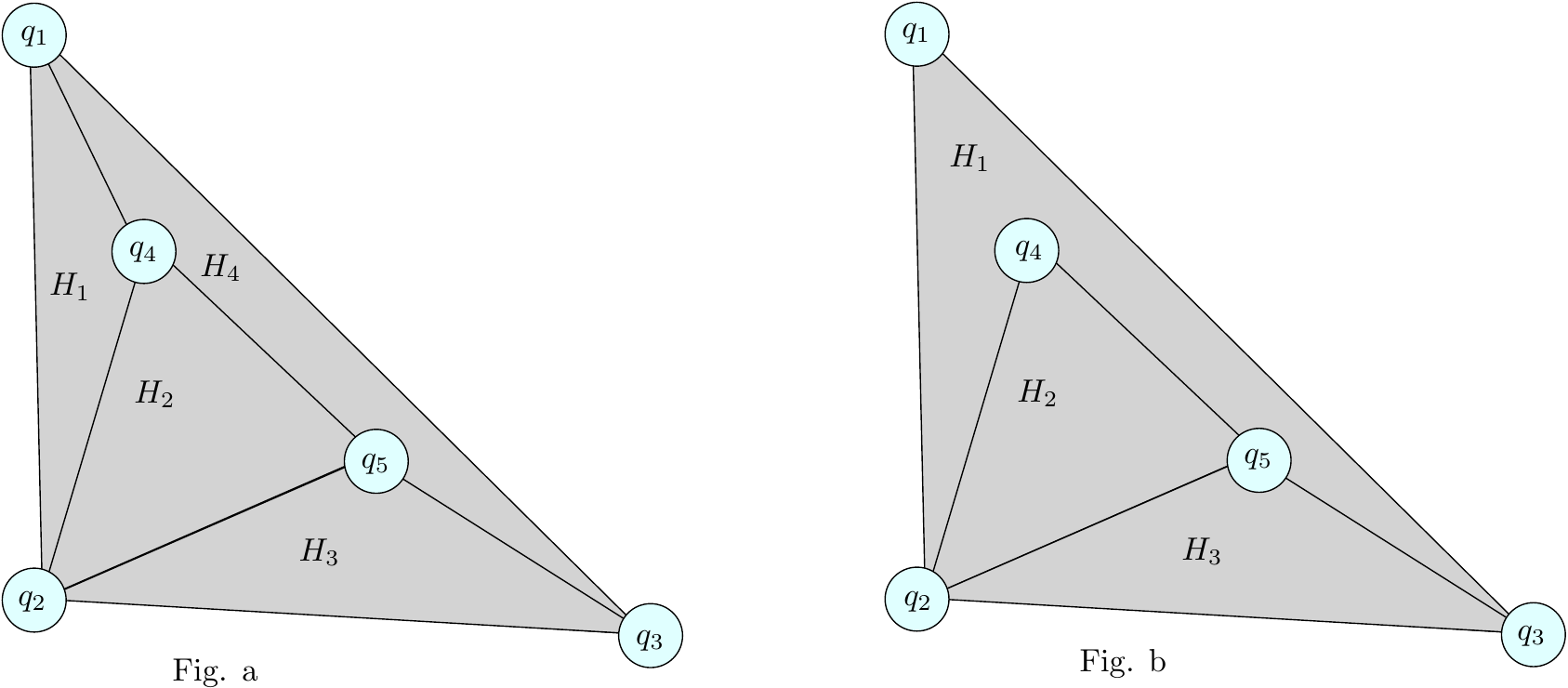}}
 \caption{\label{fig:nonplanar2}
 Examples of instances of $CLH(k,d)$ whose interaction graphs are embedded on the plane, yet are not instances of planar $CLH$ as defined above:
 \newline
 Fig. a: Suppose that in addition to the operators $H_1,H_2,H_3,H_4$ there also exists an operator $H_5(q_1,q_2,q_3)$.
 Then there is no "op" face corresponding to $H_5$.
 \newline
 Fig. b: 
 Suppose there are only $3$ operators $H_1,H_2,H_3$.  
 There exists a face corresponding to each operator, yet the one corresponding to $H_1$ is not an "op" face since it is not triangular.
 Note however, that there exists a different embedding of the interaction graph of $H_1,H_2,H_3$ so that it is a legal planar $CLH$ instance.
}
 
\end{figure}
So far, these are very mild and natural restrictions on top of planarity.  
We now add one additional ``natural'' requirement. 
For this, let us define another notion: 

\begin{definition}

\textbf{Nearly-Euclidean Triangulation of a Polygon}

\noindent
A finite planar graph is said to be a Nearly Euclidean (NE) 
triangulation of a polygon 
if every face except the infinite face has three edges, 
the edges are straight lines, and moreover, the ratio between the shortest and longest edge is bounded from above by some overall constant, 
and the angle between any two incident edges is bounded from below 
by some overall constant angle. 

\end{definition}

Notice that this definition ensures that there cannot be areas in which the 
density of vertices is much higher than in other places, 
and that the length of a path in the graph is not too far from that 
of the Euclidean distance between the two end points. 
In short; propagation along paths 
in the graph behaves more or less like it would on a periodic lattice, or 
in Euclidean space. 

\noindent
We now define NE $CLH$ instances: 

\begin{definition}\label{def:NEdef}
\textbf{Nearly Euclidean $CLH$}

\noindent
An instance $S$ of planar $CLH(k,d)$ is said to be Nearly Euclidean (NE)
if there exists a NE 
planar triangulation $T_S$ of a convex polygon in the plane,
such that the interaction graph of 
$S$, $G_S$, is a subgraph of this triangulation, 
i.e. $G_S\subseteq T_S$ and the number of vertices of $T_S$ 
is at most polynomial in the number of particles (equivalently, 
vertices) in $S$.
\end{definition}

\noindent
Our main result in this section is:
\begin{theorem}\label{thm:d3}
The problem of $CLH(3,3)$ restricted to NE planar instances is in NP.
\end{theorem}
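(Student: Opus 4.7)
The plan is to adapt the two-stage strategy of Theorem \ref{MainClaim} to the qutrit setting, but replace the combinatorial backbone argument (which exploited uniqueness of the direct-sum decomposition for qubits) by a genuinely geometric argument that exploits both the low local dimension and the NE planar embedding. First, following Algorithm \ref{alg:restrict}, I would ask Merlin to iteratively identify every qutrit $q$ that admits a nontrivial direct-sum decomposition ${\cal H}_q=\bigoplus_\alpha {\cal H}_q^\alpha$ preserved by all Hamiltonian terms touching $q$, and then restrict the instance to a subspace ${\cal H}_q^\alpha$ of Merlin's choice. When $\dim {\cal H}_q^\alpha = 1$ the qutrit disappears from the graph entirely (with its state folded into a product witness), and when $\dim {\cal H}_q^\alpha = 2$ it becomes an effective qubit, which may then become further separable; iterating, I obtain a residual instance $S_{\rm nosep}$ whose qutrits are all inseparable and whose NE planar structure is inherited from $S$. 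Soundness and completeness of this stage are immediate: the restricted Hamiltonian has a zero eigenstate iff the original does, because the decomposition is preserved by all commuting terms (Fact \ref{fact:presdec}).

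The heart of the proof is the geometric analysis of the residual. The key claim (the qutrit analogue of the butterfly/crown analysis for qubits) is that in $S_{\rm nosep}$ every vertex in the interaction graph has degree at most $5$. The intuition is the one hinted at in the introduction: if a qutrit $q$ has $\ge 6$ incident operators, then using the $C^*$-algebraic tools of Section \ref{sec:notation}, together with the fact that pairs of operators intersecting only on $q$ induce commuting algebras on ${\cal L}({\cal H}_q)$ (Fact \ref{fact:algcomm}), one can produce two genuinely independent pieces of the induced algebra which together force a common direct-sum refinement of ${\cal H}_q$; since $\dim {\cal H}_q = 3$ this refinement must be nontrivial, contradicting inseparability. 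I would prove this carefully by a case analysis based on how many of the incident operators share a second particle with a fixed one, using the planar embedding to limit how many such ``butterflies'' can simultaneously act on $q$.

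Given the degree-$\le 5$ bound, I would then invoke a purely combinatorial/geometric lemma (which I expect to be Claim \ref{cl:cutting} referenced in the Further Work discussion): in any NE triangulation of a polygon in which every triangular face is an ``op'' face (as in Definition \ref{def:op}) and every internal vertex has degree $\le 5$, there must appear a positive density of ``noop'' faces, i.e.\ faces of the planar embedding with $\ge 4$ sides or non-op triangles. The proof is a discrete Gauss-Bonnet/angle-deficit argument: an interior vertex of degree $d$ in a NE triangulation contributes an angular excess bounded below by a positive constant whenever $d\le 5$, so by an Euler-type double count the total angular defect forces ``holes'' to appear within a bounded combinatorial distance of every vertex. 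The NE hypothesis is what lets me translate angular deficits into graph-theoretic statements about the number of faces.

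Finally I would use these constant-density noop faces to cut $G_{S_{\rm nosep}}$ into connected pieces, each of constant diameter (and thus a constant number of particles and constant Hilbert-space dimension). Coarse-graining each piece into a single ``super-particle'' $Q_i$ of bounded dimension, any remaining Hamiltonian term either lives entirely inside one $Q_i$ or acts across the cut between two adjacent super-particles; because the cuts pass through noop faces, no term straddles three pieces. The resulting instance is therefore a commuting $2$-local Hamiltonian on qudits of bounded dimension, to which Lemma \ref{lem:BV} and the NP protocol of Section \ref{sec:2lcl} apply directly. Merlin's total witness is: the separable-qutrit restrictions $\alpha$ from stage one, a description of the partition $\{Q_i\}$, and the stage-three BV witness for the coarse-grained $2$-local instance. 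The main obstacle I expect is proving the degree-$\le 5$ claim cleanly, since qutrit inseparability is more subtle than qubit inseparability (decompositions into $1+2$ have a continuous family of choices on the $2$-dimensional block), so the algebraic argument must be careful to extract a \emph{canonical} refinement from the interplay of six or more incident algebras; this is precisely the place where the argument of Theorem \ref{MainClaim} does not directly carry over.
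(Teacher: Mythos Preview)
Your high-level strategy matches the paper's almost exactly: remove separable qudits, prove a degree bound, show constant density of ``noop'' holes, cut along holes to coarse-grain to a $2$-local instance, then invoke Lemma~\ref{lem:BV}. Two points, however, deserve correction.

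First, you misplace the role of the NE hypothesis. The density-of-holes statement is purely combinatorial and holds for any planar $CLH(3,3)$ instance: the paper proves it via the Euler formula $(a-2)(b-2)=4(1-2/F)(1-2/V)$ applied to a growing tessellation, not via geometric angle defects (in a planar embedding the angles around every interior vertex sum to exactly $2\pi$, so there is no literal angular excess). What you call ``combinatorial curvature'' is the right intuition, but it does not need NE at all. The NE hypothesis enters only in the cutting step, where one needs to convert combinatorial face-distance into Euclidean distance in order to lay down a regular grid and bound the number of particles per cell.

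Second, and more seriously, your cutting argument has a genuine gap. You assert that ``because the cuts pass through noop faces, no term straddles three pieces,'' but this is false in general: even if every grid junction sits inside a noop face, an ordinary op triangle \emph{near} that junction can still have its three vertices in three distinct cells, since the noop face may be smaller than $l_{\max}$. The paper handles this with a nontrivial trick: near each junction it locates \emph{two} noop faces at controlled Euclidean separation, merges the particles along a short path between them into a single constant-dimensional qudit, applies Claim~\ref{cl:algdec} (with Merlin supplying the subspace) to split that qudit left/right, and thereby fuses the two noops into one large ``noop-zone'' whose diameter exceeds $l_{\max}$. Only then can the brick-wall boundaries be rerouted through these zones so that no op face touches three bricks. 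Without this merge-and-split step your partition does not yield a $2$-local instance.

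Finally, for the degree-$\le 5$ claim you correctly flag the difficulty: the paper's substitute for the uniqueness-of-decomposition argument is the notion of a \emph{critical subspace} (a rank-$1$ projection in the center of the induced algebra on $q$), together with the facts that butterflied operators have mutually orthogonal-or-equal critical subspaces and that preserving two non-orthogonal $1$-dimensional subspaces forces a specific critical subspace. This machinery is what makes the case analysis on a length-$5$ open operator path go through; your sketch (``two independent pieces force a common refinement'') is pointing at the right phenomenon but would need these lemmas to be made rigorous.
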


\noindent
From our proof, we also derive the following corollary: 
\begin{corollary}\label{cor:circuit}
Given an instance of NE $CLH(3,3)$, there exists a quantum circuit 
of depth $3$, involving $2$-local gates acting on nearest neighbor
qutrits in the interaction graph of the original instance, 
which diagonalizes the Hamiltonian (and thus generates a basis of eigenstates 
from a basis of input computational basis states). 
\end{corollary}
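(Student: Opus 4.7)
The plan is to extract the diagonalizing circuit directly from the three stages of the verification protocol underlying Theorem \ref{thm:d3}, and then check that every stage can be realized as one layer of 2-local gates acting on edges of $G_S$.

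The first layer handles separable qutrits: for each such qutrit $q$ the protocol selects a one-dimensional subspace of $\mathcal{H}_q$ preserved by every incident operator, and a single-qutrit unitary $U_q$ rotates the computational basis to the corresponding eigenbasis. These unitaries act on disjoint qutrits, so they form one depth-one layer; each can formally be regarded as a 2-local gate on an incident edge of $G_S$ by pairing it with its partner and acting trivially there.

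The second layer is the per-qutrit slicing guaranteed by Lemma \ref{lem:BV} and extended in the proof of Theorem \ref{thm:d3} to the NE planar qutrit setting. For each residual (non-separable) qutrit $q$, applying Claim \ref{cl:algdec} to the family of algebras induced on $q$ by the incident Hamiltonian terms produces a single direct-sum-and-tensor decomposition $\mathcal{H}_q=\bigoplus_{\alpha}\bigotimes_k \mathcal{H}_{\alpha}^{q.k}$ such that, once $q$ is restricted to the subspace indexed by Merlin's $\alpha$, each incident 3-local term acts nontrivially only on the factor $\mathcal{H}_{\alpha}^{q.k}$ associated with its non-$q$ partners. The corresponding rotation on $q$ is again 1-local, so the second layer is again a set of disjoint 1-local gates, padded to 2-local along edges of $G_S$.

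After these two layers, each surviving Hamiltonian term $H(q,a,b)$ has been transformed, on the selected subspace, into an operator whose nontrivial action lives on one subfactor of $q$, one subfactor of $a$, and one subfactor of $b$; applying the matching slicing also at $a$ and $b$ identifies this action with a 2-local operator between a subfactor of one qutrit and a subfactor of an adjacent qutrit, namely the unique neighbor on which the third member of the triple has been frozen to identity. The third layer then diagonalizes each such term by a 2-local unitary between the two qutrits, which by construction are already adjacent in $G_S$; because the residual terms are mutually commuting and act on disjoint subfactors, the diagonalizing gates can be scheduled on edge-disjoint pairs and fit into a single depth-one layer.

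The main obstacle is the compatibility argument in the second layer: independently chosen slicings at neighboring qutrits must cooperate, so that when a single 3-local term is slid through the decompositions at all three of its qutrits, its nontrivial action really ends up on a single edge of $G_S$ rather than remaining genuinely 3-local on subfactors. This is exactly what Claim \ref{cl:algdec} guarantees locally at each qutrit, but collecting these local statements into a global statement---one which moreover identifies a specific pair of adjacent qutrits carrying the residual action of each 3-local term---will require a careful choice of how to group the other neighbors of each qutrit, using the NE planar geometry of the residual graph (degree at most $5$ and the hole structure established in the proof of Theorem \ref{thm:d3}) to ensure such a grouping exists.
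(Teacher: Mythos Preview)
Your proposal has a genuine gap in the second layer. You invoke Claim~\ref{cl:algdec} on the family of algebras induced on a residual qutrit $q$ by \emph{all} incident Hamiltonian terms, but that claim requires the algebras to pairwise commute. Fact~\ref{fact:algcomm} only guarantees this when the two operators intersect in $q$ alone; for two $3$-local terms $H_1(q,a,b)$ and $H_2(q,b,c)$ sharing \emph{two} qudits, the algebras they induce on $q$ need not commute. This is precisely the obstruction that separates the $3$-local problem from the $2$-local one and is the reason the paper cannot simply repeat the Bravyi--Vyalyi slicing qutrit-by-qutrit. Your acknowledged ``main obstacle'' about compatible slicings is therefore not merely a bookkeeping issue of grouping neighbors; the separating decomposition you want at each qutrit does not exist in general.

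The paper's route is quite different and does not attempt per-qutrit slicing at all. After removing separable qudits it uses the planar geometry (degree $\le 5$, the constant density of ``noop'' holes from Claim~\ref{cl:outerbound}, and the NE structure) to lay down a brick-wall partition of the plane into regions each containing only constantly many qutrits, with junctions placed inside enlarged noop-zones created by merging a short path of qutrits into one particle (Claim~\ref{cl:cutting}). The resulting problem is $2$-local \emph{between these constant-size blocks}, and Lemma~\ref{lem:BV} is then applied to those merged particles, not to individual qutrits. The diagonalizing circuit one extracts from this has gates acting on constant-size neighborhoods of qutrits, analogous to the ``at most $100$ qubits'' in the $CLH(3,2)$ corollary; the phrase ``$2$-local gates on nearest-neighbor qutrits'' in the corollary should be read in that coarse-grained sense.
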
 

\subsection{Proof Overview}

Similar to the proof for qubits, our proof of theorem (\ref{thm:d3}) above is along the following strategy:
identify "classically"-behaving qutrits, and eliminate them using input from Merlin.
Then, show that the residual instance can be coarse-grained
into a $CLH(2,d)$ instance for a constant $d$.
The proof that this coarse graining can be done, however, is entirely
different, from the one in the previous chapter. 

The first difference is in identifying classically-behaving qutrits; 
this is slightly more involved than for qubits.
The larger qudit dimension allows qutrits to interact in certain configurations that generate qubit separability, but do not generate qutrit separability.
Consider for example the following set of $4$ operators: $H_1(q,a,b),H_2(q,b,c),H_3(q,c,d),H_4(q,d,e)$ where
$$ H_1 = \ketbra{0}_q \otimes \Pi_a \otimes \Pi_b $$
$$ H_2 = \ketbra{0+1}_q \otimes (I-\Pi_b) \otimes \Pi_c $$
$$ H_3 = \ketbra{1+2}_q \otimes (I-\Pi_c) \otimes \Pi_d $$
$$ H_4 = \ketbra{2}_q \otimes (I-\Pi_d) \otimes \Pi_e. $$
These operators commute yet do not agree on any single decomposition of $q$, despite the fact that they are all $\bfly$ connected.
The fact that $\bfly$-connectedness does not imply separability (i.e., 
no equivalent of Theorem \ref{bflychainClaim} holds) 
destroys the basis for most of the geometric structure we managed to 
prove in the case of qubits.  

From an algebraic perspective the above phenomenon occurs because an operator on a qubit can be written in block-diagonal form w.r.t. only one basis of that qubit (see lemma (\ref{cl:same})).
Alternatively stated, any two decompositions of $H(q)$ w.r.t. the 
qubit $q$ preserve each other (i.e., the projections on the subspaces 
in the decomposition preserve the subspaces in the other 
decomposition) and are hence identical.
Yet for qutrits, this is obviously not the case.
Consider for example the following operator:
$$ H = \ketbra{0}^q\otimes \Pi_0^E + \ketbra{1}^q\otimes \Pi_1^E + \ketbra{2}^q \otimes \Pi_2^E$$
where the first term of each summand acts on a qutrit $q$, 
and the second term acts on the rest of the system.
If the $3$ projections $\Pi_i^E$ are linearly independent, then any non-trivial direct-sum decomposition of $H$ w.r.t. $q$ must preserve the subspaces corresponding to the decomposition above ($\ketbra{0},\ketbra{1},\ketbra{2}$).
If, however, $\Pi_0^E = \Pi_1^E$ then $H$ may also be written as
$$ H = \ketbra{+}^q\otimes \Pi_0^E + \ketbra{-}^q\otimes \Pi_1^E + \ketbra{2}^q \otimes \Pi_2^E$$
where $\ketbra{+}$ projects on the vector $\frac{1}{\sqrt{2}}\left(\ket{0}+\ket{1}\right)$ and $\ketbra{-}$ projects on the state $\frac{1}{\sqrt{2}}\left(\ket{0}-\ket{1}\right)$.
So this decomposition of $H$ does not preserve the first decomposition.

Nevertheless, notice in the above example that every decomposition must 
preserve the subspace spanned by $\ket{2}$, and its orthogonal complement.
We call this one dimensional subspace a 
\textbf{Critical Subspace}.
More formally, a critical subspace is a rank $1$ projection in the center of the induced algebra of the operator on the target qutrit.
For the above example, if $\Pi_0^E,\Pi_1^E,\Pi_2^E$ are all linearly 
independent, $H$ has $3$ critical subspaces spanned by the states 
$\ket{0},\ket{1},\ket{2}$, whereas if $\Pi_0^E = \Pi_1^E \neq \Pi_2^E$ 
$H$'s only critical subspace is spanned by $\ket{2}$.
We show (facts (\ref{fact:reduce}) and (\ref{fact:bflycrit})) that 
a $\bfly$ relation between a pair of operators acting non-trivially on a qutrit $q$ implies
that each of these operators has a critical subspace on $q$,
and these subspaces are either identical or orthogonal, and moreover each of 
the operators preserves each other's critical subspace.

This behavior of critical subspaces is what 
replaces the notion of unique decomposition in qubits, though 
it is weaker. We can prove the following, when restricting the interactions 
to act on the plane. 
Consider all operators on a qutrit $q$; each operator has its own critical 
subspace in the Hilbert space of $q$. We prove that if the number 
of operators is large enough, 
any assignment of critical subspaces to the operators on $q$ forces all of 
the operators to preserve at least one of the assigned critical subspaces, 
and so the qutrit becomes separable. 
This means that there cannot be more than a small number of operators 
acting on $q$; this is proved in Lemma (\ref{lemma:cq2}). 

An easy implication of Lemma (\ref{lemma:cq2})
is that in an 
instance with no separable qutrits, 
all vertices in the interaction graph must be of degree at most $5$.  
(we state this in Corollary (\ref{cor:deg})). 
This is a crucial point. 
We show that planar-embedded 
Hamiltonians in which all the vertices are of degree at most $5$, 
must exhibit an intriguing characteristic, which is in fact entirely 
geometrical (see Claim (\ref{cl:outerbound})). Consider 
a planar embedding of a graph, whose faces are colored black and white. 
Moreover, only 
$3$-vertex faces can be colored black 
(black regions correspond to terms in the Hamiltonian).   
Then there must be a constant density of white "holes"; i.e.,  
any point in the plane is within a constant distance (in terms of number 
of faces) from such a white hole - i.e., a region where no interaction 
acts.  
The proof of this geometric fact uses the Euler Characteristic but  
is fairly involved and we delay its overview to Subsection 
(\ref{subsec:regspac}) where it is proven.  

The main point is that the existence of those regularly spaced holes 
allows us, in the case the interaction graph is NE (and this is the only place 
where we use the NE property in the proof)  
to coarse grain the set of particles, and by this derive a 
$2$-local instance. 
This is done in Claim (\ref{cl:cutting}): 
the rough idea is to lay down on the plane a ``net'' that partitions 
the plane in such a way that in each region, there are only constantly 
many particles, while making sure that the 
junctions of the net fall precisely inside those white "holes".
If we combine the particles in each region together, then   
each term in the Hamiltonian acts on at most $2$ of 
the combined particles. 
We now proceed to the details. 

\subsection{Removing Separability}\label{subsec:removesep}
We begin by defining qutrit separability in exactly the same way as qubit separability:

\begin{definition}\label{def:qutritsep}
A qutrit $q$ is called separable in an instance $S$ of $CLH(3,3)$ if there exists a non-trivial direct-sum decomposition of ${\cal H}_q$
$$ {\cal H}_q = \bigoplus_{\alpha} {\cal H}_q^{\alpha}$$
such that any operator $H(q)\in S$ preserves this decomposition, i.e. $H(q) = \bigoplus_{\alpha} H(q)|_{\alpha}$.
\end{definition}

Similar to handling $CLH(3,2)$ the NP protocol begins by Merlin helping 
Arthur remove all separable qutrits from the system.
Let $S$ be some instance of a NE $CLH(3,3)$, and $q$ be some qutrit.
If $q$ is separable then following input from Merlin, its dimension is
reduced to at most $2$, using restriction to some subspace.
After the restriction, if the dimension is $1$, the particle essentially "vanishes" from the input, so suppose we are left with a $2$-dimensional particle, a qubit.
This qubit may now be either separable or non-separable.
If it is separable, by additional input from Merlin, it
"vanishes" from the input.
Thus, after all qubit/qutrit separability has been exhausted, 
we are left with a Hamiltonian acting on qutrits and qubits, all of which 
are non-separable.
We note that the remaining interaction graph is still NE, 
even though some ``op'' faces may have now turned into ``noop'' faces. 

\subsection{Critical Subspaces and Separability for general CLH(3,3)}\label{subsec:sepqutrit}

We would now like to develop tools that will allow us to prove 
geometrical restrictions on the residual problem after removing all 
separable qudits. 

\begin{definition}\label{def:critsub}

\textbf{Critical Subspace}

\noindent
Let $H(q)$ be an operator acting on $q$. 
Denote by ${\cal A}_q^H$ the algebra induced by $H$ on $q$.
Any rank-$1$ projection in the center of this algebra 
$\mathbf{Z}\left({\cal A}_q^H\right)$, induces a 
a critical subspace of $H$ on $q$, which is the one dimensional 
subspace which is the 
image of the rank-$1$ projection.  
\end{definition}

\noindent
We now prove several useful facts regarding critical 
subspaces. The following facts hold
for all qudits of dimension $d\leq 3$.

\begin{fact}\label{fact:reduce}
Let $H_1(q),H_2(q)$ be two operators that act non-trivially on $q$ such that $H_2(q) \bfly H_1(q)$.
We claim that each of operator has at least one critical subspace on $q$.
Also, each critical subspace of $H_1$ is preserved by $H_2$ and vice versa.
\end{fact}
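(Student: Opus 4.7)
The plan is to combine the butterfly hypothesis with the representation-theoretic decomposition of the induced algebras (Fact \ref{fact2}) to pin down the possible structure of ${\cal A}_q^{H_1}$ and ${\cal A}_q^{H_2}$, exploiting the fact that $\dim({\cal H}_q)\leq 3$ forces a rank-one piece to appear in the center of each algebra.

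The first step is to show that neither induced algebra can equal $L({\cal H}_q)$. Since $H_1\bfly H_2$, by Fact \ref{fact:algcomm} the algebras ${\cal A}_q^{H_1}$ and ${\cal A}_q^{H_2}$ commute. If ${\cal A}_q^{H_1}=L({\cal H}_q)$ then the centralizer of ${\cal A}_q^{H_1}$ in $L({\cal H}_q)$ is just $\{cI\}$, forcing ${\cal A}_q^{H_2}\subseteq\{cI\}$, which means $H_2$ acts trivially on $q$, contradicting the hypothesis. By symmetry, neither algebra is the full matrix algebra. Similarly, neither algebra equals $\{cI\}$, since that would also mean trivial action on $q$.

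Next I would apply Fact \ref{fact2} to ${\cal A}_q^{H_1}$, obtaining a decomposition
$$ {\cal H}_q \;=\; \bigoplus_{\alpha}{\cal H}_\alpha,\qquad {\cal H}_\alpha={\cal H}_\alpha^1\otimes {\cal H}_\alpha^2,\qquad {\cal A}_q^{H_1}\approx \bigoplus_{\alpha} L({\cal H}_\alpha^1)\otimes I({\cal H}_\alpha^2). $$
Because $d=\dim({\cal H}_q)\in\{2,3\}$ is prime, a decomposition with a single summand $\alpha$ forces ${\cal H}_\alpha^1$ or ${\cal H}_\alpha^2$ to be one-dimensional, making ${\cal A}_q^{H_1}$ either the full algebra or the scalars. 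Both possibilities have just been ruled out, so the direct sum must contain at least two summands. Since the total dimension is at most $3$, at least one summand ${\cal H}_{\alpha_0}$ is one-dimensional, and the projection $\Pi_{\alpha_0}$ onto ${\cal H}_{\alpha_0}$ is a rank-one projection inside $\mathbf{Z}({\cal A}_q^{H_1})$. By Definition \ref{def:critsub} this is exactly a critical subspace of $H_1$ on $q$; the same argument applied to $H_2$ produces a critical subspace there, proving the first claim.

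For the preservation claim, let $V\subseteq {\cal H}_q$ be any critical subspace of $H_1$ with associated rank-one projection $\Pi\in \mathbf{Z}({\cal A}_q^{H_1})\subseteq {\cal A}_q^{H_1}$. By Fact \ref{fact:algcomm}, every element of ${\cal A}_q^{H_2}$ commutes with every element of ${\cal A}_q^{H_1}$, and in particular with $\Pi$; by Fact \ref{fact:trivial2} this means every element of ${\cal A}_q^{H_2}$ preserves $V$. Writing $H_2=\sum_\alpha A_\alpha\otimes B_\alpha$ with linearly independent $B_\alpha$, the $A_\alpha$ generate ${\cal A}_q^{H_2}$ and therefore each preserves $V$, so $H_2$ preserves $V\otimes {\cal H}_{\rm rest}$. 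The symmetric statement for critical subspaces of $H_2$ follows identically.

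The only subtle step is the case analysis confirming that a nontrivial, proper induced algebra on a space of dimension $\leq 3$ must have a rank-one central projection; the rest is a direct unwinding of the algebra facts from Section \ref{sec:notation}. I expect no further obstacles beyond this dimension argument, which is exactly where the restriction $d\leq 3$ enters and foreshadows why the qutrit case will need genuinely new ideas beyond the qubit proof.
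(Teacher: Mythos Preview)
Your proof is correct and follows essentially the same route as the paper: both use Fact~\ref{fact:algcomm} to get commuting induced algebras, rule out the irreducible (full/scalar) cases via the dimension constraint $d\le 3$, invoke Fact~\ref{fact2} to get a nontrivial direct-sum decomposition with at least one one-dimensional summand, and then derive preservation from commutation with the rank-one central projection via Fact~\ref{fact:trivial2}. Your phrasing of the dimension argument in terms of primality is a slight cosmetic variation, but the logic is identical.
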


\begin{proof}
Let ${\cal A}_q^1$, ${\cal A}_q^2$ denote the algebras induced by
 $H_1(q),H_2(q)$ on the qudit $q$, respectively. 
First, we show that both algebras ${\cal A}_q^1$, ${\cal A}_q^2$ are reducible.
Suppose on the negative that it is not the case, and that say, 
 ${\cal A}_1(q)$ is irreducible. By Fact (\ref{fact1}) it is isomorphic 
to $\mathbf{L}({\cal H}_q^1) \otimes I_{{\cal H}_q^2}$. 
Since $d=3$ or $d=2$, then one
of the Hilbert spaces ${\cal H}_q^1$ or ${\cal H}_q^2$ is of dimension $1$. 
It cannot be that  ${\cal H}_q^1$ is of dimension $1$, 
since this would imply that the algebra ${\cal A}_q^1$ is trivial, 
and this means that the operator $H_1$ acts trivially on $q$, contrary 
to our assumption. Thus, it must be that 
${\cal H}_q^2$ is of dimension $1$. 
Since ${\cal A}_q^1,{\cal A}_q^2$ commute, this implies that ${\cal A}_q$ 
is trivial, and so $H_2(q)$ acts trivially on $q$ contrary to our
assumption. 
 
Thus by fact (\ref{fact2}) each of the two algebras can be non-trivially 
decomposed into a direct-sum of algebras, following an orthogonal 
decomposition of ${\cal H}_q$ into a direct-sum of subspaces 
${\cal H}_q^{\alpha}$.
Since $d\leq 3$ at least one such subspace is of dimension $1$, 
so the center of each algebra has a rank $1$ projection, 
and so both operators have a critical subspace on $q$. 

We now show that the critical subspaces of one operator
 are preserved by the other. 
Let $S_c \subseteq {\cal H}_q$ be a critical subspace of $H_1(q)$.
Let $\Pi_c$ be the projection on $S_c$; 
By Definition (\ref{def:critsub}), $\Pi_c
 \in \mathbf{Z}\left({\cal A}_q^1\right)$ and thus it is contained in 
${\cal A}_q^1$. Since ${\cal A}_q^1$
 commutes with ${\cal A}_q^2$ by Fact \ref{fact:algcomm}, 
then $\Pi_c$ commutes with any operator $A\in {\cal A}_q^2$, thus by 
Fact \ref{fact:trivial2} ${\cal A}_q^2$ preserves $S_c$, 
and thus so does $H_2(q)$. 
\end{proof}

\begin{fact}\label{fact:bflycrit}
Let $H_1(q) \bfly H_2(q)$ be two operators on $q$.
Let $S_c^1$ be a critical subspace of $H_1(q)$ on $q$. 
Then any critical subspace of
 $H_2(q)$ on $q$ is either $S_c^1$ or a subspace orthogonal to it.
\end{fact}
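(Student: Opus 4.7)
The plan is to reduce the statement to the elementary linear-algebra fact that two commuting rank-one projections must be either equal or orthogonal, and to get the commutation from Fact~\ref{fact:algcomm} together with the observation that each critical subspace sits in the center of its operator's induced algebra.

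First I would name the objects. Let ${\cal A}_q^1, {\cal A}_q^2$ denote the algebras induced by $H_1, H_2$ on $q$, and let $\Pi_c^1$ be the rank-one projection in $\mathbf{Z}({\cal A}_q^1)$ whose image is $S_c^1$ (which exists by Definition~\ref{def:critsub}). Assume $H_2$ has \emph{some} critical subspace on $q$, call it $S_c^2$, with rank-one projection $\Pi_c^2 \in \mathbf{Z}({\cal A}_q^2)$; otherwise the claim is vacuous. Both critical subspaces are one-dimensional by definition.

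The key step is to observe that $\Pi_c^1$ and $\Pi_c^2$ commute. Indeed, by Fact~\ref{fact:algcomm}, since $H_1 \bfly H_2$, the algebras ${\cal A}_q^1$ and ${\cal A}_q^2$ commute elementwise. Since $\Pi_c^1 \in \mathbf{Z}({\cal A}_q^1) \subseteq {\cal A}_q^1$ and $\Pi_c^2 \in \mathbf{Z}({\cal A}_q^2) \subseteq {\cal A}_q^2$, the projections $\Pi_c^1$ and $\Pi_c^2$ commute.

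The final step is a short linear algebra lemma: any two commuting rank-one orthogonal projections $\Pi, \Pi'$ on a finite dimensional Hilbert space satisfy either $\Pi = \Pi'$ or $\Pi \Pi' = 0$. This is immediate: $\Pi \Pi'$ is Hermitian (since $\Pi \Pi' = \Pi' \Pi = (\Pi \Pi')^\dagger$), idempotent (since $(\Pi \Pi')^2 = \Pi^2 (\Pi')^2 = \Pi \Pi'$), and of rank at most one; if its rank is zero then the images are orthogonal, and if its rank is one, then, as a rank-one projection dominated by both $\Pi$ and $\Pi'$, it must equal each of them. Applying this to $\Pi_c^1, \Pi_c^2$ yields $S_c^2 = S_c^1$ or $S_c^2 \perp S_c^1$, as desired. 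I do not anticipate a serious obstacle here; the only subtle point is making sure the critical subspaces really are one-dimensional (they are, by Definition~\ref{def:critsub}) and that both lie in the respective centers, not merely in the algebras, which is what lets the commutation argument go through cleanly.
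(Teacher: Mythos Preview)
Your proof is correct and follows essentially the same approach as the paper: both use Fact~\ref{fact:algcomm} to get that ${\cal A}_q^1$ and ${\cal A}_q^2$ commute, deduce that the rank-one central projections $\Pi_c^1,\Pi_c^2$ commute (the paper phrases this as ``$\Pi_c^2$ preserves $S_c^1$'' via Fact~\ref{fact:trivial2}, which is equivalent), and conclude by the elementary fact that a rank-one projection preserving a one-dimensional subspace must have image equal or orthogonal to it. Your explicit formulation of the final step as a standalone lemma about commuting rank-one projections is slightly cleaner than the paper's phrasing, but the content is identical.
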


\begin{proof}
Let ${\cal A}_q^1$, ${\cal A}_q^2$ denote the algebras of $H_1(q),H_2(q)$ on qubit $q$.
The algebra ${\cal A}_q^2$ commutes with ${\cal A}_q^1$, and so it commutes with any element in the center of ${\cal A}_q^1$, in particular, the projection on $S_c^1$.
By Fact \ref{fact:trivial2} any operator in ${\cal A}_q^2$ preserves $S_c^1$. 
Let $S_c^2$ be a critical subspace of $H_2(q)$, and let 
$\Pi_c^2\in \mathbf{Z}({\cal A}_q^2)$ be the projection on it. 
Since this projection 
is contained in ${\cal A}_q^2$, it too preserves $S_c^1$. 
We have that a projection on a one-dim subspace $S_c^2$ 
preserves a one dimensional subspace $S_c^1$ 
and so either the two are equal or they are orthogonal. 
\end{proof}

\begin{fact}\label{fact:tight}
Let $H_1(q)$ be an operator with a critical subspace on $q$, 
and suppose that $H_1$ preserves two $1$-dim. subspaces of $q$: $S_0,S_1$ such 
that $S_0$ and $S_1$ are neither equal nor orthogonal.
Then $S_c = S_0^c \cap S_1^c$ is a critical subspace of $H_1$. 
\end{fact}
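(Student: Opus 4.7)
The plan is to carry out a direct structural analysis of the induced algebra ${\cal A}_q^{H_1}$, exploiting the fact that $\dim({\cal H}_q)=3$ leaves only a small finite list of possibilities for this algebra and its commutant.

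First I would show that the rank-one projections $P_0,P_1$ onto $S_0,S_1$ both lie in the commutant $({\cal A}_q^{H_1})'$. Since $H_1$ is a projection (hence Hermitian), preserving $S_0$ forces it to also preserve $S_0^c$ by Fact \ref{fact:trivial1}, so $P_0\otimes I$ commutes with $H_1$. Writing $H_1=\sum_\alpha A_\alpha\otimes B_\alpha$ with the $B_\alpha$ linearly independent, the commutator $[P_0\otimes I, H_1]=\sum_\alpha [P_0,A_\alpha]\otimes B_\alpha$ must vanish; linear independence of the $B_\alpha$ then yields $[P_0,A_\alpha]=0$ for each generator of ${\cal A}_q^{H_1}$, so $P_0\in({\cal A}_q^{H_1})'$, and identically $P_1\in({\cal A}_q^{H_1})'$.

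Next, I would apply Fact \ref{fact2} to put ${\cal A}_q^{H_1}$ in canonical form relative to a decomposition of the $3$-dimensional space ${\cal H}_q$. The hypothesis that $H_1$ admits a critical subspace rules out the irreducible cases $\mathbb{C}I$ and $L({\cal H}_q)$, so only three structural possibilities remain: (A) three $1$-dim blocks, giving a fully abelian algebra; (B) a $1$-dim block and a $2$-dim block on which the algebra acts trivially, so the algebra is abelian and generated by the two block projections; (C) a $1$-dim block and a $2$-dim block on which the algebra acts as all of $L({\cal H}^2)$. Direct inspection of each commutant shows that its rank-one projections are: in (A), three mutually orthogonal ones; in (B), the projection onto the $1$-dim block together with every $1$-dim projection inside the $2$-dim block; in (C), just the projection onto the $1$-dim block (since in (C) the commutant coincides with the center). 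Cases (A) and (C) are therefore eliminated by the hypothesis: (A) would force $S_0\perp S_1$, while (C) would force $P_0=P_1$ and hence $S_0=S_1$.

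Only case (B) survives. A short sub-analysis within this case shows that, under the hypothesis that $S_0,S_1$ are neither equal nor orthogonal, both $P_0$ and $P_1$ must correspond to distinct $1$-dim subspaces of the $2$-dim block ${\cal H}^2$ (the other combinations of locations within the commutant all yield either equality or orthogonality). Hence $S_0+S_1={\cal H}^2$, and therefore $S_c=S_0^c\cap S_1^c=({\cal H}^2)^c={\cal H}^1$. The projection onto ${\cal H}^1$ is a rank-one element of the center of ${\cal A}_q^{H_1}$, so $S_c$ is by Definition \ref{def:critsub} a critical subspace of $H_1$. The main obstacle is really just keeping the case enumeration tidy; the proof works precisely because $\dim({\cal H}_q)\le 3$ makes the list of structural possibilities finite and forces the hypothesis to select exactly the single compatible case.
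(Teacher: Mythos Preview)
Your proof is correct, but it takes a genuinely different route from the paper's.

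The paper argues directly at the level of the operator $H_1$: since $H_1$ preserves $S_0$ and $S_1$, it preserves their two-dimensional span $S_{0,1}$ and its orthogonal complement $S_2$. Restricted to $S_{0,1}$, the operator is block-diagonal with respect to two \emph{distinct} orthogonal bases of a two-dimensional space, so by the qubit uniqueness argument (Claim~\ref{cl:same}) it must act trivially on $q$ there. This yields the explicit form $H_1=\Pi_{0,1}^q\otimes\Pi_{0,1}^E+\Pi_2^q\otimes\Pi_2^E$; the critical-subspace hypothesis then forces the two environment projections to be linearly independent, so the induced algebra is exactly the span of $\Pi_{0,1}^q,\Pi_2^q$, whose unique rank-one central projection is $\Pi_2^q$.

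Your approach is instead taxonomic: you first place $P_0,P_1$ in the commutant of ${\cal A}_q^{H_1}$, then use Fact~\ref{fact2} to enumerate the three possible reducible algebra structures on a qutrit, compute the rank-one projections of each commutant, and eliminate cases (A) and (C) by the neither-equal-nor-orthogonal hypothesis. The paper's argument is shorter and reuses the qubit machinery (Claim~\ref{cl:same}) already developed for Theorem~\ref{MainClaim}; your argument avoids that detour and instead makes completely explicit why the result is special to $\dim\le 3$, at the cost of a longer case analysis. Both are valid and reach the identical conclusion that the induced algebra is of type (B) with $S_0,S_1$ spanning the two-dimensional block.
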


\begin{proof}
Since $H_1$ preserves both $S_0$ and $S_1$, it also 
preserves their two dimensional span, which we shall
 denote by $S_{0,1}$, and by Fact \ref{fact:trivial1}, it also preserves 
$S_{0,1}^c$ which we denote by $S_2$. 

Let us examine the operator $H_1$ restricted to $S_{0,1}$: $H_1|_{S_{0,1}}$.
This operator too preserves $S_0$ and $S_1$, and so 
since $S_0$ and $S_1$ are neither equal nor orthogonal, 
we get that this operator is block-diagonal w.r.t. 
two different orthogonal bases : $S_0,S_0^c$ and $S_1,S_1^c$.
Since $S_{0,1}$ is a two dimensional subspace, 
we can proceed in a similar way to the proof of Claim (\ref{cl:same}), 
and conclude that $H_1$ restricted to $S_{0,1}$ is trivial on $q$, 
and can be written as 
$$ H_1|_{S_{0,1}} = I^q \otimes \Pi_{0,1}^E $$
where $\Pi_{0,1}^E$ is a projection on the system not including $q$.
All in all we have:
$$ H_1 = \Pi_{0,1}^q \otimes \Pi_{0,1}^E + \Pi_2^q \otimes \Pi_2^E$$
where $\Pi_{0,1}^q$ projects on $S_{0,1}$ and $\Pi_2^q$ projects on 
its orthogonal complement $S_2$. 

It cannot be that $\Pi_{0,1}^E$ and  $\Pi_2^E$
are linearly dependent, since this would mean that they are 
in fact equal, which would imply that $H_1$ is trivial, and thus 
does not have a critical subspace, contradicting the assumption 
of the statement. 

Hence, $\Pi_{0,1}^E$ and  $\Pi_2^E$ are linearly independent, and so by 
Definition \ref{def:induced} the algebra induced by $H_1$ on $q$, 
 ${\cal A}_q^1$, is spanned by $\Pi_{0,1}^q,\Pi_2^q$. 
Its center thus included exactly one rank-$1$ projection, the projection on 
$S_2$. 
\end{proof}

\begin{fact}\label{fact:neithernor}
Let $H(q)$ act non-trivially on a qudit $q$, 
and have an induced algebra ${\cal A}_q$ on $q$ which 
is reducible. By the notation of definition (\ref{def:qutritsep}), denote the 
subspaces in the decomposition of the algebra by ${\cal H}_q^{\alpha}$ and the 
corresponding projections $\Pi_{\alpha}\in \mathbf{Z}({\cal A}_q)$. 
 Consider an unrelated one-dimensional subspace of ${\cal H}_q$ denoted 
$S_0$. Then if there exists an $\alpha$ for which $\Pi_\alpha$ 
does not preserve $S_0$, 
then $H(q)$ has a critical subspace on $q$, denoted $S+$, 
and  $S_0$ and $S_+$ are neither orthogonal nor equal.
\end{fact}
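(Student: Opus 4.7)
The plan is to apply Fact \ref{fact2} to the reducible algebra ${\cal A}_q$ to produce a candidate critical subspace, and then argue by contradiction using the hypothesis that some $\Pi_\alpha$ fails to preserve $S_0$. First I would invoke Fact \ref{fact2} to write ${\cal H}_q = \bigoplus_{\alpha} {\cal H}_q^{\alpha}$ with central projections $\Pi_{\alpha}\in\mathbf{Z}({\cal A}_q)$, and note that since $\dim {\cal H}_q \leq 3$ and the algebra is reducible (so the decomposition is non-trivial), the only possible dimension patterns are $1+1$ (when $d=2$), $1+1+1$, or $2+1$. In every pattern there is at least one rank-$1$ central projection, which by Definition \ref{def:critsub} produces a critical subspace; so the existence part of the claim is immediate, and the real content is the geometric relation between $S_+$ and $S_0$.

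Next I would suppose for contradiction that every critical subspace of $H(q)$ is either equal to $S_0$ or orthogonal to $S_0$, and deduce that every $\Pi_{\alpha}$ (not merely the rank-$1$ ones) then preserves $S_0$, which contradicts the hypothesis. For a rank-$1$ $\Pi_{\alpha}$, its image is by definition a critical subspace, so by the contradiction assumption it either equals $S_0$ or is orthogonal to $S_0$; in the first case $\Pi_\alpha S_0 = S_0$, in the second $\Pi_\alpha S_0 = 0$, so $\Pi_\alpha$ preserves $S_0$ either way. This disposes of the $1+1$ and $1+1+1$ patterns entirely.

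The only situation requiring a little more care is a rank-$2$ central projection, which in our setting occurs only in the $2+1$ pattern: a rank-$2$ projection $\Pi_1$ onto a $2$-dimensional ${\cal H}_q^1$, together with a rank-$1$ projection $\Pi_2$ onto the $1$-dimensional complement ${\cal H}_q^2$. The previous paragraph applied to $\Pi_2$ tells me that ${\cal H}_q^2$ is equal to $S_0$ or orthogonal to $S_0$. If ${\cal H}_q^2 = S_0$, then $S_0 \perp {\cal H}_q^1$ and $\Pi_1 S_0 = 0$; if ${\cal H}_q^2 \perp S_0$, then $S_0 \subseteq {\cal H}_q^1$ and $\Pi_1 S_0 = S_0$. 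Either way $\Pi_1$ preserves $S_0$, completing the contradiction. The only nontrivial step is this handling of the $2+1$ case; everything else is forced by the low dimensionality of $q$ and by Fact \ref{fact2}.
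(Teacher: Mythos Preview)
Your argument is correct and rests on the same observation the paper uses: in dimension at most $3$, for any central projection $\Pi_{\alpha}$ either $\Pi_{\alpha}$ or $I-\Pi_{\alpha}$ has rank $1$. The paper's proof is simply the direct (constructive) version of your contrapositive: it takes the specific $\Pi_{\alpha}$ that fails to preserve $S_0$, notes that $I-\Pi_{\alpha}$ then also fails to preserve $S_0$, observes that one of the two is rank $1$ and hence a critical-subspace projection, and concludes immediately that its image $S_+$ is neither equal nor orthogonal to $S_0$; so your case analysis on the dimension patterns $1{+}1$, $1{+}1{+}1$, $2{+}1$ can be replaced by that single line.
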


\begin{proof} 
If $\Pi_\alpha$ does not preserve $S_0$, then 
this means that also its complement $I-\Pi_{\alpha}$ does not preserve $S_0$. 
Since $\mathbf{Z}({\cal A}_q)$ is trivially closed under complement with 
$I$, also $I-\Pi_{\alpha}\in \mathbf{Z}({\cal A}_q)$. 
Then either $\Pi_{\alpha}$ or $I-\Pi_{\alpha}$ is a
rank-$1$ projection in $\mathbf{Z}({\cal A}_q^3)$
 not preserving $S_0$.
The image of this rank-$1$ projection is a critical 
subspace by definition (\ref{def:critsub}). Let us denote it by $S_+$.
Then $S_0$ and $S_+$ are neither orthogonal nor equal.
\end{proof}

\subsection{All vertices in a planar CLH(3,3) instance are of degree at most 5}
\label{subsec:planar}

Using the results of the previous subsection, 
we now show that all vertices are of degree at most $5$.
We prove the following claim for that purpose. 

\begin{claim}\label{cl:divide}{\bf Left-Right Partition implies consensus on decomposition}
Consider a set of operators on a qudit $q$ with $d\leq 3$, 
which are separated into two non-empty sets, such 
that any two operators from these two sets, one from each side, 
share only $q$. 
Then there is a non-trivial direct-sum decomposition of $q$ which is 
preserved by all those operators on $q$. 
\end{claim}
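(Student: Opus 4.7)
The plan is to bypass the butterfly-chain route used for qubits (which fails here, since in dimension $3$ butterfly-connectedness no longer forces separability) and reduce directly to Claim~\ref{cl:algdec} for just $k=2$ commuting algebras, then use $d\leq 3$ to guarantee that the resulting decomposition is non-trivial.

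First I will collect each side into a single $C^*$-algebra on ${\cal H}_q$: let ${\cal A}_L$ be the algebra generated by $\{I\}\cup\bigcup_{H\in S_L}{\cal A}_q^H$, where ${\cal A}_q^H$ is the induced algebra of $H$ on $q$ as in Definition~\ref{def:induced}, and define ${\cal A}_R$ symmetrically from the right set. Because every $H_l\in S_L$, $H_r\in S_R$ share only the qudit $q$, Fact~\ref{fact:algcomm} gives that ${\cal A}_q^{H_l}$ commutes with ${\cal A}_q^{H_r}$; since this holds for every pair of generators, the two generated algebras ${\cal A}_L$ and ${\cal A}_R$ are themselves mutually commuting. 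Applying Claim~\ref{cl:algdec} to the pair $\{{\cal A}_L,{\cal A}_R\}$ then yields a separating decomposition
$$ {\cal H}_q=\bigoplus_\alpha {\cal H}_q^\alpha,\qquad {\cal H}_q^\alpha={\cal H}_\alpha^0\otimes{\cal H}_\alpha^L\otimes{\cal H}_\alpha^R, $$
with ${\cal A}_L$ acting as $\mathbf{L}({\cal H}_\alpha^L)$ on the middle factor of each summand and ${\cal A}_R$ acting symmetrically on the right. Since the induced algebra of every operator in $S_L\cup S_R$ sits inside one of these two ambient algebras, each such operator preserves the whole decomposition, which is the preservation property the claim demands.

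It remains to check that the decomposition has at least two summands, and this is the only place where the hypothesis $d\leq 3$ is used. If the decomposition were to collapse to a single summand, then ${\cal H}_q={\cal H}^0\otimes{\cal H}^L\otimes{\cal H}^R$ globally, with ${\cal A}_L$ and ${\cal A}_R$ acting as full matrix algebras on the second and third factors. Because both $S_L$ and $S_R$ are non-empty and every operator in them acts non-trivially on $q$, each of ${\cal A}_L,{\cal A}_R$ properly contains the scalars, forcing $\dim({\cal H}^L),\dim({\cal H}^R)\geq 2$ and hence $\dim({\cal H}_q)\geq 4$, contradicting $d\leq 3$. The one low-level point one must verify is that non-triviality of $H$ on $q$ really produces a non-scalar element in the induced algebra, but this follows at once from Definition~\ref{def:induced}: in any decomposition $H=\sum_\alpha A_\alpha\otimes B_\alpha$ with linearly independent environment factors, non-triviality of $H$ on $q$ forces some $A_\alpha$ to be non-scalar. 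So the only real obstacle --- non-triviality of the separating decomposition --- is resolved by a clean dimension count that is tight exactly at dimension $4$, nicely explaining why the statement stops at $d=3$.
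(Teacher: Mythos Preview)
Your proof is correct and takes a genuinely different route from the paper. The paper argues via the machinery of \emph{critical subspaces} (Definition~\ref{def:critsub} and Facts~\ref{fact:reduce},~\ref{fact:bflycrit}): it picks a critical subspace $S_0$ of some operator on side~$1$, and then runs a three-way case split on the critical subspaces appearing on side~$2$ (all equal to some $S_c$; one equal to $S_0$; or two distinct ones, which then span $S_0^c$) to locate a one-dimensional subspace preserved by everyone. Your argument instead aggregates each side into a single $C^*$-algebra, invokes Claim~\ref{cl:algdec} once for the commuting pair $(\mathcal{A}_L,\mathcal{A}_R)$, and then rules out a trivial (one-block) separating decomposition by the dimension count $\dim\mathcal{H}^L\cdot\dim\mathcal{H}^R\ge 2\cdot 2=4>d$. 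This is cleaner and more conceptual: it isolates exactly where $d\le 3$ enters and makes the threshold at $d=4$ transparent, without any case analysis. The paper's approach, on the other hand, develops the critical-subspace language that it genuinely needs later (in Lemma~\ref{lemma:cq2}), so for the paper the case analysis is not wasted effort; but for this claim in isolation your argument is the more economical one. One small remark: both your proof and the paper's silently assume that every operator in the two sets acts non-trivially on $q$ (you need it for $\mathcal{A}_L,\mathcal{A}_R$ to be non-scalar; the paper needs it to invoke Fact~\ref{fact:reduce}); this is harmless in context since the claim is only applied after trivially-acting qudits have been stripped.
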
 

\begin{proof}
For a qubit $q$ the lemma is the same as corollary (\ref{PartitionClaim}), so now we focus on $q$ being a qutrit.
Let us denote the sides of the division as side $1$ and side $2$.
We would like to show that there exists a direct-sum decomposition, preserved by all operators on both sides.
First, for any pair of operators $H_1(q)$ from side $1$ and $H_2(q)$ from side $2$ we have $H_1\bfly H_2$ so by Fact (\ref{fact:reduce}) 
both $H_1$ and $H_2$ have critical subspaces.
Let us take some operator on side $1$, $H_1(q)$ whose 
critical subspace is $S_0$.
We consider three possible cases and show that in each case, 
a "consensus" decomposition emerges. 
\begin{enumerate}

\item Case $1$: 
All operators on side $2$ have the same critical subspace of $q$, denoted 
$S_c$.
Then all operators on side $1$ must preserve this 
subspace by Fact (\ref{fact:reduce}), and so 
$S_c$ is preserved by all operators on $q$.

\item Case $2$: 
There exists an operator in side $2$ which has a critical subspace 
equal to $S_0$.
In this case on each side of the devision $S_0$ is a critical subspace, 
so all operators (on both sides) must preserve this subspace by 
Fact (\ref{fact:reduce}). 

\item Case $3$: Neither of the two first cases hold. 
Then there are at least two distinct critical subspaces on 
side $2$, namely $S_1$ and $S_2$ and none of them is equal to $S_0$.
Since $S_0$ is a critical subspace of an operator in side $1$
then by fact (\ref{fact:bflycrit}) $S_1,S_2$ are both orthogonal to $S_0$. 
Since they are not the same, they span the entire orthogonal subspace to $S_0$.
Any operator on side $1$ must preserve both $S_1$ and $S_2$, by Fact 
(\ref{fact:reduce}) and thus 
it preserves the span of $S_1$ and $S_2$, and so must also preserve
the orthogonal subspace of this two dimensional subspace, namely $S_0$.
By Fact (\ref{fact:reduce}), any operator on side $2$ is $\bfly$
with $H_1$ and thus must preserve 
$S_0$ too. 
\end{enumerate}
\end{proof} 
 
\begin{corollary}\label{cor:divide3}
Let $q$ be a qubit of dimension $d\le 3$. 
If all operators on $q$ can be divided to two non-empty sets, such that 
any operator from one set intersects any operator in the other set in 
$q$ alone, then $q$ is separable.
\end{corollary}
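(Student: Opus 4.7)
The plan is to obtain the corollary as an immediate consequence of Claim \ref{cl:divide}. The hypotheses match exactly: we are given a partition of the operators acting on $q$ into two non-empty sides such that any pair of operators from opposite sides intersects only in $q$. Claim \ref{cl:divide} then produces a non-trivial direct-sum decomposition $\mathcal{H}_q = \bigoplus_{\alpha} \mathcal{H}_q^{\alpha}$ that is preserved by every operator on either side. Since the partition covers \emph{all} operators on $q$, this decomposition is preserved by every operator acting on $q$. Plugging this into Definition \ref{def:qutritsep}, we conclude that $q$ is separable.

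The only bookkeeping item is that Definition \ref{def:qutritsep} is phrased for qutrits while the corollary allows $d \le 3$. For $d=2$ the definition reads identically (a non-trivial direct-sum decomposition of a two-dimensional space is automatically into two one-dimensional components, which is exactly the qubit separability condition used in Section \ref{subsec:removingsep}), and this case was already established by Corollary \ref{PartitionClaim}. Thus no additional argument is required to handle qubits versus qutrits.

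There is no genuine obstacle: all of the substantive work --- the case analysis on whether the critical subspace $S_0$ of a chosen operator on side $1$ coincides with, is orthogonal to, or is incomparable with the critical subspaces appearing on side $2$, and the use of Facts \ref{fact:reduce} and \ref{fact:bflycrit} to propagate preservation across the partition --- has already been carried out inside the proof of Claim \ref{cl:divide}. The corollary is therefore only a translation of that claim into the vocabulary of separability, and the proof will amount to little more than citing the claim and the definition.
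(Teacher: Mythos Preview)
Your proposal is correct and matches the paper's approach: the paper states this corollary immediately after Claim~\ref{cl:divide} without a separate proof, treating it as the direct translation of that claim into the language of separability (Definition~\ref{def:qutritsep}). Your observation that the $d=2$ case is already covered by Corollary~\ref{PartitionClaim} is a fine bit of extra bookkeeping, but the paper doesn't bother with it.
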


\begin{lemma}\label{lemma:cq2} {\bf No open operator paths of length $>4$} 
Let $S$ be an instance of planar $CLH(3,3)$, and let $q$ with $d\leq 3$ 
be a qudit in that instance.  Assume
 there exists a subset of the operators on $q$ that 
constitute an open operator path on $q$ of length at least $5$,
then $q$ is separable.
\end{lemma}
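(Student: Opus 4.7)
The plan is to adapt the qubit argument for length-$4$ paths (implicit in Claim \ref{cl:nosep2}) to qutrits, where the extra dimension forces reasoning via critical subspaces instead of unique block-diagonal decompositions. The goal is to exhibit a nontrivial direct-sum decomposition of $\mathcal{H}_q$ preserved by every operator on $q$. I first fix the length-$5$ open operator path $H_1, H_2, H_3, H_4, H_5$ on $q$ with the standard labeling $H_i = H_i(q, a_i, a_{i+1})$; the path constraints force $a_1,\dots,a_6$ to be pairwise distinct, so every two operators at path index distance at least $2$ share only $q$ and are thus $\bfly$-related. In particular $H_1, H_3, H_5$ are pairwise $\bfly$, and so are $(H_1, H_4)$, $(H_2, H_4)$, $(H_2, H_5)$. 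For $d = 2$ the lemma is immediate, since Claim \ref{cl:nosep2} already rules out length-$4$ open paths on a non-separable qubit; the work is in $d = 3$.

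Next, Fact \ref{fact:reduce} yields a critical subspace $S_i$ for each $H_i$, and Fact \ref{fact:bflycrit} forces $S_1, S_3, S_5$ to be pairwise either equal or orthogonal. In the clean case $S_1 = S_3 = S_5 =: S$, the subspace $S$ is preserved by $H_1, H_3, H_5$ (each preserves its own critical subspace), by $H_2$ (since $H_2 \bfly H_5$ and $S = S_5$), and by $H_4$ (since $H_4 \bfly H_1$ and $S = S_1$). For any further operator $H$ on $q$, a pigeonhole argument on the three disjoint pairs $\{a_1,a_2\}$, $\{a_3,a_4\}$, $\{a_5,a_6\}$ shows that the two non-$q$ qubits of $H$ cannot intersect all three, so $H$ is $\bfly$-related to at least one of $H_1, H_3, H_5$ and hence preserves $S$. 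Therefore $\mathcal{H}_q = S \oplus S^\perp$ is a nontrivial decomposition preserved by every operator on $q$, witnessing separability.

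In the remaining cases at least two of $S_1, S_3, S_5$ differ; up to relabeling either two coincide with the third orthogonal, or all three are mutually orthogonal and form an orthonormal basis of $\mathbb{C}^3$. Here I would combine Fact \ref{fact:tight} with the extensive $\bfly$-preservation relations ($H_1$ preserves $S_3, S_4, S_5$; $H_3$ preserves $S_1, S_5$; $H_5$ preserves $S_1, S_2, S_3$) to identify a common preserved subspace or a finer preserved decomposition. Preservation by $H_2, H_4$, and any additional operator on $q$ would then follow from a Claim \ref{cl:same}-style rigidity argument: an operator preserving two distinct non-orthogonal $1$-dimensional subspaces inside a $2$-dimensional subspace acts trivially on $q$ there, and hence preserves every $1$-dimensional subspace of that $2$-dimensional subspace. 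The main obstacle will be the ONB sub-case: $H_2$ is $\bfly$-related only to $H_4, H_5$ among the path operators (and symmetrically for $H_4$), so it is not immediate that $H_2$ preserves all three basis directions. The key will be to constrain the positions of $S_2$ and $S_4$ relative to $\{S_1, S_3, S_5\}$, using the equal-or-orthogonal dichotomy together with the rigidity lemma, tightly enough that any failure of $H_2$ or $H_4$ to preserve the ONB decomposition forces one of them to act trivially on $q$, contradicting the hypothesis that all path operators act non-trivially on $q$.
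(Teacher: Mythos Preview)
Your clean case ($S_1 = S_3 = S_5$) is correct, and the pigeonhole that every operator on $q$ is $\bfly$-related to at least one of $H_1, H_3, H_5$ is a nice observation. But the remaining cases are only sketched, and that is where the argument must actually be completed. In the ONB sub-case you flag, the pigeonhole tells you only that each additional operator preserves \emph{some} one of $S_1, S_3, S_5$, not a common one; and for $H_2, H_4$ you directly control only two of the three basis directions via their $\bfly$-relations, so the equal-or-orthogonal dichotomy on $S_2, S_4$ does not obviously close the gap. More structurally, you never invoke planarity, yet the lemma is stated for planar instances and the paper's proof uses planarity in an essential way when handling operators on $q$ beyond the path; you should expect it to enter, and its absence is a warning sign for the unfinished cases.

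The paper avoids your case split entirely. It first applies the left-right partition lemma (Claim~\ref{cl:divide}) to $\{H_1, H_2\}$ versus $\{H_4, H_5\}$---every cross-pair shares only $q$---obtaining in one stroke a single $1$-dimensional subspace $S_0$ preserved by all four. Then $H_3$ either preserves $S_0$, or (Fact~\ref{fact:neithernor}) has a critical subspace $S_+$ neither equal nor orthogonal to $S_0$; since $H_1, H_5$ preserve both $S_0$ and $S_+$, Fact~\ref{fact:tight} forces their common critical subspace to be $S_0^c \cap S_+^c$, which $H_2, H_3, H_4$ then preserve via their $\bfly$-relations with $H_1$ or $H_5$. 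For further operators on $q$ the paper runs the same ``two non-orthogonal preserved lines force a unique critical subspace'' mechanism inductively, using that by planarity every additional operator on $q$ is $\bfly$-related to \emph{each} of $H_2, H_3, H_4$ (the faces around $q$ are cyclically ordered and $H_2, H_3, H_4$ lie in the interior of the length-$5$ arc). That three-fold $\bfly$-relation, not the single one your pigeonhole supplies, is what drives the induction.
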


\begin{proof}
Let $H_1,H_2,H_3,H_4,H_5$ denote a length $5$ open operator path on $q$, 
so $H_1$ and $H_5$ intersect only on $q$.
If $q$ is a qubit then by claim (\ref{cl:nosep1}) $q$ is separable, since 
 two qubits of $H_1$ are not part of the operator crown $\{H_3,H_4,H_5\}$.
So we now treat the case where $q$ is a qutrit.
We first show that all these operators agree on some non-trivial decomposition.
The subset $H_1,H_2,H_4,H_5$ yields a division $S_1 = \left\{H_1,H_2\right\}$ and $S_2 = \left\{H_4,H_5\right\}$ 
such that any pair of operators, one from each set, intersect only on $q$.
Thus, by Claim (\ref{cl:divide}) all these $4$ operators agree on some non-trivial decomposition of $q$, 
denoted by $S_0$ and its orthogonal complement $S_0^c$.

Let ${\cal A}_q^3$ be the algebra of $H_3(q)$ on $q$.
Let ${\cal H}_q = \bigoplus_{\alpha} {\cal H}_q^{\alpha}$ be the direct-sum decomposition of ${\cal H}_q$ corresponding to the algebra ${\cal A}_q^3$ whose existence is promised by (\ref{fact2}).
Since $H_3\bfly H_1$, then by fact (\ref{fact:reduce}) $H_3$ has a reducible algebra on $q$.
So the decomposition ${\cal H}_q^{\alpha}$ is non-trivial.  

Let us examine the behavior of the projections $\Pi_{\alpha}$ on the subspaces ${\cal H}_q^{\alpha}$ w.r.t. $S_0$.
If $S_0$ is preserved by $\Pi_{\alpha}$ for all $\alpha$ we are done, since 
by Fact (\ref{fact2}) $H_3$ preserves $S_0$, and so all $5$ operators preserve the decomposition $S_0,S_0^c$.
So suppose this is not the case.

Then there exists a subspace ${\cal H}_q^{\alpha}$ whose corresponding projection $\Pi_{\alpha}\in \mathbf{Z}({\cal A}_q^3)$ does not preserve $S_0$.
We are now in the situation handled by Fact \ref{fact:neithernor}; 
we conclude that either $\Pi_{\alpha}$ or $I-\Pi_{\alpha}$ is a
rank-$1$ projection in $\mathbf{Z}({\cal A}_q^3)$ not preserving $S_0$.
The image of this rank-$1$ projection is a critical subspace denoted by $S_+$.
and $S_0$ and $S_+$ are neither orthogonal nor equal.

Since $H_1\bfly H_3$ and $H_1\bfly H_5$ then by fact (\ref{fact:reduce}) $H_1$ and $H_5$ preserve $S_{+}$ and so by our 
assumption they preserve both $S_0$ and $S_{+}$.
Since their algebras on $q$ are reducible by Fact (\ref{fact:reduce}), 
we conclude by Fact (\ref{fact:tight}) 
that they must both have the subspace $S_2 = S_0^c \cap S_{+}^c$
as a critical subspace. 
We get that $H_2,H_3,H_4$ all of
 which intersect either $H_1$ or $H_5$ 
only on $q$, must preserve $S_2$ by fact (\ref{fact:reduce}).
Therefore, all $5$ operators preserve $S_2$.

To complete the proof we now need to show not only $H_1,.., H_5$ but all 
operators on $q$ can agree on some preserved non-trivial subspace. 
We will prove this by induction, "adding back" the other operators on $q$
one by one. Let $H_6,...,H_L$ be all other operators on $q$. 
We will assume that all operators $H_1,...,H_j$ (for $j\ge 5$) 
agree on a non-trivial 
decomposition (namely, preserve its subspaces), and in particular, 
preserve a one dimensional subspace $S_0$, 
and prove that 
$H_1,...H_{j+1}$ must also agree on one such decomposition (which might 
be different). 

We divide to two cases. Either $H_{j+1}$ preserves $S_0$, in which case 
we are done, or it doesn't. Let us therefore assume it doesn't. 
Since $S$ is planar, it must be that all operators $H_6,...,H_L$ satisfy 
$\bfly H_2,H_3,H_4$, and so 
$H_{j+1}$ has at least one $\bfly$ relation. By Fact (\ref{fact:reduce})
its algebra 
on $q$ ${\cal A}_q^{j+1}$ is reducible, and so it has 
a non-trivial decomposition of this algebra; using the notation of Fact 
(\ref{fact2}) we denote the subspaces in the decomposition 
by ${\cal H}_q^{\alpha}$. Since $H_{j+1}$ does not preserve $S_0$, 
this means that there exists a subspace ${\cal H}_q^{\alpha}$ whose 
corresponding projection does not preserve $S_0$.
We are now again in the situation of Fact (\ref{fact:neithernor})
and we conclude that  
the center of ${\cal A}_q^{j+1}$ has a rank-$1$ projection 
on a one dimensional space which is neither equal nor orthogonal to $S_0$.
This is a critical subspace of $H_{j+1}$ denoted by $S_{j+1}$.

By assumption, $H_2,H_3,H_4$ preserve $S_0$, and by 
Fact (\ref{fact:reduce}) $H_2,H_3,H_4$ preserve $S_{j+1}$. 
We can now apply Fact (\ref{fact:tight}) to deduce that 
the critical subspace of all $3$ 
operators must be orthogonal to the space spanned by $S_0$ and $S_{j+1}$. 
Let us denote this space by $S_c$.
All other operators we mentioned have a $\bfly$ relation to at least
one of these $3$ operators, thus by fact (\ref{fact:reduce}) 
all operators $H_1,...,H_{j+1}$ preserve $S_c$. 
\end{proof}

\begin{corollary}\label{cor:deg}
If an instance of planar $CLH(3,3)$ has no separable qudits, then 
all the vertices of its interaction graph are of degree at most $5$.  
\end{corollary}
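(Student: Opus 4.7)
The plan is to argue by contradiction: assume some vertex $q$ in $G_S$ has degree $d \geq 6$, and derive that $q$ must be separable, contradicting the hypothesis. I will rely on the planar embedding of $S$: by the definition of planar $CLH(3,3)$, operators acting on $q$ correspond to triangular op-faces incident to $q$, and these op-faces, together with the noop-faces at $q$, sit in a cyclic order around $q$ (or linear, with the outer noop-face at one end, if $q$ is on the boundary).

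First I establish a key geometric constraint: no two consecutive faces at $q$ in this arrangement can both be noop-faces. Indeed, every edge $(q,p)\in G_S$ arises from some operator $H$ acting on a triple $\{q,p,x\}$, and by the definition of planar $CLH$ this operator is associated with an op-face $\{q,p,x\}$ which is one of the two faces bounded by the edge $(q,p)$. Hence each of the $d$ edges at $q$ has an op-face on at least one of its two sides, ruling out consecutive noop-faces.

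I then split into two cases. Case 1: there exist five consecutive op-faces $F_j,F_{j+1},\ldots,F_{j+4}$ around $q$, involving six distinct neighbors $p_j,p_{j+1},\ldots,p_{j+5}$ in order (so the operator $H_{j+t}$ corresponding to $F_{j+t}$ acts on $\{q,p_{j+t},p_{j+t+1}\}$). A direct inspection shows consecutive operators $H_{j+t},H_{j+t+1}$ share exactly $\{q,p_{j+t+1}\}$ while any two non-consecutive ones share only $\{q\}$, which is exactly an open operator path of length $5$ on $q$. Lemma~\ref{lemma:cq2} then immediately gives that $q$ is separable.

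Case 2: no run of five consecutive op-faces exists. Combined with the ``no two consecutive noops'' constraint and $d\ge 6$, this forces the op-faces at $q$ to split into at least two disjoint arcs separated by noop-faces. Each separating noop-face $\{q,p_c,p_{c+1}\}$ absorbs the transition vertices, so any two distinct arcs use disjoint neighbor sets of $q$; thus operators drawn from different arcs share only $\{q\}$. Partitioning the arcs into any two non-empty groups therefore yields a partition of all operators on $q$ into two non-empty sets whose cross-intersections are all $\{q\}$, satisfying the hypothesis of Corollary~\ref{cor:divide3}, which again forces $q$ to be separable. The main subtlety I anticipate is verifying that the six neighbor labels $p_j,\ldots,p_{j+5}$ in Case 1 are genuinely distinct vertices (which follows from the fact that the planar embedding has no repeated vertices on a single face), together with a brief check that the boundary-vertex case is handled uniformly by treating the outer noop-face as just another noop in the cyclic arrangement and does not break the above dichotomy.
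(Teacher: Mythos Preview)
Your proof is correct and follows essentially the same approach as the paper's. Both arguments combine Corollary~\ref{cor:divide3} (a left--right partition of the operators on $q$ forces separability) with Lemma~\ref{lemma:cq2} (an open operator path of length $5$ on $q$ forces separability), using planarity to see that the op-faces at $q$ sit in a cyclic arrangement. The paper phrases the argument as: nonseparable $\Rightarrow$ (via Corollary~\ref{cor:divide3}) the op-faces form a single contiguous arc $\Rightarrow$ (via degree $\ge 6$) that arc has length $\ge 5$ $\Rightarrow$ (via Lemma~\ref{lemma:cq2}) separable; you phrase the logically equivalent dichotomy directly. Your version is in fact slightly more explicit than the paper's: the step ``degree $\ge 6$ and a single arc implies arc length $\ge 5$'' tacitly requires exactly the fact you isolate, namely that every edge at $q$ borders at least one op-face (so there are no two consecutive noop-faces), which the paper leaves implicit.
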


\begin{proof} 
On the negative, let $q$ be a nonseparable qudit 
in an instance $S$ of NE planar $CLH(3,3)$, with degree is at least $6$.
Since $q$ is nonseparable then by corollary (\ref{cor:divide3}) 
all operators on $q$ are operator-path connected.
Since $S$ is a planar instance, it means that there exists an operator 
path on $q$ comprised of all operators on $q$.
If the degree of $q$ is at least $6$, this
means that $q$ is acted upon by an open operator path of length at least $5$ 
so by lemma (\ref{lemma:cq2}) it is separable, contrary to our assumption.
\end{proof} 

\subsection{Regularly-spaced holes in triangle tilings with no high degree vertices}\label{subsec:regspac}
This subsection is entirely geometrical, and considers tiling of  
the plane with triangles. 
We show that if the tiling avoids vertices of degree 
$6$ and above, it must be that the tiling must contain 
regularly spaced holes, with constant density. 

\subsubsection{Background and definitions}

\begin{definition}\label{def:dual}

\textbf{Dual Graph} 

\noindent
Given a planar graph $G$, the dual graph of $G$ as follows:
The set of vertices of $\hat{G}$ are comprised of all faces in $G$. 
An edge between vertices of $\hat{G}$ exists, if and only if 
the respective faces share an edge in $G$.
\end{definition}

\begin{definition}

\textbf{Tessellation} 

\noindent
Given an instance $S$ planar $CLH$, 
a {\it tessellation} is a subset of operators $T\subseteq S$, whose induced  
embedding, i.e., the embedding of $G_T$ in $R^2$, 
is a triangulation of a polygon,  
which contains no "noop" faces. 
For a tessellation $T$ we define an external vertex as one which belongs to the infinite face, and otherwise it is an internal vertex.
\end{definition}

\paragraph{Average Degree and the Euler Characteristic}

A well known formula connects the average degree of a vertex in a planar 
graph, with the average number of edges per face.
Given a tessellation $T$, and its interaction graph $G_T$ with its embedding 
in $R^2$,  
let $a$ be the average number of edges per face (including the infinite face), 
and $b$ be the average number of edges incident on a vertex.
Let $F$ denote the number of faces of $T$ (including the infinite face),
 $V$ the number of vertices, and $E$ the number of edges.
Recall the following definition:
\begin{definition}{\bf Euler Number}
The Euler number $\chi$ is defined to be $\chi = V-E+F$. 
\end{definition}
Assuming that $G_T$ is a connected planar graph,
then by Euler's theorem we have $\chi=2$.

One can easily check that the following relations hold:
$$ a\cdot F = 2\cdot E, b \cdot V = 2\cdot E,$$
since counting the number of edges 
of each face (including the infinite face) counts twice the number of edges, and similarly, counting the number of edges incident on each vertex also counts twice the number of edges.
Using these two, together with the expression for $\chi$ yields the following formula:

\begin{equation}\label{eq:qutrit1}
(a-2)\cdot (b-2) = 
4 \left(1-\frac{\chi}{F}\right)\left(1-\frac{\chi}{V}\right) =
4 \left(1-\frac{2}{F}\right)\left(1-\frac{2}{V}\right).
\end{equation}
where we have used the fact that $\chi=2$. 
\subsubsection{Constant density of holes: Statement and Overview of Proof}

We prove the following claim: 
\begin{claim}\label{cl:outerbound}
Let $\eta \ge 500$ be some constant.
We are given a tessellation $T$, such that there exists an 
operator $w\in T$ for which the following holds: 
a) any face in $G_T$ which is at $\hat{G}_T$-distance at most $\eta$ from $w$ 
is an "op" face, and 
b) all "op" faces in $G_T$ are at $\hat{G}_T$-distance at most $\eta$ from $w$.
Then there exists a vertex $q$ in $T$ whose degree is at least $6$.
\end{claim}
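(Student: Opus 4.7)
I will argue by contradiction: suppose every vertex of $T$ has degree at most $5$, and derive a contradiction once $\eta \ge 500$. The underlying intuition is that a sub-hexagonal triangulation of a disc has positive discrete curvature everywhere, so by a discrete Gauss--Bonnet principle it behaves like a spherical cap and its dual radius is bounded by an absolute constant; since $\eta \ge 500$ is far larger than this constant, we reach a contradiction.

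Apply combinatorial Gauss--Bonnet to any triangulated disc: double-counting edge-face incidences together with Euler's formula $V - E + F = 2$ gives $\sum_v (6 - \deg v) = 2k + 6$, where $k$ is the boundary length. The hypothesis $\deg v \le 5$ forces each summand to be at least $1$, yielding $V \le 2k + 6$, or equivalently $n^{\mathrm{int}} := V - k \le k + 6$, with triangle count at most $3k + 10$. This bound will be applied to every dual-ball $T_r$ of radius $r \le \eta$ around $w$, not just to $T$ itself; denote its boundary length and interior-vertex count by $k_r$ and $n^{\mathrm{int}}_r$.

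Next exploit the layered dual-BFS structure. Conditions (a) and (b) guarantee that for $r \le \eta - 1$ the boundary $\partial T_r$ sits strictly in the interior of $T$ (the infinite face is at dual-distance $> \eta$ from $w$), so every edge of $\partial T_r$ is shared with a triangle of the next layer $L_{r+1} := T_{r+1} \setminus T_r$. A careful bookkeeping of vertices and edges when passing from $T_r$ to $T_{r+1}$ yields the recurrences
\[
k_{r+1} = 2 k_r - 3 b_{r+1} - 2 E^{\mathrm{within}}_{r+1},
\qquad
n^{\mathrm{int}}_{r+1} - n^{\mathrm{int}}_r = b_{r+1} + E^{\mathrm{within}}_{r+1},
\]
where $b_{r+1}$ counts triangles of $L_{r+1}$ sharing two edges with $T_r$ and $E^{\mathrm{within}}_{r+1}$ counts edges lying between two triangles of $L_{r+1}$; applying the Euler bound on each $T_r$ gives the cumulative inequality $\sum_{i=1}^{r}(b_i + E^{\mathrm{within}}_i) \le k_r + 6$.

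The crucial step uses the degree constraint $\deg v \le 5$ to force $b_i + E^{\mathrm{within}}_i$ to be large. Whenever a boundary vertex $v \in \partial T_r$ already has $\deg_{T_r}(v) \ge 4$, the two triangles of $L_{r+1}$ attached to the two boundary-edges at $v$ cannot both share only one edge with $T_r$ without creating $\deg_T(v) \ge 6$; one of them must either be a $b$-triangle at $v$ or share an edge with the other at $v$, each contributing $+1$ to $b_{r+1} + E^{\mathrm{within}}_{r+1}$. An amortized charging argument along $\partial T_r$ will show that after a constant number of initial layers all boundary corners are degree-saturated, so each subsequent layer contributes at least a constant fraction of $k_r$ to the merge count, forcing $k_r$ to decay under the first recurrence. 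Combined with $\sum (b_i + E^{\mathrm{within}}_i) \le k_r + 6$, this forces $k_r = 0$ for some $r$ much smaller than $500$; but $k_r = 0$ would mean $T_r$ is a closed sub-hex triangulation of the sphere, hence at most the icosahedron with $|T_r| \le 20$ (by $\sum (6 - \deg v) = 12$ on a closed surface), contradicting the fact that $T$ is a disc whose "core" $w$ is at dual-distance $> \eta \ge 500$ from the infinite face. The main obstacle will be the precise amortized bookkeeping in this step, tracking how the saturated-boundary-vertex count evolves from layer to layer to pin down an explicit upper bound on the depth.
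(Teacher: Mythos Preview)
Your strategy (discrete Gauss--Bonnet on dual-BFS balls, forcing the boundary to collapse) is different from the paper's and is a reasonable line of attack, but the step you yourself flag as ``the main obstacle'' is a genuine gap, not a routine bookkeeping exercise. Your recurrences $k_{r+1} = 2k_r - 3b_{r+1} - 2E^{\mathrm{within}}_{r+1}$ and $\Delta n^{\mathrm{int}} = b_{r+1} + E^{\mathrm{within}}_{r+1}$ and the cumulative bound $\sum_{i\le r}(b_i+E^{\mathrm{within}}_i)\le k_r+6$ are correct (assuming each $T_r$ is a disc, which is not automatic for dual-BFS balls and needs a separate argument or a fill-in step). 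But without a \emph{lower} bound on $b_{r+1}+E^{\mathrm{within}}_{r+1}$ nothing prevents $k_r$ from simply doubling each step while $V_r\le 2k_r+6$ remains comfortably satisfied. Your local observation that a boundary vertex with $\deg_{T_r}\ge 4$ forces a merge at the next layer is right, but every newly created boundary vertex starts at degree $2$, and controlling how quickly saturation propagates versus how quickly fresh low-degree vertices are produced is exactly the delicate amortization you have not supplied. The endgame is also misstated: $k_r=0$ cannot literally occur for a planar disc; the contradiction should rather be that the recurrences would force $k_r\le 0$ at some $r_0\ll\eta$, whereas in fact $k_r\ge 3$ for all $r\le\eta$ since the infinite (noop) face is at dual distance $>\eta$.

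The paper sidesteps all of this. It does not use dual BFS; instead it grows a region by a \emph{spiral} rule: starting from $T_1=\{w\}$, at each step pick the counterclockwise-next boundary vertex and add all triangles of $T$ incident to it, making that vertex interior. The point is that after the first few steps the chosen vertex already has at least $3$ incident triangles in the current region (one from when it was created, one more from when its predecessor was closed), so closing it adds at most $2$ triangles; a short case analysis then shows the boundary length never exceeds $13$. After $\eta/5\ge 100$ steps one has $F\ge 100$ faces with boundary $\le 13$, and plugging into $(a-2)(b-2)=4(1-2/F)(1-2/V)$ gives average degree $b>5$. So rather than proving the boundary shrinks, the paper proves it stays \emph{bounded} while the interior grows --- which is much easier to control, because the spiral order guarantees the chosen vertex is already nearly saturated, whereas a whole BFS layer contains many fresh degree-$2$ vertices whose bookkeeping you would have to amortize.
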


We will assume that all vertices in $T$
have degree at most $5$, and arrive at a contradiction. 
To do that, we will lower bound $b$, the average 
number of edges incident on a vertex, by a number larger than $5$. 
This would imply that there must be a vertex of degree $6$ or more.  

In order to extract interesting information about $b$, we in fact turn to 
 $a$, the average number of edges per face, and prove it
is very close to $3$. In other words, we would like to show that
$a$ is dominated by the "internal" faces which by our assumption are all 
"op" faces (i.e., their number of edges is  $3$), 
whereas the outside multi-edge face is negligible in 
determining the average number of edges of a face. 
We can then use Equation \ref{eq:qutrit1};  
we will be able to deduce that $b$ must be strictly larger than $5$, 
since $F,V$ are large.  

Our main effort is thus to bound $a$ from above. To this end,  
\begin{definition}
Let $\hat{a}$ be the number of edges of the outside face of $T$
(there is only one such face since $T$ is a tessellation). 
\end{definition}  
and we can write
\begin{equation}\label{eq:ahat}
a = \frac{1}{F} (3 \cdot (F-1) + \hat{a})
\end{equation}
We will show (and that's the main effort in the proof)
that $\hat{a}$ is bounded by a small constant, $13$, 
so when $F$ is large enough, we will get an upper bound on 
$a$ which is very close to $3$. 

The proof that $\hat{a}$ is bounded by a constant goes as follows.  
We devise a procedure that generates a 
sequence of tessellations, starting from the tessellation $T_1=w$, namely 
the single operator given in the statement of the claim. 
The sequence of operator sets is denoted 
$T_1 = w,\hdots, T_i,\hdots, T_m$, for $m=\eta/5$, the first one being $w$ itself, 
where for each $i$ $T_i\subseteq T_{i+1}$, and for all $i$, $T_i\subseteq T$.
Each $T_i$ includes at most $4$ additional operators from $T$, 
compared with $T_{i-1}$.
We show that despite the growth in number of faces of the tessellation, 
the number of external edges cannot cross a certain bound.

To this end, the choice of how to 
construct the tessellation $T_{i+1}$ from $T_i$ is of particular 
importance.
Suppose that given a tessellation one would construct the $T_i$'s by adding 
the operators corresponding to some arbitrary path in the dual graph, 
one by one. 
Then the number of edges of $\hat{a}$ will grow 
proportionally to the number of faces $F$; that would be a bad choice.
Our scheme thus relies on constructing the $T_i$'s in a {\it spiral} 
fashion, by choosing a ``special'' external vertex of $T_i$
at each step, and then adding all operators acting on it to 
``close'' an operator path on that vertex. 
We can show that such a process can continue 
for a large number of 
steps ($\eta/5$) without failing, i.e., 
a ``special vertex'' is well defined for $i<\eta/5$, 
and thus the $T_i$ are strictly increasing, 
while $\hat{a}$ cannot increase beyond some constant value. 

We now proceed to the details. 

\subsubsection{Detailed proof of Claim \ref{cl:outerbound}} 
As mentioned, we assume by contradiction that all vertices in the tessellation 
$T$ have degree at most $5$. 
\begin{definition} 
Consider an external vertex $v$ in $T$. We 
denote by $n_T(v)^k$ the vertex located $k$ vertices away from $v$ on the
path that traverses the external vertices of $T$ in 
a counterclockwise direction.
\end{definition} 

The procedure "Tessellate" is given as input $T$ and an operator in $T$, $w$
and returns a sequence of tessellations $T_i$ is as follows. 

\noindent
\begin{algorithm}\textbf{Tessellate(T,w)}
\noindent
\begin{enumerate}
\item
Init ($i=1$): set $T_1$ to be $w$. 

\item
($i=2$) Choose arbitrarily an external vertex $v_1$ of $T_1$ (all vertices are external at this point) and close an operator path on it, by adding to $T_1$ the necessary operators from $T$ that close the operator path on $v_1$.
Denote the new set of operators as $T_2$.

\item
For $2<i\leq \eta/5$, set $v_i$ as the vertex $n_{T_{i-1}}(v_{i-1})^k$ that is external in $T_i$ with minimal $k$, i.e.
the closest left neighbor of $v_{i-1}$ in the external 
path of $T_{i-1}$, which remained external 
following the closing of the operator path on $v_{i-1}$.
If all external vertices of $T_{i-1}$ were closed following the appending of operators during the creation of $T_i$, then set an arbitrary external vertex in $T_i$ as $v_i$.
Then close an operator path on $v_i$, again, by adding to 
$T_i$ the operators from $T$ that close an operator path on $v_i$ (namely, 
all operators $H(v_i)\in T\backslash T_i$. 
Denote the resulting set as $T_{i+1}$, and so on. 
\end{enumerate}
\end{algorithm}

We claim 
\begin{fact}\label{fact:tes1}
Given $T$ and $w$, the steps of the procedure "Tessellate" above
are well defined for all $i\in [1,\eta/5]$, and the procedure generates strictly increasing 
tessellations $T_1\subset T_2 \subset\cdots \subset T_{\eta/5}\subseteq T$.  
\end{fact}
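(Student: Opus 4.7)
The plan is to establish the fact by induction on $i$, maintaining three invariants for each $T_i$: (I) $T_i\subseteq T$, and $T_i$ is a tessellation, i.e., its embedding in $\mathbb{R}^2$ is a triangulation of a polygon with no ``noop'' faces; (II) whenever $i<\eta/5$, the vertex $v_i$ prescribed by the procedure is well defined and at least one operator of $T$ acting on $v_i$ is not yet in $T_i$, so closing the operator path on $v_i$ strictly enlarges the tessellation by at most $4$ new operators; (III) every face of $T_{i+1}$ lies at $\hat{G}_T$-distance at most $5i \le \eta$ from $w$, and hence by the hypothesis of Claim \ref{cl:outerbound} is an ``op'' face whose corresponding operator belongs to $T$.

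The base case $T_1 = \{w\}$ is immediate: $w$ corresponds to a single ``op'' face at dual-distance $0$ from itself. For the induction step, I would first check that the closure operation at $v_i$ makes sense. Under the standing assumption (for contradiction) that every vertex of $T$ has degree at most $5$, the vertex $v_i$ is incident to at most $5$ operators of $T$; since $v_i$ is external in $T_i$ by construction, at least one of these is missing, and the remaining ones (at most $4$) complete a cyclic operator path around $v_i$. Each newly added face is adjacent to $v_i$ and shares an edge with a face of $T_i$ already incident to $v_i$; chaining through the at most $5$ faces of the fan around $v_i$, the dual-distance from any new face to a face of $T_i$ touching $v_i$ is at most $5$. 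By invariant (III) applied to $T_i$, those already-present faces lie within dual-distance $5(i-1)$ of $w$, hence the new faces lie within $5i \le \eta$, which gives invariant (III) for $T_{i+1}$; since these are ``op'' by hypothesis, the new operators belong to $T$, giving invariant (I).

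For invariant (II) at the next step, I would argue that $T_{i+1}$, being a planar triangulation of a polygon, has a well-defined counterclockwise external boundary path, so the closest external left-neighbor of $v_i$ (or, in the fallback case when all left-neighbors of $v_i$ got absorbed after closure, some arbitrary external vertex) is well defined. Moreover, such an external vertex must have at least one incident operator of $T$ not yet in $T_{i+1}$: otherwise every operator of $T$ acting on it would already be included, which would make it internal rather than external, a contradiction. Thus the procedure continues to strictly increase the tessellation at the next step.

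The main obstacle I expect is invariant (III): giving a clean, uniform proof that the spiral cannot ``outrun'' the dual-distance budget $\eta$ within $\eta/5$ steps. The per-step increase of at most $5$ in dual-radius, derived from the degree-$5$ bound and the size of a fan around $v_i$, gives exactly the right scale, but making this rigorous requires tracking a monotone envelope, e.g., $R_i := \max_{f \in T_i} \mathrm{dist}_{\hat{G}_T}(f, w)$, and showing $R_{i+1} \le R_i + 5$ even in the fallback case where $v_{i+1}$ is chosen arbitrarily on the boundary (which could a priori be far from $v_i$). Handling the arbitrary-choice fallback cleanly---perhaps by arguing the whole external boundary of $T_i$ is itself contained within dual-distance $5(i-1)+O(1)$ of $w$, using that the boundary is traced by faces of $T_i$---is the delicate bookkeeping step.
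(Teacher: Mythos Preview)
Your proposal is correct and follows the same overall strategy as the paper—namely, to show that every face of $T_i$ stays within dual-distance $\eta$ of $w$, so that the faces around any chosen $v_i$ are all ``op'' faces of $T$ and a closed operator path exists—but your bookkeeping is more elaborate than necessary.

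The paper sidesteps your radius-tracking induction with a simple counting argument: each step adds at most $4$ operators (degree bound), so $|T_i|\le 4(i-1)+1$; since $T_i$ is a connected tessellation, every face of $T_i$ is at dual-distance at most $|T_i|-1=4(i-1)$ from $w$, and for $i\le \eta/5$ this is well under $\eta$. This one-line bound is completely insensitive to \emph{which} external vertex gets chosen, so the fallback case you flag as delicate simply does not arise. In fact your own per-step bound $R_{i+1}\le R_i+5$ already handles it too: that inequality uses only that $v_i$ is some vertex of $T_i$ (so a face of $T_i$ touches it, at distance $\le R_i$) and that the newly added faces sit in the degree-$\le 5$ fan around $v_i$; neither ingredient depends on how $v_i$ was selected among the external vertices. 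So your worry about the arbitrary-choice fallback is unfounded, and once you drop it your argument collapses to essentially the paper's.
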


\begin{proof}
We show that at each step $i$ for $i\in [1,\eta/5]$ there exists a 
closed operator path in $T$ on the vertex selected as $v_i$:
By our assumption, all vertices have degree at most $5$. 
This means that each $T_i$ contains at most $4$ more operators than 
$T_{i-1}$, and so $T_i$ contains at most $4(i-1)+1$ operators. 
This means that the furthest operator in $T_i$ from $w$ 
(where distance is measured in number of edges in the dual graph) 
is within distance $4(i-1)$. 
Therefore, for $i\le \eta/5$, all operators are within distance at most
$4\eta/5$
from $w$, and thus are all ``op'' operators by assumption, 
and also are not adjacent to  ``noop'' faces.
Thus, for any vertex $v$ in $T_i$,  
for $i\in [1,..,\eta/5]$, the operators $H(v)\in T$
form a closed operator path on $v$. 
\end{proof}

\begin{figure}[ht]
\center{
 \epsfxsize=4in
 \epsfbox{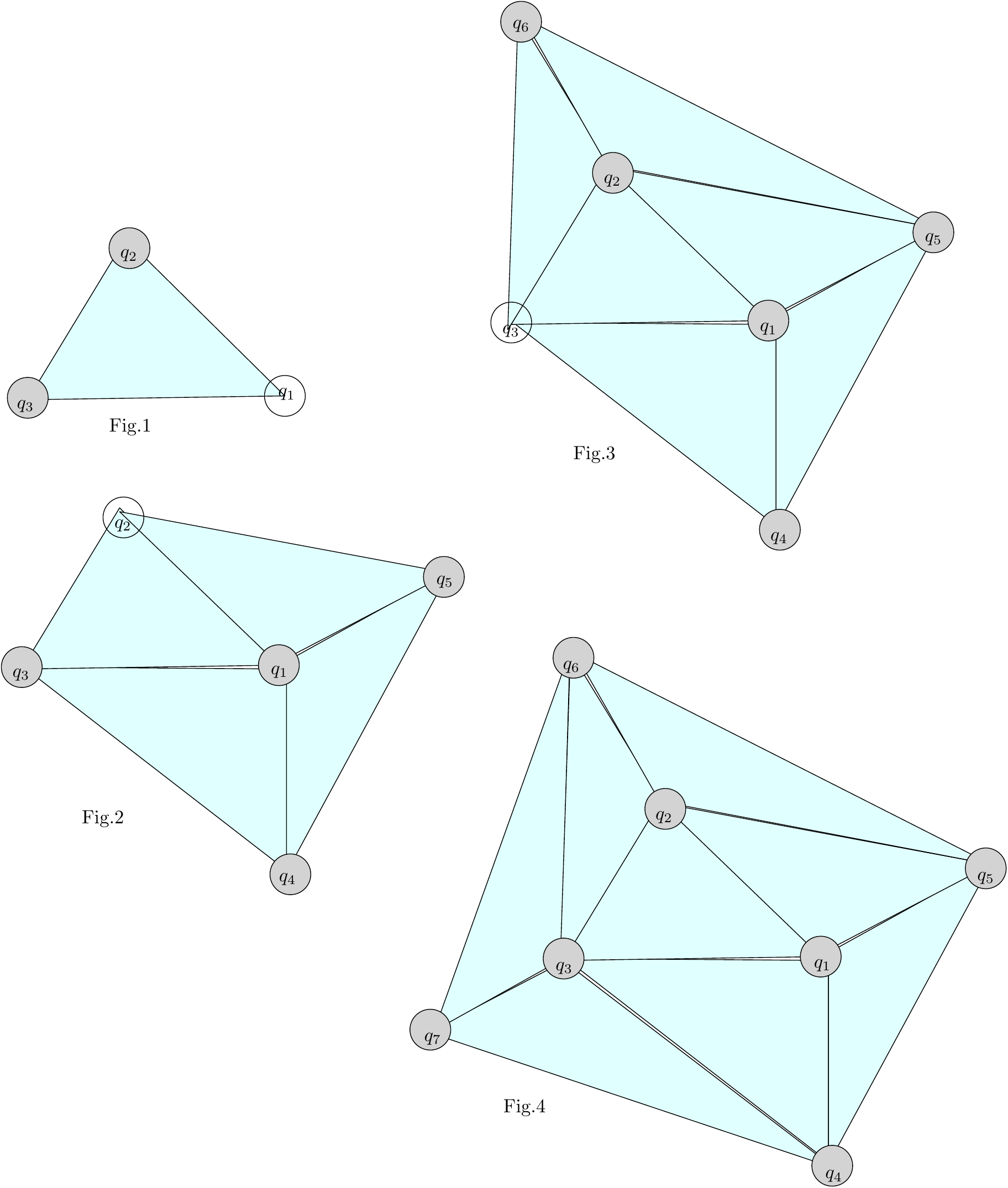}}
 \caption{\label{fig:tes1} The figure shows the first $4$ steps of a possible run of the algorithm. The unfilled qutrit, is the one chosen to close the operator path on.} 
\end{figure}

We would now like to show the important property for which the procedure was created.
Denote by $\hat{a}_i$ the number of external edges of $T_i$, and by $\hat{a}_{max} = max_{i=1}^{\eta/5} (\hat{a}_i)$
\begin{claim}\label{cl:tes2}
$\hat{a}_{max} \leq 13$. 
\end{claim}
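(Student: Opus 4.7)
My plan is to prove the bound by induction on $i$, tracking how $\hat{a}_i$ evolves as the spiral extends. The cleanest starting point is a bookkeeping identity derived from Euler's formula: for any tessellation $T_i$ (a planar triangulation with $|T_i|$ triangular faces and one outer face), double counting edges via face sizes gives $2E_i = 3|T_i| + \hat{a}_i$, and $V_i - E_i + (|T_i|+1) = 2$. Computing the change between consecutive steps yields the key identity
\[
\hat{a}_{i+1} - \hat{a}_i \;=\; 2 n_i \;-\; s_i,
\]
where $s_i$ is the number of new triangles added when closing $v_i$ and $n_i$ is the number of genuinely new vertices introduced at that step. Thus the boundary can only grow when many fresh vertices appear.

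Next I would bound the two parameters per step using the hypotheses. The degree-$\le 5$ assumption forces $s_i \le 4$ (since $v_i$ lies in at least one triangle of $T_i$). Moreover, the new triangles form a fan around $v_i$, whose two end-outer-vertices coincide with the existing boundary neighbors of $v_i$ in $T_i$; only the $s_i - 1$ middle outer vertices can possibly be new, so $n_i \le s_i - 1$. This already caps the per-step increase by $s_i - 2 \le 2$, but this alone is not enough, so I would then exploit the spiral structure to show that $n_i$ is almost always strictly less than $s_i/2$ after the first few steps.

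The leverage comes from the greedy choice of $v_i$ as the nearest counterclockwise external neighbor of $v_{i-1}$. I would argue, by induction on $i$, that each such $v_i$ already carries into $T_i$ either (a) the two triangles just created around $v_{i-1}$ involving $v_i$, or (b) a triangle left behind from a still earlier closure; this keeps $t_i \ge 2$ and hence $s_i \le 3$ in nearly every step. More importantly, once the spiral has swept through roughly $2\pi$ of angular progress around $w$, it begins to ``reabsorb'' its own boundary: the middle outer vertices of the fan around $v_i$ coincide with vertices introduced in an earlier spiral layer, forcing $n_i = 0$ and a strict \emph{decrease} of $\hat{a}$. I would make this precise by defining a potential combining $\hat{a}_i$ with the angular advance of the current $v_i$ around $w$, and show the potential is non-increasing once the first loop closes.

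The main obstacle I anticipate is formalizing the ``spiral closes up'' argument cleanly. The geometric picture (counterclockwise spiraling, eventually tangent to its earlier arm) is intuitive, but the combinatorial proof requires case analysis over the possible local configurations at $v_i$ (how many of its old triangles are in $T_i$; whether $v_i$ lies on the newly exposed boundary from $v_{i-1}$ or on the older part of the boundary) and a careful check that the ``middle'' vertices of the fan at $v_i$ coincide with already-present vertices once the spiral has made its first full loop. Carrying out this case analysis with a precise constant-tracking argument should give the base-case values $\hat{a}_1 = 3$, $\hat{a}_2 \le 5$, $\hat{a}_3 \le 6$, followed by at most a few additional units of growth before the closure of the first loop, which should land the overall bound at $13$. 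Once $\hat{a}_{\max} \le 13$ is established, combining with $|T_{\eta/5}| \ge \eta/5 \ge 100$ and the Euler-derived inequality $|T_i| \le 10 + 3\hat{a}_i$ yields the desired contradiction with the assumed degree bound.
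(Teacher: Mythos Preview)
Your Euler-formula bookkeeping is correct and useful: the identity $\hat a_{i+1}-\hat a_i=2n_i-s_i$ and the bound $n_i\le s_i-1$ cleanly recover the paper's observation that each step increases $\hat a$ by at most $2$, and your case analysis for $s_i\in\{1,2\}$ matches the paper's exactly. The gap is in how you force $s_i\le 2$ eventually. Your proposed mechanism---waiting for the spiral to complete a full loop so that the ``middle'' fan vertices are reabsorbed (forcing $n_i=0$), tracked via an angular potential---is not what makes the argument go through, and you correctly flag that formalizing it would be the hard part. In fact it is unnecessary.

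The paper's mechanism is purely local and avoids any notion of angular progress. The point is not that $n_i$ drops to $0$, but that $s_i$ drops to $\le 2$, and this happens for a different reason than reabsorption: once past the three vertices of $w$, every vertex that can be chosen as $v_i$ was itself introduced during some earlier closure, and at the moment of introduction it already belonged to \emph{two} triangles of the fan that created it. Your item~(a) only credits $v_i$ with these two, giving $t_i\ge 2$ and $s_i\le 3$, which still allows $\hat a$ to grow by $1$ per step. The missing observation is that the greedy counterclockwise rule guarantees one more: when $\hat a_{i-1}\ge 10$, the chosen $v_i$ is the first surviving boundary vertex past $v_{i-1}$, so its immediate clockwise predecessor on $\partial T_{i-1}$ was just made internal; the boundary edge they shared is therefore covered by a newly added triangle that also hits $v_i$. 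That gives $t_i\ge 3$, hence $s_i\le 2$, hence $\hat a_{i+1}\le\hat a_i$. Combined with $\hat a_1=3$ and the $+2$ per-step bound for the first four steps, this yields $\hat a_{\max}\le 13$ directly, with no potential function and no waiting for the spiral to close.
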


\begin{proof}
We begin by showing the following properties:
\begin{enumerate}

\item
If $\hat{a}_i\geq 10$, then there exists some $k$ such that $n_{T_{i}}(v_{i})^k$ is external in $T_{i+1}$.

\noindent
We want to show that if $\hat{a}_i\geq 10$, then closing an operator path on any external vertex of $T_i$ cannot close simultaneously an operator path on all other external vertices of $T_i$.
Take some external vertex $q$, and close an operator path on it by appending to $T_i$ at most $4$ additional operators from $T$.
Each of these operators examines, in addition to $q$, two additional qutrits so at most $8$ external qutrits of $T_i$, aside from $q$, are examined by these operators.
Therefore, closing the operator path on $q$ in $T_i$ can close additional operator paths on at most $8$ other external qutrits of $T_i$.
Thus, if $\hat{a}_i \geq 10$, then since $\hat{a}_i$ is equal to the number of external qutrits, there will always be an external vertex of $T_i$ that remains without a closed operator path following the closing of the operator path on $q$.

\item
For each $i\in [1,\eta/5]$ if $\hat{a}_i\geq 10$ then closing the operator path on $v_i$ increases the number of operators acting on $v_{i+1}$ by at least $1$:

\noindent
Denote by $P = v_i,n_{T_i}(v_i)^1,n_{T_i}(v_i)^2,\hdots n_{T_i}(v_i)^m$ the path traversing the external vertices of of $T_i$ counterclockwise, starting from $v_i$.
Since $\hat{a}_i\geq 10$, then by item ($1$), $v_{i+1}$ is selected by "Tessellate(T,w)" as $v_{i+1} = n_{T_i}(v_i)^k$ for some $k$.
Then for all $j<k$ the vertex $n_{T_i}(v_i)^j$ is an internal qutrit in $T_{i+1}$ - i.e. a vertex such that closing the operator path on $v_i$ has also closed its operator path.
Specifically, for $j=k-1$ the operator path on $n_{T_i}(v_i)^j$ was closed at step $i$.
Yet, prior to closing, $n_{T_i}(v_i)^{k-1}$,$n_{T_i}(v_i)^k$ are two external neighboring qutrits in $T_i$, so they share an edge which is part of the external face in $T_i$, and which must be part of a newly-added operator on $n_{T_i}(v_i)^{k-1}$.
Thus, closing the operator path on $n_{T_i}(v_i)^{k-1}$ must entail adding at least one operator which acts on both $n_{T_i}(v_i)^{k-1}$ and $n_{T_i}(v_i)^k$.
Thus the operator count on $v_{i+1} = n_{T_i}(v_i)^k$ increases by at least $1$ at the end of step $i$.

\item
For each $i\in [4,\eta/5]$ if $\hat{a}_i\geq 10$, then the number of operators in $T_{i+1}$ acting 
on the selected vertex $v_{i+1}$, before closing its operator path, is at least $3$:

\noindent
We close the operator paths on the $3$ qudits of the input $w$ after at most $3$ steps.
After that, all $v_i$'s, are ones which are 
added during previous "closure" processes.
When a new vertex is added to $T_i$, it is added as part of a closed path; 
hence, when such a $v_i$ was added, it was added together with 
two operators acting on it. 
By item ($2$), when $\hat{a}_i\geq 10$ closing the operator path on $v_i$ increases number of operators acting on $v_{i+1}$ by at least $1$. 
Hence, when $v_{i+1}$ is chosen at step $i+1$, prior to closing its operator path, it is acted upon by at least $3$ Hamiltonians.
\end{enumerate}

Now we are ready to show that $\hat{a}_{max}\leq 13$.

We will first show that
for all steps $i$ for which $i\geq 5$ and $\hat{a}_{i-1}\geq 10$ 
we have $\hat{a}_i \ge \hat{a}_{i+1}$.

Recall that we assume that the degree of each vertex in $T$ is at most 
$5$, hence, since the graph is planar, there can be at most 
$5$ operators acting on each vertex.  
By item ($3$) above, before 
closing the path on $v_i$ for $i\geq 5$ and $\hat{a}_{i-1}\geq 10$ 
there were at least $3$ operators acting on $v_i$; Hence, the closing 
could have added either one or two operators that act on $v_i$. 
Let us see how each of the cases affects $\hat{a}_i$: 

\begin{enumerate}

\item
Add one face/operator to act on $v_i$ to close the path on $v_i$: 
In this case, in order to close an operator path we must connect 
existing vertices, and not add any new ones.
Therefore, $\hat{a}_i$ "looses" two edges which 
are replaced by $1$, and decreases overall by $1$.

\item
Add two faces/operators to act on $v_i$ to close the path on $v_i$: 
In this case $\hat{a}_i$ gains 
at most $2$ new edges in exchange for at least $2$ previous edges so 
it does not increase.
\end{enumerate}

Thus we have 
$(*)$ for all steps $i$ for which $i\geq 5$ and $\hat{a}_{i-1}\geq 10$ 
we have $\hat{a}_i \ge \hat{a}_{i+1}$.
On the other hand, at each step $i$ the value of $\hat{a}_i$ 
can increase by at most $2$: exactly two edges are removed from 
the external path, and at most $4$ are added; the latter case 
occurs if $4$ operators are added together with $2$ new vertices.

We now claim that for all $i$, $\hat{a}_i\le 13$. 
We know that $\hat{a}_1=3$, and 
at each one of the first $4$ steps,  
closing the operator path on each qutrit 
can increase $\hat{a}$ by at most $2$. 
Hence, $\hat{a}_i \le 3+4\cdot 2=11$ for $i\in \{1,...,5\}$.
Now, assume by contradiction 
that $\hat{a}_{j_0}\ge 14$ for some index $j_0\ge 6$. 
WLOG let $j_0$ be the first such index for which 
$\hat{a}$ is strictly larger than $13$.  
Then $\hat{a}_{j_0-1}$ must be either $13$ or $12$; 
This means that one step before that, 
 $\hat{a}_{j_0-2}$ must have been $\ge 10$, and notice that 
$j_0-2\ge 4$. 
This means that we have an index 
we have an index $i=j_0-1\ge 5$, such that $\hat{a}_{i-1}\ge 10$ and $\hat{a}_{i+1}>a_i$, in contradiction to $(*)$.  
\end{proof} 

We can now prove Claim \ref{cl:outerbound}: 
\begin{proof} 
Consider the tessellation $T_{\eta/5}$ generated by the algorithm Tessellate. 
It is comprised of at least $\eta/5\ge 100$ faces
since each step increases the number of operators by at least one, 
and so $F\ge 100$.
We also have that since $V-E+F=2$ and $E\le 5V$, then $F-2\le 4V$, 
and so $V\ge 98/4$.     
Plugging this into Equation \ref{eq:qutrit1} we have 
that $$(a-2)\cdot (b-2)\ge 3.6$$
and since $\hat{a}_{max}\leq 13$, we have $a\leq 3.1$ by 
Equation \ref{eq:ahat}. 
This implies that $b>5$, which means there must be a vertex of degree $>5$. 
\end{proof}

\subsection{Proof of regularly spaced holes} 

We can now deduce that the interaction graph of a planar 
$CLH(3,3)$ instance, with no separable qudits, 
which is a graph whose vertices are all of degree smaller 
than $6$, must have constant ``density'' of ``holes'' in it, namely, 
``noop'' faces. 
  
\begin{claim}\label{thm:reghole}
Consider the interaction graph $G_S$ 
of a planar instance $S$ of $CLH(3,3)$ with no separable qudits.
Then for any operator $w\in T$ there exists a "noop" face within distance (in the dual graph) 
at most $\eta$.
\end{claim}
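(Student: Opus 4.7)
The plan is to derive a contradiction by combining the structural lower bound of Claim \ref{cl:outerbound} (which forces a high-degree vertex inside any large enough region of op faces) with the degree upper bound of Corollary \ref{cor:deg} (which says no qudit in the residual instance has degree more than $5$). So I would suppose for contradiction that some operator $w \in S$ has no "noop" face within $\hat{G}_S$-distance $\eta$, i.e.\ every face of $G_S$ within that distance is an op face.

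To apply Claim \ref{cl:outerbound} I would first construct a suitable tessellation. A natural candidate is to let $T \subseteq S$ consist of all operators whose face lies within $\hat{G}_S$-distance $\eta$ from $w$; by the contradictory assumption each one is a triangular op face. Because every face in the ball is op, the union of these triangles forms a simply connected polygonal region of the plane (any "hole" would have to be a noop face close to $w$), so $G_T$ is a triangulation of a polygon with no internal noop face, making $T$ a valid tessellation. I would then verify conditions (a) and (b) of Claim \ref{cl:outerbound} for $T$ with this same $w$ and $\eta$: condition (b) is immediate from the definition of $T$, and for (a) I would observe that within the op region of $T$ the edge-adjacency of triangles in $\hat{G}_T$ coincides with that in $\hat{G}_S$, so any face at $\hat{G}_T$-distance at most $\eta$ from $w$ is at $\hat{G}_S$-distance at most $\eta$ and hence op by hypothesis.

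Applying Claim \ref{cl:outerbound} to $T$ then produces a vertex $q$ of $G_T \subseteq G_S$ of degree at least $6$. But since $S$ has no separable qudits, Corollary \ref{cor:deg} forces every vertex of $G_S$ to have degree at most $5$. This is the contradiction that proves the claim.

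The main obstacle I expect is the careful bookkeeping of distances in $\hat{G}_S$ versus $\hat{G}_T$, because $G_T$ has additional "merged" faces at the boundary of the ball (in particular a single large external face) that do not appear in $G_S$, and paths in $\hat{G}_T$ through that external face could in principle shortcut distances. Fortunately such a shortcut can only \emph{decrease} distances, so it cannot introduce a noop face at small $\hat{G}_T$-distance from $w$ that was not already near $w$ in $\hat{G}_S$; in particular condition (a) is preserved. A secondary point that deserves care is the simple-connectedness of the ball, which uses planarity of $G_S$ together with the hypothesis that no noop face is nearby.
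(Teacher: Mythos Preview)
Your proposal is correct and follows essentially the same approach as the paper: assume no noop face lies within dual-graph distance $\eta$ of $w$, take $T$ to be the set of operators within that distance (which forms a tessellation since any internal gap would be a nearby noop), apply Claim~\ref{cl:outerbound} to obtain a vertex of degree at least $6$, and contradict Corollary~\ref{cor:deg}. The paper's proof is terser and does not spell out the $\hat{G}_S$ versus $\hat{G}_T$ distance bookkeeping or the simple-connectedness point, but your extra care there is sound and does not change the argument.
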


\begin{proof}
We assume on the negative that there exists a vertex $w$ with no
 "noop" faces within distance $\eta$ from $w$.  
This means that any face $u$ whose distance from face $w$ is at most $\eta$ 
is an "op" face. Consider the set of all operators of distance
at most $\eta$ to $w$. This set has no gaps (otherwise there would be a 
``noop'' face of distance less than $\eta$ to $w$, and so it
is a tessellation, which we can denote $T$. 
The conditions of Claim (\ref{cl:outerbound}) now hold; 
therefore there exists a vertex
$v$ in $T$ whose degree is at least $6$ contrary to the fact that $S$ has no separable qudits.
\end{proof}

\subsection{From regularly spaced holes to two-locality}\label{subsec:alg}

We thus have a constant density of ``noop'' faces. 
Here (and only here) we use the fact that the planar embedding is NE, and
devise an NP protocol to reduce the $CLH(3,3)$ instance 
into a $CLH(2,d)$ instance for constant $d$.  
This is formalized in the following claim:  
  
\begin{claim}\label{cl:cutting}
For an instance $S$ of NE planar $CLH(3,3)$ with no separable qudits,  
there exists additional classical input from Merlin
that allows to reduce $S$ into an instance of $CLH(2,d)$ for constant $d$.
\end{claim}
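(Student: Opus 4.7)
My plan is to use Claim~\ref{thm:reghole} together with the NE embedding to coarse-grain the particles into constant-size groups that induce only two-local interactions. First, I would translate the graph-distance density of noop faces into a Euclidean density. By the NE assumption, all edges of $T_S$ have length within a constant factor of each other and all angles are bounded below, so the Euclidean distance between the centers of any two faces is within constant factors of their distance in the dual graph $\hat{G}_S$. Applying Claim~\ref{thm:reghole} to each op face then yields a constant $D$ such that every point in the plane lies within Euclidean distance $D$ of the center of some noop face of $G_S$, and symmetrically an upper bound $\Delta$ on the Euclidean diameter of any triangular face of $T_S$.

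Next, I would lay down an axis-aligned square grid in the plane with constant side length $L$, chosen so that $L \gg D + \Delta$. For each grid vertex Merlin would designate a noop face of $G_S$ lying within Euclidean distance $D$ of that grid vertex, and Arthur would perturb the grid vertex into the interior of the designated noop face, replacing the straight grid edges by piecewise-linear curves connecting perturbed endpoints. Merlin would also send, for each perturbed grid edge, the ordered list of graph edges it crosses, so that Arthur can reconstruct the resulting partition of the plane into regions $R_1,\ldots,R_m$. Let $V_i$ be the set of particles lying inside $R_i$, with vertices that happen to fall on the net boundary assigned by a fixed tie-breaking rule. By the NE bound on the density of particles per unit area, $|V_i|$ is bounded by some absolute constant $c$, so the combined particle $\tilde Q_i := \bigotimes_{q \in V_i} {\cal H}_q$ has constant dimension at most $3^c$.

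To conclude, I would verify that every Hamiltonian term $H \in S$ acts non-trivially on at most two combined particles $\tilde Q_i$. The three qutrits of $H$ form an op face of Euclidean diameter at most $\Delta$; since every junction of the perturbed net lies strictly inside a noop face and hence is disjoint from every op face, and since $L \gg \Delta$, in a small neighborhood of the op face of $H$ the net locally consists of at most one straight segment. This segment distributes the three vertices of $H$'s face into at most two adjacent regions $R_i, R_{i+1}$, so $H$ acts non-trivially on at most two combined particles. The full classical witness then consists of the grid description, the designated noop face at each grid vertex, the crossing data for each grid edge, and the boundary tie-breaking data; all of polynomial size by the NE bound on the size of $T_S$. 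The resulting instance is $CLH(2, 3^c)$ and can then be handled by Lemma~\ref{lem:BV}.

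The main obstacle I foresee is the geometric bookkeeping: I must verify that the perturbation inside noop faces can always be done without creating spurious junctions (e.g.\ grid edges meeting at points other than the designated ones), without self-intersections of the net, and so that two distinct grid edges never both cross the same op face. All of these issues should be resolvable by choosing $L$ sufficiently large relative to $D$ and $\Delta$ and by routing each perturbed grid edge so that away from its endpoints it stays within a thin tube around the original straight segment; verifying this carefully is where the NE hypothesis does its real work, which is consistent with the remark in the introduction that this claim is the sole place where NE (rather than just planarity) is used.
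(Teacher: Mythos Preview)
Your overall plan (lay down a coarse grid, push each junction into a nearby noop face, and group the particles inside each region) is the same as the paper's, and your use of the NE hypothesis to convert Claim~\ref{thm:reghole} into a Euclidean density statement is correct. However, there is a genuine gap at exactly the step you flag as ``geometric bookkeeping'': your argument that ``in a small neighborhood of the op face of $H$ the net locally consists of at most one straight segment'' is not justified, and in fact fails.

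The issue is that a single noop face is a triangle of $T_S$ of diameter comparable to $l_{max}$, hence comparable to the diameter $\Delta$ of any op face. Putting the junction point strictly inside a noop face $N$ only guarantees that the junction itself is not covered by an op face; it says nothing about op faces adjacent to $N$. If $F$ is an op face sharing an edge or vertex with $N$, then two (or, with your degree-$4$ square grid, even three) of the grid edges emanating from the junction can enter $F$ immediately upon exiting $N$, and the three vertices of $F$ can land in three different regions. Increasing $L$ does not help: the obstruction is local to the junction, at the scale of a single face, independent of $L$. The paper explicitly identifies this failure mode and fixes it with an additional idea you are missing: near each junction it finds \emph{two} noop faces at controlled distance $\ge c_0 > 2 l_{max}$ apart, merges all particles along a path $P_{ab}$ between them into one constant-dimensional particle, and then uses Claim~\ref{cl:algdec} (with an isometry supplied by Merlin) to split this merged particle into a left and a right half. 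This fuses the two noops into a single ``noop zone'' of diameter strictly larger than $2 l_{max}$, which no op face can straddle. The Merlin input here is genuinely algebraic (a subspace index / isometry per junction), not just the combinatorial crossing data you propose. A secondary point: the paper uses a brick-wall (degree-$3$ T-junctions) rather than a square grid, which simplifies the local analysis at each junction.
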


\subsubsection{Overall approach} 
We now shift our attention to the 
NE planar triangulation of a convex polygon, which contains the interaction 
graph of $G_S$ as a subgraph. The existence of such a triangulation  
is guaranteed by the definition of
NE planarity (Definition \ref{def:NEdef}). 
The triangles in this triangulation are also associated with ``noop'' or ``op'' 
faces, which inherent their nature (``noop'' or ``op'') from the faces 
they are contained in, in the original embedding of the graph $G_s$ 
in the plane. From now on, we refer to this triangulation $T$ as our graph. 
We note that the density of ``noop'' faces in this graph (namely, 
Claim \ref{thm:reghole}) is still as before, 
since only ``noop'' faces were partitioned to triangles - ``op'' faces were 
already triangles. 

The basic idea is to partition the polygon 
to constant size sections, using ``cuts''
along paths in the dual graph, which are non-intersecting, 
except at ``noop'' faces.  
The reader can convince herself that if this can be done, 
this will ensure that the interactions 
between the different sets of vertices corresponding to the 
different sections are two-local, since each interaction term 
involves only particles from at most two different such areas. 

The difficulty in the proof is that due to the lack of 
regularity of the graph, it is non-trivial to construct 
explicitly such a net of paths between ``noop'' faces. 
Here is where we use the fact that  
NE planar triangulations of polygons 
obey nice characteristics, which resemble those of a periodic lattice.  
The NE requirements imply that the noop faces are scattered in a
more of less periodic fashion in terms of Euclidean distance 
between the noops. We note that no periodicity is assumed here regarding $T$,  
and the NE restriction is significantly more general, 
and cover, essentially, all ``reasonable'' physical scenarios. 

The overall idea is to lay down on the 
plane a ``Brick-Wall'', (namely, a square lattice in which the even 
rows are shifted by a half-square w.r.t. the odd rows - see Figure 
\ref{fig:bwall}), and find near each vertex of the brick wall 
a noop face, which we know exists due to Claim (\ref{cl:outerbound}); 
We essentially perturb the brick wall junctions 
so that the junctions fall inside those 
``noop'' faces.  

\begin{figure}[ht]
\center{
 \epsfxsize=2in
 \epsfbox{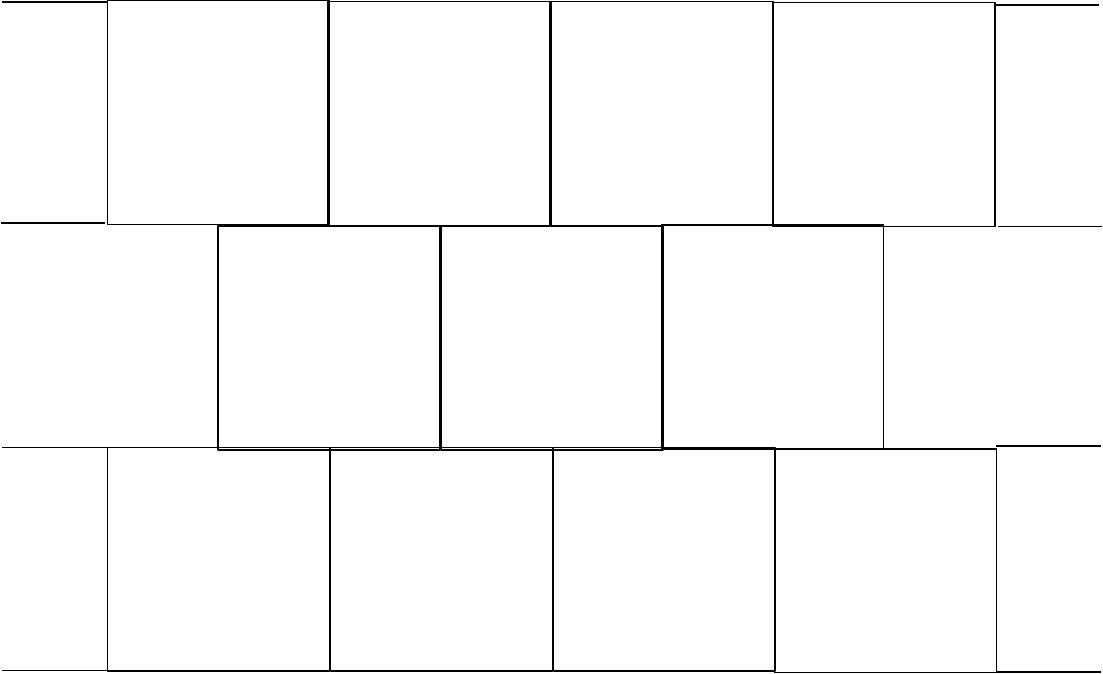}}
 \caption{\label{fig:bwall} A sample section of a brick-wall grid.} 
\end{figure}

A problem arises though from the fact that even if all the junctions
of the partition are located at noop faces, locally there may still be qudit
triplets that manage to interact across $3$ subdivisions, 
without intersecting the
``noop'' itself; this can happen
since the sizes of the edges are not all the same. 

We thus employ a trick - we in fact find two noops close to each 
vertex of the brick wall, rather than one. The two noops 
are required to be not too close to each other.  
We use the area between the two noops to create what we call a 
``noop zone''. 
The advantage is that this noop-zone is 
larger in diameter than any edge in the graph,
and so it prevents the local problems mentioned before; 
more explicitly, 
no ``op'' face can cross and connect all three bricks adjacent to this 
junction. 

To create the noop-zone, we consider a path connecting the two 
noops, and {\it contract} all edges along a path 
between the two noops, until the two noops intersect in a vertex,
which is now larger in dimension but still a constant.  
This enables us to apply Lemma \ref{lem:BV} and decompose the new 
(constant size) vertex to two, 
using input from Merlin. Separating these two vertices to two, 
we create a ``bridge'' between the two noop faces; this causes
the two noops to merge into a larger noop face, which is the  
``noop-zone''. The junction of the 
brick wall is moved to a location inside this noop-zone. 

The proof is somewhat technical, but the above ``noop-zone'' idea 
is the only non-trivial idea used in it. 
We will now proceed to the details. 

\subsubsection{Useful Characteristics of NE triangulations} 
Let us denote by $l_{max}, l_{min}$ the sizes of the maximal length and minimal 
length edges in 
the planar triangulation.  
Let us denote by $\theta_{min}$ the minimal angle between any two 
adjacent edges in this embedding. 
By our assumptions, both the ratio between $l_{max}, l_{min}$ and the value of 
$\theta_{min}$ are constants bounded away from $0$.  
For a NE CLH(3,3) instance $S$, set

\begin{equation}\label{eq:c0}
c_0 =  l_{max} \cdot 2\eta
\end{equation}
where 
$\eta$ is the constant from claim (\ref{cl:outerbound}), 
i.e., the bound on the distance from any op face to a 
noop face (distance is 
measured in terms of number of edges in the dual graph). 
We note that $c_0$ upper bounds the \emph{Euclidean} 
distance between any face and the closest ``noop'' face to it, since edges in the dual graph 
(except those connecting the infinite face) are of 
length at most $2l_{max}$. 
Here, we define the distance between two faces
 as the minimal distance of any pair of points contained in each of them.

\begin{claim}\label{cl:constdense}
Given is $T$, a NE planar triangulation of a polygon. 
Consider a rectangle $R$ in the plane.  
Then the number of vertices of $T$ that lie inside
the rectangle $R$ is at most a constant times the area of the rectangle, 
where the constant depends on $l_{min}$ and $\theta_{min}$.   
\end{claim}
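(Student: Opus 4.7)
The plan is to convert the local geometric constraints defining a NE triangulation (lower bound $\theta_{min}$ on angles, lower bound $l_{min}$ on edge lengths) into a uniform lower bound on the Euclidean distance between any two distinct vertices of $T$, and then conclude by a standard area-packing argument.

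\emph{Step 1: uniform vertex separation.} I will show that any two distinct vertices $u,v$ of $T$ satisfy $|u-v|\ge r_0$, where $r_0:=l_{min}\sin(\theta_{min})$. Fix $u$ and let $S_u$ be its star, the union of all closed triangles of $T$ that are incident to $u$. For each such triangle $ua_ib_i$, the perpendicular distance from $u$ to the line through the opposite edge $a_ib_i$ equals $|ua_i|\sin(\angle ua_ib_i)\ge l_{min}\sin(\theta_{min})=r_0$, while the distances $|ua_i|$ and $|ub_i|$ to the other two vertices are each at least $l_{min}\ge r_0$. Consequently the open Euclidean disk of radius $r_0$ around $u$ is contained in the interior of $S_u$, and therefore contains no vertex of $T$ other than $u$ itself: indeed, the only vertices of $T$ lying in $S_u$ are $u$ and the $a_i,b_i$, all of which sit at distance at least $r_0$ from $u$, and every other vertex of $T$ lies outside $S_u$ and hence outside the disk. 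This gives $|u-v|\ge r_0$, as claimed.

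\emph{Step 2: packing.} By Step~1, the open disks of radius $r_0/2$ centered at the vertices of $T$ are pairwise disjoint. If $N$ denotes the number of vertices of $T$ lying inside $R$, then the corresponding $N$ such disks are all contained in the enlarged rectangle $R'$ obtained from $R$ by widening each side by $r_0/2$. Comparing areas yields
$$
N\cdot \pi\left(\tfrac{r_0}{2}\right)^{\!2} \;\le\; \mathrm{area}(R')\;=\;(L_1+r_0)(L_2+r_0),
$$
where $L_1,L_2$ are the side lengths of $R$. For rectangles both of whose sides are bounded below by a fixed constant (which is the only regime in which the claim is applied later in the paper) the right-hand side is $O(\mathrm{area}(R))$, giving $N\le C(l_{min},\theta_{min})\cdot\mathrm{area}(R)$; more generally one gets $N\le C_1\mathrm{area}(R)+C_2$.

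The only nontrivial ingredient is Step~1: the angular lower bound $\theta_{min}$ is precisely what prevents a vertex of $T$ from being arbitrarily close to an edge on the opposite side of an incident triangle, and so is precisely what upgrades the local NE conditions into a global metric separation. Once this is in hand the packing argument is standard, so I do not anticipate any real obstacle.
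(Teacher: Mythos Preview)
Your argument is correct and follows essentially the same route as the paper: both identify $r_0=l_{min}\sin(\theta_{min})$ as a lower bound on the distance from any vertex to the boundary of its star, deduce that distinct vertices are $r_0$-separated, and finish with an area-packing count. Your half-radius disjoint-disks version of the packing step is in fact a bit cleaner than the paper's ``at least $1/4$ of $B(q)$ lies in $R$'' formulation, and your remark about the additive constant for thin rectangles is a valid caveat that the paper suppresses.
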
 

\begin{proof} 
Let $q$ be some vertex of $T$.
Let $P(q)$ be the polygon induced by the union of faces on $q$, and
$C(q)$ be the intersection of half-planes generated by the edges of $P(q)$.
Let $B(q)\subseteq C(q)$ be the largest disk, centered around $q$ that is contained in $C(q)$.
Since $B(q)\subseteq C(q)$ then $B(q)$ contains a single vertex - $q$ at its center:
on the negative, if $B(q)$ contains another vertex $p_1$ then it also intersects an edge $(q,p_1)$ of some face $(q,p_1,p_2)$, and so it intersects both half-planes of the line $(p_1,p_2)$, contrary to definition.

Also, by definition, there exists some edge $e$ in a face containing $q$, such that $e$ is opposite to $q$, and $B(q)$ is tangent to the line containing $e$.
Thus the radius of $B(q)$ over all $q$ is at least $r_{min}$ where $r_{min} = l_{min} \sin(\theta_{min})$.
Since every vertex contained in $R$ must have at least $1/4$-th of $B(q)$ contained in $R$, then the number of vertices in $R$ is constant.
\end{proof} 

\begin{claim}\label{cl:pathRect}
Let $A$ and $B$ be two faces in a NE triangulation $T_S$, which are at least 
$6l_{max}$ distance apart. 
Then there exist two vertices, $a\in A, b\in B$, such that 
the following holds. First, there exists a straight line $\ell$ 
crossing both $A$ and $B$, such that $a\in A,b\in B$ are the two vertices 
whose projections on $\ell$ are the closest among all pairs of points in 
$A$ and $B$. Denote the distance between the projections of those 
two points by $|a-b|_\ell$. There exists a path in $G$ 
between $a,b$, denoted by $P_{ab}$, which is fully contained in a rectangle 
$R_{ab}$ of area $l_0 \times w_0$, 
where $l_0 = |a-b|_\ell$,$w_0 = 
4l_{max}+4l_{max}/sin(\theta_{min})l_{min}$, and this rectangle 
contains no other vertex of $A$ or $B$. Moreover, the number of 
vertices in the path is a bounded 
function of $|a-b|_l$ and the NE parameters
$l_{min}, l_{max}, \theta_{min}$.   
\end{claim}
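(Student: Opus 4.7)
The plan is to construct $P_{ab}$ by walking along the sequence of triangles of the NE triangulation $T_S$ that a straight segment crosses, using the bounded-angle and bounded-edge-length properties to keep the path inside a thin rectangle and Claim~\ref{cl:constdense} to bound its length. I first choose the line $\ell$: let $p_A\in A$, $p_B\in B$ realize the Euclidean distance between $A$ and $B$, and let $\ell$ be the line through them (breaking ties arbitrarily). Projecting every vertex of $A$ and $B$ onto $\ell$, I set $a\in A$ and $b\in B$ to be the pair of vertices whose projections $a_\ell,b_\ell$ onto $\ell$ are closest to one another, and put $l_0 = |a_\ell - b_\ell|$. The rectangle $R_{ab}$ is then the $l_0 \times w_0$ rectangle centered along $\ell$ between $a_\ell$ and $b_\ell$, with $w_0$ as in the statement.

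With $\ell$, $a$, $b$ fixed, I produce the path $P_{ab}$ as follows. List, in order, the triangles $T_0,T_1,\ldots,T_k$ of $T_S$ that the segment $\overline{a_\ell b_\ell}$ crosses; by the choice of $a,b$ the first triangle has $a$ as a vertex and the last has $b$ as a vertex, and any two consecutive triangles share an edge. Hence the union of their vertex sets forms a connected subgraph of $G_S$ containing both $a$ and $b$, and I let $P_{ab}$ be any graph path from $a$ to $b$ inside this subgraph.

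Three things then remain to verify. First, containment in $R_{ab}$: any triangle of $T_S$ has diameter at most $l_{\max}$, so every vertex of any $T_i$ lies within Euclidean distance $l_{\max}$ of $\overline{a_\ell b_\ell}$; combined with the choice $w_0 \ge 2l_{\max}$, this shows $P_{ab} \subseteq R_{ab}$. Second, no other vertex of $A$ or $B$ lies in $R_{ab}$: by extremality of $a,b$ any other vertex of $A$ (resp.\ $B$) projects to $\ell$ strictly outside the closed interval $[a_\ell,b_\ell]$, and using the minimum angle $\theta_{\min}$ together with the bounded edge length $l_{\max}$ one checks that such a vertex falls outside the rectangle for the stated $w_0$. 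Third, length: Claim~\ref{cl:constdense} bounds the total number of vertices of $T_S$ inside $R_{ab}$ by a constant times $l_0 \cdot w_0$, which is a bounded function of $l_0$ and the NE parameters; this directly bounds the length of $P_{ab}$.

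The subtlest step is the second one: $w_0$ must be large enough that the width-$2l_{\max}$ strip around $\overline{a_\ell b_\ell}$ containing $P_{ab}$ fits inside $R_{ab}$, yet small enough that no non-selected vertex of $A$ or $B$ slips back into it. The quantitative balance is precisely what the somewhat awkward expression for $w_0$ in the statement encodes, and it is here that the full strength of the NE assumption — in particular the ratio $l_{\max}/(l_{\min}\sin\theta_{\min})$ — is genuinely needed.
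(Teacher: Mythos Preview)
Your construction has a real gap at the step ``by the choice of $a,b$ the first triangle has $a$ as a vertex and the last has $b$ as a vertex.'' The point $a_\ell$ is the orthogonal projection of $a$ onto $\ell$; there is no reason it should lie in a face incident to $a$ (it need not even lie in $A$, and if $a$ is near the boundary of the polygon it need not lie in $T_S$ at all). So the connected subgraph you build from the triangles crossed by $\overline{a_\ell b_\ell}$ may simply not contain $a$ or $b$, and you have no path. Using $\overline{ab}$ instead of $\overline{a_\ell b_\ell}$ would repair the endpoint issue, but then the crossed triangles can overshoot $[a_\ell,b_\ell]$ in the $\ell$-direction and you lose the ``no other vertex of $A$ or $B$'' property of $R_{ab}$.

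The paper's proof is genuinely different here and is built precisely to avoid this problem: it assembles $P_{ab}$ in three pieces. From $a$ it walks greedily in the $\ell$-direction, using the angle bound to guarantee that some neighbor advances by at least $l_{min}\sin\theta_{min}$, until reaching a vertex $a'$ whose $\ell$-coordinate is at least $2l_{max}$ beyond $a$; similarly it builds a short walk from $b$ to $b'$. Only then does it use your ``triangles crossed by a straight segment'' idea, applied to $\overline{a'b'}$, which now starts and ends at actual vertices safely in the interior of $R_{ab}$. The term $l_{max}/(l_{min}\sin\theta_{min})$ in $w_0$ comes from bounding the transverse drift of these two greedy walks, not (as your final paragraph suggests) from keeping the unused vertices of $A$ and $B$ out of $R_{ab}$; that exclusion is automatic because those vertices project to $\ell$ strictly outside $[a_\ell,b_\ell]$ and hence lie outside $R_{ab}$ in the $\ell$-direction regardless of the width.
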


\begin{proof}
Consider a line which crosses both faces $A, B$. 
Consider the projections of the vertices of those faces on this line. 
If there is more than one pair which is the closest, then perturb $\ell$ 
until $a,b$ are unique and all other projections are further away. 
Now, define the rectangle $R_{ab}$ to be the rectangle two of whose edges 
are parallel to $\ell$, and are of length which is the same as the distance 
between the projection of $a,b$ on the line 
(see Figure \ref{fig:rect}); 

\begin{figure}[ht]
\center{
 \epsfxsize=2in
 \epsfbox{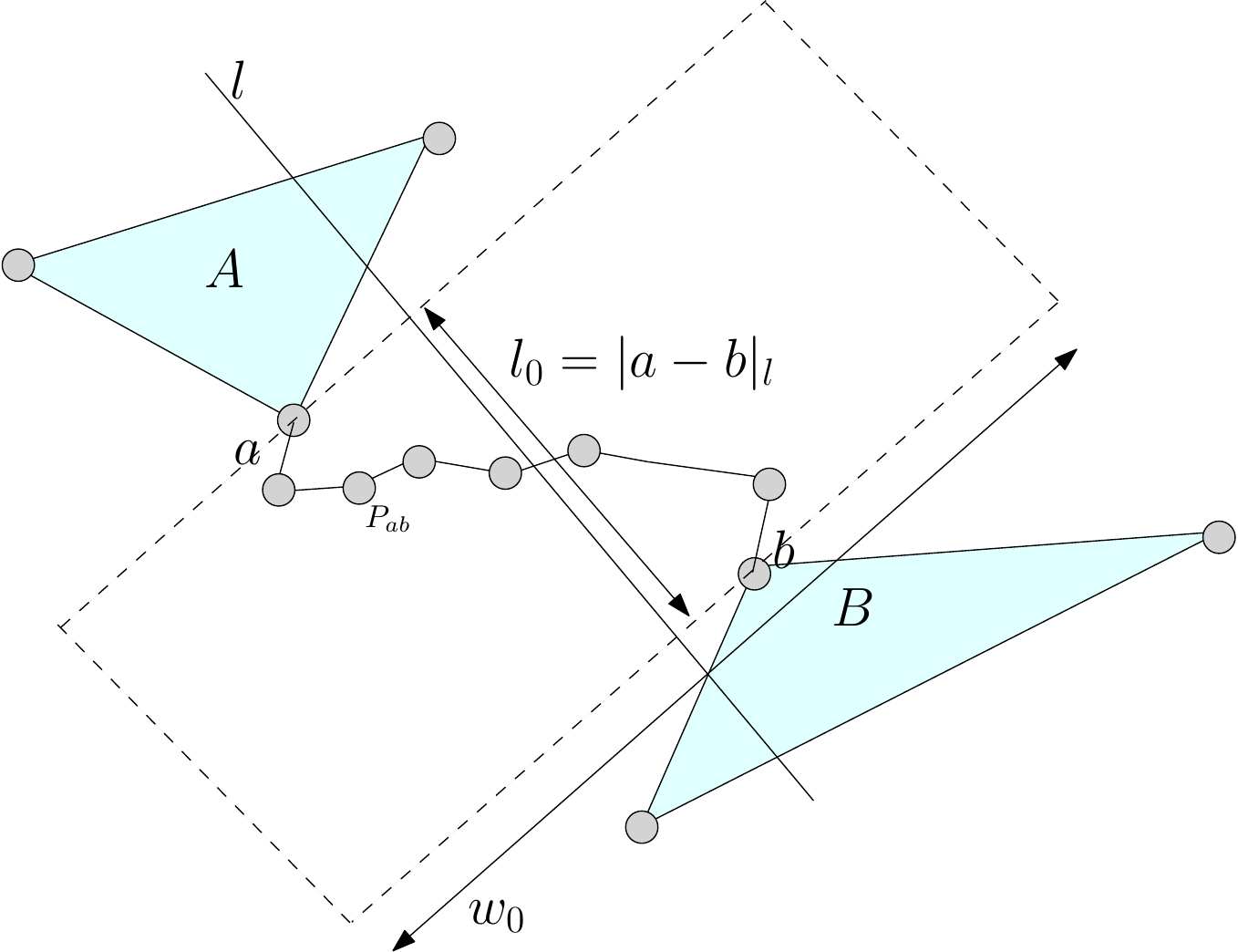}}
 \caption{\label{fig:rect} A sketch of $R_{ab}$ and $P_{ab}$.} 
\end{figure}

The other two edges of $R_{ab}$ are 
perpendicular, and are taken to be $w_0$ as defined in the claim.  
By construction, $R_{ab}$ contains no other vertices of $A$ or $B$.

We now construct the path. WLOG, let us assume that $\ell$ is parallel to 
the $y$ axis for ease of presentation. 
Starting from vertex $a$ we find a neighboring vertex to $a$ whose $y$ 
coordinate is at least 
$cos(\pi/2-\theta_{min})l_{min}=sin(\theta_{min})l_{min}$. 
We proceed this way until we reach the first vertex $a'$ 
whose $y$ coordinate
is larger than
$2l_{max}$. This will be the first segment of the path; 
denote it by $P_a$. We claim that this path $P_a$ 
is fully contained in the 
rectangle $R_{ab}$; This is because the path contains at
 most $2l_{max}/sin(\theta_{min})l_{min}$ 
vertices, and so in terms of the $x$-axis, we are still within the 
$2l_{max}$ distance from the edge of the rectangle; 
in terms of the $y$ axis, the largest $y$ coordinate is 
smaller than $3l_{max}$ by construction. Notice that 
the final point of $P_a$ is more than $2l_{max}$ away from 
the boundary of the rectangle. 
We generate in a similar way a path from $b$, $P_b$, that is contained in 
$R_{ab}$ to a vertex $b'$ with a $y$ coordinate at most that of $b$
minus $2l_{max}$. We notice that if we connect $a'$ and $b'$ by a 
straight line $l_{a'b'}$, the line $l_{a'b'}$ are not 
only contained inside $R_{ab}$ but every point of it is
also at least $2l_{max}$ from the boundary of the rectangle. 
We now finish the construction of the path by connecting $a',b'$ by a path 
inside the rectangle, as follows. 

Consider all faces crossed by this line. Find a simple path $P_{a'b'}\in T_S$ 
between $a'$ and $b'$ that uses only edges of these faces - 
There exists such a path $P_{a'b'}$ because the graph $T_S$ 
restricted to the faces crossed by $l_{a'b'}$ is a connected graph.
Also, by $NE$ restrictions, the path $P_{a'b'}$ cannot visit 
vertices whose Euclidean distance from $l_{a'b'}$ is more than $2l_{max}$.
Hence $P_{a'b'}$ is contained in $R_{ab}$.

Connecting the paths $P_a,P_b,P_{a'b'}$ and removing edges as necessary to 
derive a simple path, we arrive at the desired 
simple path $P_{ab}$. 
Since by claim (\ref{cl:constdense}), $R_{ab}$ contains at most a 
number of vertices proportional to its area, 
the length of $P_{ab}$ is a bounded 
function of $|a-b|_\ell$ and the NE parameters. 
\end{proof}

\subsubsection{Creating the partition}
We are now ready for the proof of the claim. 

\begin{proof}(of claim \ref{cl:cutting})

\paragraph{Defining the Grid}
We are given an instance $S$ of NE $CLH(3,3)$ with an interaction graph $G_S$.
Let $T_S$ denote the NE triangulation of the convex 
polygon of which $G_S$ is a subgraph: $G_S\subseteq T_S$.
Let $\alpha = 7c_0 + 2w_0$ where $w_0$ is the constant from claim (\ref{cl:pathRect}).
Let us lay down on the plane a brick-wall ${\cal B}$ with block-size of constant length $10\alpha$.
With small perturbations (shifting and rotating) 
we can make sure that no edge is parallel to 
the edges of the planar embedding of $T_S$, and no vertex of the brick wall 
lands on an edge or vertex of $G$ in this embedding.
By Claim (\ref{cl:constdense}) we know that each cell contains a constant 
number of vertices of 
$T_S$, and thus of $G_S$. 
We note that interactions that connect vertices from different cells 
are obviously two-local (in terms of number of cells participating in 
one such interaction) as long as they are 
far away from the junctions of the brick-wall. 
Unfortunately, we can still have three-local interactions among those 
close to the brick-wall junctions. We thus make small modifications of 
the interaction graph close to every brick-wall junction; 
we will make sure that those modifications are confined to small 
disks around those junctions, which do not intersect each other.  
From now on we focus our attention on 
one such junction,  and explain how to modify the interactions 
close to it to eliminate the 3-local 
interactions. 

\paragraph{Creating a ``noop-zone'' near a brick-wall junction}

For each face $F_i\in G_S$ that contains the $i$-th junction of ${\cal B}$, the point $p_i$, let us denote by $N_i\in T_S$ the closest noop operator to the junction.

Given $N_i$, choose a noop $N_{i}'$ within Euclidean distance
 $\in [c_0 ,6c_0]$ to 
it (Euclidean distance between faces is the minimum 
such distance between any two points in those faces).  
$N_i'$ exists since we can consider 
a face whose Euclidean distance 
 to $N_i$ is $\in [3c_0, 4c_0]$, (if no such face exists, the 
problem is of constant size anyway) and apply Claim (\ref{cl:outerbound})
to find a noop close to this face within Euclidean distance $c_0$ 
\footnote{we need to add a length of an edge to 
account for the fact that those distances are taken 
from different points in the intermediate face, 
which is why we take loose bounds}.
Since the distance between $N_i$ 
and $N_i'$ is at least $c_0> 6l_{max}$, Claim (\ref{cl:pathRect}) holds, 
and we can find $a\in N_i,b\in N_i'$,
and $R_{ab}$ be a rectangle containing a path $P_{ab}$, 
as is guaranteed by the claim.

Consider the half-infinite line starting from $a$, 
and going to infinity in a direction normal to the edge of the rectangle
containing $a$. Consider also the half-infinite line starting from $b$ 
and going to infinity in the parallel direction, normal to the 
edge of the rectangle containing $b$. 
These half-infinite lines, augmented with the path $P_{ab}$, partition 
the plane to two infinite regions, one on each side of $P_{ab}$; 
we call the sides arbitrarily ``left'' and ``right'' (these names 
may be completely unrelated to true ``left'' and ``right''.) 
Denote $O_{left}$ as the set of faces containing at least one vertex 
of $P_{ab}$, possibly vertices on the left of $P_{ab}$,
but no vertex to its right.  
Denote $O_{right}$ to be the set of all other faces containing
vertices of $P_{ab}$ (by definition, those triangles contain
at least one vertex to the right of $P_{ab}$). 
By the planarity of $T_S$ we have that $O_r\in O_{right}$ cannot 
contain any ``left'' vertex:
a face $O_r$ supported on both sides of the divide must have an edge that crosses at least one edge of the path $P_{ab}$ or one of the edges of $N_i$ or $N_i'$.
Thus any two operators, one from 
$O_{right}$ and one from $O_{left}$ can only share vertices 
of $P_{ab}$. 

We would now like to modify the interaction graph, and 
``merge'' all vertices along the path $P_{ab}$ to one vertex, 
whose Hilbert space is the tensor product of all those particles, 
creating a large though constant dimensional new particle, 
such that $O_{left}$ and $O_{right}$ intersects only through 
this single particle.   
This is done as follows. 
First, we erase all nodes on $P_{ab}$ except $a$. 
A new large particle, of the dimensionality 
which is the tensor of all particles in $P_{ab}$, 
is embedded where previously $a$ was embedded on the plane.  
For any vertex $v$ that was merged into $a$, and $v'$ a vertex that 
was previously connected to $v$ by an edge, replace the original edge 
$(v,v')$ by a straight edge connecting $v'$ to $a$. 
Also, replace each of the  
interactions that contained a particle on the path, by the corresponding 
interaction on the appropriate subparticle in the new merged particle $a$. 
The remaining particles in that interaction are not changed. 
We note that due to the change of geometry, some edges may now cross others, 
but this non-planarity is confined only to an area close to the path, 
and as we shall see, will not matter to our argument. 

The partition of the operators on the vertices of $P_{ab}$ into 
the sets $O_{left}$ and $O_{right}$ now induces a partition of all 
the operators on $a$, which only intersect on $a$. 
Hence, the operators on the merged qudit $a$ 
can be separated, following Claim (\ref{cl:algdec}), 
into two qudits, $a^l$ and $a^r$,  
using an isometry sent by Merlin.
Note that this isometry acts on a constant dimensional particle $a$, 
since the length of $P_{ab}$ is constant following claim (\ref{cl:pathRect}).
We embed $a^l$ and $a^r$ at some tiny nonzero distance,
each to one side of the previous location of $a$, and slightly modify 
all edges connected to $a$ to fit this small perturbation, keeping them 
straight for simplicity.

We have arrived at an equivalent instance, except now
the two noop faces $N_i$ and $N'_i$ are connected into one region;  
we denote the unified noop face as $\bar{N}_i$ and call it 
the ``noop-zone''. 
We note that if initially, either the left or the right side (or both) 
of the operators on $a$ contains no operator - 
then we can trivially unite $N_i$ and $N'_i$ into $\bar{N}_i$, 
by removing edges of $T_S$, and no merging of particles is required.

Let $a_i$ be the middle point on the face not containing $a$ of $N_i$,
and $b_i$ the same for $N'_i$ respectively.
Since we have chosen the distance between $N_i$ and $N_i'$ to be at 
least $c_0$, and since the edges that contain $a_i$ and $b_i$ are not 
changed during the merging process,
then 
\begin{equation}\label{eq:abmax}
|a_i-b_i|>2l_{max}. 
\end{equation}
We denote 
$l_i$ as the straight line connecting $a_i$ and $b_i$.

\begin{figure}[h]
\center{
 \epsfxsize=4in
 \epsfbox{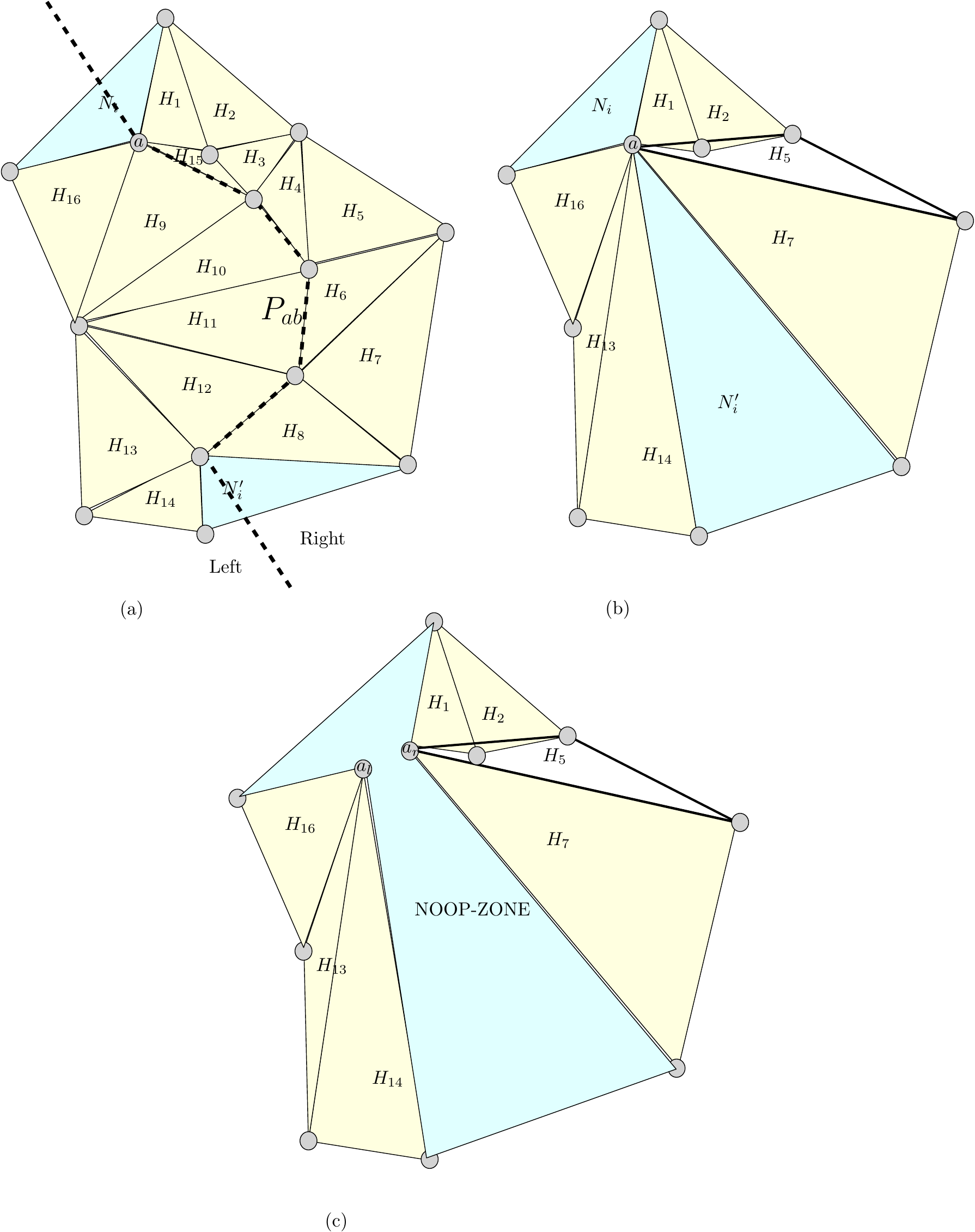}}
 \caption{\label{fig:noopzone}
 An example of a merging step: 
 Figure (a) is the embedding prior to merging - the path $P_{ab}$ (dashed line) induces a left-right separation of the plane, such that any operator on vertices of the path is supported on only one of these divisions.
 Figure (b) is the embedding after merging - note that not all faces are drawn for clarity reasons.
 Note also that some edges cross others - like the edges of $H_5$ and $H_2$.
 Figure (c) is after application of the separating isometry on vertex $a$.  It is separated into two vertices, and a large noop one is generated.} 
\end{figure}

We explain now that the graph has not changed except for in a constant 
size disk around the noops. 
Let $B_{\alpha}(p)$ denote a disk around a point $p$ with radius $\alpha$.
Since for all $i$ $N_i$ is within distance $c_0$ from $F_i$ by 
(\ref{cl:outerbound}), and since by 
Claim (\ref{cl:pathRect}) the merging described above of the vertices on the 
path between $N_i$ and $N_i'$, affect edges which are confined 
to a radius $max\left\{w_0,l_0\right\}+ 2l_{max} <6c_0+ 2w_0$
around $N_i$, 
then all edges changed by this merging 
are contained within $B_{\alpha}(p_i)$ for all $i$, 
so they are decoupled from each other and the disks do not intersect.

\paragraph{Rerouting the grid to prevent $3$-locality}
Let us reroute the junction so that the "horizontal" line of the $T$ junction when restricted to $B_{\alpha}(p_i)$ 
is tilted to the straight line going through $b_i$ that is perpendicular to $l_i$.
The "vertical" line of $T$ is tilted so that when restricted to $B_{\alpha}(p_i)$ it is the line 
starting from $b_i$, going to a point in the middle of the short 
line connecting 
$a^l$ and $a^r$, and then connecting to $a_i$ by a straight line, from which we continue 
upwards in the same direction as $l_i$, as in Figure \ref{fig:reroute}. 
Outside of the disk, these lines are connected directly
to the edges of the brick-wall. 
This can be easily achieved using the detours as in Figure \ref{fig:reroute}. 
Once again, we make sure using small rotations and shifts of the lines 
that none of these lines pass through 
any vertex. 
\begin{figure}[h]
\center{
 \epsfxsize=4in
 \epsfbox{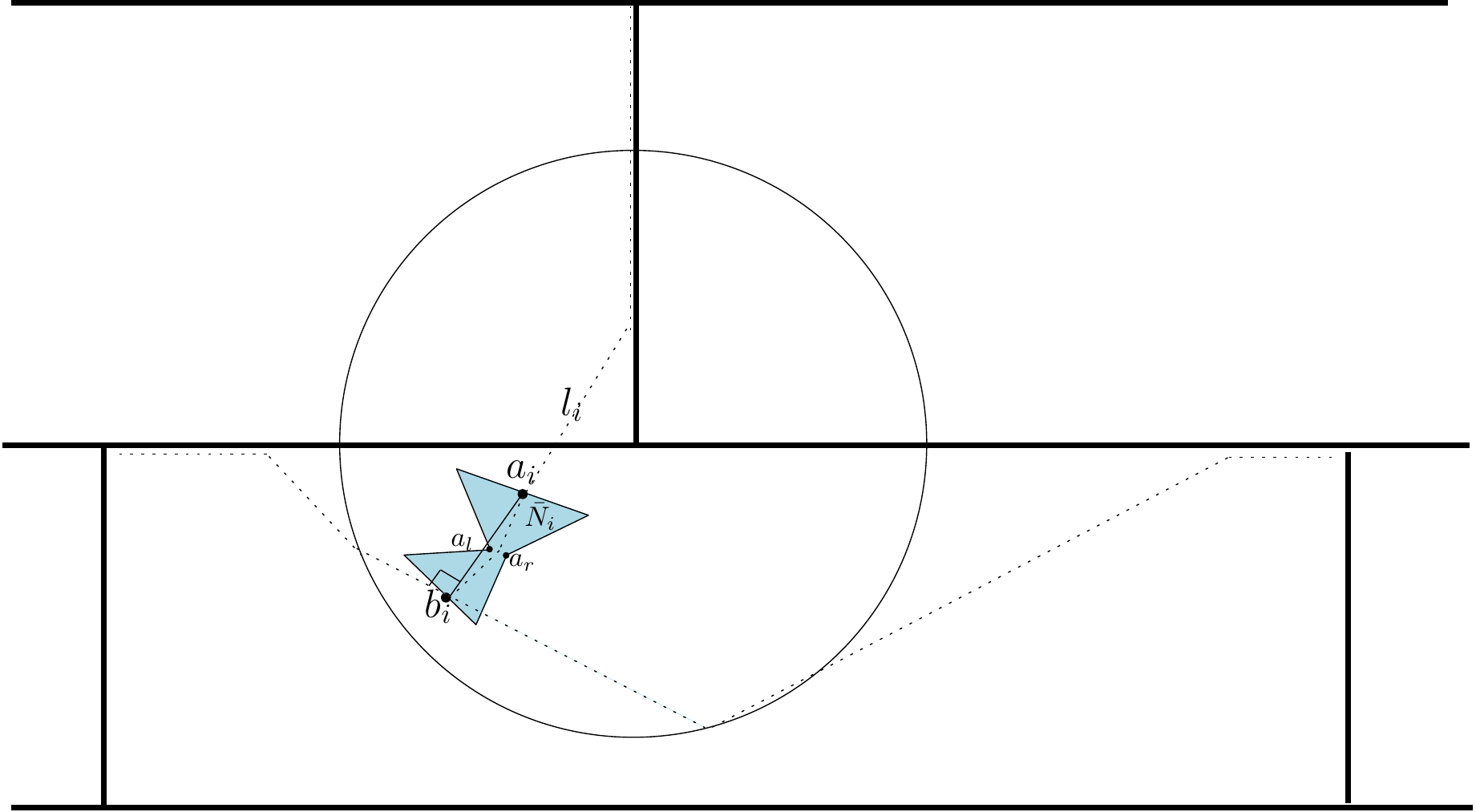}}
 \caption{\label{fig:reroute}
 Rerouting the divisions of ${\cal B}$ to pass through "noop" faces: the dotted lines denote the rerouting of the division so that the intersection $p_i$ is located at a noop $N_i$.} 
\end{figure}
\noindent

Finally, we show that no local term can act on three-bricks:
Let $O$ be an operator in $O_{left}$, then it does not act on the brick to the right of the partition line between $a_i$ and $b_i$.
Hence it is at most $2$-local.
A similar claim holds for any operator in $O_{right}$.
Any other operator has edges of length at most $l_{max}$ that cannot cross $\bar{N}_i$, since it was not
modified by the merging process.
Since we have that $|a_i-b_i|>2l_{max}$ by Equation \ref{eq:abmax},
no such operator can act on the three bricks.
\end{proof}

\section{Tight conditions on Topological Order}\label{sec:TO}
Let us now describe more formally the implications of our results to 
Topological order. 
We define a Topological Order system as follows:
\begin{definition}

\textbf{Topological Order}

\noindent
A quantum state $\ket{\psi_1}$ is said to exhibit Topological Order w.r.t. 
a lattice $L$, (with one particle on each of its edges) 
if there exists another state $\ket{\psi_2}$ 
orthogonal to $\ket{\psi_1}$ with the following properties, for any locally 
confined observable $O$ on $L$:
\begin{enumerate}
\item 
$\bra{\psi_1} O \ket{\psi_2} = 0$
\item
$\bra{\psi_1} O \ket{\psi_1} = \bra{\psi_2} O \ket{\psi_2}$
\end{enumerate}
In this paper, we will 
say that a quantum system exhibits TO if all states in its groundspace 
do. 
\end{definition}

Note that this definition can also be extended to NE 
graphs - since the notion of locality makes sense in those cases too. 

It is shown in \cite{BHV} that Topological order states cannot be generated 
by small depth nearest neighbor circuits:
\begin{theorem}\label{thm:BHV}
Let $U = U_r \cdots U_1$ be a quantum circuit which is a product of local unitaries $U_i$ that generates a $TO$ state $\ket{\psi}$ from the all-zero state, i.e. $\ket{\psi_1} = U \ket{0}^{\otimes n}$ on a $2D$ lattice $L$ of $n$ qubits.
Then $r = \Omega(\sqrt{n})$.
\end{theorem}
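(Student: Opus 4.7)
\begin{profsketch}
The plan is to derive a contradiction from the assumption that $r = o(\sqrt{n})$ by combining a lightcone bound for the circuit $U$ with the long-range correlations forced on $\ket{\psi_1}$ by topological order.

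First, I would establish a lightcone lemma. For any region $A \subseteq L$, let $A^{(r)}$ denote the set of qubits within graph distance $r$ of $A$ in the nearest-neighbor graph of the lattice. Because each layer $U_i$ is a product of geometrically local unitaries, an easy induction on $i$ shows that for any operator $O$ supported on $A$, the Heisenberg-evolved operator $\tilde{O} := U^\dagger O U$ is supported on $A^{(r)}$. In particular, if $A_1, A_2 \subseteq L$ lie at lattice distance strictly greater than $2r$, then $A_1^{(r)}$ and $A_2^{(r)}$ are disjoint. Since the lattice has diameter $\Theta(\sqrt{n})$ and we are assuming $r = o(\sqrt{n})$, there exist constant-diameter regions $A_1, A_2 \subseteq L$ whose mutual distance exceeds $2r$, so that $A_1^{(r)} \cap A_2^{(r)} = \emptyset$ and their union does not cover $L$.

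The crux is the next step: use the TO hypothesis to exhibit observables $O_1$ on $A_1$ and $O_2$ on $A_2$ whose connected correlation function in $\ket{\psi_1}$ is bounded below by a constant,
\[
\abs{\bra{\psi_1} O_1 O_2 \ket{\psi_1} - \bra{\psi_1} O_1 \ket{\psi_1}\,\bra{\psi_1} O_2 \ket{\psi_1}} \;\ge\; c \;>\; 0,
\]
for some $c$ independent of $n$. Intuitively, topological order encodes a superselection sector in non-contractible logical string operators; splitting such a string into a piece on $A_1$ and a piece on $A_2$ yields two observables which, individually, look locally trivial (so they are consistent with the TO conditions $\bra{\psi_1}O_i\ket{\psi_2}=0$ and $\bra{\psi_1}O_i\ket{\psi_1}=\bra{\psi_2}O_i\ket{\psi_2}$), but whose product detects the logical degree of freedom and therefore develops a non-factorizing correlation.

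Finally, I would conjugate by $U^\dagger$ and use the product structure of the input. By the lightcone lemma, $\tilde{O}_1 := U^\dagger O_1 U$ and $\tilde{O}_2 := U^\dagger O_2 U$ are supported on the disjoint regions $A_1^{(r)}$ and $A_2^{(r)}$. Since $\ket{0}^{\otimes n}$ is a product state, expectations over disjoint regions factorize exactly:
\[
\bra{0}^{\otimes n} \tilde{O}_1 \tilde{O}_2 \ket{0}^{\otimes n} \;=\; \bra{0}^{\otimes n} \tilde{O}_1 \ket{0}^{\otimes n} \cdot \bra{0}^{\otimes n} \tilde{O}_2 \ket{0}^{\otimes n}.
\]
Rewriting via $\ket{\psi_1} = U \ket{0}^{\otimes n}$, this identity transports to $\bra{\psi_1} O_1 O_2 \ket{\psi_1} = \bra{\psi_1} O_1 \ket{\psi_1}\,\bra{\psi_1} O_2 \ket{\psi_1}$, contradicting the bound above, and hence $r = \Omega(\sqrt{n})$.

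The main obstacle is extracting the quantitative long-range correlation of the third step from the bare TO definition of this paper, which only constrains single local observables. In practice one either instantiates $O_1, O_2$ as string segments in a concrete TO model such as the Toric code, or strengthens the notion of ``locally confined'' so that regions of diameter up to a fraction of $\sqrt{n}$ still count as local. With the latter strengthening, one can argue that any unitary transporting $\ket{\psi_1}$ to $\ket{\psi_2}$ must act on a region of diameter comparable to the lattice diameter, which forces the required nontrivial correlation along any pair of regions straddled by such a unitary.
\end{profsketch}
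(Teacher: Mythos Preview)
The paper does not give its own proof of this theorem; it is simply quoted from \cite{BHV} and used as a black box in Section~\ref{sec:TO}. So there is nothing in the paper to compare your argument against directly.

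That said, your sketch is precisely the strategy of \cite{BHV}: a lightcone bound for finite-depth circuits (your step~1), combined with the observation that a product input state has exactly factorizing correlations on disjoint regions (your step~4), forces vanishing connected correlators between regions separated by more than $2r$; topological order is then invoked to exhibit a nonvanishing long-range correlator, yielding the contradiction. Your steps~1, 2, and~4 are correct and standard.

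You are also right to flag step~3 as the genuine gap relative to the definition used \emph{in this paper}. The TO definition here only constrains \emph{single} locally confined observables via the two conditions $\bra{\psi_1}O\ket{\psi_2}=0$ and $\bra{\psi_1}O\ket{\psi_1}=\bra{\psi_2}O\ket{\psi_2}$; it does not by itself hand you a pair $O_1,O_2$ on well-separated constant-size regions with nonzero connected correlator in $\ket{\psi_1}$. In \cite{BHV} this is obtained either by working with a concrete model (string segments in the Toric code, exactly as you suggest) or, more generally, by allowing the ``local'' region to have diameter a constant fraction of $\sqrt{n}$, so that non-contractible string operators and their complements become admissible observables. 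Your closing paragraph identifies both of these routes, so the sketch is honest about where the extra input comes from; just be aware that with the paper's definition taken literally, step~3 requires that strengthening and is not a consequence of the two displayed TO conditions alone.
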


In order to conclude that there exist 
NE commuting Hamiltonians with $d\geq 4$ for which 
no constant-depth diagonalizing circuit exists, 
we recall 
the Toric Code due to Kitaev \cite{Kit2}. 
The Toric Code is comprised of a finite square $2$-dimensional lattice of qubits (edges), such that the top and bottom rows are identified as one, and so are the left and right columns.
The system is stabilized by two types of $4$-local operators, one which acts on a "plaquette" of $4$ edges of the unit square of the grid (denoted as the set $P$), and one which acts on a "vertex" of $4$ edges at each crossing at the grid (denote as the set $V$).
The "vertex" operators are $A = X^{\otimes 4}$ and the
"plaquette" operators are $B = Z^{\otimes 4}$.
The complete Hamiltonian is given by 
$$ H = -\sum_{v\in V} A(v) -\sum_{p\in P} B(p). $$ 
Let us now consider grouping together the qubits of the Toric Code lattice, 
by grouping together the two top-right qubits (edges) of every "plaquette". 

\begin{figure}[ht]
\center{
 \epsfxsize=2in
 \epsfbox{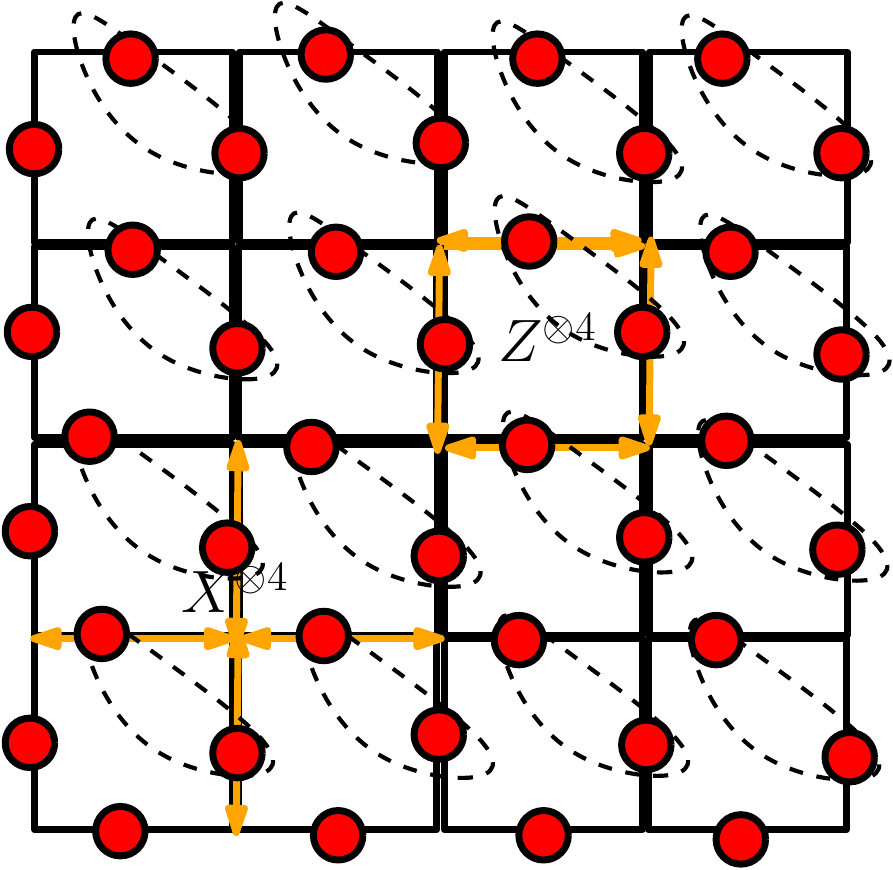}}
 \caption{\label{fig:TC} The figure shows the structure of the Toric Code and the pairing of qubit edges (using the dashed circles) to arrive at a $CLH(3,4)$ instance.} 
\end{figure}

It can be easily seen that, up to a global energy-level shift, 
we arrive at an instance of $CLH(3,4)$.
Thus, the Toric Code is a special case of $CLH(3,4)$.
Since all states in the Toric Code groundspace are of 
Topological Order (\cite{Kit2}) no state in the code can be generated by any constant-depth circuit; 
Hence, in this case there cannot be a constant-depth nearest-neighbor 
quantum circuit which diagonalizes the Toric code Hamiltonian,   
and our conclusion follows.

This conclusion settles an intriguing gap regarding commuting Hamiltonian 
systems on lattices (more generally, on NE graphs).  
On the one hand for $CLH(k,d)$ with $k\geq 3$ and $d\geq 4$, 
or $k>3$ no constant-depth diagonalization exists, and Topological order is possible, whereas for $CLH(k,d)$ with $k<3$ for all $d$, or $k=3$ and $d<4$ constant-depth diagonalization exists, so Topological Order is impossible.
Thus, the Toric Code construction is optimal in terms of 
commuting Hamiltonian dimensionality and size of interactions. 
This implies Theorem \ref{thm:TO}. 

A clarifying remark is in place. Theorems \ref{MainClaim} and \ref{MainClaim2}
hold for planar instances, whereas the 
Toric codes we have just mentioned are defined on a Torus. 
How can we compare the two? 
We recall that the Toric codes 
can actually be generalized to a lattice with a boundary
embedded on the plane, 
as was shown by Bravyi and Kitaev \cite{BK}, 
and so this proves the tight boundary provided by 
Theorem \ref{thm:TO} between TO and local entanglement, 
for NE {\it planar} instances. 
In fact, Theorems \ref{MainClaim} and \ref{MainClaim2}
can be extended to handle higher genus surfaces, 
as long as the graphs are "locally NE planar", 
i.e., large disks around any vertex are NE planar.  
From the other side, the proof of Theorem \ref{thm:BHV} holds 
for such instances too, since
locally NE planar embeddings support the Lieb-Robinson bound used in the proof
of Theorem \ref{thm:BHV}).  
Hence our implications regarding the tight conditions on TO 
hold for higher genus surfaces too; we will not provide here the 
details of this generalization. 

\section{Acknowledgements}
We would like to thank Sergey Bravyi for several important contributions 
to this paper. First, for pointing us to that
our early results on the complexity of the commuting Hamiltonian problem, 
may have interesting implications 
on the conditions for the existence of topological order. 
Second, for showing us the beautiful argument that any 
circuit generating the Toric Code groundstate is of non-constant depth, 
and finally for pointing to us to a mistake in an early draft of this paper.
We also thank Zeph Landau for useful remarks on an early draft of this 
paper. 



\begin{thebibliography}{1}

\bibitem{Aha}
D. Aharonov, D.Gottesman, S. Irani, J. Kempe.
\newblock The power of quantum systems on a line
\newblock {\em Comm. Math. Physics, vol. 287, no. 1, pp. 41-65 (2009)}

\bibitem{Aha2}
D. Aharonov, I. Arad, Z. Landau, U. Vazirani
\newblock The Detectability Lemma and Quantum Gap Amplification     
\newblock {\em quant-ph/0811.3412}, 2008.
    
\bibitem{AV}
M. Aguado, G. Vidal.
\newblock Entanglement Renormalization and Topological Order
\newblock {\em Phys. Rev. Lett. 100, 070404} (2008). 

\bibitem{BSAT}
S. Bravyi.
\newblock Efficient algorithm for a quantum analogue of 2-SAT
\newblock {\em 	quant-ph/0602108}, 2006.
 	
\bibitem{BH}
S. Bravyi, M. B. Hastings.
\newblock A short proof of stability of topological order under local perturbations
\newblock {\em arXiv:1001.4363v1}, 2010. 	
 	
\bibitem{BHV}
S. Bravyi, M. B. Hastings, F. Verstraete.
\newblock Lieb-Robinson Bounds and the Generation of Correlations and Topological Quantum Order
\newblock {\em Physical Review Letters 97.050401}, 2006. 

\bibitem{BK}
S. Bravyi, A. Kitaev
\newblock Quantum codes on a lattice with boundary
\newblock {\em quant-ph/9811052}, 1998.

\bibitem{BV}
S. Bravyi, M. Vyalyi.
\newblock Commutative version of the k-local Hamiltonian problem and common eigenspace problem
\newblock {\em quant-ph/0308021}, 2004.

\bibitem{Has}
M. B. Hastings.
\newblock An area law for one-dimensional quantum systems 
\newblock {\em J. Stat. Mech. (2007) P08024}

\bibitem{Has2}
M. B. Hastings.
\newblock Lieb-Schultz-mattis in higher dimensions
\newblock {\em Phys. Rev. B, vol. 69, p. 104431, Mar 2004.}

\bibitem{KKR}
J. Kempe, A. Kitaev, O. Regev.
\newblock The Complexity of the Local Hamiltonian Problem
\newblock SIAM Journal of Computing, Vol. 35(5), p. 1070-1097 (2006), conference version in Proc. 24th FSTTCS, p. 372-383 (2004)

\bibitem{Kit2}
A. Kitaev.
\newblock Fault-tolerant quantum computation by anyons
\newblock {\em Annals Phys. 303 (2003) 2-30}

\bibitem{Kit1}
A. Yu. Kitaev, A. H. Shen, M. N. Vyalyi.
\newblock Classical and Quantum Computation (Graduate Studies in Mathematics)
\newblock {\em American Mathematical Society, (2002)}

\bibitem{Oli}
R. Oliveira, B. M. Terhal.
\newblock The complexity of quantum spin systems on a two-dimensional square lattice
\newblock {\em Quant. Inf, Comp. Vol. 8, No. 10, pp. 0900-0924 (2008)}
	
\bibitem{TDV}
B. M. Terhal, D. P. DiVincenzo.
\newblock Adaptive Quantum Computation, Constant Depth Quantum Circuits and Arthur-Merlin Games
\newblock arXiv:quant-ph/0205133v6	

\bibitem{Vid}
G. Vidal.
\newblock Entanglement Renormalization
\newblock {\em Phys. Rev. Lett. 99, 220405} (2007).

\end{thebibliography}
\end{document}